\renewcommand{\eqref}[1]{(\ref{#1})}
\NewCommandCopy{\saveciteauthor}{\citeauthor}
\renewcommand{\citeauthor}[1]{\begin{NoHyper}\saveciteauthor{#1}\end{NoHyper}}
\newcommand{\mkTTurl}[1]{\href{https://www.cs.bham.ac.uk/~mhe/TypeTopology/Published.#1.html}{\texttt{#1}}}
\newcommand{\ALF}{\textsc{ALF}}
\newcommand{\Agda}{\textsc{Agda}}
\newcommand{\Coq}{\textsc{Coq}}
\newcommand{\TypeTopology}{\textsc{TypeTopology}}
\newcommand{\colonequiv}{\mathrel{\vcentcolon\mspace{-1.2mu}\equiv}}
\DeclarePairedDelimiter{\pa}{(}{)}
\DeclarePairedDelimiter{\set}{\{}{\}}
\DeclarePairedDelimiter{\squash}{\|}{\|}
\DeclarePairedDelimiter{\tosquash}{|}{|}
\DeclareMathOperator{\powerset}{\mathcal P}
\DeclareMathOperator{\lifting}{\mathcal L}
\DeclareMathOperator{\id}{id}
\DeclareMathOperator{\Id}{Id}
\DeclareMathOperator{\ev}{ev}
\DeclareMathOperator{\List}{List}
\DeclareMathOperator{\inl}{inl}
\DeclareMathOperator{\inr}{inr}
\DeclareMathOperator{\isprop}{is-prop}
\DeclareMathOperator{\totalspace}{\mathbb T}
\DeclareMathOperator{\Fin}{Fin}
\DeclarePairedDelimiter{\toquotient}{[}{]}
\newcommand{\lambdadot}[2]{\mathop{\lambda}{#1}\mathrel{.}#2}
\newcommand{\issmall}[1]{\operatorname{is}\;\!\mathcal{#1}\!\operatorname{-small}}
\newcommand{\surj}{\twoheadrightarrow}
\newcommand{\U}{\mathcal U}
\newcommand{\V}{\mathcal V}
\newcommand{\W}{\mathcal W}
\newcommand{\T}{\mathcal T}
\newcommand{\Zero}{\mathbf{0}}
\newcommand{\Nat}{\mathbb{N}}
\newcommand{\One}{\mathbf{1}}
\newcommand{\Two}{\mathbf{2}}
\newcommand{\fst}{{\operatorname{pr_1}}}
\newcommand{\below}{\mathrel\sqsubseteq}
\newcommand{\aboveorder}{\mathrel\sqsupseteq}
\newcommand{\DCPOnum}[3]{{\U_{#1}}\!\operatorname{-DCPO}_{\U_{#2},\U_{#3}}}
\newcommand{\DCPO}[3]{\mathcal{#1}\!\operatorname{-DCPO}_{\mathcal{#2},\mathcal{#3}}}
\newcommand{\retract}[2]{%
\!\!\begin{tikzcd}[ampersand replacement=\&]
{#1}\ar[r,"s",shift left,hookrightarrow] %
\& {#2}\ar[l,"r",shift left,two heads]\end{tikzcd}\!\!}
\newcommand{\retractalt}[4]{%
\!\!\begin{tikzcd}[ampersand replacement=\&]
{#1}\ar[r,"{#3}",shift left,hookrightarrow] %
\& {#2}\ar[l,"{#4}",shift left,two heads]\end{tikzcd}\!\!}
\newcommand{\Ind}[1]{\mathcal{#1}\!\operatorname{-Ind}}
\newcommand{\cof}{\mathrel{\lesssim}}
\newcommand{\ddset}{\mathop{\rotatebox[origin=c]{270}{\(\twoheadrightarrow\)}}}
\newcommand{\dset}{\mathop{\downarrow}}
\newcommand{\ideal}[1]{\(\mathcal{#1}\)-ideal}
\newcommand{\Idl}[1]{\mathcal{#1}\!\operatorname{-Idl}}
\newcommand{\Idlnum}[1]{\mathcal{U}_{#1}\!\operatorname{-Idl}}
\newcommand{\dyadics}{\mathbb D}
\DeclareMathOperator{\dyadicmiddle}{m}
\DeclareMathOperator{\dyadicleft}{l}
\DeclareMathOperator{\dyadicright}{r}
\DeclarePairedDelimiter{\ssfbrackets}{\llparenthesis}{\rrparenthesis}
\newcommand{\ssf}[2]{\ssfbrackets{#1 \Rightarrow #2}}
\newcommand{\initseg}{\mathbin\downarrow}
\theoremstyle{plain}
\newtheorem{theorem}{Theorem}
\numberwithin{theorem}{section}
\newtheorem{lemma}[theorem]{Lemma}
\newtheorem{proposition}[theorem]{Proposition}
\newtheorem{corollary}[theorem]{Corollary}
\theoremstyle{definition}
\newtheorem{definition}[theorem]{Definition}
\newtheorem{example}[theorem]{Example}
\theoremstyle{remark}
\newtheorem{remark}[theorem]{Remark}
\begin{document}

\title{Continuous and algebraic domains \\ in univalent foundations}

\author[1,2]{\texorpdfstring{Tom de Jong\corref{cor1}}{Tom de Jong}}
\ead{tom.dejong@nottingham.ac.uk}
\ead[url]{https://www.tdejong.com}

\author[1]{Mart\'in H\"otzel Escard\'o}
\ead{m.escardo@bham.ac.uk}
\ead[url]{https://www.cs.bham.ac.uk/~mhe}

\cortext[cor1]{Corresponding author}

\affiliation[1]{organization={School of Computer Science, University of Birmingham},
  city={Birmingham},
  country={United Kingdom}
}

\affiliation[2]{organization={School of Computer Science, University of Nottingham},
  city={Nottingham},
  country={United Kingdom}
}

\begin{keyword}
  Constructivity and predicativity \sep Propositional resizing \sep Univalent
  foundations \sep Homotopy type theory \sep Domain theory \sep Continuous and
  algebraic domain
  \MSC[2020]{06B35, 03B38, 03F65, 03B70, 18B35, 68Q55}
\end{keyword}

\begin{abstract}
  We develop the theory of continuous and algebraic domains in constructive and
  predicative univalent foundations, building upon our earlier work on basic
  domain theory in this setting.
  That we work predicatively means that we do not assume Voevodsky's
  propositional resizing axioms. Our work is constructive in the sense that we
  do not rely on excluded middle or the axiom of (countable) choice.
  To deal with size issues and give a predicatively suitable definition of
  continuity of a dcpo, we follow Johnstone and Joyal's work on continuous
  categories.
  Adhering to the univalent perspective, we explicitly distinguish between
  data and property.
  To ensure that being continuous is a property of a dcpo, we turn to the
  propositional truncation, although we explain that some care is needed to
  avoid needing the axiom of choice.
  We also adapt the notion of a domain-theoretic basis to the predicative
  setting by imposing suitable smallness conditions, analogous to the
  categorical concept of an accessible category.
  All our running examples of continuous dcpos are then actually examples of
  dcpos with small bases which we show to be well behaved predicatively. In
  particular, such dcpos are exactly those presented by small ideals.
  As an application of the theory, we show that Scott's \(D_\infty\) model of
  the untyped \(\lambda\)-calculus is an example of an algebraic dcpo with a
  small basis.
  Our work is formalised in the \Agda\ proof assistant and its ability to infer
  universe levels has been invaluable for our purposes.
\end{abstract}

\maketitle

\section{Introduction}

Domain theory~\cite{AbramskyJung1994} is a well-established subject in
mathematics and theoretical computer science with applications to programming
language semantics~\cite{Scott1972,Scott1993,Plotkin1977}, higher-type
computability~\cite{LongleyNormann2015}, topology, and
more~\cite{GierzEtAl2003}.
We explore the development of domain theory from the univalent point of
view~\cite{Voevodsky2015,HoTTBook}. This means that we work with the
stratification of types as singletons, propositions, sets, 1-groupoids, etc.
Our work does not require any higher inductive types other than the
propositional truncation, and the only consequences of univalence needed here
are function extensionality and propositional extensionality.
Additionally, we work constructively and predicatively, as described below.

\subsection{Constructivity}
That we work constructively means that we do not assume excluded middle,
or weaker variants, such as Bishop's LPO~\cite{Bishop1967}, or the axiom of
choice (which implies excluded middle), or its weaker variants, such as the
axiom of countable choice.
An~advantage of working constructively and
not relying on these additional
logical axioms is that our development is valid in every
\((\infty,1)\)-topos~\cite{Shulman2019} and not just those in which the logic is
classical.

Our commitment to constructivity has the particular consequence that we cannot
simply add a least element to a set to obtain the free pointed dcpo. Instead of
adding a single least element representing an undefined value, we must work with
a more complex type of partial elements~(\cref{sec:lifting}).
Similarly, the booleans under the natural ordering fail to be a dcpo, so we use
the type of (small) propositions, ordered by implication, instead.
Finally, we mention two other domain-theoretic aspects in this work
that require particular attention when working constructively.
Firstly, it is well-known that the are several inequivalent notions of a finite
subset in constructive mathematics and to characterise the compact elements of a
powerset we need to use the \emph{Kuratowski} finite
subsets~\cite{Kuratowski1920,Johnstone2002,CoquandSpiwack2010,FruminEtAl2018}.
Secondly, \emph{single step functions} are classically defined by a case
distinction (using excluded middle) on the ordering of elements. Constructively,
we cannot, in general, make this case distinction, so we use subsingleton
suprema to define single step functions instead.

\subsection{Predicativity}
Our work is predicative in the sense that we do not assume Voevodsky's
\emph{resizing} rules~\cite{Voevodsky2011,Voevodsky2015} or axioms. In
particular, powersets of small types are large.

There are several (philosophical, model-theoretic, proof-theoretic, etc.)
arguments for keeping the type theory predicative, see for
instance~\cite{Uemura2019,Swan2019a,Swan2019b},
and~\cite[Section~1.1]{deJongEscardo2023} for a brief overview, but here we only
mention one that we consider to be amongst the most interesting. Namely, the
existence of a computational interpretation of propositional impredicativity
axioms for univalent foundations is an open problem.

A common approach to deal with domain-theoretic size issues in a predicative
foundation is to work with information systems~\cite{Scott1982a,Scott1982b},
abstract bases~\cite{AbramskyJung1994} or formal
topologies~\cite{Sambin1987,Sambin2003,CoquandEtAl2003} rather than dcpos, and
approximable relations rather than \emph{Scott continuous functions}.
Instead, we work directly with dcpos and Scott continuous functions. In dealing
with size issues, we draw inspiration from category theory and make crucial use
of type universes and type equivalences to capture \emph{smallness}.
For example, in our development of the Scott model of
PCF~\cite{deJong2021a,Hart2020}, the dcpos have carriers in the second universe
\(\U_1\) and least upper bounds for directed families indexed by types in the
first universe \(\U_0\).
Moreover, up to equivalence of types, the order relation of the dcpos takes
values in the lowest universe \(\U_0\).
Seeing a poset as a category in the usual way, we can say that these dcpos are
large, but locally small, and have small filtered colimits.
The fact that the dcpos have large carriers is in fact unavoidable and
characteristic of predicative settings, as proved in~\cite{deJongEscardo2023}.

Because the dcpos have large carriers it is a priori not clear that complex
constructions of dcpos, involving countably infinite iterations of exponentials
for example, do not result in a need for ever-increasing universes and are
predicatively possible. We show that they are possible through a careful
tracking of type universe parameters, and this is illustrated by the
construction of Scott's \(D_\infty\).

Since keeping track of these universes is prone to mistakes, we have formalised
our work in \Agda~(see~\cref{sec:formalisation}); its ability to infer and keep
track of universe levels has been invaluable.

\subsection{Contributions}
In previous work~\cite{deJong2021a} we developed domain theory
in constructive and predicative univalent foundations and considered basic
applications in the semantics of programming languages, such as the Scott model
of PCF~\cite{Plotkin1977,Scott1993}.
However, we did not discuss a rich and deep topic in domain theory:
\emph{algebraic} and \emph{continuous} dcpos~\cite{GierzEtAl2003}.
We present a treatment of their theory including several examples in our
constructive and predicative approach, where we deal with size issues by taking
direct inspiration from category theory and the work of Johnstone and Joyal on
continuous categories~\cite{JohnstoneJoyal1982} in particular.

Classically, a dcpo \(D\) is said to be \emph{continuous} if for every element
\(x\) of \(D\) the set of elements \emph{way below} it is directed and has
supremum \(x\).
The problem with this definition in our foundational setup is that the type of
elements way below \(x\) is not necessarily small. Although this does not stop
us from asking it to be directed and having supremum \(x\), this still poses a
problem: for example, there would be no guarantee that its supremum is preserved
by a Scott continuous function, as it is only required to preserve suprema of
directed families indexed by small types.

Our solution is
to use the ind-completion to give a predicatively suitable definition of
continuity of a dcpo, following the category theoretic work by
\citeauthor{JohnstoneJoyal1982}~\cite{JohnstoneJoyal1982}.
Some care is needed to ensure that the resulting definition expresses a property
of a dcpo, rather than additional structure. This is of course
where the propositional truncation comes in useful, but there are two natural
ways of using the truncation. We show that one of them yields a well behaved
notion that serves as our definition of continuity, while the other, which we
call \emph{pseudocontinuity}, is problematic in a constructive context. In a
classical setting where the axiom of choice is assumed, the two notions
(continuity and pseudocontinuity) are equivalent.

Another approach is to turn to the notion of a
\emph{basis}~\cite[Section~2.2.2]{AbramskyJung1994}, but to include smallness
conditions. While we cannot expect the type of elements way below an element
\(x\) to be small, in many examples it is the case that the type of {basis}
elements way below \(x\) is small.
We show that if a dcpo has a small basis, then it is continuous. In fact, all
our running examples of continuous dcpos are actually examples of dcpos with
small bases. Moreover, dcpos with small bases are better behaved. For example,
they are locally small and so are their exponentials.
Furthermore, we show that having a small basis is equivalent to being presented
by ideals.
For algebraic dcpos, bases work especially well constructively, at least in the
presence of set quotients and univalence, as explained in
\cref{sec:basis-of-compact-elements}.
In particular, we show that Scott's \(D_\infty\), as originally conceived
in~\cite{Scott1972}, and recalled in the setting of predicative univalent
foundations in~\cref{sec:Scott-D-infty}, is algebraic and that it has a small
compact basis.

\subsection{Related work}

In short, the distinguishing features of our work are: (i) the adoption of
homotopy type theory as a foundation, (ii) a commitment to predicatively and
constructively valid reasoning, (iii) the use of type universes to avoid size
issues concerning large posets.

The standard works on domain theory, e.g.~\cite{AbramskyJung1994,GierzEtAl2003},
are based on traditional impredicative set theory with classical logic.
A constructive, topos valid, and hence impredicative, treatment of some domain
theory can be found in~\cite[Chapter~III]{Taylor1999}.

Domain theory has been studied predicatively in the setting of formal topology
\cite{Sambin1987,Sambin2003,CoquandEtAl2003} in
\cite{MaiettiValentini2004,Negri2002,SambinValentiniVirgili1996} and the more
recent categorical papers~\cite{Kawai2017,Kawai2021}. In this predicative
setting, one avoids size issues by working with information
systems~\cite{Scott1982a,Scott1982b}, abstract bases~\cite{AbramskyJung1994} or
formal topologies, rather than dcpos, and approximable relations rather than
Scott continuous functions.
Hedberg~\cite{Hedberg1996} presented some of these ideas in Martin-L\"of Type
Theory and formalised them in the proof assistant \ALF~\cite{Magnusson1995}, a
precursor to \Agda. A~modern formalisation in \Agda\ based on Hedberg's work was
recently carried out in Lidell's master thesis~\cite{Lidell2020}.

Our development differs from the above line of work in that it studies
dcpos directly and uses type universes to account for the fact that
dcpos may be large.\index{dcpo}
An advantage of this approach is that we can work with (Scott continuous)
functions rather than the arguably more involved (approximable) relations.

Another approach to formalising domain theory in type theory can be found
in~\cite{BentonKennedyVarming2009,Dockins2014}. Both formalisations study
\(\omega\)-chain complete preorders, work with setoids, and make use of \Coq's
impredicative sort~\texttt{Prop}.
A setoid is a type equipped with an equivalence relation that must be respected
by all functions. The particular equivalence relation given by equality is
automatically respected of course, but for general equivalence relations this
must be proved explicitly.
The aforementioned formalisations work with preorders, rather than posets,
because they are setoids where two elements \(x\) and \(y\) are related if
\(x \leq y\) and \(y \leq x\).
Our~development avoids the use of setoids thanks to the adoption of the
univalent point of view. Moreover, we work predicatively and we work with the
more general directed families rather than \(\omega\)-chains, as we intend the
theory to also be applicable to topology and algebra~\cite{GierzEtAl2003}.

There are also constructive accounts of domain theory aimed at program
extraction~\cite{BauerKavkler2009,PattinsonMohammadian2021}.
Both these works study \(\omega\)-chain complete posets (\(\omega\)-cpos) and
define notions of \(\omega\)-continuity for them.
The former~\cite{BauerKavkler2009} is notably predicative, but makes use of
additional logical axioms: countable choice, dependent choice and Markov's
Principle, which are validated by a realisability interpretation.
The latter~\cite{PattinsonMohammadian2021} uses constructive logic to extract
witnesses but employs classical logic in the proofs of correctness by phrasing
them in the double negation fragment of constructive logic.
By~contrast, we study (continuous) dcpos rather than (\(\omega\)-continuous)
\(\omega\)-cpos and is fully constructive without relying on additional
principles such as countable choice or Markov's Principle.

Yet another approach is the field of \emph{synthetic domain
  theory}~\cite{Rosolini1986,Rosolini1987,Hyland1991,Reus1999,ReusStreicher1999}.
Although the work in this area is constructive, it is still impredicative, as it
is based on topos logic; but more importantly it has a focus different from that
of regular domain theory. The aim is to isolate a few basic axioms and find
models in (realisability) toposes where every object is a domain and every
morphism is continuous. These models often validate additional axioms, such as
Markov's Principle and countable choice, both of which are crucially used in the
theory, as well as anti-classical axioms which contradict excluded middle. We have
a different goal, namely to develop regular domain theory constructively and
predicatively, but in a foundation compatible with excluded middle and choice,
while not relying on them.

Our treatment of continuous (and algebraic) dcpos is based on the work
of~\citeauthor{JohnstoneJoyal1982}~\cite{JohnstoneJoyal1982} which is situated
in category theory where attention must be paid to size issues even in an
impredicative setting.
In the categorical context, a smallness criterion similar to our notion of
having a small basis appears in \cite[Proposition~2.16]{JohnstoneJoyal1982}.
In contrast to Johnstone and Joyal~\cite{JohnstoneJoyal1982}, we use the
propositional truncation to ensure that the type of continuous dcpos is a
subtype of the type of dcpos. This, together with the related notion of
pseudocontinuity, are discussed in \cref{sec:pseudocontinuity}.
The particular case of a dcpo with a small compact basis is analogous to the
notion of an accessible category~\cite{MakkaiPare1989}.

In constructive set theory, our approach corresponds to working with partially
ordered classes as opposed to sets~\cite{Aczel2006}. Our notion of a small basis
for a dcpo (\cref{sec:small-bases}) is similar, but different from
\citeauthor{Aczel2006}'s notion of a set-generated
dcpo~\cite[Section~6.4]{Aczel2006}.
While Aczel requires the set \(\{b \in B \mid b \below x\}\) to be directed, we
instead require the set of elements in \(B\) that are way-below \(x\) to be
directed in line with the usual definition of a
basis~\cite[Section~2.2.6]{AbramskyJung1994}.

Finally, abstract bases were introduced by \citeauthor{Smyth1977} as
``R\nobreakdash-structures''~\cite{Smyth1977}. Our treatment of them and the
round ideal completion is closer to that of \citeauthor{AbramskyJung1994} in
the aforementioned \cite[Section~2.2.6]{AbramskyJung1994}, although ours is
based on families and avoids impredicative constructions.

\subsection{Departures from our previous work}

This paper presents a revised and expanded treatment of continuous and algebraic
domains compared to our conference paper~\cite{deJongEscardo2021a}. The
presentation of basic domain theory (\cref{sec:basic-domain-theory}) and the
construction of Scott's \(D_\infty\) in particular has been abridged for
brevity, but full details can be found in Chapter~3 of the first author's PhD
thesis~\cite{deJongThesis} as well as the accompanying formalisation
(\cref{sec:formalisation}).
In~\cite{deJongEscardo2021a} (and also \cite{deJong2021a}) the definition of a
poset included the requirement that the carrier is a set, because we only
realised later that this was redundant~(\cref{posets-are-sets}).

With the notable exception of
\cref{ordinals-structurally-continuous-but-no-small-basis,sec:basis-of-compact-elements},
the results of this paper can all be found in the aforementioned PhD
thesis~\cite{deJongThesis}.
Compared to the thesis, we also mention two terminological changes:
\begin{itemize}
\item Instead of writing ``\(\alpha\) is cofinal in \(\beta\)'', we now say that
  ``\(\beta\) exceeds \(\alpha\)'' (see \cref{def:exceeds}).
  The issue with saying ``cofinal'' is that this word is ordinarily used for
  two subsets where one is already contained in the other.
  In particular, two cofinal subsets have the same least upper bound (if it
  exists) and this was not the case with our usage of the word ``cofinal''.
\item We have swapped ``continuity structure'' for ``continuity data'' when
  discussing continuous dcpos to reflect that the morphisms do not preserve the
  data (cf.~\cref{continuity-prop-vs-data}).
  Accordingly, we no longer say that a dcpo is ``structurally continuous'';
  instead writing that a dcpo is ``equipped with continuity data''.
  Similar terminological changes apply to the algebraic~case.
\end{itemize}

The present treatment of continuous and algebraic dcpos and small (compact)
bases is significantly different from that of our earlier
work~\cite{deJongEscardo2021a}. There, the definition of continuous dcpo was an
amalgamation of pseudocontinuity and having a small basis, although it did not
imply local smallness.
In this work we have disentangled the two notions and based our definition of
continuity on Johnstone and Joyal's notion of a continuous
category~\cite{JohnstoneJoyal1982} without making any reference to a basis. The
current notion of a small basis is simpler and slightly stronger than that of
our conference paper~\cite{deJongEscardo2021a}, which allows us to prove that
having a small basis is equivalent to being presented by ideals.

\subsection{Formalisation}\label{sec:formalisation}

All of our results are formalised in \Agda,
building on Escard\'o's \TypeTopology\ development~\cite{TypeTopology}.
Hart's previously cited work~\cite{Hart2020} was also ported to the current
\TypeTopology\ development by Escard\'o~\cite{TypeTopologyHart}.
The reference~\cite{TypeTopologyPaper} precisely links each numbered environment
(including definitions, examples and remarks) in this paper to its
implementation.
The HTML rendering has clickable links and so is particularly suitable
for exploring the development.
But this paper is self-contained and can be read independently from
the formalisation.

\subsection{Organisation}
\begin{description}
\item[\normalfont\cref{sec:foundations}:] A brief introduction to univalent
  foundations with a particular focus on type universes and the propositional
  truncation, as well as a discussion of impredicativity in the form of
  Voevodsky's propositional resizing axioms.
\item[\normalfont\cref{sec:basic-domain-theory}:] An abridged overview of basic
  domain theory in constructive and predicative univalent foundations, including
  directed complete posets (dcpos), Scott continuous maps, the lifting of a set,
  exponentials and bilimits of dcpos, and Scott's \(D_\infty\) model of the
  untyped \(\lambda\)-calculus.
  \item[\normalfont\cref{sec:way-below}:] Definition and examples of the way-below
    relation and compact elements.
  \item[\normalfont\cref{sec:ind-completion}:] The ind-completion of a preorder:
    a tool used to discuss continuity and pseudocontinuity of dcpos.
  \item[\normalfont\cref{sec:continuous-and-algebraic-dcpos}:] Definitions of
    continuous and algebraic dcpos accompanied by a discussion on
    pseudocontinuity and issues concerning the axiom of choice.
  \item[\normalfont\cref{sec:small-bases}:] The notion of a small (compact)
    basis: strengthening continuity (resp.~algebraicity) by imposing smallness
    conditions.
  \item[\normalfont\cref{sec:round-ideal-completion}:] The (round) ideal
    completion of an abstract basis as a continuous dcpo with a small basis.
  \item[\normalfont\cref{sec:continuous-bilimits-and-exponentials}:]
    Bilimits and exponentials of structurally continuous (or algebraic) dcpos
    (with small bases), including a proof that Scott's \(D_\infty\) is algebraic
    with a small compact basis.
\end{description}

\section{Foundations}\label{sec:foundations}
We work within intensional Martin-L\"of Type Theory and we include \(+\)~(binary
sum), \(\Pi\)~(dependent product), \(\Sigma\)~(dependent sum), \(\Id\)
(identity type), and inductive types, including~\(\Zero\)~(empty type),
\(\One\)~(type with exactly one element \(\star : \One\)) and \(\Nat\)~(natural
numbers).
In general we adopt the same conventions of~\cite{HoTTBook}.  In particular, we
simply write \(x=y\) for the identity type \(\Id_{X}(x,y)\) and use \(\equiv\)
for the judgemental equality, and for dependent functions
\(f,g : \Pi_{x : X}A(x)\), we write \(f \sim g\) for the pointwise equality
\(\Pi_{x : X} f(x) = g(x)\).
\subsection{Universes}\label{sec:universes}
We assume a universe \(\U_0\) and two operations: for every universe \(\U\), a
successor universe \(\U^+\) with \(\U : \U^+\), and for every two universes
\(\U\) and \(\V\) another universe \(\U \sqcup \V\) such that for any
universe~\(\U\), we have \(\U_0 \sqcup \U \equiv \U\) and
\(\U \sqcup \U^+ \equiv \U^+\). Moreover, \((-)\sqcup(-)\) is idempotent,
commutative, associative, and \((-)^+\) distributes over \((-)\sqcup(-)\). We
write \(\U_1 \colonequiv \U_0^+\), \(\U_2 \colonequiv \U_1^+, \dots\) and so on.
If \(X : \U\) and \(Y : \V\), then \({X + Y} : \U \sqcup \V\) and if \(X : \U\)
and \(Y : X \to \V\), then the types \(\Sigma_{x : X} Y(x)\) and
\(\Pi_{x : X} Y(x)\) live in the universe \(\U \sqcup \V\); finally,
if~\(X : \U\) and \(x,y : X\), then \(\Id_{X}(x,y) : \U\). The type of natural
numbers \(\Nat\) is assumed to be in \(\U_0\) and we postulate that we have
copies \(\Zero_{\U}\) and \(\One_{\U}\) in every universe \(\U\).
This has the useful consequence that while we do not assume cumulativity of
universes, embeddings that lift types to higher universes are definable. For
example, the map \((-) \times \One_{\V}\) takes a type in any universe \(\U\) to
an equivalent type in the higher universe \(\U \sqcup \V\).
All our examples go through with just
two universes \(\U_0\) and \(\U_1\), but the theory is more easily developed in
a general setting.

\subsection{The univalent point of view}
Within this type theory, we adopt the univalent point of view~\cite{HoTTBook}.
A type \(X\) is a \emph{proposition} (or \emph{truth value} or
\emph{subsingleton}) if it has at most one element, i.e.\ we have an element of the type
\(\isprop(X) \colonequiv \prod_{x,y : X} x = y\).
A major difference between univalent foundations and other foundational systems
is that we \emph{prove} that types are propositions or properties. For~instance,
we can show (using function extensionality) that the axioms of directed complete
poset form a proposition.
A type \(X\) is a \emph{set} if any two elements can be identified in at most
one way, i.e.\ we have an element of the type
\(\prod_{x,y : X} \isprop(x = y)\).

\subsection{Extensionality axioms}
The univalence axiom~\cite{HoTTBook} is not needed for our development, although
we do pause to point out its consequences in a few places, namely in
\cref{sec:impredicativity,sec:basis-of-compact-elements,equality-of-continuous-dcpos}.

We assume function extensionality and propositional extensionality, often
tacitly:
\begin{enumerate}[(i)]
\item \emph{Propositional extensionality}: if \(P\) and \(Q\) are two
  propositions, then we postulate that \(P = Q\) holds exactly when we have both
  \(P \to Q\) and \(Q \to P\).
\item \emph{Function extensionality}: if \(f,g : \prod_{x : X}A(x)\) are two
  (dependent) functions, then we postulate that \(f = g\) holds exactly when
  \(f \sim g\).
\end{enumerate}
Function extensionality has the important consequence that the propositions form
an exponential ideal, i.e.\ if \(X\) is a type and \(Y : X \to \U\) is such that
every \(Y(x)\) is a proposition, then so is
\(\Pi_{x : X}Y(x)\)~\cite[Example~3.6.2]{HoTTBook}. In light of this, universal
quantification is given by \(\Pi\)-types in our type~theory.

\subsection{Propositional truncation}
In Martin-L\"of Type Theory, an element of
\(\prod_{x : X}\sum_{y : Y}\phi(x,y)\), by definition, gives us a function
\(f : X \to Y\) such that \(\prod_{x : X}\phi(x,f(x))\). In some cases, we wish
to express the weaker ``for every \(x : X\), there exists some \(y : Y\) such
that \(\phi(x,y)\)'' without necessarily having an assignment of \(x\)'s to
\(y\)'s. A good example of this is when we define directed families later (see
\cref{def:directed-family}). This is achieved through the notion of
propositional truncation.

Given a type \(X : \U\), we postulate that we have a proposition
\(\squash*{X} : \U\) with a function \({\tosquash{-} : X \to \squash*{X}}\) such
that for every proposition \(P : \V\) in any universe \(\V\), every function
\(f : X \to P\) factors (necessarily uniquely, by function extensionality)
through \(\tosquash{-}\).
Diagrammatically,
\begin{equation*}
  \begin{tikzcd}
    X \ar[dr, "\tosquash*{-}"'] \ar[rr, "f"] & & P \\
    & \squash*{X} \ar[ur, dashed]
  \end{tikzcd}
\end{equation*}

Notice that the induction and recursion principles automatically hold up to an
identification: writing \(\bar f\) for the dashed map above, we have an
identification \(\bar{f}(\tosquash{x}) = f(x)\) for every \(x : X\) because
\(P\) is assumed to be a proposition.
This is sufficient for our purposes and we do not require these equalities
to hold judgementally.

Existential quantification \(\exists_{x : X}Y(x)\) is given by
\(\squash*{\Sigma_{x : X}Y(x)}\). One should note that if we have
\(\exists_{x : X}Y(x)\) and we are trying to prove some proposition \(P\), then
we may assume that we have \(x : X\) and \(y : Y(x)\) when constructing our
element of \(P\). Similarly, we can define disjunction as
\(P \lor Q \colonequiv \squash*{P + Q}\).

We assume throughout that every universe is closed under propositional
truncations, meaning that if \(X : \U\) then \(\squash{X} : \U\) as well.
We also stress that propositional truncation is the only higher inductive
type used in our work.

Finally we recall a useful result due to
\citeauthor{KrausEtAl2017}~\cite[Theorem~5.4]{KrausEtAl2017} which has several
applications in this paper.

\begin{lemma}\label{constant-map-to-set-factors-through-truncation}
  Every constant map to a set factors through the truncation of its domain.
  Here, a map is constant if any two of its values are equal.
\end{lemma}

\subsection{Size and impredicativity}\label{sec:impredicativity}
We introduce the notion of smallness and use it to define propositional resizing
axioms, which we take to be the definition of impredicativity in univalent
foundations.

\begin{definition}[Smallness]
    A type \(X\) in any universe is said to be \emph{\(\U\)-small} if it is
    equivalent to a type in the universe \(\U\). That is,
    \({X \issmall{\U}} \colonequiv \Sigma_{Y : \U} \pa*{Y \simeq X}\).
\end{definition}

Here, the symbol \(\simeq\) refers to Voevodsky's notion of equivalence
\cite{HoTTBook}. Notice that the type that expresses the \(\U\)-smallness of
\(X\) is a proposition if and only if the univalence axiom holds,
see~\cite[Sections~3.14 and 3.36.3]{Escardo2019}.

\begin{definition}[Type of propositions \(\Omega_{\U}\)]
  The type of propositions in a universe \(\U\) is
  \[
    \Omega_{\U} \colonequiv \sum_{P : \U} \isprop(P) : \U^+.
  \]
\end{definition}

Observe that \(\Omega_{\U}\) itself lives in the successor universe
\(\U^+\). We often think of the types in some fixed universe \(\U\) as
\emph{small} and accordingly we say that \(\Omega_{\U}\) is
\emph{large}.
Similarly, the powerset of a type \(X : \U\) is large.  Given our
predicative setup, we must pay attention to universes when considering
powersets:

\begin{definition}[\(\V \)-powerset \(\powerset_{\V }(X)\), \(\V \)-subsets]\label{def:powerset}
  Let \(\V \) be a universe and \(X : \U \) type. We~define the
  \emph{\(\V \)-powerset} \(\powerset_{\V }(X)\) as
  \(X \to \Omega_{\V} : \V^+\sqcup \U \). Its elements are called
  \emph{\(\V \)-subsets} of \(X\).
\end{definition}
\begin{definition}[\(\in,\subseteq\)]\label{def:membership}
  Let \(x\) be an element of a type \(X\) and let \(A\) be an element of the
  powerset \(\powerset_{\V }(X)\). We write \(x \in A\) for the type
  \(\fst\pa*{A(x)}\).  The first projection \(\fst\) is needed because \(A(x)\),
  being of type \(\Omega_\V\), is a pair. Given two \(\V \)-subsets
  \(A\)~and~\(B\) of \(X\), we write \(A \subseteq B\) for
  \(\prod_{x : X}\pa*{x \in A \to x \in B}\).
\end{definition}
Function extensionality and propositional extensionality imply that \(A=B\) if and only if
\(A \subseteq B\) and \(B \subseteq A\).

\begin{definition}[Total space of a subset, \(\totalspace\)]%
  \label{def:total-space}%
  The \emph{total space} of a \(\T\)-valued subset \(S\) of a type \(X\) is
  defined as \(\totalspace(S) \colonequiv \Sigma_{x : X} (x \in S)\).
\end{definition}

One could ask for a \emph{resizing axiom} asserting that \(\Omega_{\U}\) has
size \(\U\), which we call \emph{the propositional impredicativity of \(\U\)}. A
closely related axiom is \emph{propositional resizing}, which asserts that every
proposition \(P : \U^+\) is \(\U\)-small. Without the addition of such resizing
axioms, the type theory is said to be \emph{predicative}.  As an example of the
use of impredicativity in mathematics, we mention that the powerset has unions
of arbitrary subsets if and only if propositional resizing
holds~\cite[Section~3.36.6]{Escardo2019}.

We note that the resizing axioms are actually theorems when classical logic
is assumed. This is because if \(P \lor \lnot P\) holds for every proposition in
\(P : \U\), then the only propositions (up to equivalence) are \(\Zero_{\U}\)
and \(\One_{\U}\), which have equivalent copies in \(\U_0\), and
\(\Omega_{\U}\) is equivalent to a type \(\Two_{\U} : \U\) with exactly two
elements.

\section{Basic domain theory in univalent foundations}\label{sec:basic-domain-theory}

We review basic domain theory in constructive and predicative univalent
foundations, laying the foundations for developing the theory of continuous and
algebraic domains.
For brevity, we only include proofs when they deviate from their classical
counterparts, paying special attention to universe parameters, the distinction
between data and property, and the use of the propositional truncation. We note
that full details may be found in Chapter~3 of the first author's PhD
thesis~\cite{deJongThesis} or the accompanying
formalisation~\cite{TypeTopologyPaper}.

\subsection{Introduction to constructive and predicative domain theory}
We offer the following overture in preparation of our development, especially if
the reader is familiar with domain theory in a classical, set-theoretic setting.

The basic object of study in domain theory is that of a \emph{directed complete
  poset} (dcpo).
In (impredicative) set-theoretic foundations, a dcpo can be defined to be a
poset that has least upper bounds of all directed subsets.
A naive translation of this to our foundation would be to proceed as
follows. Define a poset in a universe \(\U\) to be a type \(P:\U\) with a
reflexive, transitive and antisymmetric relation
\(-\below- : P \times P \to \U\).
Since we wish to consider posets and not categories we require that the values
\(p \below q\) of the order relation are \emph{subsingletons}.
Then we could say that the poset \((P,\below)\) is \emph{directed complete} if
every directed family \(I \to P\) with indexing type \(I : \U\) has a least
upper bound (supremum). The problem with this definition is that there are no
interesting examples in our constructive and predicative setting.
For instance, assume that the poset~$\Two$ with two elements \(0\below 1\) is
directed complete, and consider a proposition~\(A:\U\) and the directed family
\(A + \One \to \Two\) that maps the left component to~\(0\) and the right
component to~\(1\). By case analysis on its hypothetical
supremum, we conclude that the negation of \(A\) is
decidable. This amounts to weak excluded middle~(which is equivalent to De
Morgan's Law) and is constructively unacceptable.

To try to get an example, we may move to the poset \(\Omega_{\U_0}\) of
propositions in the universe \(\U_0\), ordered by implication. This poset does
have all suprema of families \(I \to \Omega_{\U_0}\) indexed by types \(I\) in
the \emph{first universe} \(\U_0\), given by existential quantification. But if
we consider a directed family \(I \to \Omega_{\U_0}\) with \(I\) in the
\emph{same universe} as \(\Omega_{\U_0}\) lives, namely the \emph{second
  universe} \(\U_1\), existential quantification gives a proposition in the
\emph{second universe} \(\U_1\) and so doesn't give its supremum. In this
example, we get a poset such that
\begin{enumerate}[(i)]
\item the carrier lives in the universe \(\U_1\),
\item the order has truth values in the universe \(\U_0\), and
\item suprema of directed families indexed by types in \(\U_0\) exist.
\end{enumerate}

Regarding a poset as a category in the usual way, we have a large, but locally
small, category with small filtered colimits (directed suprema). This is typical
of all the concrete examples that we consider, such as the dcpos in the
Scott model of PCF~\cite{deJong2021a} and Scott's \(D_\infty\)
model of the untyped \(\lambda\)-calculus (\cref{sec:Scott-D-infty}).
We may say that the predicativity restriction increases
the universe usage by one.  However, for the sake of generality, we formulate
our definition of dcpo with the following universe conventions:
\begin{enumerate}[(i)]
\item the carrier lives in a universe \(\U\),
\item the order has truth values in a universe \(\T\), and
\item suprema of directed families indexed by types in a universe \(\V\) exist.
\end{enumerate}
So our notion of dcpo has three universe parameters \(\U,\V\) and \(\T\). We
will say that the dcpo is \emph{locally small} when \(\T\) is not necessarily
the same as \(\V\), but the order has \(\V\)-small truth values. Most of the
time we mention \(\V\) explicitly and leave \(\U\) and \(\T\) to be understood
from the context.

\subsection{Directed complete posets indexed by universe parameters}
We now define directed complete poset in constructive and predicative univalent
foundations. We carefully explain our use of the propositional truncation in our
definitions and, as mentioned above, the type universes involved.

\begin{definition}[Preorder and poset]
  A \emph{preorder} \((P,\sqsubseteq)\) is a type \(P : \U \) together with a
  proposition-valued binary relation \({\sqsubseteq} : {P \to P \to \Omega_\T}\)
  that is reflexive and transitive.
  A \emph{poset} is a preorder \((P,\below)\) that is antisymmetric: if
  \(p \below q\) and \(q \below p\), then \(p = q\) for every \(p,q : P\).
\end{definition}

\begin{lemma}\label{posets-are-sets}
  If \((P,\below)\) is a poset, then \(P\) is a set.
\end{lemma}
\begin{proof}
  For every \(p,q : P\), the composite
  \[
    \pa{p = q} \xrightarrow{\text{by reflexivity}}
    {\pa{p \below q} \times \pa{q \below p}} \xrightarrow{\text{by antisymmetry}}
    \pa{p = q}
  \]
  is constant since \({\pa{p \below q} \times \pa{q \below p}}\) is a
  proposition. By \cite[Lemma~3.11]{KrausEtAl2017} it therefore follows that
  \(P\) must be a set.
\end{proof}

From now on, we will simply write ``let \(P\) be a poset'' leaving the partial
order \(\below\) implicit. We will often use the symbol \({\below}\) for partial
orders on different carriers when it is clear from the context which one it
refers to.

\begin{definition}[(Semi)directed family]\label{def:directed-family}
  A family \(\alpha : I \to P\) of elements of a poset \(P\) is
  \emph{semidirected} if whenever we have \(i,j : I\), there exists some
  \(k : I\) such that \(\alpha_i \below \alpha_k\) and
  \(\alpha_j \below \alpha_k\).
  We frequently use the shorthand \({\alpha_i,\alpha_j} \below \alpha_k\) to
  denote the latter requirement.
  Such a family is \emph{directed} if it is semidirected and its domain \(I\) is
  inhabited.%
\end{definition}

\begin{remark}
  Note our use of the propositional truncation in defining when a family is
  \emph{directed}. To make this explicit, we write out the definition in
  type-theoretic syntax: a family \(\alpha : I \to P\) is directed if
  \begin{enumerate}[(i)]
  \item\label{dir-inh} we have an element of \(\squash{I}\), and
  \item\label{dir-semidir}
    \(\Pi_{i,j : I} \squash*{\Sigma_{k : I}\pa*{\alpha_i \below \alpha_k} \times
      \pa*{\alpha_j \below \alpha_k}}\).
  \end{enumerate}
  The use of the propositional truncation ensures that the types \eqref{dir-inh}
  and \eqref{dir-semidir} are propositions and hence that being (semi)directed
  is a property of a family.
  The type \eqref{dir-semidir} without truncation would express an
  assignment of a chosen \(k : I\) for every \(i,j : I\) instead.
\end{remark}

\emph{Least upper bounds} or \emph{suprema} (of families) are defined as usual.

\begin{definition}[\(\V \)-directed complete poset, \(\V\)-dcpo, %
  \(\bigsqcup \alpha\), \(\bigsqcup_{i : I}\alpha_i\)]
  For a universe~\(\V\), a \emph{\(\V \)-directed complete poset} (or
  \emph{\(\V \)-dcpo}, for short) is a poset \(D\) such that every directed
  family \(\alpha : I \to D\) with \(I : \V \) has a supremum in \(D\) that we
  denote by \(\bigsqcup \alpha\) or \(\bigsqcup_{i : I} \alpha_i\).
\end{definition}

\begin{remark}\label{directed-completeness-is-prop}
  Explicitly, we ask for an element of the type
  \[
    \Pi_{I : \V}\Pi_{\alpha : I \to D}\pa*{\operatorname{is-directed} \alpha \to
      \Sigma_{x : D}\pa*{x \mathrel{\operatorname{is-sup-of}} \alpha}},
  \]
  where \(\pa*{x \mathrel{\operatorname{is-sup-of}} \alpha}\) is the type expressing
  that \(x\) is the supremum of \(\alpha\).
  Even though we used \(\Sigma\) and not \(\exists\) in this expression, this
  type is still a proposition: By \cite[Example~3.6.2]{HoTTBook}, it suffices to
  prove that the type
  \(\Sigma_{x : D}(x \mathrel{\operatorname{is-sup-of}} \alpha)\) is a
  proposition. So suppose that we have \(x,y : D\) with
  \(p : x \mathrel{\operatorname{is-sup-of}} \alpha\) and
  \(q : y \mathrel{\operatorname{is-sup-of}} \alpha\). Being the supremum of a
  family is a property because the partial order is proposition-valued. Hence,
  by \cite[Lemma~3.5.1]{HoTTBook}, to prove that \((x,p) = (y,q)\), it suffices
  to prove that \(x = y\). But this follows from antisymmetry and the fact that
  \(x\) and \(y\) are both suprema of \(\alpha\).
\end{remark}

We will sometimes leave the universe \(\V \) implicit, and simply speak of a
dcpo. On other occasions, we need to carefully keep track of universe levels. To
this end, we make the following definition.
\begin{definition}[\(\DCPO{V}{U}{T}\)]\label{def:DCPO}
  Let \(\V\), \(\U\) and \(\T\) be universes. We write \(\DCPO{V}{U}{T}\) for
  the type of \(\V \)-dcpos with carrier in \(\U \) and order taking values in
  \(\T \).
  We often leave the parameters \(\U\) and \(\T\) implicit.
\end{definition}

\begin{remark}\label{universe-levels-of-lifting-and-exponentials}
  In particular, it is very important to keep track of the universe parameters
  of the lifting~(\cref{sec:lifting}) and of
  exponentials~(\cref{sec:exponentials}) in order to ensure that it is possible
  to construct Scott's \(D_\infty\) (\cref{sec:Scott-D-infty}) and the Scott
  model of PCF~\cite{deJong2021a} in our predicative setting.
\end{remark}

In many examples and applications, we deal with dcpos with a least element,
denoted by \(\bot_D\) or simply \(\bot\), and in which case we speak of
\emph{pointed} dcpos.

\begin{definition}[Local smallness]\label{def:local-smallness}
  A \(\V\)-dcpo \(D\) is \emph{locally small} if \(x \below y\) is \(\V\)-small
  for every \(x,y : D\).
\end{definition}

\begin{lemma}\label{local-smallness-alt}
  A \(\V\)-dcpo \(D\) is locally small if and only if we have a relation
  \({\below_{\V}} : D \to D \to \V\) such that \(x \below y\) holds
  precisely when \(x \below_{\V} y\) does.
\end{lemma}
\begin{proof}
  The \(\V\)-dcpo \(D\) is locally small exactly when we have an element of
  \[
    \Pi_{x,y : D}\Sigma_{T : \V}\pa*{T \simeq {(x \below y)}}.
  \]
  But this type is equivalent to
  \[
    \Sigma_{R : {D \to D \to \V}}\Pi_{x,y : D}\pa*{R(x,y) \simeq {(x \below y)}}
  \]
  by distributivity of \(\Pi\) over \(\Sigma\)~\cite[Theorem~2.5.17]{HoTTBook}.
\end{proof}

Nearly all examples of \(\V\)-dcpos in this paper will be locally small. We now
introduce two fundamental examples of dcpos: the type of subsingletons and
powersets.

\begin{example}[The type of subsingletons as a pointed dcpo]\label{Omega-as-pointed-dcpo}
  For any type universe~\(\V\), the type \(\Omega_{\V}\) of subsingletons in
  \(\V\) is a poset if we order the propositions by implication.
  Note that antisymmetry holds precisely because of propositional
  extensionality.
  Moreover, \(\Omega_{\V}\) has a least element, namely \(\Zero_{\V}\), the
  empty type in~\(\V\), and suprema for all (not necessarily directed) families
  indexed by a type in \(\V\).
  Finally, paying attention to the universe levels we observe that
  \(\Omega_{\V} : \DCPO{V}{V^+}{V}\), hence it is locally small.
\end{example}

\begin{example}[Powersets as pointed dcpos]\label{powersets-as-pointed-dcpos}
  Recalling our treatment of subset and powersets from
  \cref{sec:impredicativity}, we show that powersets give examples of
  pointed dcpos.
  Specifically, for every type \(X : \U\) and every type universe \(\V\), the
  subset inclusion \(\subseteq\) makes \(\powerset_{\V}(X)\) into a poset, where
  antisymmetry holds by function extensionality and propositional
  extensionality.
  Moreover, \(\powerset_{\V}(X)\) has a least element of course: the empty set
  \(\emptyset\).
  We also claim that \(\powerset_{\V}(X)\) has suprema for all (not necessarily
  directed) families \(\alpha : I \to \powerset_{\V}(X)\) with \(I : \V\).
  Given such a family \(\alpha\), its least upper bound is given by
  \(\bigcup \alpha \colonequiv \lambdadot{x}{\exists_{i : I}\,x\in\alpha_i}\),
  the set-theoretic union, which is well-defined as
  \(\pa*{\exists_{i : I}\,x\in\alpha_i} : \V\).
  Finally, paying attention to the universe levels we observe that
  \(\powerset_{\V}(X) : \DCPO{V}{V^+ \sqcup \U}{V \sqcup \U}\).
  In the case that \(X : \U \equiv \V\), we obtain the simpler, locally small
  \(\powerset_{\V}(X) : \DCPO{V}{V^+}{V}\).
\end{example}

Of course, \(\Omega_{\V}\) is easily seen to be equivalent to
\(\powerset_{\V}(\One_\V)\), so \cref{powersets-as-pointed-dcpos} subsumes
\cref{Omega-as-pointed-dcpo}, but it is instructive to understand
\cref{Omega-as-pointed-dcpo} first.

\subsection{Scott continuous maps}\label{sec:Scott-continuous-maps}

\begin{definition}[Scott continuity]
  A function \(f : D \to E\) between two \(\V\)\nobreakdash-dcpos is
  \emph{(Scott) continuous} if it preserves directed suprema, i.e.\ if
  \(I : \V \) and \(\alpha : I \to D\) is directed, then
  \(f\pa*{\bigsqcup \alpha}\) is the supremum in \(E\) of the family
  \(f \circ \alpha\).
\end{definition}

\begin{remark}
  When we speak of a Scott continuous function between \(D\) and~\(E\), then we
  will always assume that \(D\) and \(E\) are both \(\V\)-dcpos for the same
  arbitrary but fixed type universe \(\V\).
  Notice that Scott continuity is a property of a map and that any Scott
  continuous function is monotone.
\end{remark}

\begin{remark}
  In constructive mathematics it is not possible to exhibit a discontinuous
  function from \(\Nat^\Nat\) to \(\Nat\), because
  sheaf~\cite[Chapter~15]{TroelstraVanDalen1988} and realizability
  models~\cite[e.g.~Proposition~3.1.6]{vanOosten2008} imply that it is
  consistent to assume that all such functions are continuous.
  This does not mean, however, that we cannot exhibit a discontinuous function
  between dcpos. In fact, the negation map \({\lnot} : \Omega \to \Omega\) is
  not monotone and hence not continuous.
  If we were to preclude such examples, then we can no longer work with the full
  type \(\Omega\) of all propositions, but instead we must restrict to a subtype
  of propositions, for example by using dominances~\cite{Rosolini1986}.
  Indeed, this approach is investigated in the context of topos theory
  in~\cite{Phao1991,Longley1995} and for computability instead of continuity in
  univalent foundations in~\cite{EscardoKnapp2017}.
\end{remark}

\begin{definition}[Strictness]
  A Scott continuous function \(f : D \to E\) between pointed dcpos is \emph{strict}
  if \(f\pa*{\bot_{D}} = \bot_{E}\).
\end{definition}

\begin{lemma}\label{pointed-dcpos-sups}
  A poset \(D\) is a pointed \(\V\)-dcpo if and only if it has suprema for all
  semidirected families indexed by types in \(\V\) that we will denote using the
  \(\bigvee\) symbol.
  In~particular, a pointed \(\V\)-dcpo has suprema of all families indexed by
  propositions in \(\V\).

  Moreover, if \(f\) is a Scott continuous and strict map between pointed
  \(\V\)-dcpos, then \(f\) preserves suprema of semidirected families.
\end{lemma}

Classically, this can be proved by case distinction on whether the domain of a
semidirected family is inhabited, but here we avoid this as follows:

\begin{proof}
  If \(D\) is complete with respect to semidirected families indexed by types
  in~\(\V\), then it is clearly a \(\V\)-dcpo and it is pointed because the
  supremum of the family indexed by the empty type is the least element.
  Conversely, if \(D\) is a pointed \(\V\)-dcpo and \(\alpha : I \to D\) is a
  semidirected family with \(I : \V\), then
  \begin{align*}
    \hat\alpha : I + \One_{\V} &\to D \\
    \inl(i) &\mapsto \alpha_i \\
    \inr(\star) &\mapsto \bot
  \end{align*}
  is directed and hence has a sup in \(D\) which is also the least
  upper bound of~\(\alpha\).

  A pointed \(\V\)-dcpo must have suprema for all families indexed by
  propositions in~\(\V\), because any such family is semidirected.
  Finally, suppose that \(\alpha : I \to D\) is semidirected and that
  \(f : D \to E\) is Scott continuous and strict. Using the
  \(\widehat{(-)}\)-construction from above, as well as Scott continuity and
  strictness of~\(f\), we~get
  \[
    f\pa*{\textstyle\bigvee \alpha}
    \equiv f\pa*{\textstyle\bigsqcup\hat\alpha}
    = \textstyle\bigsqcup f \circ \hat\alpha
    = \textstyle\bigsqcup \widehat{f \circ \alpha}
    \equiv \textstyle\bigvee {f \circ \alpha},
  \]
  finishing the proof.
\end{proof}

\begin{definition}[Isomorphism]
  A Scott continuous map \(f : D \to E\) is an \emph{isomorphism} if we have a
  Scott continuous inverse \(g : E \to D\).
\end{definition}

\begin{definition}[Scott continuous retract]\label{def:continuous-retract}%
  A dcpo \(D\) is a \emph{(Scott continuous) retract} of \(E\) if we have Scott
  continuous maps \(s : D \to E\) and \(r : E \to D\) such that \(s\)~is a
  section of \(r\). We denote this situation by \(\retract{D}{E}\).
\end{definition}

\begin{lemma}\label{locally-small-retract}
  If \(D\) is a retract of \(E\) and \(E\) is locally small, then so is~\(D\).
\end{lemma}
\begin{proof}
  We claim that \(x \below_D y\) and \(s(x) \below_E s(y)\) are equivalent,
  which proves the lemma as \(E\) is assumed to be locally small.
  One direction of the equivalence is given by the fact that \(s\) is monotone.
  In the other direction, assume that \(s(x) \below s(y)\) and note that
  \(x = r(s(x)) \below r(s(y)) = y\), as \(r\) is monotone and \(s\) is a
  section of \(r\).
\end{proof}

\subsection{Lifting}\label{sec:lifting}

We now turn to constructing pointed \(\V\)-dcpos from sets.
First of all, every discretely ordered set is a \(\V\)-dcpo, where discretely
ordered means that we have \(x \below y\) exactly when \(x = y\).
In fact, ordering \(X\) discretely yields the free \(\V\)-dcpo on the set \(X\)
in the categorical sense.

With excluded middle, the situation for \emph{pointed} \(\V\)-dcpos is also
straightforward. Simply order the set \(X\) discretely and add a least
element.
However, in \cite[Lemma~17]{deJong2021a}, we showed, by considering
\(X \equiv \Nat\) and a reduction to the constructive taboo
LPO~\cite{Bishop1967}, that this approach is constructively unsatisfactory.
Moreover, in~\cite{deJongEscardo2023} we proved a general constructive no-go
theorem showing that there is a nontrivial dcpo with decidable equality if and
only if weak excluded middle holds.

Our solution to the above will be to work with the lifting monad, sometimes
known as the partial map classifier monad from topos
theory~\cite{Johnstone1977,Rosolini1986,Rosolini1987,Kock1991}, which has been
extended to constructive type theory by
\citeauthor{ReusStreicher1999}~\cite{ReusStreicher1999} and recently to
univalent foundations by
\citeauthor{EscardoKnapp2017}~\cite{EscardoKnapp2017,Knapp2018}.

\begin{definition}[Lifting, partial element, \(\lifting_{\V}(X)\); %
  {\cite[Section~2.2]{EscardoKnapp2017}}]
  We define the type of \emph{partial elements} of a type \(X : \U\) with
  respect to a universe \(\V\) as
  \[
    \lifting_{\V}(X) \colonequiv \Sigma_{P : \Omega_{\V}}(P \to X)
  \]
  and we also call it the \emph{lifting} of \(X\) with respect to \(\V\).
\end{definition}

Every (total) element of \(X\) gives rise to a partial element of \(X\) through
the following map:

\begin{definition}[\(\eta_X\)]
  The map \(\eta_X : X \to \lifting_{\V}(X)\) is defined by mapping \(x\) to the
  tuple \(\pa*{\One_{\V},\lambdadot{u}{x}}\), where we have omitted the witness
  that \(\One_{\V}\) is a subsingleton.
  We sometimes omit the subscript in \(\eta_X\).
\end{definition}

Besides these total elements, the lifting has another distinguished element that
will be the least in the order with which we shall equip the
lifting.

\begin{definition}[\(\bot\)]\label{def:lifting-bot}
  For every type \(X : \U\) and universe \(\V\), we denote the element
  \(\pa*{\Zero_{\V},\varphi} : \lifting_{\V}(X)\) by \(\bot\). (Here \(\varphi\)
  is the unique map from \(\Zero_{\V}\) to \(X\).)
\end{definition}

\begin{proposition}[{\cite[Lemma~18]{deJongEscardo2021a}}]\label{lifting-order}
  The \((\V^+\sqcup \U)\)-valued binary relation on \(\lifting_{\V}(X)\)
  given by
  \[
    (P,\varphi) \below (Q,\psi) \colonequiv {P \to (P,\varphi) = (Q,\psi)}
  \]
  is a partial order on \(\lifting_{\V}(X)\) for every set \(X : \U\).
  Moreover, it is equivalent to the relation
  \[
    (P,\varphi) \below' (Q,\psi) \colonequiv \Sigma_{f : P \to Q}%
    \pa*{\varphi \sim \psi \circ f}
  \]
  that is valued in \(\V \sqcup \U\).
\end{proposition}

In light of~\cref{universe-levels-of-lifting-and-exponentials}, we carefully
keep track of the universe parameters of the lifting in the following
proposition.

\begin{proposition}[cf.~{\cite[Theorem~1]{EscardoKnapp2017}}]%
  \label{lifting-is-pointed-dcpo}%
  For a set \(X : \U\), the lifting \(\lifting_{\V}(X)\) ordered as in
  \cref{lifting-order} is a pointed \(\V\)-dcpo.
  In full generality we have \(\lifting_{\V}(X) : \DCPO{V}{V^+ \sqcup U}{V^+ \sqcup U}\), but
  if \(X : \V\), then \(\lifting_{\V}(X)\) is locally small.
\end{proposition}

\subsection{Exponentials}\label{sec:exponentials}
Exponentials will be crucial in Scott's \(D_\infty\) construction
(\cref{sec:Scott-D-infty}).

\begin{definition}[Exponential of (pointed) dcpos, \(E^D\)]
  The \emph{exponential} of two \(\V\)\nobreakdash-dcpos \(D\) and \(E\) is
  given by the poset \(E^D\) defined as follows. Its carrier is the type of
  Scott continuous functions from \(D\) to \(E\).
  The order is given pointwise, i.e.\ \(f \below_{E^D} g\) holds if
  \(f(x) \below_{E} g(x)\) for every \(x : D\).
  Notice that if \(E\) is pointed, then so is \(E^D\) with least element given the constant function at the least element of \(E\).
  Finally, it is straightforward to show that \(E^D\) is \(\V\)-directed
  complete, so that \(E^D\) is another \(\V\)-dcpo.
\end{definition}

Note that the exponential \(E^D\) is a priori not locally small even if \(E\) is
because the partial order quantifies over all elements of \(D\). But if \(D\)
has a small basis then \(E^D\) will be locally small when \(E\) is
(\cref{exponential-is-locally-small}).

\begin{remark}\label{exponential-universe-parameters}%
  Recall from~\cref{universe-levels-of-lifting-and-exponentials} that it is
  necessary to carefully keep track of the universe parameters of the
  exponential.
  In general, the universe levels of \(E^{D}\) can be quite large and
  complicated. For~if \(D : \DCPO{V}{U}{T}\) and \(E : \DCPO{V}{U'}{T'}\), then
  the exponential \(E^D\) has a carrier in the universe
  \[
    \V^+ \sqcup \U \sqcup \T \sqcup \U' \sqcup \T'
  \]
  and an order relation that
  takes values in
  \(
    \U \sqcup \T'.
    \)

  Even~if
  \(\V = \U \equiv \T \equiv \U' \equiv \T'\), the carrier of \(E^{D}\) still
  lives in the larger universe~\(\V ^+\), because the type expressing Scott
  continuity for \(\V\)-dcpos quantifies over all types in~\(\V\).
  Actually, the scenario where \(\U = \U' = \V\) cannot happen in a predicative
  setting unless \(D\) and \(E\) are trivial, in a sense made precise in
  \cite{deJongEscardo2023}.

  Even so, in many applications such as those in \cite{deJong2021a}
  or~\cref{sec:Scott-D-infty}, if we take \(\V \equiv \U_0\) and all other
  parameters to be \(\U \equiv \T \equiv \U' \equiv \T' \equiv \U_1\), then the
  situation is much simpler and \(D\),~\(E\)~and the exponential \(E^D\) are all
  elements of \(\DCPOnum{0}{1}{1}\) with all of them being locally small
  (remember that this is defined up to equivalence).
  This turns out to be a very favourable situation for both the Scott model of
  PCF~\cite{deJong2021a} and Scott's \(D_\infty\) model of the untyped
  \(\lambda\)-calculus~(\cref{sec:Scott-D-infty}).
  In summary, if we take \(\V \equiv \U_0\) and all other parameters to be
  \(\U_1\), then the iterated exponentials all remain in \(\U_1\).
\end{remark}

After defining products of dcpos as usual (which we omit here for the sake of
brevity), we can state and prove a universe parametric version of the universal
property of exponentials.
In the proposition below we can have \(D : \DCPO{\V}{\U}{\T}\) and
\(E : \DCPO{\V}{\U'}{\T'}\) for arbitrary universes \(\U\), \(\T\), \(\U'\) and
\(\T'\). In particular, the universe parameters of \(D\) and \(E\), apart from
the universe of indexing types, need not be the same.

\begin{proposition}\label{exponential-universal-property}%
  The exponential defined above satisfies the appropriate universal property:
  the \emph{evaluation map} \(\ev: E^D \times D \to E, (g,x) \mapsto g(x)\) is
  Scott continuous and if \(f : {D'\times D} \to E\) is a Scott continuous
  function, then there is a unique Scott continuous map
  \(\bar{f} \colon D' \to E^D\) such that the diagram
  \[
    \begin{tikzcd}
      D' \times D \ar[dr,"f"] \ar[d,dashed,"{\bar{f}}\,\times\,{\id_D}"'] \\
      E^D \times D \ar[r,"\ev"'] & E
    \end{tikzcd}
  \]
  commutes.
\end{proposition}
\begin{proof}
  As in the classical case.
\end{proof}

\subsection{Bilimits}\label{sec:bilimits}
In \cref{sec:Scott-D-infty}, we give a predicative account of Scott's
\(D_\infty\) model of the untyped \(\lambda\)-calculus~\cite{Scott1972}.
Here, we describe the general machinery underlying Scott's construction.

A~priori one might expect that iterative constructions of dcpos, such as
Scott's~\(D_\infty\), may result in a need for ever-increasing universes and are
predicatively impossible. We show, through a careful tracking of type universe
parameters, that this is not the case.
Secondly, differences arise from proof relevance and these complications are
tackled with techniques in univalent foundations
and~\cref{constant-map-to-set-factors-through-truncation} in particular, as
discussed right before~\cref{kappa-is-constant}, for example.
Compared to Scott's original paper~\cite{Scott1972}, we also generalise from
sequential bilimits to directed bilimits.

\begin{definition}[Deflation]
  A continuous endofunction \(f : D \to D\) is a \emph{deflation} if
  \(f(x) \below x\) for all \(x : D\).
\end{definition}
\begin{definition}[Embedding-projection pair]\label{def:embedding-projection-pair}
  An \emph{embedding-projection pair} from a \(\V\)-dcpo \(D\) to a \(\V\)-dcpo
  \(E\) consists of two Scott continuous functions \(\varepsilon : D \to E\)
  (the~\emph{embedding}) and \(\pi : E \to D\) (the~\emph{projection}) such~that
  \(\varepsilon\) is a section of \(\pi\) and \(\varepsilon \circ \pi\) is a
  deflation.
\end{definition}

For the remainder of this section, fix the following setup, where we try to be
as general as possible regarding universe levels.
We fix a directed preorder \((I,\below)\) with \(I : \V\) and such that \(\below\) takes
values in some universe \(\W\). Now suppose that \((I,\below)\) indexes a family
of \(\V\)-dcpos with embedding-projection pairs between them, i.e.\ we have
\begin{itemize}
\item for every \(i : I\), a \(\V \)-dcpo \(D_i : \DCPO{V}{U}{T}\), and
\item for every \(i,j : I\) with \(i \sqsubseteq j\), an embedding-projection
  pair \(\pa*{\varepsilon_{i,j},\pi_{i,j}}\) from \(D_i\) to \(D_j\).
\end{itemize}
Moreover, we require that the following compatibility conditions hold:
\begin{align}
  &\text{for every \(i : I\), we have \(\varepsilon_{i,i} = \pi_{i,i} = \id\)}; %
  \label{epsilon-pi-id} \\
  &\text{for every \(i \sqsubseteq j \sqsubseteq k\) in \(I\), we have
  \(\varepsilon_{i,k} \sim \varepsilon_{j,k} \circ \varepsilon_{i,j}\) and
    \(\pi_{i,k} \sim \pi_{i,j} \circ \pi_{j,k}\).} %
  \label{epsilon-pi-comms}
\end{align}

The goal is now to construct another \(\V\)-dcpo \(D_\infty\) with
embedding-projections pairs
\(\pa*{\varepsilon_{i,\infty} : D_1 \hookrightarrow D_\infty, {\pi_{i,\infty} :
    D_\infty \to D_i}}\) for every \(i : I\), such that
\(\pa*{D_\infty,\pa*{\varepsilon_{i,\infty}}_{i : I}}\) is the colimit of the
diagram given by \(\pa*{\varepsilon_{i,j}}_{i \below j \text{ in } I}\) and
\(\pa*{D_\infty,\pa*{\pi_{i,\infty}}_{i : I}}\) is the limit of the
diagram given by \(\pa*{\pi_{i,j}}_{i \below j \text{ in } I}\).
In other words,
\(\pa*{D_\infty,\pa*{\varepsilon_{i,\infty}}_{i : I},\pa*{\pi_{i,\infty}}_{i :
    I}}\) is both the colimit and the limit in the category of \(\V\)-dcpos with
embedding-projections pairs between them. We say that it is the \emph{bilimit}.

\begin{definition}[\(D_\infty\)]\label{def:D-infty}%
  We define a poset \(D_\infty\) as follows. Its carrier is given by the type of
  elements \(\sigma\) of the product \(\Pi_{i : I}D_i\) satisfying
  \(\pi_{i,j}(\sigma_j) = \sigma_i\) whenever \(i \below j\).
  That is, the carrier is the type
  \[
    \sum_{\sigma : \Pi_{i : I} D_i} \prod_{{i,j} : I , i \below j}
    \pi_{i,j}\pa*{\sigma_j} = \sigma_i.
  \]
  Note that this defines a subtype of \(\Pi_{i : I}D_i\) as the condition
  \(\prod_{{i,j} : I , i \below j} \pi_{i,j}\pa*{\sigma_j} = \sigma_i\) is a
  property by \cite[Example~3.6.2]{HoTTBook} and the fact that each \(D_i\) is a
  set.
  These functions are ordered pointwise, i.e.\ if
  \(\sigma,\tau : \Pi_{i : I} D_i\), then \(\sigma \below_{D_\infty} \tau\)
  exactly when \(\sigma_i \below_{D_i} \tau_i\) for every \(i : I\).
\end{definition}

The proof of the following is as in the classical case, but we pay attention to
the universe levels.

\begin{lemma}
  The poset \(D_\infty\) is \(\V\)-directed complete with suprema calculated
  pointwise, and we have  \(D_\infty : \DCPO{V}{U \sqcup V \sqcup W}{U \sqcup T}\).
\end{lemma}

\begin{remark}\label{bilimit-universe-parameters}%
  We allow for general universe levels here, which is why \(D_\infty\) lives in
  the relatively complicated universe \(\U \sqcup \V \sqcup \W\). In concrete
  examples, the situation often simplifies. E.g., in \cref{sec:Scott-D-infty} we
  find ourselves in the favourable situation described
  in~\cref{exponential-universe-parameters} where \(\V \equiv \W \equiv \U_0\)
  and \(\U \equiv \T \equiv \U_1\), so that we get
  \(D_\infty : \DCPOnum{0}{1}{1}\), as the bilimit of a diagram of dcpos
  \(D_n : \DCPOnum{0}{1}{1}\) indexed by natural numbers.
\end{remark}

\begin{definition}[\(\pi_{i,\infty}\)]\label{pi-infty}
  For every \(i : I\), we define the Scott continuous function
  \(\pi_{i,\infty} : {D_\infty \to D_i}\) by \(\sigma \mapsto \sigma_i\).
\end{definition}

While we could closely follow~\cite{Scott1972} up until this point, we will now
need a new idea to proceed.
Our goal is to define maps \(\varepsilon_{i,\infty} : D_i \to D_\infty\) for
every \(i : I\) so that \(\varepsilon_{i,\infty}\) and \(\pi_{i,\infty}\) form
an embedding-projection pair.
We give an outline of the idea for defining this map
\(\varepsilon_{i,\infty}\). For~an arbitrary element \(x : D_i\), we need to
construct \(\sigma : D_\infty\) at component \(j : I\), say. If we had \(k : I\)
such that \(i,j \below k\), then we could define \(\sigma_j : D_j\) by
\(\pi_{j,k}\pa*{\varepsilon_{i,k}(x)}\).
Now semidirectedness of \(I\) tells us that there exists such a \(k : I\), so
the point is to somehow make use of this propositionally truncated fact. This is
where~\cref{constant-map-to-set-factors-through-truncation} comes in. We
recall that it says that a constant map to a set factors through the propositional
truncation of its domain.
We define a map
\(\kappa_{i,j}^x : \pa*{\Sigma_{k : I}\,\pa{i \below k} \times \pa{j \below k}}
\to D_j\) by sending \(k\) to \(\pi_{j,k}\pa*{\varepsilon_{i,k}(x)}\) and show
it to be constant, so that it factors through the truncation of its domain.
In the special case that \(I \equiv \Nat\), as in~\cite{Scott1972}, we could
simply take \(k\) to be the sum of the natural numbers \(i\) and \(j\), but this
does not work in the more general directed case, of course.

\begin{definition}[\(\kappa_{i,j}^x\)]\label{def:kappa}
  For every \(i,j : I\) and \(x : D_i\) we define the function
  \[
    \kappa_{i,j}^x : \pa*{\Sigma_{k : I}\,\pa{i \below k} \times \pa{j \below
        k}} \to D_j
  \]
  by mapping \(k : I\) with \(i,j \below k\) to
  \(\pi_{j,k}\pa*{\varepsilon_{i,k}(x)}\).
\end{definition}

\begin{lemma}\label{kappa-is-constant}
  The function \(\kappa_{i,j}^x\) is constant for ever \(i,j : I\) and
  \(x : D_i\).
  Hence, \(\kappa_{i,j}^x\) factors through the type
  \(\exists_{k : I}\,\pa{i \below k}\times\pa{j \below k}\)
  by~\cref{constant-map-to-set-factors-through-truncation}.
\end{lemma}
\begin{proof}
  If we have \(k_1,k_2 : I\) with \(i \below k_1,k_2\) and \(j \below k_1,k_2\),
  then by semidirectedness of \(I\), there exists some \(k : K\) with
  \(k_1,k_2 \below k\) and hence,
  \begin{align*}
    &\pa*{\pi_{j,k_1} \circ \varepsilon_{i,k_1}} (x) \\
    &= \pa*{\pi_{j,k_1} \circ \pi_{k_1,k} \circ \varepsilon_{k_1,k} \circ \varepsilon_{i,k_1}} (x)
      &&\text{(as \(\varepsilon_{k_1,k}\) is a section of \(\pi_{k_1,k}\))}
    \\
    &= \pa*{\pi_{j,k} \circ \varepsilon_{i,k}}(x)
      &&\text{(by \cref{epsilon-pi-comms})}
    \\
    &= \pa*{\pi_{j,k} \circ \pi_{k_2,k} \circ \varepsilon_{k_2,k} \circ \varepsilon_{i,k_2}} (x)
      &&\text{(as \(\varepsilon_{k_2,k}\) is a section of \(\pi_{k_2,k}\))}
    \\
    &= \pa*{\pi_{j,k_2} \circ \varepsilon_{i,k_2}}(x)
      &&\text{(by \cref{epsilon-pi-comms})},
  \end{align*}
  proving that \(\kappa_{i,j}^x\) is constant.
\end{proof}

\begin{definition}[\(\rho_{i,j}\)]
  For every \(i,j : I\), the type
  \(\exists_{k : I}\,\pa{i \below k} \times \pa{j \below k}\) has an element
  since \((I,\below)\) is directed. Thus, \cref{kappa-is-constant} tells us that
  we have a function \(\rho_{i,j} : D_i \to D_j\) such that if \(i,j \below k\),
  then the equation
  \begin{equation}\label{rho-eq}
    \rho_{i,j}(x) = \kappa_{i,j}^x(k) \equiv \pi_{j,k}\pa*{\varepsilon_{i,k}(x)}
  \end{equation}
  holds for every \(x : D_i\).
\end{definition}

\begin{definition}[\(\varepsilon_{i,\infty}\)]\label{epsilon-infty}
  The map \(\rho\) induces a map \(\varepsilon_{i,\infty} : D_i \to D_\infty\)
  by sending \(x : D_i\) to the function \(\lambdadot{j : I}{\rho_{i,j}(x)}\).
  To see that this is well-defined, assume that we have \(j_1 \below j_2\) in
  \(J\) and \(x : D_i\). We have to show that
  \(\pi_{j_1,j_2}\pa*{\pa*{\varepsilon_{i,\infty}(x)}_{j_2}} =
  \pa*{\varepsilon_{i,\infty}(x)}_{j_1}\).
  By semidirectedness of \(I\) and the fact that are looking to prove a
  proposition, we may assume to have \(k : I\) with \(i \below k\) and
  \(j_1 \below j_2 \below k\). Then,
  \begin{align*}
    \pi_{j_1,j_2}\pa*{\pa*{\varepsilon_{i,\infty}(x)}_{j_2}}
    &\equiv \pi_{j_1,j_2}\pa*{\rho_{i,j_2}(x)} \\
    &= \pi_{j_1,j_2}\pa*{\pi_{j_2,k}\pa*{\varepsilon_{i,k}(x)}}
    &&\text{(by \cref{rho-eq})} \\
    &= \pi_{j_1,k}\pa*{\varepsilon_{i,k}(x)}
    &&\text{(by \cref{epsilon-pi-comms})}
    \\
    &= \rho_{i,j_1}(x)
    &&\text{(by \cref{rho-eq})} \\
    &\equiv \pa*{\varepsilon_{i,\infty}(x)}_{j_1},
  \end{align*}
  as desired.
\end{definition}

This completes the definition of \(\varepsilon_{i,\infty}\). From this point on,
we can typically work with it using~\cref{rho-eq} and the fact that
\(\pa*{\varepsilon_{i,\infty}(x)}_j\) is defined as \(\rho_{i,j}(x)\).
Adapting~\cite{Scott1972} to the directed case, we can then prove the following
theorems.

\begin{theorem}\label{epsilon-pi-infty-ep-pair}
  For every \(i : I\), the pair
  \(\pa*{\varepsilon_{i,\infty},\pi_{i,\infty}}\) is an embedding-projection
  pair from \(D_i\) to \(D_\infty\).
\end{theorem}

\begin{theorem}\label{limit}%
  The \(\V\)-dcpo \(D_\infty\) with the maps \(\pa*{\pi_{i,\infty}}_{i : I}\) is
  the limit of the diagram
  \(\pa*{\pa*{D_i}_{i : I} , \pa*{\pi_{i,j}}_{i \below j}}\).
  That is, given a \(\V \)-dcpo \(E : \DCPO{V}{U'}{T'}\) and Scott continuous
  functions \(f_i : E \to D_i\) for every \(i : I\) such that the diagram
  \begin{equation*}\label{fs-are-cone}
    \begin{tikzcd}
      E \ar[dr,"f_j"'] \ar[rr,"f_i"] & & D_i \\
      & D_j \ar[ur,"\pi_{i,j}"']
    \end{tikzcd}
  \end{equation*}
  commutes for every \(i \below j\),
  we have a unique continuous function \({f_\infty : E \to D_\infty}\) making
  the diagram
  \begin{equation*}\label{f-infty}
    \begin{tikzcd}
      E \ar[dr,dashed,"f_\infty"'] \ar[rr,"f_i"] & & D_i \\
      & D_\infty \ar[ur,"\pi_{i,\infty}"']
    \end{tikzcd}
  \end{equation*}
  commute for every \(i : I\).

  Similarly, the \(\V\)-dcpo \(D_\infty\) with the maps
  \(\pa*{\varepsilon_{i,\infty}}_{i : I}\) is the colimit of the diagram
  \(\pa*{\pa*{D_i}_{i : I} , \pa*{\varepsilon_{i,j}}_{i \below j}}\).
\end{theorem}

It should be noted that in the above universal property, \(E\) can have its
carrier in any universe \(\U'\) and its order taking values in any universe
\(\T'\), even though we required all \(D_i\) to have their carriers and orders
in two fixed universes \(\U \)~and~\(\T\), respectively.

The proof of the colimit property relies on the following lemma which is also useful later on.
\begin{lemma}\label{sigma-sup-of-epsilon-pis}
  Every element \(\sigma : D_\infty\) is the directed supremum of\/
  \(\bigsqcup_{i : I} \varepsilon_{i,\infty}\pa*{\sigma_i}\).
\end{lemma}
\begin{proposition}\label{locally-small-bilimit}
  The bilimit of locally small dcpos is locally small, i.e.\ if every \(\V\)-dcpo
  \(D_i\) is locally small for all \(i : I\), then so is \(D_\infty\).
\end{proposition}
\begin{proof}
  If every \(D_i\) is locally small, then for every \(i : I\), we have a
  \emph{specified} \(\V\)\nobreakdash-valued partial order \(\below_{\V}^i\) on
  \(D_i\) such that for every \(i : I\) and every \(x,y : D_i\), we have an
  equivalence \(\pa{x \below_{D_i} y} \simeq \pa{x \below_{\V}^i y}\).
  Hence,
  \({\pa{\sigma \below_{D_\infty} \tau} \equiv \pa{\Pi_{i : I}\pa{\sigma_i
      \below_{D_i} \tau_i}}} \simeq \pa{\Pi_{i : I}\pa{\sigma_i \below_{\V}^i
      \tau_i}}\), but the latter is small, because \(I : \V\) and
  \(\below_{\V}^i\) is \(\V\)-valued.
\end{proof}

\subsection{Scott's \texorpdfstring{\(D_\infty\)}{D-infinity} model of the untyped
  \texorpdfstring{\(\lambda\)}{lambda}-calculus}\label{sec:Scott-D-infty}
We are finally in a position to construct Scott's \(D_\infty\)~\cite{Scott1972} predicatively. Formulated
precisely, we construct a pointed \(D_\infty : \DCPOnum{0}{1}{1}\) such that
\(D_\infty\) is isomorphic to its self-exponential~\(D_\infty^{D_\infty}\),
employing the machinery from~\cref{sec:bilimits}.

\begin{definition}[\(D_n\)]
  We inductively define pointed dcpos \(D_n : \DCPOnum{0}{1}{1}\) for every
  natural number \(n\) by setting
  \(D_0 \colonequiv \lifting_{\U_0}\pa*{\One_{\U_0}}\) and
  \(D_{n+1} \colonequiv D_n^{D_n}\).
\end{definition}

In light of~\cref{universe-levels-of-lifting-and-exponentials} we highlight the
fact that every \(D_n\) is a \(\U_0\)-dcpo with carrier in \(\U_1\) by the
discussion of universe parameters of exponentials
in~\cref{exponential-universe-parameters}.%

\begin{definition}[\(\varepsilon_n\), \(\pi_n\)]
  We inductively define for every natural number \(n\), two Scott continuous maps
  \(\varepsilon_n : D_n \to D_{n+1}\) and \({\pi_n : D_{n+1} \to D_n}\):
  \begin{enumerate}[(i)]
  \item
    \begin{itemize}
    \item \(\varepsilon_0 : D_0 \to D_1\) is given by mapping \(x : D_0\) to the
      continuous function that is constantly~\(x\),
    \item \(\pi_0 : D_1 \to D_0\) is given by evaluating a continuous function
      \({f : D_0 \to D_0}\) at~\(\bot\) which is itself continuous
      by~\cref{exponential-universal-property},
    \end{itemize}
  \item
    \begin{itemize}
    \item \(\varepsilon_{n+1} : D_{n+1} \to D_{n+2}\) takes a continuous
      function \(f : D_n \to D_n\) to the continuous composite
      \(D_{n+1} \xrightarrow{\pi_n} D_n \xrightarrow{f} D_n
      \xrightarrow{\varepsilon_n} D_{n+1}\), and
    \item \(\pi_{n+1} : D_{n+2} \to D_{n+1}\) takes a continuous function
      \(f : D_{n+1} \to D_{n+1}\) to the continuous composite
      \(D_n \xrightarrow{\varepsilon_n} D_{n+1} \xrightarrow{f} D_{n+1}
      \xrightarrow{\pi_n} D_n\). \qedhere
    \end{itemize}
  \end{enumerate}
\end{definition}

The maps \(\varepsilon_n\) and \(\pi_n\) form an embedding-projection pair for
each natural number \(n\), and, by taking compositions, we obtain
embedding-projection pairs \(\pa*{\varepsilon_{n,m},\pi_{n,m}}\) from \(D_n\) to
\(D_m\) whenever \(n \leq m\).

\begin{definition}[\(D_\infty\)]
  Applying \cref{def:D-infty,pi-infty,epsilon-infty} to the above diagram yields
  \(D_\infty : \DCPOnum{0}{1}{1}\) with embedding-projection pairs
  \(\pa*{\varepsilon_{n,\infty},\pi_{n,\infty}}\) from \(D_n\) to \(D_\infty\)
  for every natural number \(n\).
\end{definition}

Following~\cite{Scott1972}, we can show the following.
\begin{theorem}
  The pointed \(\U_0\)-dcpos \(D_\infty\) and \(D_\infty^{D_\infty}\) are isomorphic.
\end{theorem}

Moreover, by (for instance) embedding \(\eta(\star) : D_0\) into \(D_\infty\),
we see that \(D_\infty\) is not the trivial pointed dcpo.

\section{The way-below relation and compactness}\label{sec:way-below}

The way-below relation is the fundamental ingredient in the development of
continuous dcpos. Following Scott~\cite{Scott1970}, a computational intuition of
\(x \ll y\) says that every computation of \(y\) has to produce \(x\), or
something better than \(x\), at some stage.
\begin{definition}[Way-below relation, \(x \ll y\)]\label{def:way-below}%
  An element \(x\) of a \(\V\)-dcpo \(D\) is \emph{way below} an element \(y\)
  of \(D\) if whenever we have a directed family \(\alpha : I \to D\)
  indexed by \(I : \V\) such that \(y \below \bigsqcup \alpha\), then there
  exists \(i : I\) such that \(x \below \alpha_i\) already.
  We denote this situation by \(x \ll y\).
\end{definition}

\begin{lemma}\label{way-below-properties}
  The way-below relation enjoys the following properties.
  \begin{enumerate}[(i)]
  \item\label{way-below-prop-valued} it is proposition-valued;
  \item\label{below-if-way-below} if \(x \ll y\), then \(x \below y\);
  \item\label{below-way-below-way-below} if \(x \below y \ll v \below w\), then
    \(x \ll w\);
  \item\label{way-below-antisymmetric} it is antisymmetric;
  \item\label{way-below-transitive} it is transitive.
  \end{enumerate}
\end{lemma}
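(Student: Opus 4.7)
The plan is to observe that items (iii), (iv), (v) form a natural chain: (iii) is the main interpolation-style lemma, and then antisymmetry and transitivity will follow essentially as corollaries of (iii) together with item (ii) and the corresponding properties of \(\below\). So I would prove (i), then (ii), then (iii), and finally deduce (iv) and (v).

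For (i), I would unfold the definition: \(x \ll y\) is a \(\Pi\)-type over families \(\alpha : I \to D\) with \(I : \V\) and over directedness and the hypothesis \(y \below \bigsqcup \alpha\), landing in an existential ``\(\exists i : I.\ x \below \alpha_i\)'', which I read as the propositional truncation of the corresponding \(\Sigma\)-type. Since \(\below\) is proposition-valued on a dcpo, the truncated existence is a proposition, and \(\Pi\) of a family of propositions is a proposition, giving (i).

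For (ii), I would instantiate the way-below hypothesis at the constant family \(\alpha : \One \to D\) with value \(y\) (valid because \(\One : \V\)); this family is directed, its supremum is \(y\), and \(y \below y\) holds by reflexivity, so we extract some \(i : \One\) with \(x \below \alpha_i \equiv y\). For (iii), assume \(x \below y \ll v \below w\) and let \(\alpha : I \to D\) be a directed family with \(w \below \bigsqcup \alpha\); then \(v \below w \below \bigsqcup \alpha\) by transitivity of \(\below\), so by \(y \ll v\) there exists \(i : I\) with \(y \below \alpha_i\), and then \(x \below y \below \alpha_i\) again by transitivity of \(\below\), which is precisely what is required to conclude \(x \ll w\) (taking care that the existence is already truncated, so we stay inside the truncation when extracting \(i\)).

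Finally, (iv) follows by applying (ii) to both \(x \ll y\) and \(y \ll x\), obtaining \(x \below y\) and \(y \below x\), and then invoking antisymmetry of \(\below\); (v) follows from (iii) applied with the chain \(x \below y \ll z \below z\), where the first link uses (ii) on \(x \ll y\) and the last uses reflexivity of \(\below\). The only point requiring care is universe bookkeeping in (ii): one must check that the indexing type \(\One\) lives in \(\V\) so that the constant family is a legal directed family in a \(\V\)-dcpo, but this is immediate since \(\One\) is available in every universe. I do not expect any genuine obstacle beyond this minor universe check.
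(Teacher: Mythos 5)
Your proposal is correct and follows essentially the same route as the paper: (i) by closure of propositions under dependent products plus the truncated existential, (ii) via the constant family $\One_\V \to D$, (iii) by the direct two-step argument with transitivity of $\below$, and (iv), (v) as corollaries of (ii) and (ii)+(iii) respectively. The extra remarks on staying inside the truncation in (iii) and on the universe check for $\One$ in (ii) are sound but do not change the argument.
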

\begin{proof}
  \eqref{way-below-prop-valued} Using that a dependent product of propositions
  (over an arbitrary type) is again a proposition together with the fact that we
  propositionally truncated the existence of \(i : I\) in the definition.
  \eqref{below-if-way-below} Simply take \(\alpha : \One_{\V} \to D\) to be
  \(u \mapsto y\).
  \eqref{below-way-below-way-below} Suppose that \(\alpha : I \to D\) is directed
  with \(w \below \bigsqcup \alpha\). Then \(v \below \bigsqcup \alpha\), so by
  assumption that \(y \ll v\) there exists \(i : I\) with \(y \below \alpha_i\)
  already. But then \(x \below \alpha_i\).
  \eqref{way-below-antisymmetric} Follows from \eqref{below-if-way-below}.
  \eqref{way-below-transitive} Follows from \eqref{below-if-way-below} and
  \eqref{below-way-below-way-below}.
\end{proof}

In general, the way below relation is not reflexive. The elements for which it
is have a special status and are called compact. We illustrate this notion by a
series of examples.

\begin{definition}[Compactness]%
  An element of a dcpo is \emph{compact} if it is way below itself.
\end{definition}

\begin{example}\label{least-element-is-compact}%
  The least element of a pointed dcpo is always compact.
\end{example}

We recall the \(\V\)-dcpo of propositions \(\Omega_{\V}\) with designated
elements \(\Zero_{\V}\) (the empty type) and \(\One_{\V}\) (the unit type) from
\cref{Omega-as-pointed-dcpo}.

\begin{example}[Compact elements in \(\Omega_{\V}\)]%
  \label{compact-elements-in-Omega}%
  The compact elements of \(\Omega_{\V}\) are exactly \(\Zero_{\V}\) and
  \(\One_{\V}\). In other words, the compact elements of \(\Omega_{\V}\) are
  precisely the decidable propositions.
\end{example}
\begin{proof}
  By \cref{least-element-is-compact} we know that \(\Zero_{\V}\) must be
  compact. For \(\One_{\V}\), suppose that we have
  \(Q_{(-)} : I \to \Omega_{\V}\) directed such that
  \(\One_{\V} \below \exists_{i : I}Q_i\). Then there exists \(i : I\) such that
  \(Q_i\) holds, and hence, \(\One_{\V} \below Q_i\).
  Now suppose that \(P : \Omega_{\V}\) is compact. We show that \(P\) is
  decidable. The family \(\alpha : \pa*{P + \One_{\V}} \to \Omega_{\V}\) given
  by \(\inl(p) \mapsto \One_{\V}\) and \(\inr(\star) \mapsto \Zero_{\V}\) is
  directed and \(P \below \bigsqcup \alpha\). Hence, by compactness, there
  exists \(i : P + \One_{\V}\) such that \(P \below \alpha_i\) already. %
  Since being decidable is a property of a proposition, we actually get such an
  \(i\) and by case distinction on it we get decidability of \(P\).
\end{proof}

We recall the lifting \(\lifting_{\V}(X)\) of a set \(X\) as a \(\V\)-dcpo from \cref{sec:lifting}.
\begin{example}[Compact elements in the lifting]
  \label{compact-elements-in-lifting}%
  An element \((P,\varphi)\) of the lifting \(\lifting_{\V}(X)\) of a set
  \(X : \V\) is compact if and only if \(P\) is decidable.
  Hence, the compact elements of \(\lifting_{\V}(X)\) are exactly \(\bot\) and
  \(\eta(x)\) for \(x : X\).
\end{example}
\begin{proof}
  To see that compactness implies decidability of the domain of the partial
  element, we proceed as in the proof of \cref{compact-elements-in-Omega}, but
  for a partial element \((P,\varphi)\), we consider the family
  \(\alpha : \pa*{P + \One_{\V}} \to \lifting_{\V }(X)\) given by
  \(\inl(p) \mapsto \eta(\varphi(p))\) and \(\inr(\star) \mapsto \bot\).
  Conversely, if we have a partial element \((P,\varphi)\) with \(P\) decidable,
  then either \(P\) is false in which case \((P,\varphi) = \bot\) which is
  compact by \cref{least-element-is-compact}, or \(P\) holds. %
  So suppose that \(P\) holds and let \(\alpha : I \to \lifting_{\V}(X)\) be
  directed with \(P \below \bigsqcup \alpha\). %
  Since \(P\) holds, the element \(\bigsqcup \alpha\) must be defined, so there
  exists \(i : I\) such that \(\alpha_i\) is defined. %
  But for this \(i : I\) we also have \(\bigsqcup \alpha = \alpha_i\) by
  construction of the supremum, and hence, \(P \below \alpha_i\), proving
  compactness of \((P,\varphi)\).
\end{proof}

For characterising the compact elements of the powerset, we introduce a lemma,
as well as the notion of Kuratowski finiteness and the induction principle for
Kuratowski finite subsets.

\begin{lemma}\label{binary-join-is-compact}%
  The compact elements of a dcpo are closed under (existing) binary joins.
\end{lemma}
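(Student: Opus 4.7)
The plan is to unfold what it means for $x \vee y$ to be compact and reduce directly to the compactness of $x$ and $y$ together with the directedness of the indexing family. So suppose $x,y$ are compact elements of a dcpo $D$ whose binary join $x \vee y$ exists, and let $\alpha : I \to D$ be a directed family with $x \vee y \below \bigsqcup \alpha$. We must produce (in the propositionally truncated sense) some $k : I$ with $x \vee y \below \alpha_k$.

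First I would note that since $x, y \below x \vee y \below \bigsqcup \alpha$, the compactness hypotheses on $x$ and $y$ yield truncated existences of $i, j : I$ with $x \below \alpha_i$ and $y \below \alpha_j$. Then directedness of $\alpha$ provides (again, truncated) some $k : I$ with $\alpha_i \below \alpha_k$ and $\alpha_j \below \alpha_k$. By transitivity, both $x \below \alpha_k$ and $y \below \alpha_k$, so by the universal property of the join we conclude $x \vee y \below \alpha_k$.

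The only subtlety is that each of the three steps delivers a truncated existence, and I want to combine them into a single truncated existence. Since the target statement $\exists\, k : I,\; x \vee y \below \alpha_k$ is itself a proposition, I can eliminate the three propositional truncations in sequence using that $\below$ is proposition-valued, so there is no constructive obstacle. This is the only step that requires any care, but it is routine.
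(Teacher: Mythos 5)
Your proof is correct and follows essentially the same route as the paper's: reduce $x \vee y \below \bigsqcup\alpha$ to $x, y \below \bigsqcup\alpha$, use compactness of each to get indices $i, j$, use semidirectedness to find $k$ above both, and conclude by the universal property of the join. The remark about eliminating the three propositional truncations into a propositional goal is left implicit in the paper but is exactly the right justification.
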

\begin{proof}
  Suppose that \(x\) and \(y\) are compact elements of a \(\V\)-dcpo \(D\) with
  \(z\) as their least upper bound and suppose that we have \(\alpha : I \to D\)
  directed with \(z \below \bigsqcup \alpha\). Then
  \(x \below \bigsqcup \alpha\) and \(y \below \bigsqcup \alpha\), so by
  compactness there exist \(i,j : I\) such that \(x \below \alpha_i\) and
  \(y \below \alpha_j\). By semidirectedness of \(\alpha\), there exists
  \(k : I\) with \({\alpha_i,\alpha_j} \below \alpha_k\), so that
  \({x,y} \below \alpha_k\). But \(z\) is the join of \(x\) and \(y\), so
  \(z \below \alpha_k\), as desired.
\end{proof}

Kuratowski finiteness is investigated
in~\cite{Kuratowski1920,Johnstone2002,CoquandSpiwack2010,FruminEtAl2018} among
other places.
\pagebreak[3]\begin{definition}[Kuratowski finiteness]\hfill%
  \begin{enumerate}[(i)]
  \item A type \(X\) is \emph{Kuratowski finite} if there exists some natural
    number \(n : \Nat\) and a surjection \(e : \Fin(n) \surj X\), where
    \(\Fin(n)\) is the standard finite type with exactly \(n\)
    elements~\cite[Section~7.3]{Rijke2022}.%
  \item A subset is \emph{Kuratowski finite} if its total space
    (recall~\cref{def:total-space}) is a Kuratowski finite type.%
    \qedhere
  \end{enumerate}
\end{definition}

Thus, a type \(X\) is \emph{Kuratowski finite} if its elements can be finitely
enumerated, possibly with repetitions, although the repetitions can be removed
when \(X\) has decidable equality.

\begin{lemma}\label{Kuratowski-finite-closure-properties}
  The Kuratowski finite subsets of a set are closed under finite unions and
  contain all singletons.
\end{lemma}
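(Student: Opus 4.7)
The plan is to unfold the definition of Kuratowski finite (for the total space) and then handle singletons, the empty set, and binary unions separately; general finite unions follow by induction from the first two since ``finite union'' is iterated binary union over a finite index set together with the empty union.

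For a singleton \(\{a\}\) with \(a : X\), the total space \(\Sigma_{x : X}(x = a)\) is the singleton type at \(a\), hence contractible, so the map \(\Fin(1) \to \totalspace(\{a\})\) sending the unique element to \((a, \operatorname{refl}_a)\) is a surjection. For the empty subset, the total space is empty, so the unique map from \(\Fin(0)\) is (vacuously) a surjection.

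For the binary union \(S \cup T\) of two Kuratowski finite subsets, pick surjections \(e : \Fin(n) \surj \totalspace(S)\) and \(f : \Fin(m) \surj \totalspace(T)\) — these can be chosen because the goal (being a surjection into \(\totalspace(S \cup T)\)) is a proposition. Using \(\Fin(n+m) \simeq \Fin(n) + \Fin(m)\), define \(g : \Fin(n+m) \to \totalspace(S \cup T)\) by
\[
g(\inl i) \colonequiv \pa*{\fst(e\, i),\, \tosquash{\inl(\snd(e\, i))}},\qquad
g(\inr j) \colonequiv \pa*{\fst(f\, j),\, \tosquash{\inr(\snd(f\, j))}},
\]
where \(x \in S \cup T\) is the propositional truncation of \(x \in S + x \in T\). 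To show \(g\) is a surjection, take \((x,p) : \totalspace(S \cup T)\); since surjectivity is a proposition we may untruncate \(p\) and split on cases. In the left case we get \(q : x \in S\), the surjectivity of \(e\) yields some \(i : \Fin(n)\) with \(e\,i = (x,q)\), and then \(g(\inl i) = (x,p)\) using that \(x \in S \cup T\) is a proposition. The right case is symmetric.

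The only mildly delicate point is the bookkeeping with the propositional truncations: we need that equality in \(\totalspace(S \cup T)\) reduces to equality of first components (because membership is proposition-valued) so that the images under \(g\) really hit \((x,p)\) on the nose, and we need the fact that surjectivity of the composite is a proposition in order to untruncate both the witness from \(e\) (or \(f\)) and the truncated sum \(p\). Both are standard consequences of the setup from Part~I.
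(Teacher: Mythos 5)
Your proof is correct and follows essentially the same route as the paper: handle the empty set and singletons directly, then patch the two surjections \(\Fin(n) \surj \totalspace(A)\) and \(\Fin(m) \surj \totalspace(B)\) together into a surjection \(\Fin(n+m) \surj \totalspace(A \cup B)\), using that the goal is a proposition to untruncate the assumed surjections. The paper states the patching step in one sentence; you have simply filled in the same details it leaves implicit.
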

\begin{proof}
  The empty set and any singleton are clearly Kuratowski finite. Moreover, if
  \(A\) and \(B\) are Kuratowski finite subsets, then we may assume to have
  natural numbers \(n\) and \(m\) and surjections
  \(\sigma : \Fin(n) \surj \totalspace(A)\) and
  \(\tau : \Fin(m) \surj \totalspace(B)\). We can then patch these together to
  obtain a surjection \(\Fin(n + m) \surj \totalspace(A \cup B)\), as desired.
\end{proof}

The following induction principle appears as \cite[Definition~5.4.1]{Johnstone2002}
and is closely related to the higher inductive presentation
in~\cite{FruminEtAl2018}.
\begin{lemma}[Induction for Kuratowski finite subsets]%
  \label{Kuratowski-finite-subsets-induction}%
  A property of subsets of a type~\(X\) holds for all Kuratowski finite subsets
  of \(X\) as soon as
  \begin{enumerate}[(i)]
  \item\label{empty-set-case} it holds for the empty set,
  \item\label{singleton-case} it holds for any singleton subset, and
  \item\label{binary-union-case} it holds for \(A \cup B\), whenever it holds
    for \(A\) and \(B\).
  \end{enumerate}
\end{lemma}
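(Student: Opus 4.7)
The plan is to strip the propositional truncation hidden in the definition of Kuratowski finiteness and then induct on the natural number $n$ witnessing the enumeration. Since $P$ is a property (hence proposition-valued), given a Kuratowski finite subset $A$ we may assume an actual $n : \Nat$ and an actual surjection $e : \Fin(n) \surj \totalspace(A)$, and prove $P(A)$ by induction on $n$ with $A$ and $e$ quantified universally.

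For the base case $n = 0$, the surjection $e$ has empty domain, so $\totalspace(A)$ is empty; using propositional and functional extensionality we identify $A$ with the empty subset, whence $P(A)$ follows from~\eqref{empty-set-case}.

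For the step case $n = k+1$, identify $\Fin(k+1)$ with $\One + \Fin(k)$ and set $a_0 \colonequiv \fst(e(\inl(\star)))$. Let $B_1$ be the singleton subset $\{a_0\}$, and define the subset $B_2$ of $X$ by $B_2(x) \colonequiv \exists_{i : \Fin(k)}(\fst(e(\inr(i))) = x)$. The map $\fst \circ e \circ \inr$ corestricts to a surjection $\Fin(k) \surj \totalspace(B_2)$, so the induction hypothesis gives $P(B_2)$; condition~\eqref{singleton-case} gives $P(B_1)$; and condition~\eqref{binary-union-case} then yields $P(B_1 \cup B_2)$. It remains to identify $A$ with $B_1 \cup B_2$: for the forward inclusion, given $x \in A$ with witness $p$, lift $(x,p)$ along the surjection $e$ to an index which is either $\inl(\star)$ (so $x = a_0 \in B_1$) or $\inr(j)$ (so $x \in B_2$); conversely, both $a_0$ and each $\fst(e(\inr(j)))$ patently lie in $A$.

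The only real subtlety is already flagged in the first step: the argument relies on $P$ being proposition-valued in order to eliminate the propositional truncation in the definition of Kuratowski finiteness. The induction on $n$ is otherwise routine, and identifying subsets at each step is a standard use of propositional and functional extensionality.
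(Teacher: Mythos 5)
Your proof is correct and takes essentially the same approach as the paper: both eliminate the truncation using that the property is proposition-valued and then induct on the natural number \(n\) witnessing the enumeration, writing the subset as a finite union of singletons. Your version spells out the induction a bit more carefully (universally quantifying over \(A\) and \(e\), and peeling off one singleton per step), but the underlying argument is identical.
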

\begin{proof}
  Let \(Q\) be such a property and let \(A\) be an arbitrary Kuratowski finite
  subset of \(X\). Since \(Q\) is proposition-valued, we may assume to have a
  natural number \(n\) and a surjection
  \(\sigma : \Fin(n) \surj \totalspace(A)\). Then the subset \(A\) must be equal
  to the finite join of singletons
  \(\{\sigma_0\} \cup \{\sigma_1\} \cup \dots \cup \{\sigma_{n-1}\}\), which can
  be shown to satisfy \(Q\) by induction on~\(n\), and hence, so must \(A\).
\end{proof}

\begin{definition}[\(\beta\)]\label{def:list-to-powerset}%
  For a set \(X : \U\), we write \(\beta : \List(X) \to \powerset_{\U}(X)\) for
  the canonical map which takes a list to its set of elements.
  (For the inductive definition of the type \(\List(X)\), recall
  e.g.~\cite[Section~5.1]{HoTTBook}.)
\end{definition}

\begin{lemma}\label{Kuratowski-finite-iff-list}
  A subset \(A : \powerset_{\U}(X)\) of a set \(X : \U\) is Kuratowski finite if
  and only if it is in the image of \(\beta\).
\end{lemma}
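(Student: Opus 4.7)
The plan is to prove the two implications separately, using the standard correspondence between lists of length $n$ and functions out of $\Fin(n)$. Both statements are propositions (the image of $\beta$ and Kuratowski finiteness are defined via propositional truncation), so we are free to unpack existential hypotheses into concrete data when proving the other.

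For the forward direction, suppose $A$ is Kuratowski finite, witnessed by $n : \Nat$ and a surjection $\sigma : \Fin(n) \surj \totalspace(A)$. Composing with the first projection $\fst : \totalspace(A) \to X$ yields a function $\Fin(n) \to X$, which I convert into a list $\ell : \List(X)$ of length $n$ in the obvious way (by recursion on $n$). I then need to show $\beta(\ell) = A$. By function extensionality and propositional extensionality, it suffices to show for every $x : X$ that $x \in \beta(\ell) \iff x \in A$. The forward implication holds because every element of $\ell$ is of the form $\fst(\sigma(i))$ and thus lies in $A$. The reverse implication uses surjectivity of $\sigma$: given $x \in A$, we obtain some $i : \Fin(n)$ with $\sigma(i) = (x,\_)$, hence $x$ appears in $\ell$; since $x \in \beta(\ell)$ is a proposition, the propositional truncation from surjectivity poses no obstacle.

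For the backward direction, suppose $A = \beta(\ell)$ for some list $\ell : \List(X)$ of length $n$. I construct a map $\sigma : \Fin(n) \to \totalspace(A)$ sending $i$ to the pair consisting of the $i$th entry of $\ell$ together with a proof that it lies in $\beta(\ell) = A$. Surjectivity of $\sigma$ amounts to showing that every element of $\totalspace(A)$ is hit, which by definition of $\beta$ (membership in $\beta(\ell)$ means ``there exists an index of $\ell$ equal to $x$'') is immediate.

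The only step requiring care is the first: recursing on $n$ to extract the list and proving $\beta(\ell) = A$ cleanly. Strictly speaking, since $X$ is a set, the equality of subsets is determined pointwise, and the propositional truncation ensures that all the choices made from surjectivity remain inside proposition-valued goals. No genuine obstacle arises, since the definitions of $\beta$ on the one hand and of Kuratowski finiteness on the other are essentially two presentations of the same idea: a finite enumeration of the elements of $A$, possibly with repetitions.
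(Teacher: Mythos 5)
Your proof is correct, but it takes a genuinely different route from the paper. The paper does not unpack the definitions directly; instead it leverages the two preparatory lemmas established just before this one. For the direction ``image of $\beta$ implies Kuratowski finite'', the paper does a list induction and invokes \cref{Kuratowski-finite-closure-properties} (Kuratowski finite subsets contain singletons and are closed under binary unions). For the direction ``Kuratowski finite implies image of $\beta$'', the paper applies the induction principle for Kuratowski finite subsets (\cref{Kuratowski-finite-subsets-induction}): the empty set is $\beta$ of the empty list, a singleton is $\beta$ of a one-element list, and the binary-union case is handled by list concatenation. Your proof instead goes straight to the definitions, converting back and forth between a surjection $\Fin(n) \surj \totalspace(A)$ and a list of length $n$, and then verifying the pointwise equality of subsets by hand. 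Both approaches are sound. The paper's route is shorter on the page because it amortises the work into reusable lemmas that are needed again in \cref{compact-elements-in-powerset}; yours is more self-contained and makes the $\Fin(n)\leftrightarrow\List$ correspondence explicit, at the cost of redoing part of what those lemmas encapsulate. You are right that the only delicate point is the construction of the list and the subsequent extensionality argument, and you are also right that all the existential unpackings are harmless because the goals are propositions (image membership, Kuratowski finiteness, and set-membership all being truncated or proposition-valued).
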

\begin{proof}
  The left to right direction follows from
  \cref{Kuratowski-finite-closure-properties}, while the converse follows easily
  from the induction principle for Kuratowski finite subsets where we use list
  concatenation in case~\eqref{binary-union-case}.
\end{proof}

\begin{example}[Compact elements in \(\powerset_{\U }(X)\)]%
  \label{compact-elements-in-powerset}%
  The compact elements of \(\powerset_{\U}(X)\) for a set \(X : \U\) are exactly
  the Kuratowski finite subsets of \(X\).
\end{example}
\begin{proof}
  Suppose first that \(A : \powerset_{\U}(X)\) is a compact element. The family
  \[
    \pa*{\Sigma_{l : \List(X)}\,\beta(l) \subseteq A}
    \xrightarrow{\beta \circ \fst}
    \powerset_{\U}(X)
  \]
  is directed, as it contains \(\emptyset\) and we can concatenate lists to
  establish semidirectedness. Moreover,
  \(\pa*{\Sigma_{l : \List(X)}\,\beta(l) \subseteq A}\) lives in \(\U\) and we
  clearly have \(A \subseteq \bigsqcup {\beta \circ \fst}\). So by compactness,
  there exists \(l : \List(X)\) with \(\beta(l) \subseteq A\) such that
  \(A \subseteq \beta(l)\) already. But this says exactly that \(A\) is
  Kuratowski finite by \cref{Kuratowski-finite-iff-list}.

  For the converse we use the induction principle for Kuratowski finite subsets:
  the empty set is compact by \cref{least-element-is-compact}, singletons are
  easily shown to be compact, and binary unions are compact by
  \cref{binary-join-is-compact}.
\end{proof}

We end this section by presenting a few lemmas connecting the way-below relation
and compactness to Scott continuous sections (\cref{def:continuous-retract}).

\begin{lemma}\label{continuous-retraction-way-below-criterion}%
  If we have a Scott continuous retraction \(\retract{D}{E}\), then \(y \ll s(x)\)
  implies \(r(y) \ll x\) for every \(x : D\) and \(y : E\).
\end{lemma}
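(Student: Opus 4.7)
The plan is to unfold the definition of $\ll$ directly: suppose $\alpha : I \to D$ is a directed family with $x \below \bigsqcup \alpha$, and produce an $i : I$ witnessing $r(y) \below \alpha_i$.

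First I would push everything through $s$. Since $s$ is Scott continuous (in particular monotone and sup-preserving), the family $s \circ \alpha : I \to E$ is again directed, and
\[
s(x) \below s\pa*{\bigsqcup \alpha} = \bigsqcup (s \circ \alpha).
\]
Then I would apply the hypothesis $y \ll s(x)$ to this directed family: it yields (after propositional truncation) some $i : I$ with $y \below s(\alpha_i)$.

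Finally I would pull back through $r$. Monotonicity of $r$ gives $r(y) \below r(s(\alpha_i))$, and the retraction equation $r \circ s = \id_D$ simplifies the right-hand side to $\alpha_i$. So $r(y) \below \alpha_i$, which is exactly the witness we needed. Since the goal $r(y) \ll x$ is a proposition by \cref{way-below-properties}\eqref{way-below-prop-valued}, extracting $i$ from the truncation to produce the truncated witness is unproblematic.

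There is no real obstacle here; the only thing to be a little careful about is that Scott continuity of $s$ is being used twice (once for monotonicity to conclude $s \circ \alpha$ is directed, and once for preservation of the supremum), and dually only monotonicity of $r$ is needed. The retraction identity $r \circ s = \id_D$ is used in the very last step to close the argument.
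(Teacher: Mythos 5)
Your proof is correct and follows the paper's argument exactly: push the directed family through \(s\), apply \(y \ll s(x)\) to obtain a witness, and pull it back through \(r\) using monotonicity and the retraction identity. The remarks about the propositional truncation and the two uses of Scott continuity of \(s\) are accurate but not needed beyond what the paper already says implicitly.
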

\begin{proof}
  Suppose that \(y \ll s(x)\) and that \(x \below \bigsqcup \alpha\) for a
  directed family \(\alpha : I \to D\). Then
  \(s(x) \below s\pa*{\bigsqcup \alpha} = \bigsqcup \pa*{s \circ \alpha}\) by Scott
  continuity of \(s\), so there exists \(i : I\) such that
  \(y \below s(\alpha_i)\) already. Now monotonicity of \(r\) implies
  \(r(y) \below r(s(\alpha_i)) = \alpha_i\) which completes the proof that
  \(r(y) \ll x\).
\end{proof}

We also recall embedding-projection pairs from \cref{def:embedding-projection-pair}.

\begin{lemma}\label{embedding-preserves-and-reflects-way-below}%
  The embedding in an embedding-projection pair
  \(\retractalt{D}{E}{\varepsilon}{\pi}\) preserves and reflects the way-below
  relation, i.e.\ \(x \ll y \iff \varepsilon(x) \ll \varepsilon(y)\).
  In particular, an element \(x\) is compact if and only if \(\varepsilon(x)\)
  is.
\end{lemma}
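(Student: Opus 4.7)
The plan is to split the biconditional into reflection and preservation, handle reflection by a direct application of the preceding lemma, and handle preservation by a standard e-p pair argument using Scott continuity of the projection together with the defining inequality $\varepsilon \circ \pi \below \id_E$.

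For the reflection direction, assume $\varepsilon(x) \ll \varepsilon(y)$. Recall that an embedding-projection pair in particular gives a Scott continuous retraction $\retract{D}{E}$ with section $\varepsilon$ and retraction $\pi$ (so $\pi \circ \varepsilon = \id_D$). Applying \cref{continuous-retraction-way-below-criterion} to $\varepsilon(x) \ll \varepsilon(y)$ yields $\pi(\varepsilon(x)) \ll y$, which is exactly $x \ll y$.

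For the preservation direction, suppose $x \ll y$ and let $\alpha : I \to E$ be a directed family with $\varepsilon(y) \below \bigsqcup \alpha$. Applying the Scott continuous map $\pi$ and using $\pi \circ \varepsilon = \id_D$ together with preservation of suprema, we obtain
\[ y = \pi(\varepsilon(y)) \below \pi\pa*{\bigsqcup \alpha} = \bigsqcup (\pi \circ \alpha). \]
Since $\pi \circ \alpha : I \to D$ is directed and $x \ll y$, there exists $i : I$ with $x \below \pi(\alpha_i)$. Applying the monotone map $\varepsilon$ and using the defining inequality $\varepsilon \circ \pi \below \id_E$ of an embedding-projection pair, we get
\[ \varepsilon(x) \below \varepsilon(\pi(\alpha_i)) \below \alpha_i, \]
which witnesses $\varepsilon(x) \ll \varepsilon(y)$. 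The ``in particular'' statement about compactness is then immediate, since $x$ is compact iff $x \ll x$ iff (by the biconditional just established) $\varepsilon(x) \ll \varepsilon(x)$ iff $\varepsilon(x)$ is compact.

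There is no real obstacle here: the reflection direction is essentially a corollary of \cref{continuous-retraction-way-below-criterion}, and the only ingredient added in the preservation direction beyond a section-retraction argument is the second e-p inequality $\varepsilon \circ \pi \below \id_E$, which is exactly what lets us transport the witness $x \below \pi(\alpha_i)$ back up into $E$ as $\varepsilon(x) \below \alpha_i$.
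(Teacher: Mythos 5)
Your proof is correct and takes essentially the same route as the paper: preservation is shown directly by pushing the directed family through the Scott continuous projection and transporting the witness back with monotonicity of $\varepsilon$ and the deflation $\varepsilon \circ \pi \below \id_E$, while reflection is an immediate application of \cref{continuous-retraction-way-below-criterion}. Nothing to change.
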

\begin{proof}
  Suppose that \(x \ll y\) in \(D\) and let \(\alpha : I \to E\) be directed
  with \(\varepsilon(y) \below \bigsqcup \alpha\). Then
  \(y = \pi(\varepsilon(y)) \below \bigsqcup \pi \circ \alpha\) by Scott
  continuity of \(\pi\). Hence, there exists \(i : I\) such that
  \(x \below \pi(\alpha_i)\). But then
  \(\varepsilon(x) \below \varepsilon(\pi(\alpha_i)) \below \alpha_i\) by
  monotonicity of \(\varepsilon\) and the fact that \(\varepsilon \circ \pi\) is a
  deflation. This proves that \(x \ll y\).
  Conversely, if \(\varepsilon(x) \ll \varepsilon(y)\), then
  \(x = \pi(\varepsilon(x)) \ll y\) by
  \cref{continuous-retraction-way-below-criterion}.
\end{proof}

\section{The ind-completion}\label{sec:ind-completion}

The ind-completion will be a useful tool for phrasing and proving results about
directed complete \emph{posets} and is itself a directed complete
\emph{preorder}, cf.~\cref{ind-completion-is-directed-complete}.
It was introduced by \citeauthor{SGA41} in \cite[Section~8]{SGA41} in the
context of category theory, but its role in order theory is discussed in
\cite[Section~1]{JohnstoneJoyal1982}.
We will also use it in the context of order theory, but our treatment will
involve a careful consideration of the universes involved, very similar to the
original treatment in~\cite{SGA41}.

\begin{definition}[\(\V\)-ind-completion \(\Ind{V}(X)\), exceed, \({\cof}\)]%
  \label{def:exceeds}
  The \emph{\(\V\)-ind-completion} \(\Ind{V}(X)\) of a preorder \(X\) is the
  type of directed families in \(X\) indexed by types in the universe~\(\V\).
  Such a family \(\beta : J \to X\) \emph{exceeds} another family
  \(\alpha : I \to X\) if for every \(i : I\), there exists \(j : J\) such that
  \(\alpha_i \below \beta_j\), and we denote this relation by \(\alpha \cof \beta\).
\end{definition}

\begin{lemma}\label{ind-completion-is-directed-complete}\leavevmode
  \begin{enumerate}[(i)]
  \item The relation \(\cof\) defines a preorder on the ind-completion.
  \item The \(\V\)-ind-completion \(\Ind{V}(X)\) of a preorder \(X\) is
    \(\V\)-directed complete.
  \end{enumerate}
\end{lemma}
\begin{proof}
  The first item is proved straightforwardly.
  For the second, suppose that we
  have a directed family \(\alpha : I \to \Ind{V}(X)\) with \(I : \V\). Then
  each \(\alpha_i\) is a directed family in \(X\) indexed by a type
  \(J_i : \V\). We define the family
  \(\hat\alpha : \pa*{\Sigma_{i : I}J_i} \to X\) by
  \((i,j) \mapsto \alpha_i(j)\). It is clear that \(\hat\alpha\) exceeds each
  \(\alpha_i\), and that \(\beta\) exceeds \(\hat\alpha\) if \(\beta\) exceeds
  every \(\alpha_i\). So it remains to show that \(\hat\alpha\) is in fact an
  element of \(\Ind{V}(X)\), i.e.\ that it is directed. Because \(\alpha\) and
  each \(\alpha_i\) are directed, \(I\) and each \(J_i\) are inhabited. Hence,
  so is the domain of \(\hat\alpha\).
  It remains to show that \(\hat\alpha\) is semidirected. Suppose we have
  \((i_1,j_1) , (i_2,j_2)\) in the domain of \(\hat\alpha\). By directedness of
  \(\alpha\), there exists \(i : I\) such that \(\alpha_i\) exceeds both
  \(\alpha_{i_1}\) and \(\alpha_{i_2}\). Hence, there exist \(j_1',j_2' : J_i\)
  with \(\alpha_{i_1}(j_1) \below \alpha_{i}(j_1')\) and
  \(\alpha_{i_2}(j_2) \below \alpha_{i}(j_2')\).
  Because the family \(\alpha_i\) is directed in \(X\), there exists \(j : J_i\)
  such that \(\alpha_{i}(j_1'),\alpha_{i}(j_2') \below \alpha_i(j)\).  Hence, we
  conclude that
  \(\hat\alpha(i_1,j_1) \equiv \alpha_{i_1}(j_1) \below \alpha_i(j_1') \below
  \alpha_i(j) \equiv \hat\alpha(i,j)\), and similarly for \((i_2,j_2)\), which
  proves semidirectedness of \(\hat\alpha\).
\end{proof}

\begin{lemma}\label{sup-map-is-monotone}%
  Taking directed suprema defines a monotone map from a \(\V\)-dcpo to its
  \(\V\)-ind-completion, denoted by \({\bigsqcup} : \Ind{V}(D) \to D\).
\end{lemma}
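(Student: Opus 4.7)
The plan is to unfold what monotonicity means here and reduce the statement to the universal property of the supremum. Given directed families $\alpha : I \to D$ and $\beta : J \to D$ in $\Ind{V}(D)$ with $\alpha \cof \beta$, the task is to show $\bigsqcup \alpha \below \bigsqcup \beta$. Because $\bigsqcup \alpha$ is the least upper bound of $\alpha$, this reduces to proving that $\bigsqcup \beta$ is an upper bound of $\alpha$, i.e.\ $\alpha_i \below \bigsqcup \beta$ for every $i : I$.

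Fix such an $i$. By the assumption $\alpha \cof \beta$, there exists $j : J$ with $\alpha_i \below \beta_j$. Since $\alpha_i \below \bigsqcup \beta$ is a proposition (the order is proposition-valued by our standing assumptions on a $\V$-dcpo), we may use the existential without obstruction. Combining $\alpha_i \below \beta_j$ with $\beta_j \below \bigsqcup \beta$ and transitivity of $\below$ yields $\alpha_i \below \bigsqcup \beta$, as required.

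The only mild subtlety is that $\bigsqcup : \Ind{V}(D) \to D$ has to be well-defined as a map, which relies on directed suprema in $D$ existing and being uniquely determined — both guaranteed since $D$ is a $\V$-dcpo and the families in $\Ind{V}(D)$ are exactly the $\V$-indexed directed families. No step requires more than transitivity, the defining property of suprema, and the fact that $\below$ is propositional so that propositional truncation in the definition of $\cof$ can be discharged. Hence there is no real obstacle; the lemma is immediate from the definitions.
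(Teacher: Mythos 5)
Your proof is correct and follows exactly the same route as the paper: reduce monotonicity to showing that $\bigsqcup \beta$ is an upper bound of $\alpha$, then use the ``exceeds'' hypothesis pointwise together with transitivity. The only addition is your explicit remark that $\alpha_i \below \bigsqcup\beta$ is a proposition so the truncated existential can be eliminated, which the paper leaves implicit; otherwise the arguments coincide.
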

\begin{proof}
  We have to show that \(\bigsqcup \alpha \below \bigsqcup \beta\) for directed
  families \(\alpha\) and \(\beta\) such that \(\beta\) exceeds \(\alpha\). Note
  that the inequality holds as soon as \(\alpha_i \below \bigsqcup \beta\) for
  every \(i\) in the domain of \(\alpha\). For this, it suffices that for every
  such \(i\), there exists a \(j\) in the domain of \(\beta\) such that
  \(\alpha_i \below \beta_j\). But the latter says exactly that \(\beta\)
  exceeds \(\alpha\).
\end{proof}

\citeauthor{JohnstoneJoyal1982}~\cite{JohnstoneJoyal1982} generalise the notion
of continuity from posets to categories and do so by phrasing it in terms of
\(\bigsqcup : \Ind{V}(D) \to D\) having a left adjoint.
We follow their approach and now work towards this.
It turns out to be convenient to use the following two notions, which are in
fact equivalent by \cref{approximate-adjunct-coincidence}:

\begin{definition}[Approximate, left adjunct]%
  For an element \(x\) of a dcpo \(D\) and a directed family
  \(\alpha : I \to D\), we say that
  \begin{enumerate}[(i)]
  \item \(\alpha\) \emph{approximates} \(x\) if the supremum of \(\alpha\) is
    \(x\) and each \(\alpha_i\) is way below \(x\), and
  \item \(\alpha\) is \emph{left adjunct} to \(x\) if
    \(\alpha \cof \beta \iff x \below \bigsqcup \beta\) for every directed
      family \(\beta\).
      \qedhere
  \end{enumerate}
\end{definition}

\begin{remark}\label{left-adjoint-in-terms-of-left-adjunct-to}
  For a \(\V\)-dcpo \(D\), a function \(L : D \to \Ind{V}(D)\) is a left adjoint
  to \({{\bigsqcup} : \Ind{V}(D) \to D}\) precisely when \(L(x)\) is left
  adjunct to \(x\) for every \(x : D\).
  Of course, we need to know that \(L\) is monotone and this is shown in the
  next lemma.
\end{remark}

\begin{lemma}\label{left-adjoint-monotone}
  A function \(L : D \to \Ind{V}(D)\) is monotone and order-reflecting if
  \(L(x)\) is left adjunct to \(x\) for every \(x : D\).
\end{lemma}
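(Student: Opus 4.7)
My plan is to first establish the identity $\bigsqcup L(x) = x$ for every $x : D$, and then derive both monotonicity and order-reflection as one-line applications of the left adjunct biconditional. Specialising it to $\beta \colonequiv L(y)$ will give
\[
  L(x) \cof L(y) \iff x \below y,
\]
whose forward direction is order-reflection and whose reverse direction is monotonicity.

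To establish the identity, I would prove the two inequalities separately. The inequality $x \below \bigsqcup L(x)$ comes from specialising the adjunct property of $L(x)$ to $\beta \colonequiv L(x)$ and invoking reflexivity of $\cof$. For the reverse inequality $\bigsqcup L(x) \below x$, I would introduce the constant family $\const_x : \One_\V \to D$ sending $\star$ to $x$, which is a legal element of $\Ind{V}(D)$ since $\One_\V$ is inhabited and any constant family is vacuously semidirected. Since its supremum is $x$ itself, the adjunct property of $L(x)$ applied to $\beta \colonequiv \const_x$ reduces to $L(x) \cof \const_x \iff x \below x$; as the right-hand side is trivially true, this forces $L(x) \cof \const_x$, and unfolding this last statement yields $L(x)_i \below x$ for every index $i$, whence $\bigsqcup L(x) \below x$.

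With $\bigsqcup L(y) = y$ in hand, the specialisation of the adjunct property of $L(x)$ to $\beta \colonequiv L(y)$ delivers both claims at once: if $x \below y$ then $x \below \bigsqcup L(y)$ and hence $L(x) \cof L(y)$, and conversely $L(x) \cof L(y)$ yields $x \below \bigsqcup L(y) = y$.

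I do not anticipate any substantive obstacle. The one detail to keep straight is that $\const_x$ really qualifies as a $\V$-indexed directed family, so that it is a valid test object for the adjunct property; this is immediate from $\One_\V : \V$. The conceptual content of the argument is that the adjunct condition already forces $\bigsqcup \circ L = \id_D$, after which monotonicity and order-reflection are formal.
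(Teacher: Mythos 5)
Your proof is correct and follows essentially the same route as the paper: both hinge on the observation that the adjunct condition forces $\bigsqcup L(y) = y$, after which specialising the biconditional to $\beta \colonequiv L(y)$ gives $L(x) \cof L(y) \iff x \below y$. The only difference is that the paper obtains $\bigsqcup L(y) = y$ by (implicitly) invoking the adjunct-implies-approximates direction of \cref{approximate-adjunct-coincidence}, whereas you reprove exactly that fragment inline using the constant family $\const_x$ (the paper's $\hat{x}$) and reflexivity of $\cof$ — a more self-contained version of the same argument.
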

\begin{proof}
  Suppose we are given elements \(x,y : D\). By assumption, we know that
  \(L(x) \cof L(y) \iff x \below \bigsqcup L(y)\), but \(L(y)\) approximates
  \(y\), so \(\bigsqcup L(y) = y\) and hence \(L(x) \cof L(y) \iff x \below y\),
  so \(L\) preserves and reflects the order.
\end{proof}

\begin{lemma}\label{approximate-adjunct-coincidence}
  A directed family \(\alpha\) approximates an element \(x\) if and only if it
  is left adjunct to it.
\end{lemma}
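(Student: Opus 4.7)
The proof naturally splits into two implications. I will write $\alpha : I \to D$ for the given directed family throughout.

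For the forward direction, assume $\alpha$ approximates $x$, so $\bigsqcup \alpha = x$ and $\alpha_i \ll x$ for every $i : I$. Fix a directed family $\beta : J \to D$. If $\alpha \cof \beta$, then \cref{sup-map-is-monotone} gives $x = \bigsqcup \alpha \below \bigsqcup \beta$. Conversely, if $x \below \bigsqcup \beta$, then for any $i : I$ the assumption $\alpha_i \ll x$ applied to the directed family $\beta$ (with $x \below \bigsqcup \beta$) yields some $j : J$ with $\alpha_i \below \beta_j$, so $\alpha \cof \beta$.

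For the backward direction, assume $\alpha$ is left adjunct to $x$. I first show $\bigsqcup \alpha = x$. Instantiating the left adjunct property at $\beta \colonequiv \alpha$ and using reflexivity $\alpha \cof \alpha$ immediately gives $x \below \bigsqcup \alpha$. For the reverse inequality $\bigsqcup \alpha \below x$ I use the trick of testing against the constant directed family $\beta : \One_{\V} \to D$ defined by $\star \mapsto x$; this family is directed (with sole index) and has $\bigsqcup \beta = x$, so trivially $x \below \bigsqcup \beta$. The left adjunct property then gives $\alpha \cof \beta$, meaning that for each $i : I$ there exists $\star : \One_{\V}$ with $\alpha_i \below \beta_\star \equiv x$. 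Hence $\alpha_i \below x$ for all $i$, and therefore $\bigsqcup \alpha \below x$.

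It remains to show $\alpha_i \ll x$ for each $i : I$. Given any directed $\gamma : K \to D$ with $x \below \bigsqcup \gamma$, the left adjunct property gives $\alpha \cof \gamma$, so in particular there exists $k : K$ with $\alpha_i \below \gamma_k$, establishing $\alpha_i \ll x$.

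The only step requiring any genuine thought is obtaining $\bigsqcup \alpha \below x$; everything else is a direct unfolding of the definitions combined with \cref{sup-map-is-monotone} and the definition of the way-below relation. Testing left adjointness against the constant family at $x$ is the key device.
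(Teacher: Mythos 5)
Your proof is correct and follows essentially the same strategy as the paper's, including the key device of testing the left adjunct property against the constant family at $x$ to get $\bigsqcup\alpha \below x$ and against $\alpha$ itself to get $x \below \bigsqcup\alpha$. The only cosmetic difference is that you establish $\bigsqcup\alpha = x$ before showing $\alpha_i \ll x$, whereas the paper does it in the reverse order.
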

\begin{proof}
  Suppose first that \(\alpha\) approximates \(x\). If \(\alpha \cof \beta\),
  then \(x = \bigsqcup \alpha \below \bigsqcup \beta\), by
  \cref{sup-map-is-monotone}. Conversely, if \(x \below \bigsqcup \beta\), then
  \(\beta\) exceeds \(\alpha\): for if \(i\) is in the domain of \(\alpha\),
  then \(\alpha_i \ll x\), so there exists \(j\) such that
  \(\alpha_i \below \beta_j\) already.

  In the other direction, suppose that \(\alpha\) is left adjunct to \(x\). We
  show that each \(\alpha_i\) is way below \(x\). If \(\beta\) is a directed
  family with \(x \below \bigsqcup \beta\), then \(\beta\) exceeds \(\alpha\) as
  \(\alpha\) is assumed to be left adjunct to \(x\). Hence, for every \(i\),
  there exists \(j\) with \(\alpha_i \below \beta_j\), proving that
  \(\alpha_i \ll x\).
  Since \(\alpha\) exceeds itself, we get \(x \below \bigsqcup \alpha\) by
  assumption. For the other inequality, we note that
  \(x \below \bigsqcup \hat{x}\), where \(\hat{x} : \One \to D\) is the directed
  family that maps to \(x\). Hence, as \(\alpha\) is left adjunct to \(x\), we
  must have that \(\hat{x}\) exceeds \(\alpha\), which means that each
  \(\alpha_i\) is below \(x\). Thus, \(\bigsqcup \alpha \below x\)
  and \(x = \bigsqcup \alpha\) hold, as desired.
\end{proof}

\begin{proposition}\label{left-adjoint-characterisation}%
  For a \(\V\)-dcpo \(D\), a function \(L : D \to \Ind{V}(D)\) is a left
  adjoint to \(\bigsqcup : \Ind{V}(D) \to D\) if and only if \(L(x)\)
  approximates \(x\) for every \(x : D\).
\end{proposition}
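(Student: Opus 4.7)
The plan is to assemble the proposition from the three auxiliary results immediately preceding it, so the work has essentially already been done; the proof is just a matter of chaining them together with a small remark about monotonicity.

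First I would unfold what ``left adjoint'' means here. By \cref{left-adjoint-in-terms-of-left-adjunct-to}, a monotone map $L : D \to \Ind{V}(D)$ is left adjoint to $\bigsqcup$ exactly when $L(x)$ is left adjunct to $x$ for every $x : D$. So the task reduces to showing that a function $L$ satisfying the pointwise condition (either ``left adjunct'' or ``approximates'') is automatically monotone, and that the two pointwise conditions coincide.

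The monotonicity issue is handled by \cref{left-adjoint-monotone}: any $L : D \to \Ind{V}(D)$ such that $L(x)$ is left adjunct to $x$ for every $x$ is automatically monotone (indeed, order-reflecting), so we need not assume monotonicity of $L$ separately. The coincidence of the two pointwise conditions is the content of \cref{approximate-adjunct-coincidence}, which says that for any directed family $\alpha$ and any $x$, $\alpha$ approximates $x$ iff $\alpha$ is left adjunct to $x$. Applying this with $\alpha \colonequiv L(x)$ for each $x : D$ shows that the approximation and left-adjunct formulations of the hypothesis on $L$ are equivalent.

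Putting these together: $L$ is a left adjoint to $\bigsqcup$ iff $L$ is monotone and $L(x)$ is left adjunct to $x$ for every $x$ (\cref{left-adjoint-in-terms-of-left-adjunct-to}), iff $L(x)$ is left adjunct to $x$ for every $x$ (monotonicity coming for free by \cref{left-adjoint-monotone}), iff $L(x)$ approximates $x$ for every $x$ (\cref{approximate-adjunct-coincidence}). There is no real obstacle here: all the substantive content lives in \cref{approximate-adjunct-coincidence}, and the only thing one must be careful about is not forgetting that monotonicity of $L$ is part of being a left adjoint but is delivered automatically by \cref{left-adjoint-monotone}, so it does not need to be listed as a separate hypothesis in the statement.
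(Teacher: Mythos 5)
Your proof is correct and is essentially the paper's own argument: the paper proves this by citing \cref{approximate-adjunct-coincidence} together with \cref{left-adjoint-in-terms-of-left-adjunct-to} (the latter having already noted that monotonicity is supplied by \cref{left-adjoint-monotone}). You simply spell out explicitly the monotonicity step that the paper folds into the remark.
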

\begin{proof}
  Immediate from \cref{approximate-adjunct-coincidence} and
  \cref{left-adjoint-in-terms-of-left-adjunct-to}.
\end{proof}

\section{Continuous and algebraic dcpos}\label{sec:continuous-and-algebraic-dcpos}
Using the ind-completion from the previous section, we turn to defining
continuous and algebraic dcpos, paying special attention to size and
constructivity issues regarding the axiom of choice.
This second issue is discussed in \cref{sec:pseudocontinuity} from two
perspectives: type-theoretically, via a discussion on the placement of the
propositional truncation, and categorically, via left adjoints.

\subsection{Continuous dcpos}\label{sec:continuous-dcpos}

We define what it means for a \(\V\)-dcpo to be continuous and prove the
fundamental interpolation properties for the way-below relation. Examples are
postponed (see~\cref{sec:algebraic-examples,sec:dyadics}) until we have
developed the theory further.

\begin{definition}[Continuity data, \(I_x\), \(\alpha_x\)]%
  \emph{Continuity data} for a \(\V\)-dcpo \(D\) assigns to every \(x : D\) a
  type \(I_x : \V\) and a directed \emph{approximating family}
  \(\alpha_x : I \to D\) such that \(\alpha_x\) has supremum \(x\) and each
  \(\alpha_x(i)\) is way below~\(x\).
\end{definition}

The notion of continuity data can be understood categorically as follows.

\begin{proposition}\label{structural-continuity-in-terms-of-ladj}%
  The type of continuity data for a \(\V\)-dcpo \(D\) is equivalent to the type
  of left adjoints to \({\bigsqcup} : \Ind{V}(D) \to D\).
\end{proposition}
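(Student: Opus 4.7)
The plan is to unfold both types and exhibit the equivalence as a straightforward reshuffling, with \cref{left-adjoint-characterisation} doing the main conceptual work.

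First, unfold the type of continuity data. By definition, it is
\[
  \prod_{x : D} \Sigma_{I : \V}\, \Sigma_{\alpha : I \to D}\, \mathrm{is\text{-}directed}(\alpha) \times \alpha \text{ approximates } x.
\]
Then unfold the type of left adjoints. Since \(\Ind{V}(D)\) is by definition \(\Sigma_{I : \V}\,\Sigma_{\alpha : I \to D}\,\mathrm{is\text{-}directed}(\alpha)\), a function \(L : D \to \Ind{V}(D)\) corresponds, via currying, to a dependent function
\[
  \prod_{x : D} \Sigma_{I : \V}\,\Sigma_{\alpha : I \to D}\,\mathrm{is\text{-}directed}(\alpha),
\]
so the only thing missing from the continuity-data side is precisely the approximation condition at each \(x\).

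Next I would apply \cref{left-adjoint-characterisation}, which supplies a logical equivalence between ``\(L\) is a left adjoint to \({\bigsqcup}\)'' and ``\(L(x)\) approximates \(x\) for every \(x : D\)''. Both of these are propositions (the first is a \(\prod\) of propositions by unfolding left adjunction, and the second holds because ``approximates'' asks for an identity of suprema together with a \(\prod\)-indexed way-below condition, both proposition-valued), so this logical equivalence upgrades to an equivalence of types. Consequently the type of left adjoints is equivalent to
\[
  \Sigma_{L : D \to \Ind{V}(D)} \prod_{x : D} L(x) \text{ approximates } x.
\]

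Finally, distributing the \(\prod_{x}\) across the outer \(\Sigma_{L}\) via the standard equivalence \(\Sigma_{f : \prod A\,B}\prod_{a : A} C(a,f(a)) \simeq \prod_{a : A} \Sigma_{b : B(a)} C(a,b)\), and applying it again inside to pull the approximation proposition under the innermost \(\Sigma\)'s, I obtain exactly the unfolded form of the continuity-data type from the first paragraph. Composing these equivalences yields the desired equivalence.

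There is no real obstacle here beyond careful bookkeeping of the nested \(\Sigma\)-types and universes; the only substantive input is \cref{left-adjoint-characterisation}, which lets us rewrite the adjunction property pointwise as an approximation property, after which the whole proof is a type-theoretic currying/distributivity move that is automatic once one observes that all the relevant conditions are proposition-valued.
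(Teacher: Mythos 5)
Your proof is correct and takes essentially the same approach as the paper's: both reduce the claim to \cref{left-adjoint-characterisation} together with the observation that the adjointness (equivalently, pointwise-approximation) condition is a proposition; the paper states this as giving explicit maps in both directions via (un)currying and checking the round trips, while you phrase it as a chain of type equivalences making the \(\Pi\)--\(\Sigma\) distributivity explicit, but the content is the same.
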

\begin{proof}
  Given continuity data for \(D\), we define a function \(D \to \Ind{V}(D)\) by
  sending \(x : D\) to the directed family \(\alpha_x\). This is indeed a left
  adjoint to \({\bigsqcup} : \Ind{V}(D) \to D\) because of
  \cref{left-adjoint-characterisation}.
  Conversely, given a left adjoint \(L : {D \to \Ind{V}(D)}\), the assignment
  \(x \mapsto L(x)\) is continuity data for \(D\), again by
  \cref{left-adjoint-characterisation}.
  That these two maps make up a type equivalence can be checked directly,
  using that the type expressing that a map is a left adjoint to
  \({\bigsqcup} : \Ind{V}(D) \to D\) is a proposition.
\end{proof}

\begin{remark}\label{continuity-prop-vs-data}
  It should be noted that having continuity data is \emph{not} property of
  a dcpo, i.e., the type of continuity data for a dcpo is not a subsingleton.
  Indeed an element \(x : D\) can have several different approximating
  families, e.g.\ if \(\alpha : I \to D\) approximates \(x\), then so does
  \([\alpha,\alpha] : (I + I) \to D\).
  In other words, the left adjoint to \(\bigsqcup : \Ind{V}(D) \to D\) is not
  unique, although it is unique up to isomorphism (of the preorder
  \(\Ind{V}(D)\)).
  In category theory this is typically sufficient, and often the best one can
  do. \citeauthor{JohnstoneJoyal1982} follow this philosophy
  in~\cite{JohnstoneJoyal1982}, but we want the type of continuous
  \(\V\)\nobreakdash-dcpos to be a subtype of the \(\V\)-dcpos.
  One reason that property is preferred is that we only get a univalent
  category~\cite{AhrensKapulkinShulman2015} of continuous dcpos if we consider
  maps that preserve imposed structure.
  In the case of continuity data, this would imply preservation
  of the way-below relation, but this excludes many Scott continuous maps, e.g.\
  if the value of a constant map is not compact, then the map does not preserve the
  way-below relation.

  It is natural to ask whether the univalence axiom can be used to identify
  these isomorphic objects. However, this is not the case because the
  ind-completion \(\Ind{V}(D)\) is \emph{not} a univalent category in the sense
  of~\cite{AhrensKapulkinShulman2015}, as it is a preorder and not a poset.
  One way to obtain a subtype is to propositionally truncate the notion of
  continuity data and this is indeed the approach that we will take. However,
  another choice that would yield a property is to identify isomorphic elements
  of \(\Ind{V}(D)\). This approach is discussed at length in
  \cref{sec:pseudocontinuity} and in particular it is explained to be inadequate
  in a constructive setting.
\end{remark}

\begin{definition}[Continuity of a dcpo]
  A \(\V\)-dcpo is \emph{continuous} if it has unspecified continuity
  data, i.e.\ if the type of continuity data is inhabited.
\end{definition}
Thus, a dcpo is continuous if we have an \emph{unspecified} function assigning
an approximating family to every element of the dcpo.

\begin{proposition}\label{equality-of-continuous-dcpos}
  Assuming univalence, the identity type of two continuous \(\V\)-dcpos \(D\)
  and \(E\) is equivalent to the type of dcpo isomorphisms from \(D\) to \(E\).
\end{proposition}
\begin{proof}
  By an application of the structure identity
  principle~\cite[Section~33.14]{Escardo2019}, the identity type of \(D\)
  and \(E\) is equivalent to the type of order preserving and reflecting
  equivalences from \(D\) to \(E\). But it is straightforward to show that an
  order preserving and reflecting equivalence is precisely a dcpo isomorphism.
\end{proof}

If we are working with small dcpos, which requires propositional resizing, then
it is possible to extract continuity data from knowing that a dcpo is continuous
because we can simply consider all elements way below a given element, see
\cite{vanCollem2023}.

\begin{proposition}
  Continuity of a \(\V\)-dcpo \(D\) is equivalent to having an unspecified left
  adjoint to \({\bigsqcup} : \Ind{V}(D) \to D\).
\end{proposition}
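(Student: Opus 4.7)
The plan is to observe that this proposition is essentially the propositional truncation of \cref{structural-continuity-in-terms-of-ladj}. By definition, continuity of $D$ means that the type of continuity data for $D$ is inhabited, i.e.\ its propositional truncation is pointed. Similarly, having an unspecified left adjoint means that the type of left adjoints to $\bigsqcup : \Ind{V}(D) \to D$ is inhabited.

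First I would invoke \cref{structural-continuity-in-terms-of-ladj}, which gives a type equivalence between continuity data for $D$ and left adjoints to $\bigsqcup : \Ind{V}(D) \to D$. Next I would use the standard fact that propositional truncation is functorial on equivalences: if $A \simeq B$, then any element of $\|A\|$ yields an element of $\|B\|$ by truncation recursion applied to the forward map of the equivalence (and symmetrically), so $\|A\|$ is inhabited if and only if $\|B\|$ is. Applying this to the equivalence from \cref{structural-continuity-in-terms-of-ladj} gives the desired biimplication.

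There is no real obstacle here; the work has already been done in \cref{structural-continuity-in-terms-of-ladj}, and the only point worth mentioning is that ``unspecified'' on both sides refers to propositional truncation, so that the equivalence of underlying types passes straightforwardly to an equivalence (in fact, logical equivalence) of the truncations.
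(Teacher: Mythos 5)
Your proposal is correct and matches the paper's proof, which likewise appeals directly to \cref{structural-continuity-in-terms-of-ladj} together with functoriality of the propositional truncation.
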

\begin{proof}
  By \cref{structural-continuity-in-terms-of-ladj} and functoriality of the
  propositional truncation.
\end{proof}

\begin{lemma}\label{structurally-continuous-below-characterisation}
  For elements \(x\) and \(y\) of a dcpo with continuity data, the
  following are equivalent:
  \begin{enumerate}[(i)]
  \item\label{item-below} \(x \below y\);
  \item\label{item-approx-below} \(\alpha_x(i) \below y\) for every \(i : I_x\);
  \item\label{item-approx-way-below} \(\alpha_x(i) \ll y\) for every
    \(i : I_x\).
  \end{enumerate}
\end{lemma}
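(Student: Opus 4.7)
The plan is to prove the equivalences cyclically: $\eqref{item-approx-way-below} \Rightarrow \eqref{item-approx-below} \Rightarrow \eqref{item-below} \Rightarrow \eqref{item-approx-way-below}$. Each implication is short and uses only what has already been established about the way-below relation and about approximating families.

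First, $\eqref{item-approx-way-below} \Rightarrow \eqref{item-approx-below}$ is immediate from \cref{way-below-properties}\eqref{below-if-way-below}, which tells us that $\ll$ implies $\below$. Next, for $\eqref{item-approx-below} \Rightarrow \eqref{item-below}$ I would use that $\alpha_x$ is an approximating family for $x$, so in particular $x = \bigsqcup \alpha_x$. Under the hypothesis \eqref{item-approx-below}, $y$ is an upper bound of $\alpha_x$, so by the universal property of the supremum $x = \bigsqcup \alpha_x \below y$.

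The only slightly substantive step is $\eqref{item-below} \Rightarrow \eqref{item-approx-way-below}$: assume $x \below y$ and fix $i : I_x$. Because $\alpha_x$ is an approximating family for $x$, we have $\alpha_x(i) \ll x$. Combining this with $x \below y$ and applying \cref{way-below-properties}\eqref{below-way-below-way-below} (with the trivial $\alpha_x(i) \below \alpha_x(i)$ on the left) yields $\alpha_x(i) \ll y$, as required.

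I do not anticipate any genuine obstacle: all three implications reduce to formal consequences of the definition of ``approximating family'' together with the composition and monotonicity properties of $\ll$ already recorded in \cref{way-below-properties}. The only point worth stating carefully in the write-up is that \eqref{item-below} actually uses the full approximating property (both way-below \emph{and} supremum data), rather than just one half of it: the supremum clause is what makes $\eqref{item-approx-below} \Rightarrow \eqref{item-below}$ work, while the way-below clause is what makes $\eqref{item-below} \Rightarrow \eqref{item-approx-way-below}$ work.
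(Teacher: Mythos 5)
Your proof is correct and follows exactly the same cyclic chain \eqref{item-approx-way-below} $\Rightarrow$ \eqref{item-approx-below} $\Rightarrow$ \eqref{item-below} $\Rightarrow$ \eqref{item-approx-way-below} as the paper, using the same ingredients (the supremum clause of the approximating family for the second implication and \cref{way-below-properties}\eqref{below-way-below-way-below} for the third).
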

\begin{proof}
  Note that \eqref{item-approx-way-below} implies \eqref{item-approx-below} and
  \eqref{item-approx-below} implies \eqref{item-below}, because if
  \(\alpha_x(i) \below y\) for every \(i : I_x\), then
  \(x = \bigsqcup \alpha_x \below y\), as desired. So it remains to prove that
  \eqref{item-below} implies \eqref{item-approx-way-below}, but this holds, because
  \(\alpha_x(i) \ll x\) for every \(i : I_x\).
\end{proof}

\begin{lemma}\label{structurally-continuous-way-below-characterisation}
  For elements \(x\) and \(y\) of a dcpo with continuity data, \(x\) is
  way below~\(y\) if and only if there exists \(i : I_y\) such that
  \(x \below \alpha_y(i)\).
\end{lemma}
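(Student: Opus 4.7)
The plan is to prove both directions directly using the definitions and Lemma~\ref{way-below-properties}.

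For the forward direction, suppose $x \ll y$. By the continuity data, the family $\alpha_y : I_y \to D$ is directed with supremum $y$, so in particular $y \below \bigsqcup \alpha_y$. Applying the definition of the way-below relation (Definition~\ref{def:way-below}) to this directed family, we obtain some $i : I_y$ with $x \below \alpha_y(i)$, which gives the existential we want.

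For the backward direction, suppose we have $i : I_y$ with $x \below \alpha_y(i)$. Since the goal $x \ll y$ is a proposition by Lemma~\ref{way-below-properties}\eqref{way-below-prop-valued}, we may strip the truncation and work with a concrete such $i$. The continuity data guarantees $\alpha_y(i) \ll y$, so we have $x \below \alpha_y(i) \ll y$, and Lemma~\ref{way-below-properties}\eqref{below-way-below-way-below} immediately yields $x \ll y$.

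There is no real obstacle here; the lemma is essentially a repackaging of the definition of the way-below relation in the presence of an approximating family, plus the interpolation-style reasoning already encapsulated in Lemma~\ref{way-below-properties}. The only subtlety is the careful handling of the propositional truncation in the existential statement, which is handled by observing that both sides are propositions.
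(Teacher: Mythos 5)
Your proof is correct and follows essentially the same route as the paper: the forward direction unfolds the definition of way-below against the directed family $\alpha_y$ with supremum $y$, and the backward direction uses $x \below \alpha_y(i) \ll y$ together with \cref{way-below-properties}\eqref{below-way-below-way-below}. The paper states this more tersely but the argument is identical, and your explicit remark about stripping the truncation because $x \ll y$ is a proposition is the right justification for that step.
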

\begin{proof}
  The left-to-right implication holds, because \(\alpha_y\) is a directed family
  with supremum \(y\), while the converse holds because \(\alpha_y(i) \ll y\)
  for every \(i : I_y\).
\end{proof}

We now prove three interpolation lemmas for dcpos with continuity data. Because
the conclusions of the lemmas are propositions, the results will follow for
continuous dcpos immediately.

\begin{lemma}[Nullary interpolation for the way-below relation]
  \label{nullary-interpolation}
  For every \(x : D\) of a continuous dcpo \(D\), there exists \(y : D\) such
  that \(y \ll x\).
\end{lemma}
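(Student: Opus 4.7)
The plan is to unpack the continuity assumption and pluck a witness out of the approximating family. Because the goal $\exists y : D.\ y \ll x$ is, by definition, a propositional truncation and hence a subsingleton, we may assume that we have actual continuity data for $D$ rather than merely its unspecified (truncated) version.

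With continuity data in hand, for the given $x : D$ we have a directed family $\alpha_x : I_x \to D$ whose supremum is $x$ and each of whose values $\alpha_x(i)$ satisfies $\alpha_x(i) \ll x$. Directedness of $\alpha_x$ entails in particular that $I_x$ is inhabited, so there merely exists some $i : I_x$. Since the conclusion is a proposition, we are again entitled to extract such an $i$, and then $y \colonequiv \alpha_x(i)$ witnesses the required $y \ll x$.

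The only bookkeeping to be careful about is the double use of propositional truncation elimination: first to unpack continuity, then to extract an index from the inhabited type $I_x$. Both are justified by the fact that the goal is itself propositional, so no additional ingredients beyond \cref{def:way-below} and the definition of a directed family are needed. I do not anticipate any real obstacle here — this lemma is essentially the observation that continuity forces every element to have at least one way-below approximant, which is built into the directedness clause of continuity data.
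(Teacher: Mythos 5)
Your proof is correct and follows exactly the same route as the paper: unpack continuity data using that the goal is a proposition, use directedness of $\alpha_x$ to obtain an index $i : I_x$, and take $y \colonequiv \alpha_x(i)$, which is way below $x$ by the definition of an approximating family. The paper merely leaves the two propositional-truncation eliminations implicit (noting beforehand that the conclusions of the interpolation lemmas are propositions), whereas you spell them out.
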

\begin{proof}
  The approximating family \(\alpha_x\) is directed, so there exists \(i : I_x\)
  and hence we can take \(y \colonequiv \alpha_x(i)\) since
  \(\alpha_x(i) \ll x\).
\end{proof}

Although there are constructive proofs of the following in the literature,
e.g.~\cite[Proposition~2.12]{AbramskyJung1994}, they are impredicative. Instead,
we develop a predicative proof inspired by
\cite[Proposition~2.12]{JohnstoneJoyal1982}.
\begin{lemma}[Unary interpolation for the way-below relation]\label{unary-interpolation}
  If \(x \ll y\) in a continuous dcpo \(D\), then there exists an
  \emph{interpolant} \(d : D\) such that \(x \ll d \ll y\).
\end{lemma}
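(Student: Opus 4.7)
The plan is to exploit continuity twice, applying the approximating family construction both to $y$ and to each element of $y$'s approximating family, and then to use $x \ll y$ on the resulting ``doubly approximated'' family. Since the conclusion is a proposition, we may use \cref{structurally-continuous-below-characterisation,structurally-continuous-way-below-characterisation} by passing from continuity to continuity data under the propositional truncation.

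Concretely, I would consider the family $\gamma : J \to D$ indexed by $J \colonequiv \Sigma_{i : I_y}\,I_{\alpha_y(i)}$ and defined by $\gamma(i,j) \colonequiv \alpha_{\alpha_y(i)}(j)$. Note that $J : \V$ and is inhabited since $\alpha_y$ and each $\alpha_{\alpha_y(i)}$ are directed. The key claim is that $\gamma$ is directed with supremum $y$. Given $(i_1,j_1),(i_2,j_2) : J$, I would first use directedness of $\alpha_y$ to find $i : I_y$ with $\alpha_y(i_1),\alpha_y(i_2) \below \alpha_y(i)$. Since $\gamma(i_k,j_k) \equiv \alpha_{\alpha_y(i_k)}(j_k) \ll \alpha_y(i_k) \below \alpha_y(i) = \bigsqcup \alpha_{\alpha_y(i)}$, the way-below relation yields $j_k' : I_{\alpha_y(i)}$ with $\gamma(i_k,j_k) \below \alpha_{\alpha_y(i)}(j_k')$ for $k = 1,2$; then directedness of $\alpha_{\alpha_y(i)}$ produces an upper bound $\alpha_{\alpha_y(i)}(j) \equiv \gamma(i,j)$ of the two. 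For the supremum, each $\gamma(i,j) \below \alpha_y(i) \below y$ gives $y$ as an upper bound, and any upper bound $z$ of $\gamma$ satisfies $\alpha_y(i) = \bigsqcup_{j} \alpha_{\alpha_y(i)}(j) \below z$ for every $i$, hence $y = \bigsqcup \alpha_y \below z$.

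Having established that $\gamma$ is a directed family in $D$ with supremum $y$, I would apply the hypothesis $x \ll y$ to obtain $(i,j) : J$ with $x \below \gamma(i,j) \equiv \alpha_{\alpha_y(i)}(j)$. Taking $d \colonequiv \alpha_y(i)$ then finishes the proof: $\alpha_y(i) \ll y$ immediately gives $d \ll y$, and from $x \below \alpha_{\alpha_y(i)}(j) \ll \alpha_y(i) \equiv d$ we deduce $x \ll d$ by \cref{way-below-properties}\eqref{below-way-below-way-below}.

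The main subtlety I anticipate is the directedness verification for $\gamma$: the step where I need to ``align'' $\alpha_{\alpha_y(i_1)}(j_1)$ and $\alpha_{\alpha_y(i_2)}(j_2)$ into the single family $\alpha_{\alpha_y(i)}$ requires invoking the way-below property of the inner approximating families against the outer supremum, rather than a simple directedness argument in a single family. Everything else is a direct combination of the defining properties of continuity data and the composition property \eqref{below-way-below-way-below}.
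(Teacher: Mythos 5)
Your proof is correct and follows essentially the same strategy as the paper's: after reducing to continuity data because the conclusion is a proposition, you build the ``doubly approximating'' family $\gamma$ over $\Sigma_{i:I_y} I_{\alpha_y(i)}$, verify it is directed, show $y \below \bigsqcup\gamma$, apply $x \ll y$ to land below some $\gamma(i,j)$, and take $d \colonequiv \alpha_y(i)$. The only cosmetic difference is that the paper organizes the directedness verification through the machinery of the ind-completion --- it packages the families $\beta_i \colonequiv \alpha_{\alpha_y(i)}$ into a directed family $\hat\beta : I_y \to \Ind{V}(D)$ and invokes $\V$-directed-completeness of $\Ind{V}(D)$ (\cref{ind-completion-is-directed-complete}) to obtain $\gamma$ as its supremum --- whereas you check directedness of $\gamma$ by hand, unfolding exactly the same computation; you also prove the slightly stronger fact $\bigsqcup\gamma = y$ where the paper only needs $y \below \bigsqcup\gamma$, but both suffice to invoke $x \ll y$.
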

\begin{proof}
  Since we are proving a proposition, we may assume to be given continuity
  data for \(D\).
  Thus, we can approximate every approximant \(\alpha_y(i)\) of \(y\) by an
  approximating family \(\beta_{i} : J_i \to D\). This defines a map
  \(\hat{\beta}\) from \(I_y\) to \(\Ind{V}(D)\), the ind-completion of the
  \(\V\)-dcpo \(D\), by sending \(i : I_y\) to the directed family \(\beta_i\).
  We claim that \(\hat\beta\) is directed in \(\Ind{V}(D)\). Since \(\alpha_y\)
  is directed, \(I_y\) is inhabited, so it remains to prove that \(\hat\beta\)
  is semidirected. So suppose we have \(i_1,i_2 : I_y\). Because \(\alpha_y\) is
  semidirected, there exists \(i : I_y\) such that
  \(\alpha_y(i_1),\alpha_y(i_2) \below \alpha_y(i)\). We claim that \(\beta_i\)
  exceeds \(\beta_{i_1}\) and \(\beta_{i_2}\), which would prove
  semidirectedness of \(\hat\beta\). We give the argument for \(i_1\) only as
  the case for \(i_2\) is completely analogous.
  We have to show that for every \(j : J_{i_1}\), there
  exists \(j' : J_i\) such that \(\beta_{i_1}(j) \below \beta_{i}(j')\).
  But this holds because \(\beta_{i_1}(j) \ll \bigsqcup \beta_i\) for every such
  \(j\), as we have
  \(\beta_{i_1}(j) \ll \alpha_y(i_1) \below \alpha_y(i) \below \bigsqcup
  \beta_i\).

  Thus, \(\hat\beta\) is directed in \(\Ind{V}(D)\) and hence we can calculate
  its supremum in \(\Ind{V}(D)\) to obtain the \emph{directed} family
  \(\gamma : \pa{\Sigma_{i : I}J_i} \to D\) given by
  \((i,j) \mapsto \beta_i(j)\).

  We now show that \(y\) is below the supremum of \(\gamma\). By
  \cref{structurally-continuous-below-characterisation}, it suffices to prove
  that \(\alpha_y(i) \below \bigsqcup \gamma\) for every \(i : I_y\), and, in
  turn, to prove this for an \(i : I_y\) it suffices to prove that
  \(\beta_i(j) \below \bigsqcup \gamma\) for every \(j : J_i\). But this is
  immediate from the definition of \(\gamma\). Thus,
  \(y \below \bigsqcup \gamma\).
  Because \(x \ll y\), there exists \((i,j) : \Sigma_{i : I}J_i\) such that
  \(x \below \gamma(i,j) \equiv \beta_i(j)\).

  Finally, for our interpolant, we take \(d \colonequiv \alpha_y(i)\). Then,
  indeed, \(x \ll d \ll y\), because
  \(x \below \beta_i(j) \ll \alpha_y(i) \equiv d\) and
  \(d \equiv \alpha_y(i) \ll y\), completing the proof.
\end{proof}

The proof of the following is a straightforward application of unary
interpolation as in the classical case.
\begin{lemma}[Binary interpolation for the way-below relation]%
  \label{binary-interpolation}
  If \(x \ll z\) and \(y \ll z\) in a continuous dcpo \(D\), then
  there exists an interpolant \(d : D\) such that \(x,y \ll d\) and
  \(d \ll z\).
\end{lemma}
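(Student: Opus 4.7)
The plan is to obtain the interpolant by first applying unary interpolation (\cref{unary-interpolation}) to each of the hypotheses separately, and then combining the resulting approximants through the directedness of an approximating family of \(z\). Since the conclusion is a proposition, we may freely assume continuity data for \(D\) throughout.

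First I would apply \cref{unary-interpolation} to \(x \ll z\) to obtain some \(d_1 : D\) with \(x \ll d_1 \ll z\), and similarly to \(y \ll z\) to obtain \(d_2 : D\) with \(y \ll d_2 \ll z\). Next, using the approximating family \(\alpha_z : I_z \to D\) supplied by the assumed continuity data together with \cref{structurally-continuous-way-below-characterisation}, the relations \(d_1 \ll z\) and \(d_2 \ll z\) yield indices \(i_1, i_2 : I_z\) with \(d_1 \below \alpha_z(i_1)\) and \(d_2 \below \alpha_z(i_2)\). By semidirectedness of \(\alpha_z\) there exists \(i : I_z\) such that \(\alpha_z(i_1), \alpha_z(i_2) \below \alpha_z(i)\), and I would take \(d \colonequiv \alpha_z(i)\) as the desired interpolant.

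It remains to verify the three way-below relations. First, \(d \ll z\) is immediate since \(\alpha_z(i) \ll z\) holds by definition of continuity data. Second, we have the chain \(x \ll d_1 \below \alpha_z(i_1) \below \alpha_z(i) \equiv d\), and applying \cref{way-below-properties}\eqref{below-way-below-way-below} gives \(x \ll d\); an entirely analogous argument yields \(y \ll d\).

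I do not expect any serious obstacle here: the proof is essentially a bookkeeping exercise once unary interpolation is in hand, and the only subtlety worth mentioning is that we must combine \(d_1\) and \(d_2\) via the approximating family of \(z\) rather than via a binary join (which need not exist in \(D\)). Using the strict inequalities \(d_1 \ll z\) and \(d_2 \ll z\) (rather than just \(\below z\)) is what lets us pull the \(d_j\) back into \(\alpha_z\) in the first place, and this is precisely why unary interpolation is needed as a preliminary step.
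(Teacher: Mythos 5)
Your proof is correct and follows essentially the same route as the paper's: apply unary interpolation twice to obtain \(d_1, d_2\), pull these back into the approximating family \(\alpha_z\), combine via semidirectedness, and verify the required way-below relations by composing \(\ll\) and \(\below\). The minor extra explicitness in spelling out \(\alpha_z(i_1), \alpha_z(i_2) \below \alpha_z(i)\) before concluding \(d_1, d_2 \below d\) is the only cosmetic difference.
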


Continuous dcpos are closed under retracts. Keeping track of universes, it holds
in the following generality, where we recall (\cref{def:DCPO}) that we write
\(\DCPO{V}{U}{T}\) for the type of \(\V\)-dcpos with carriers in~\(\U\) and
order relations taking values in \(\T\).

\begin{theorem}\label{continuity-closed-under-continuous-retracts}%
  If we have dcpos \(D : \DCPO{V}{U}{T}\) and \(E : \DCPO{V}{U'}{T'}\) such that
  \(D\)~is a retract of \(E\), then \(D\) is continuous if \(E\) is.
  Moreover, we can give continuity data for \(D\) if we have such data for
  \(E\).
\end{theorem}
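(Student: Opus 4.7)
The plan is to give an explicit construction of continuity data for $D$ from continuity data for $E$, which immediately yields the propositional statement by functoriality of the truncation.

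Write $\retractalt{D}{E}{s}{r}$ for the Scott continuous retraction, so $r \circ s = \id_D$. Given continuity data $(I_y, \alpha_y)_{y : E}$ for $E$, I would define, for every $x : D$, the indexing type $I_x \colonequiv I_{s(x)} : \V$ and the candidate approximating family
\[
  \alpha_x \colonequiv r \circ \alpha_{s(x)} : I_{s(x)} \to D.
\]
Three things need to be checked: that $\alpha_x$ is directed, that $\bigsqcup \alpha_x = x$, and that $\alpha_x(i) \ll x$ for every $i : I_{s(x)}$.

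Directedness is immediate since $\alpha_{s(x)}$ is directed and $r$ is monotone. For the supremum, Scott continuity of $r$ together with the fact that $\alpha_{s(x)}$ has supremum $s(x)$ gives
\[
  \bigsqcup \alpha_x = \bigsqcup \pa*{r \circ \alpha_{s(x)}} = r\pa*{\bigsqcup \alpha_{s(x)}} = r(s(x)) = x.
\]
For the way-below condition, we have $\alpha_{s(x)}(i) \ll s(x)$ in $E$ by assumption, and then \cref{continuous-retraction-way-below-criterion} applied to the Scott continuous retraction yields $r(\alpha_{s(x)}(i)) \ll x$ in $D$, which is exactly $\alpha_x(i) \ll x$.

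This defines a function from the type of continuity data for $E$ to the type of continuity data for $D$. The second, propositional statement (that continuity of $E$ implies continuity of $D$) then follows by applying the propositional truncation to this function. I do not expect any genuine obstacle here: the key ingredient, \cref{continuous-retraction-way-below-criterion}, has already been proved, and the rest is bookkeeping about suprema and directedness that is routinely handled by Scott continuity of the retraction $r$.
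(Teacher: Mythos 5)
Your proof is correct and is essentially identical to the paper's: both define the approximating family for $x : D$ as $r \circ \alpha_{s(x)}$, verify directedness, compute the supremum via Scott continuity of $r$, invoke \cref{continuous-retraction-way-below-criterion} for the way-below condition, and obtain the truncated statement by functoriality of the propositional truncation.
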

\begin{proof}
  We prove the result in case we are given continuity data for \(E\), as the
  other will follow from that and the fact that the propositional truncation is
  functorial. So suppose that we have a Scott continuous section \(s : D \to E\)
  and retraction \(r : E \to D\). We claim that for every \(x : D\), the family
  \(r \circ \alpha_{s(x)}\) approximates~\(x\).
  Firstly, it is directed, because \(\alpha_{s(x)}\) is and \(r\) is Scott
  continuous.
  Secondly,
  \begin{align*}
    \textstyle\bigsqcup r \circ \alpha_{s(x)}
    &= r\pa*{\textstyle\bigsqcup\alpha_{s(x)}}
      &&\text{(by Scott continuity of \(r\))}
    \\
    &= r(s(x))
      &&\text{(as \(\alpha_{s(x)}\) is the approximating family of \(s(x)\))}
    \\
    &= x &&\text{(because \(s\) is a section of \(r\))},
  \end{align*}
  so the supremum of \(r \circ \alpha_{s(x)}\) is \(x\).
  Finally, we must prove that \(r\pa*{\alpha_{s(x)}(i)} \ll x\) for every
  \(i : I_x\). By \cref{continuous-retraction-way-below-criterion}, this is
  implied by \(\alpha_{s(x)}(i) \ll s(x)\), which holds as \(\alpha_{s(x)}\) is
  the approximating family of \(s(x)\).
\end{proof}

Recall from \cref{def:local-smallness}
that a \(\V\)-dcpo is locally small if
\(x \below y\) is equivalent to a type in \(\V\) for all elements \(x\) and
\(y\).

\begin{proposition}\label{cont-loc-small-iff-way-below-small}%
  A continuous dcpo is locally small if and only if its way-below relation has
  small truth values.
\end{proposition}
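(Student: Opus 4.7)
The strategy is to observe that both sides of the biconditional are propositions, so in each direction I may extract actual continuity data \((I_x, \alpha_x)_{x : D}\) with each \(I_x : \V\) from the continuity hypothesis via the universal property of the propositional truncation, and then apply the characterisations of \({\below}\) and \({\ll}\) from \cref{structurally-continuous-below-characterisation,structurally-continuous-way-below-characterisation}.

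For the implication from local smallness to small way-below, I would use \cref{structurally-continuous-way-below-characterisation} to rewrite \(x \ll y\) as the proposition \(\exists i : I_y,\,x \below \alpha_y(i)\). Since \(I_y : \V\) and local smallness provides a \(\V\)-small type equivalent to each \(x \below \alpha_y(i)\), closure of \(\V\) under \(\Sigma\) and propositional truncation yields a \(\V\)-small type equivalent to \(x \ll y\).

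Conversely, assuming \(\ll\) has small values, \cref{structurally-continuous-below-characterisation} rewrites \(x \below y\) as the proposition \(\prod_{i : I_x} \alpha_x(i) \ll y\), which is a dependent product indexed by a type in \(\V\) of \(\V\)-small propositions and so is itself \(\V\)-small, witnessing local smallness of \(D\).

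The only step requiring any care is the initial extraction of continuity data: I rely on the fact that smallness of a fixed type is a proposition in order to eliminate the propositional truncation inside the continuity hypothesis. Beyond this bookkeeping step, I do not anticipate any genuine obstacle — both directions reduce to rewriting one relation in terms of \(\V\)-sized quantifications over the approximating family of the other.
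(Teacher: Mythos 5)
Your proof is correct and takes essentially the same route as the paper: both directions rest on \cref{structurally-continuous-below-characterisation,structurally-continuous-way-below-characterisation}, with smallness of the index types \(I_x, I_y\) plus closure of \(\V\) under \(\Sigma\), \(\Pi\), and truncation doing the rest, and the passage from continuity data to unspecified continuity is handled exactly as in the paper, by noting the goal is a proposition.
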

\begin{proof}
  By \cref{structurally-continuous-below-characterisation,%
    structurally-continuous-way-below-characterisation}, we have
  \[
    x \below y \iff \forall_{i : I_x}\pa*{\alpha_x(i) \ll y} %
    \quad\text{and}\quad %
    x \ll y \iff \exists_{i : I_y}\pa*{x \below \alpha_y(i)},
  \]
  for every two elements \(x\) and \(y\) of a dcpo with continuity data.
  But the types \(I_x\)~and~\(I_y\) are small, finishing the proof.
  The result also holds for continuous dcpos, because what we are proving is a
  proposition.
\end{proof}

\cref{cont-loc-small-iff-way-below-small} is significant because the definition
of the way-below relation for a \(\V\)-dcpo \(D\) quantifies over all families
into \(D\) indexed by types in \(\V\).

\subsection{Pseudocontinuity}\label{sec:pseudocontinuity}
In light of~\cref{structural-continuity-in-terms-of-ladj}, we see that a
\(\V\)\nobreakdash-dcpo \(D\) can have continuity data in more than one
way: the map \({\bigsqcup} : \Ind{V}(D) \to D\) can have two left adjoints
\(L_1,L_2\) such that for some \(x : D\), the directed families \(L_1(x)\) and
\(L_2(x)\) exceed each other, yet are unequal.
In order for the left adjoint to be truly unique (and not just up to
isomorphism), the preorder \(\Ind{\V}(D)\) would have to identify families that
exceed each other.
Of course, we could enforce this identification by passing to the poset
reflection \(\Ind{V}(D)/{\approx}\) of \(\Ind{V}(D)\) and this section studies
exactly that.

Another perspective on the situation is the following: The type-theoretic
definition of having continuity data for a \(\V\)-dcpo \(D\) is of the
form \(\Pi_{x : D}\Sigma_{I : \V}\Sigma_{\alpha : I \to D}\dots\),
while continuity is defined as its propositional truncation
\(\squash*{\Pi_{x : D}\Sigma_{I : \V}\Sigma_{\alpha : I \to D}\dots}\).
Yet another way to obtain a property is by putting the propositional truncation on
the \emph{inside} instead:
\(\Pi_{x : D}\squash*{\Sigma_{I : \V}\Sigma_{\alpha : I \to D}\dots}\).
We study what this amounts to and how it relates to continuity and the poset
reflection. Our results are summarised in~\cref{continuity-table}.

\begin{definition}[Pseudocontinuity]
  A \(\V\)-dcpo \(D\) is \emph{pseudocontinuous} if for every \(x : D\) there
  exists an unspecified directed family that approximates \(x\).
\end{definition}

Notice that continuity data \(\Rightarrow\) continuity \(\Rightarrow\)
pseudocontinuity, but reversing the first implication is an instance of global
choice~\cite[Section~3.35.6]{Escardo2019}, while reversing the second amounts to
an instance of the axiom of choice that we do not expect to be provable in our
constructive setting. We further discuss this point in
\cref{pseudocontinuous-choice-issues}.

For a \(\V\)-dcpo \(D\), the map \({\bigsqcup} : \Ind{V}(D) \to D\) is monotone, so
it induces a unique monotone map
\({\bigsqcup_{\approx}} : \Ind{V}(D)/{\approx} \to D\) such that the diagram
\begin{equation}\label{supremum-map-quotient-comm}
  \begin{tikzcd}
    \Ind{V}(D)/{\approx}\ar[rr,"{\bigsqcup_{\approx}}"]
    \ar[dr,"\toquotient{-}"']
    & & D \\
    & \Ind{V}(D) \ar[ur,"{\bigsqcup}"']
  \end{tikzcd}
\end{equation}
commutes.

\begin{proposition}
  A \(\V\)-dcpo \(D\) is pseudocontinuous if and only if the map of posets
  \({\bigsqcup}_{\approx} : \Ind{V}(D)/{\approx} \to D\) has a necessarily unique left
  adjoint.
\end{proposition}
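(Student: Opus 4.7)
The plan is to exploit \cref{approximate-adjunct-coincidence}: a directed family approximates $x$ if and only if it is left adjunct to $x$, i.e., satisfies $\alpha \cof \beta \iff x \below \bigsqcup \beta$ for every directed~$\beta$. The key observation is that being left adjunct to $x$ is a condition that respects the equivalence relation $\approx$ (mutual exceeding), so it passes cleanly to the quotient $\Ind{V}(D)/{\approx}$, where left adjoints become honestly unique because the quotient is a poset rather than a preorder.

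For the forward direction, suppose $D$ is pseudocontinuous. To define $L : D \to \Ind{V}(D)/{\approx}$ I would observe that any two approximating families $\alpha, \alpha'$ of the same element $x$ satisfy $\alpha \approx \alpha'$: indeed, for every $i$ we have $\alpha_i \ll x = \bigsqcup \alpha'$, so there exists $j$ with $\alpha_i \below \alpha'_j$, giving $\alpha \cof \alpha'$, and symmetrically. Hence the assignment sending an approximating family $\alpha$ of $x$ to its equivalence class $[\alpha]$ is weakly constant, and since $\Ind{V}(D)/{\approx}$ is a set we may eliminate the propositional truncation coming from pseudocontinuity to produce a well-defined $L(x) \in \Ind{V}(D)/{\approx}$. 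To verify the adjunction $L(x) \leq [\beta] \iff x \below \bigsqcup_{\approx}[\beta]$, pick any approximating family $\alpha$ of $x$ representing $L(x)$; by \cref{approximate-adjunct-coincidence} it is left adjunct to $x$, so $\alpha \cof \beta \iff x \below \bigsqcup \beta$, which is exactly the adjunction condition after unfolding the definitions of the order on $\Ind{V}(D)/{\approx}$ and of $\bigsqcup_{\approx}$ via the commuting triangle \eqref{supremum-map-quotient-comm}. Monotonicity of $L$ is automatic from the adjunction.

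For the backward direction, assume a left adjoint $L : D \to \Ind{V}(D)/{\approx}$ to $\bigsqcup_{\approx}$. Given $x : D$, since the quotient map $[-] : \Ind{V}(D) \to \Ind{V}(D)/{\approx}$ is surjective, there exists (in the truncated sense) a directed family $\alpha$ with $[\alpha] = L(x)$. Unfolding the adjunction, $\alpha \cof \beta \iff x \below \bigsqcup \beta$ for every directed $\beta$, so $\alpha$ is left adjunct to $x$ and thus approximates $x$ by \cref{approximate-adjunct-coincidence}. As approximating $x$ is what pseudocontinuity demands, and we already have the existence at the truncated level, this establishes pseudocontinuity.

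The uniqueness clause is the easy standard fact that left adjoints between posets are unique: if $L_1, L_2$ are both left adjoint to $\bigsqcup_{\approx}$, then $L_1(x) \leq L_2(x) \iff x \below \bigsqcup_{\approx} L_2(x)$, and the right-hand side holds because $L_2(x) \leq L_2(x)$ gives $x \below \bigsqcup_{\approx} L_2(x)$; by symmetry $L_1(x) = L_2(x)$, using antisymmetry in the poset $\Ind{V}(D)/{\approx}$. The main subtlety, and the only place where real care is needed, is the forward direction's construction of $L$ from pseudocontinuity: one must invoke that $\Ind{V}(D)/{\approx}$ is a set in order to factor the truncation through it, and this is exactly why quotienting was necessary in the first place — without it, the assignment $x \mapsto [\alpha_x]$ would not be well-defined from a mere existence statement.
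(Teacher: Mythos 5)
Your proof is correct and follows essentially the same approach as the paper: both directions hinge on \cref{approximate-adjunct-coincidence}, the backward direction uses surjectivity of the quotient map to recover an approximating family from $L(x)$, and the forward direction uses weak constancy of the assignment $\alpha \mapsto [\alpha]$ on approximating families to factor through the propositional truncation. One small remark: where you write ``since $\Ind{\V}(D)/{\approx}$ is a set we may eliminate the propositional truncation'', the precise justification is the factorisation theorem for weakly constant maps into sets (\cite[Theorem~5.4]{KrausEtAl2017}, which the paper cites explicitly); sethood of the codomain alone does not suffice, but combined with the weak constancy you did establish, it does.
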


Observe that the type of left adjoints to
\({\bigsqcup}_{\approx} : \Ind{V}(D)/{\approx} \to D\) is a
proposition, precisely because \(\Ind{V}(D)/{\approx}\) is a poset,
cf.~\cite[Lemma~5.2]{AhrensKapulkinShulman2015}, and hence the above
uniqueness condition amounts to the contractibility of the type of
left adjoints.

\begin{proof}
  Suppose that \({\bigsqcup}_{\approx} : \Ind{V}(D)/{\approx} \to D\) has a left
  adjoint \(L\) and let \(x : D\) be arbitrary. We have to prove that there
  exists a directed family \(\alpha : I \to D\) that approximates \(x\). By
  surjectivity of the universal map \(\toquotient{-}\), there exists a directed
  family \(\alpha : I \to D\) such that \(L(x) = \toquotient{\alpha}\).
  Moreover, \(\alpha\) approximates \(x\) by virtue of
  \cref{approximate-adjunct-coincidence}, since for every
  \(\beta : \Ind{V}(D)\), we have
  \begin{align*}
    \alpha \cof \beta
    &\iff L(x) \leq \toquotient{\beta}
      &&\text{(since \(L(x) = \toquotient{\alpha}\))} \\
    &\iff x \below \textstyle\bigsqcup_{\approx}{\toquotient{\beta}}
      &&\text{(since \(L\) is a left adjoint to \(\textstyle{\bigsqcup}_{\approx}\))}\\
    &\iff x \below \textstyle\bigsqcup \beta
      &&\text{(by \cref{supremum-map-quotient-comm})}.
  \end{align*}

  The converse is more involved and we apply
  \cref{constant-map-to-set-factors-through-truncation} which we recall says
  that every constant map with values in a set factors through the propositional
  truncation of its domain.
  Assume that \(D\) is pseudocontinuous. We start by constructing the left
  adjoint, so let \(x : D\) be arbitrary. Writing \(\mathcal A_x\) for the type
  of directed families that approximate \(x\), we have an obvious map
  \(\varphi_x : \mathcal A_x \to \Ind{V}(D)\) that forgets that the directed
  family approximates \(x\).

  We claim that all elements in the image of \(\varphi_x\) exceed each
  other. For if \(\alpha\) and \(\beta\) are directed families both
  approximating \(x\), then for every \(i\) in the domain of \(\alpha\) we know
  that \(\alpha_i \ll x = \bigsqcup \beta\), so that there exists \(j\) with
  \(\alpha_i \below \beta_j\).
  Hence, passing to the poset reflection, the composite
  \(\toquotient{-} \circ \varphi_x\) is constant.
  Thus, by~\cref{constant-map-to-set-factors-through-truncation} we have a
  (necessarily unique) map \(\psi_x\) making the diagram
  \[
    \begin{tikzcd}
      \mathcal A_x \ar[rr,"\toquotient{-} \circ \varphi_x"]
      \ar[dr,"\tosquash{-}"']
      & & \Ind{V}(D)/{\approx} \\
      & \squash*{\mathcal A_x}
      \ar[ur,"\psi_x"',dashed]
    \end{tikzcd}
  \]
  commute.
  Since \(D\) is assumed to be pseudocontinuous, we have that
  \(\squash*{\mathcal A_x}\) holds for every \(x : D\), so together with \(\psi_x\)
  this defines a map \(L : D \to \Ind{V}(D)/{\approx}\) by
  \(L(x) \colonequiv \psi_x(p)\), where \(p\) witnesses pseudocontinuity at
  \(x\).

  Lastly, we prove that \(L\) is indeed a left adjoint to
  \({\bigsqcup}_{\approx}\). So let \(x : D\) be arbitrary. Since we're proving
  a property, we can use pseudocontinuity at \(x\) to specify a directed family
  \(\alpha\) that approximates \(x\). We now have to prove
  \(\toquotient{\alpha} \leq \beta' \iff x \below \bigsqcup_{\approx} \beta'\)
  for every \(\beta' : \Ind{V}(D)/{\approx}\). This is a proposition, so using
  quotient induction, it suffices to prove
  \(\toquotient{\alpha} \leq \toquotient{\beta} \iff x \below
    \bigsqcup_{\approx} \toquotient{\beta}\) for every \(\beta : \Ind{V}(D)\).
  Indeed, for such \(\beta\) we have
  \begin{align*}
    \toquotient{\alpha} \leq \toquotient{\beta}
    &\iff \alpha \cof \beta \\
    &\iff x \below \textstyle\bigsqcup \beta
    &&\text{(by \cref{approximate-adjunct-coincidence} and because \(\alpha\) approximates \(x\))} \\
    &\iff x \below \textstyle\bigsqcup_{\approx} \toquotient{\beta}
    &&\text{(by \cref{supremum-map-quotient-comm})},
  \end{align*}
  finishing the proof.
\end{proof}

Thus, the explicit type-theoretic formulation and the formulation using left
adjoints in each row of~\cref{continuity-table} (which summarises our findings)
are equivalent.

\begin{table}[h]
  \centering
  \begin{tabular}{p{1.6cm}cp{3.85cm}c}\toprule % p{5.9cm}p{3.5cm}
    & Type-theoretic formulation
    & Formulation with adjoints & Prop. \\\midrule
    Cont.\ data
    & \(\Pi_{x : D} \Sigma_{I : \V} \Sigma_{\alpha : I \to D}\,\delta(\alpha,x)\)
    & Specified left adjoint to \({\bigsqcup : \Ind{V}(D) \to D}\) %  has a specified left adjoint
    & \ding{53} \\
    Continuity
    & \(\squash*{\Pi_{x : D} \Sigma_{I : \V} \Sigma_{\alpha : I \to D}\,\delta(\alpha,x)}\)
    & Unspecified left adjoint to \({\bigsqcup : \Ind{V}(D) \to D}\) %has an unspecified left adjoint
    & \checkmark \\
    Pseudocont.
    & \(\Pi_{x : D}\squash*{\Sigma_{I : \V} \Sigma_{\alpha : I \to D}\,\delta(\alpha,x)}\)
    & Specified left adjoint to \({\bigsqcup_{\approx} : \Ind{V}(D)/{\approx} \to D}\)
    & \checkmark \\
    \bottomrule
  \end{tabular}
  \caption{Continuity (data) and pseudocontinuity of a dcpo~\(D\), where
    \(\delta(\alpha,x)\) abbreviates that \(\alpha\) is directed and
    approximates~\(x\).}
  \label{continuity-table}
\end{table}

\begin{remark}\label{pseudocontinuous-choice-issues}%
  The issue with pseudocontinuity is that taking the quotient introduces a
  dependence on instances of the axiom of choice when it comes to proving
  properties of pseudocontinuous dcpos. An illustrative example is the proof of
  unary interpolation~(\cref{unary-interpolation}), where we used the continuity
  data to first approximate an element \(y\) by \(\alpha_y\) and then, in turn,
  approximate every approximant \(\alpha_y(i)\). With pseudocontinuity this
  argument would require \emph{choosing} an approximating family for every
  \(i\).
  Another example is that while the preorder \(\Ind{V}(D)\) is \(\V\)-directed
  complete, a direct lifting of the proof of this fact to the poset reflection
  \(\Ind{V}(D)/{\approx}\) requires the axiom of choice.
  Hence, the Rezk completion~\cite{AhrensKapulkinShulman2015}, of which the
  poset reflection is a special case, does not necessarily preserve (filtered)
  colimits.
  The same issues concerning the axiom of choice occur
  in~\cite[pp.~260--261]{JohnstoneJoyal1982} and the notion of continuity data
  follows their solution precisely. We then truncate this to get a property of
  dcpos (recall~\cref{continuity-prop-vs-data}), resulting in our definition of
  continuity.
\end{remark}

\subsection{Algebraic dcpos}\label{sec:algebraic-dcpos}

Many of our examples of dcpos are not just continuous, but satisfy the stronger
condition of being algebraic, meaning their elements can be approximated by
compact elements only.

\begin{definition}[Algebraicity data, \(I_x\), \(\kappa_x\)]%
  \emph{Algebraicity data} for a \(\V\)-dcpo \(D\) assigns to every
  \(x : D\) a type \(I_x : \V\) and a directed family
  \(\kappa_x : I \to D\) of \emph{compact} elements such that \(\kappa_x\)
  has supremum \(x\).
\end{definition}

\begin{definition}[Algebraicity]
  A \(\V\)-dcpo is \emph{algebraic} if it has some unspecified algebraicity
  data, i.e.\ if the type of algebraicity data is inhabited.
\end{definition}

\begin{lemma}
  Every algebraic dcpo is continuous.
\end{lemma}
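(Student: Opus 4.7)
The plan is to first upgrade algebraicity data to continuity data, and then invoke functoriality of propositional truncation to pass from the (unspecified) algebraicity assumption to unspecified continuity data.

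Concretely, given algebraicity data $(I_x, \kappa_x)_{x : D}$ for a $\V$-dcpo $D$, I would show that the same assignment $x \mapsto (I_x, \kappa_x)$ already serves as continuity data. The only thing to check is that $\kappa_x(i) \ll x$ for every $i : I_x$, since directedness and the supremum condition are part of the algebraicity data. For this, note that $\kappa_x(i)$ is compact, i.e.\ $\kappa_x(i) \ll \kappa_x(i)$, and $\kappa_x(i) \below x$ because $x = \bigsqcup \kappa_x$. Applying \cref{way-below-properties}\eqref{below-way-below-way-below} with the chain $\kappa_x(i) \below \kappa_x(i) \ll \kappa_x(i) \below x$ yields $\kappa_x(i) \ll x$, as required.

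Since this construction is entirely functorial in the algebraicity data, applying the propositional truncation gives the desired implication from algebraicity to continuity. The step is essentially routine; the only ``subtle'' point worth flagging is that we are implicitly using \cref{way-below-properties}\eqref{below-way-below-way-below} rather than just transitivity of $\below$ together with compactness, but this goes through without obstacle.
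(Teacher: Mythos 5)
Your proof is correct and follows exactly the paper's approach: upgrade algebraicity data to continuity data by showing $\kappa_x(i) \ll x$ via compactness of $\kappa_x(i)$ and \cref{way-below-properties}\eqref{below-way-below-way-below}, then pass through the truncation by functoriality.
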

\begin{proof}
  We prove that algebraicity data for a dcpo yields continuity data. The
  claim for algebraic and continuous then follows by functoriality of the
  propositional truncation.
  It suffices to prove that \(\kappa_x(i) \ll x\) for every \(i : I_x\).  By
  assumption, \(\kappa_x(i)\) is compact and has supremum \(x\). Hence,
  \(\kappa_x(i) \ll \kappa_x(i) \below \bigsqcup\kappa_x = x\), so
  \(\kappa_x(i) \ll x\).
\end{proof}

\section{Small bases}\label{sec:small-bases}

Recall that the traditional, set-theoretic definition of a dcpo \(D\) being
continuous says that for every element \(x \in D\), the subset
\(\set{y \in D \mid y \ll x}\) is directed with supremum~\(x\).
As explained in the introduction, the problem with this definition in a
predicative context is that the subset \(\set{y \in D \mid y \ll x}\) is not
small in general.
But, as is well-known in domain theory, it is sufficient (and in fact
equivalent) to instead ask that \(D\) has a subset \(B\), known as a
\emph{basis}, such that the subset \(\set{b \in B \mid b \ll x} \subseteq B\) is
directed with supremum \(x\), see~\cite[Section~2.2.2]{AbramskyJung1994} and
\cite[Definition~III-4.1]{GierzEtAl2003}.
The idea developed in this section is that in many examples we can find a
\emph{small} basis giving us a predicative handle on the situation.

If a dcpo has a small basis, then it is continuous. In fact, all our running
examples of continuous dcpos are actually examples of dcpos with small
bases. Moreover, dcpos with small bases are better behaved. For example, they
are all locally small and so are their exponentials, which also have small bases
(\cref{sec:exponentials-with-small-bases}). In
\cref{sec:ideal-completions-of-small-bases} we also show that having a small
basis is equivalent to being presented by ideals.

Following the flow of \cref{sec:continuous-and-algebraic-dcpos}, we first consider
small bases for continuous dcpos, before turning to small compact bases for
algebraic dcpos (\cref{sec:small-compact-bases}).
After presenting examples of dcpos with small compact bases in
\cref{sec:algebraic-examples}, we describe the canonical small compact basis for
an algebraic dcpo and the role that the univalence axiom and a set replacement
principle play in \cref{sec:basis-of-compact-elements}.

\begin{definition}[Small basis]
  For a \(\V\)-dcpo \(D\), a map \(\beta : B \to D\) with \(B : \V\) is a
  \emph{small basis} for \(D\) if the following conditions hold:
  \begin{enumerate}[(i)]
  \item\label{basis-approximating} for every \(x : D\), the family
    \(\pa*{\Sigma_{b : B}\pa*{\beta(b) \ll x}} \xrightarrow{\beta \circ \fst}
    D\) is directed and has supremum \(x\);
  \item\label{basis-small-way-below} for every \(x : D\) and \(b : B\), the
    proposition \(\beta(b) \ll x\) is \(\V\)-small.
  \end{enumerate}
  We will write \(\ddset_\beta x\) for the type
  \(\Sigma_{b : B}\pa*{\beta(b) \ll x}\) and conflate this type with the
  canonical map \(\ddset_\beta x \xrightarrow {\beta \circ \fst} D\).
\end{definition}

Item \eqref{basis-small-way-below} ensures not only that the type
\(\Sigma_{b : B}\pa*{\beta(b) \ll x}\) is \(\V\)-small, but also that a dcpo
with a small basis is locally small (\cref{locally-small-if-small-basis}).

\begin{remark}\label{tacitly-use-small-basis}
  If \(\beta : B \to D\) is a small basis for a \(\V\)-dcpo \(D\), then the type
  \(\ddset_\beta x\) is small. Hence, we have a type \(I : \V\) and an equivalence
  \(\varphi : I \simeq \ddset_\beta x\) and we see that the family
  \(I \xrightarrow{\varphi} \ddset_\beta x \xrightarrow{\beta \circ \fst} D\) is
  directed and has the same supremum as \(\ddset_\beta x \to D\).
  We will use this tacitly and write as if the type \(\ddset_\beta x\) is actually a
  type in \(\V\).
\end{remark}

\begin{lemma}\label{structural-continuity-if-small-basis}%
  If a dcpo comes equipped with a small basis, then it can be equipped with
  continuity data. Hence, if a dcpo has an unspecified small basis, then it is
  continuous.
\end{lemma}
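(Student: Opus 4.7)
The plan is to essentially read off continuity data directly from the small basis, since the two notions are almost definitionally the same modulo a smallness resizing. Given a small basis \(\beta : B \to D\), for each \(x : D\) we take the approximating family to be (an equivalent small copy of) the canonical map \(\ddset_\beta x \to D\).

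More concretely: for each \(x : D\), the type \(\ddset_\beta x \equiv \Sigma_{b : B}(\beta(b) \ll x)\) lies in \(\V\), because \(B : \V\) and each proposition \(\beta(b) \ll x\) is \(\V\)-small by clause \eqref{basis-small-way-below} of the definition of a small basis (cf.\ \cref{tacitly-use-small-basis}). Let \(I_x : \V\) be this small copy, with equivalence \(\varphi_x : I_x \simeq \ddset_\beta x\), and set \(\alpha_x \colonequiv \beta \circ \fst \circ \varphi_x : I_x \to D\). This is continuity data because: (a) \(\alpha_x\) is directed and has supremum \(x\) --- this is exactly clause \eqref{basis-approximating} of the definition of a small basis, transported along \(\varphi_x\); and (b) each \(\alpha_x(i)\) is way below \(x\) --- this holds by construction, since every element in the image of \(\beta \circ \fst : \ddset_\beta x \to D\) is of the form \(\beta(b)\) with a specified witness that \(\beta(b) \ll x\).

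For the second claim, we invoke functoriality of the propositional truncation: the construction above defines a map from the type of small bases for \(D\) to the type of continuity data for \(D\), and truncating both sides yields that having an unspecified small basis implies continuity.

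There is no real obstacle here; everything is essentially bookkeeping. The only point that requires a (trivial) remark is the resizing step ensuring that \(\ddset_\beta x\) lives in \(\V\), which is immediate from clause \eqref{basis-small-way-below} and the fact that a \(\Sigma\) over a \(\V\)-type of \(\V\)-small propositions is \(\V\)-small.
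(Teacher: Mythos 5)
Your proof is correct and follows the same approach as the paper. The paper's proof is one line — ``the family \(\ddset_\beta x \to D\) approximates \(x\), so the assignment \(x \mapsto \ddset_\beta x\) is continuity data'' — relying on \cref{tacitly-use-small-basis} for the resizing of \(\ddset_\beta x\) to a type in \(\V\), which is exactly the bookkeeping step you spell out explicitly with \(I_x\) and \(\varphi_x\).
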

\begin{proof}
  For every element \(x\) of a dcpo \(D\), the family \(\ddset_\beta x \to D\)
  approximates \(x\), so the assignment \(x \mapsto \ddset_\beta x\) is
  continuity data for \(D\).
\end{proof}

\begin{lemma}\label{below-in-terms-of-way-below-basis}
  In a dcpo \(D\) with a small basis \(\beta : B \to D\), we have
  \( x \below y\) if and only if for every \(b : B\), the condition
  \(\beta(b) \ll x\) implies \(\beta(b) \ll y\).
\end{lemma}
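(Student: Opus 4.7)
The plan is to reduce this to the corresponding characterisation for dcpos equipped with continuity data, namely \cref{structurally-continuous-below-characterisation}. By \cref{structural-continuity-if-small-basis} a small basis \(\beta : B \to D\) endows \(D\) with continuity data where the approximating family of \(x\) is (up to the equivalence mentioned in \cref{tacitly-use-small-basis}) the family \(\ddset_\beta x \to D\), i.e.\ the composite \(\pa*{\Sigma_{b : B}\pa*{\beta(b) \ll x}} \xrightarrow{\beta \circ \fst} D\).

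For the forward direction, I would assume \(x \below y\) and fix \(b : B\) with \(\beta(b) \ll x\). Then by \cref{way-below-properties}\eqref{below-way-below-way-below}, the chain \(\beta(b) \ll x \below y\) gives \(\beta(b) \ll y\), which is all that is required. For the backward direction, I would assume \(\forall_{b : B}\pa*{\beta(b) \ll x \to \beta(b) \ll y}\) and show \(x \below y\) by using that \(x\) is the supremum of the directed family \(\ddset_\beta x \to D\): it suffices to show \(\beta(b) \below y\) whenever \(\beta(b) \ll x\), and by the hypothesis we have \(\beta(b) \ll y\), which implies \(\beta(b) \below y\) by \cref{way-below-properties}\eqref{below-if-way-below}.

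There is no real obstacle here; the argument is essentially the specialisation of \cref{structurally-continuous-below-characterisation} (equivalence of \eqref{item-below}, \eqref{item-approx-below} and \eqref{item-approx-way-below}) to the continuity data induced by the small basis, combined with the trivial observation that quantifying over \(i : I_x \equiv \ddset_\beta x\) is the same as quantifying over \(b : B\) subject to the side condition \(\beta(b) \ll x\). The only minor point to be careful about is that the stated biconditional ranges over all \(b : B\) (with \(\beta(b) \ll x\) appearing as a hypothesis in the implication), rather than over the sigma type \(\ddset_\beta x\); but currying and uncurrying handle this exchange automatically.
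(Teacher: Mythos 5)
Your proof is correct and matches the paper's argument: the forward direction follows from \cref{way-below-properties}\eqref{below-way-below-way-below}, and the backward direction uses \(x = \bigsqcup \ddset_\beta x\) together with the fact that \(\ll\) implies \(\below\). The framing as a specialisation of \cref{structurally-continuous-below-characterisation} is a reasonable alternative perspective, but the direct argument you spell out is precisely what the paper does.
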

\begin{proof}
  If \(x \below y\) and \(\beta(b) \ll x\), then \(\beta(b) \ll y\), so the
  left-to-right implication is clear.
  For the converse, suppose that the condition of the lemma holds. Because
  \(x = \bigsqcup \ddset_\beta x\), the inequality \(x \below y\) holds as soon as
  \(\beta(b) \below y\) for every \(b : B\) with \(\beta(b) \ll x\), but this is
  implied by the condition.
\end{proof}

\begin{proposition}\label{locally-small-if-small-basis}%
  A dcpo with a small basis is locally small. Moreover, the way-below relation
  on all of the dcpo has small values.
\end{proposition}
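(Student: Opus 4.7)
The plan is to establish the two claims in sequence, both by direct smallness bookkeeping using the small basis.

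For local smallness, I would invoke \cref{below-in-terms-of-way-below-basis} to rewrite
\[
  x \below y \iff \forall_{b : B}\pa*{\beta(b) \ll x \to \beta(b) \ll y}.
\]
The index type \(B\) lies in \(\V\) by the definition of a small basis, and each of the propositions \(\beta(b) \ll x\) and \(\beta(b) \ll y\) is \(\V\)-small by condition \eqref{basis-small-way-below} of the small-basis definition. Since implications between \(\V\)-small propositions are \(\V\)-small and \(\V\)-indexed products of \(\V\)-small propositions are \(\V\)-small (using that we are working in a universe hierarchy closed under these operations), the right-hand side above is \(\V\)-small, which gives local smallness.

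For the second claim, the quickest route is to note that \(D\) is continuous by \cref{structural-continuity-if-small-basis} and then apply \cref{cont-loc-small-iff-way-below-small} to the local smallness we just established. Alternatively, one can argue directly: using the continuity data \(x \mapsto \ddset_\beta x\) provided by \cref{structural-continuity-if-small-basis}, \cref{structurally-continuous-way-below-characterisation} yields
\[
  x \ll y \iff \exists_{i : \ddset_\beta y}\pa*{x \below \beta(\fst(i))}.
\]
Here \(\ddset_\beta y\) is \(\V\)-small (tacitly, via \cref{tacitly-use-small-basis}, using condition \eqref{basis-small-way-below} again), and each inequality \(x \below \beta(\fst(i))\) is \(\V\)-small by the first half of the proof, so the truncated \(\Sigma\) is \(\V\)-small.

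The main bookkeeping subtlety — if any — is just that one must invoke condition \eqref{basis-small-way-below} twice: once to see that \(\ddset_\beta x\) is itself \(\V\)-small as a total type (which is what \cref{tacitly-use-small-basis} packages), and once to handle the implications appearing in \cref{below-in-terms-of-way-below-basis}. Nothing beyond the closure of \(\V\) under the usual type-forming operations is needed.
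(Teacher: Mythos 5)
Your proof is correct and follows essentially the same route as the paper's: local smallness via \cref{below-in-terms-of-way-below-basis}, and then the way-below smallness via \cref{structural-continuity-if-small-basis} together with \cref{cont-loc-small-iff-way-below-small}. You simply spell out the smallness bookkeeping that the paper leaves implicit (and the alternative direct argument for the second claim is just an unfolding of the proof of \cref{cont-loc-small-iff-way-below-small}).
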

\begin{proof}
  The first claim follows from \cref{below-in-terms-of-way-below-basis} and the
  second follows from the first and \cref{cont-loc-small-iff-way-below-small}.
\end{proof}

A notable feature of dcpos with a small basis is that interpolants for the
way-below relation, cf.\
\cref{nullary-interpolation,unary-interpolation,binary-interpolation}, can be
found in the basis.
Using \cref{structural-continuity-if-small-basis} which constructs
continuity data from a small basis, the proofs are as in the classical case.

\begin{lemma}[Interpolation in the basis for the way-below relation]\label{interpolation-basis}
  Suppose \(D\) is a dcpo with a small basis \(\beta : B \to D\).
  \begin{enumerate}[(i)]
  \item For every \(x : D\), there exists \(b : B\) with \(\beta(b) \ll x\).
  \item If \(x \ll y\), then there exists an interpolant \(b : B\) such that
    \(x \ll \beta(b) \ll y\).
  \item If \(x \ll z\) and \(y \ll z\), then there exists an interpolant
    \(b : B\) such that \(x,y \ll \beta(b) \ll z\).
  \end{enumerate}
\end{lemma}

Before proving the analogue of
\cref{continuity-closed-under-continuous-retracts} (closure under retracts) for
small bases, we need a type-theoretic analogue of
\cite[Proposition~2.2.4]{AbramskyJung1994} and
\cite[Proposition~III-4.2]{GierzEtAl2003}, which essentially says that it is
sufficient for a ``subset'' of \(\ddset_\beta x\) (given by \(\sigma\) in the
lemma) to be directed and have supremum~\(x\).

\begin{lemma}\label{subbasis-lemma}
  Suppose that we have an element \(x\) of a \(\V\)-dcpo \(D\) together with two
  maps \(\beta : B \to D\) and
  \(\sigma : I \to \Sigma_{b : B}\pa*{\beta(b) \ll x}\) with \(I : \V\).
  If \(\ddset_\beta x \circ \sigma\) is directed and has supremum \(x\), then
  \(\ddset_\beta x\) is directed with supremum \(x\) too.
\end{lemma}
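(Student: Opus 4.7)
The plan is to separately verify directedness of $\ddset_\beta x$ and that $x$ is its supremum, using the fact that the way-below assumptions on elements of $\ddset_\beta x$ let us ``pull'' witnesses back through $\sigma$.

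For directedness, inhabitedness of $\ddset_\beta x$ is immediate since the directedness of $\ddset_\beta x \circ \sigma$ gives us an element of $I$, and $\sigma$ then produces an element of $\ddset_\beta x$. For semidirectedness, suppose we are given two elements $(b_1,p_1),(b_2,p_2) : \ddset_\beta x$. By the hypothesis, $x = \bigsqcup \pa*{\ddset_\beta x \circ \sigma}$, so from $\beta(b_1) \ll x$ we obtain some $i_1 : I$ with $\beta(b_1) \below \beta(\fst(\sigma(i_1)))$, and similarly $i_2 : I$ with $\beta(b_2) \below \beta(\fst(\sigma(i_2)))$. Directedness of $\ddset_\beta x \circ \sigma$ then yields an $i : I$ dominating both, so $\sigma(i) : \ddset_\beta x$ is an upper bound of $(b_1,p_1)$ and $(b_2,p_2)$ in the order reflected by $\beta$.

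For the supremum, first note $x$ is an upper bound: if $(b,p) : \ddset_\beta x$ then $\beta(b) \ll x$, hence $\beta(b) \below x$ by \cref{way-below-properties}\eqref{below-if-way-below}. Conversely, if $y$ bounds $\ddset_\beta x$, then in particular $\beta(\fst(\sigma(i))) \below y$ for every $i : I$, so $y$ bounds $\ddset_\beta x \circ \sigma$; since this family has supremum $x$ by hypothesis, $x \below y$.

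The main obstacle, if any, is remembering that semidirectedness and supremum-bounding are both properties that we are proving of $\ddset_\beta x$ viewed only via its composition with $\beta \circ \fst$ into $D$, so we never need to compare elements of $\Sigma_{b : B}(\beta(b) \ll x)$ directly, only their images in $D$. Once one sees this, the proof is a routine chase using the way-below hypotheses built into $\ddset_\beta x$ to lift comparisons from $\sigma$'s image to arbitrary elements of $\ddset_\beta x$.
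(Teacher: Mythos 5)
Your proof is correct and takes essentially the same approach as the paper: inhabitedness of the domain via $\sigma$, semidirectedness by pulling both elements into the image of $\sigma$ using $x = \bigsqcup(\ddset_\beta x \circ \sigma)$ and the way-below hypotheses, and the supremum by noting that any upper bound of $\ddset_\beta x$ also bounds $\ddset_\beta x \circ \sigma$.
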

\begin{proof}
  Suppose that \({\ddset_\beta x} \circ {\sigma}\) is directed and has supremum
  \(x\). Obviously, \(x\) is an upper bound for \(\ddset_\beta x\), so we are to
  prove that it is the least. If \(y\) is an upper bound for \(\ddset_\beta x\),
  then it is also an upper bound for \(\ddset_\beta x \circ \sigma\) which has
  supremum \(x\), so that \(x \below y\) follows.  So the point is directedness
  of \(\ddset_\beta x\). Its domain is inhabited, because \(\sigma\) is
  directed.
  Now suppose that we have \(b_1,b_2 : B\) with \(\beta(b_1),\beta(b_2) \ll
  x\). Since \(x = \bigsqcup \pa*{\ddset_\beta x \circ \sigma}\), there exist
  \(i_1,i_2 : I\) such that \(\beta(b_1) \below \beta(\fst(\sigma(i_1)))\) and
  \(\beta(b_2) \below \beta(\fst(\sigma(i_2)))\).
  Since \(\ddset_\beta x \circ \sigma\) is directed, there exists \(i : I\) with
  \(\beta(\fst(\sigma(i_1))),\beta(\fst(\sigma(i_2))) \below
  \beta(\fst(\sigma(i)))\). Hence, writing \(b \colonequiv \fst(\sigma(i))\), we
  have \(\beta(b) \ll x\) and \(\beta(b_1),\beta(b_2) \below \beta(b)\).
  Thus, \(\ddset_\beta x\) is directed, as desired.
\end{proof}

\begin{theorem}\label{small-basis-closed-under-continuous-retracts}
  If we have a retract \(\retract{D}{E}\) and a small basis \(\beta : B \to E\)
  for \(E\), then \(r\circ \beta\) is a small basis for \(D\).
\end{theorem}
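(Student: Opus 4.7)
The plan is to verify the two defining conditions of a small basis for $r\circ\beta \colon B \to D$, noting that the index type $B$ already lies in $\V$ since $\beta$ is a small basis for $E$. For condition~(i)---that $\ddset_{r\circ\beta}x \xrightarrow{(r\circ\beta)\circ\fst} D$ is directed with supremum $x$ for every $x : D$---I would leverage the proof of \cref{continuity-closed-under-continuous-retracts}, which shows that the family $\ddset_\beta s(x) \xrightarrow{r\circ\beta\circ\fst} D$ is directed with supremum $x$. By \cref{continuous-retraction-way-below-criterion}, $\beta(b) \ll s(x)$ entails $r(\beta(b)) \ll x$, so there is a canonical map $\sigma \colon \ddset_\beta s(x) \to \ddset_{r\circ\beta}x$ satisfying $(r\circ\beta)\circ\fst\circ\sigma = r\circ\beta\circ\fst$; applying \cref{subbasis-lemma} to this $\sigma$ yields directedness and supremum of $\ddset_{r\circ\beta}x$.

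For condition~(ii)---that $(r\circ\beta)(b) \ll x$ is $\V$-small for every $x : D$ and $b : B$---the key step is to prove the equivalence
\[
    (r\circ\beta)(b) \ll x \iff \exists_{b' : B}\bigl((r\circ\beta)(b) \below (r\circ\beta)(b') \wedge \beta(b') \ll s(x)\bigr).
\]
The right-to-left direction uses \cref{continuous-retraction-way-below-criterion} to obtain $(r\circ\beta)(b') \ll x$, followed by \cref{way-below-properties}~\eqref{below-way-below-way-below}. The left-to-right direction uses that $\ddset_\beta s(x) \xrightarrow{r\circ\beta\circ\fst} D$ is a directed family with supremum $x$ (as established in the previous paragraph): way-belowness of $(r\circ\beta)(b)$ then produces the required witness $b'$.

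To conclude, I would verify that the right-hand side of the equivalence is $\V$-small. The clause $\beta(b') \ll s(x)$ is small because $\beta$ is a small basis for $E$. The clause $(r\circ\beta)(b) \below (r\circ\beta)(b')$ is small because $D$ is locally small: indeed $E$ is locally small by \cref{locally-small-if-small-basis}, and one checks that $x \below y$ in $D$ iff $s(x) \below s(y)$ in $E$ (using $r\circ s = \id_D$ together with monotonicity of $r$ and $s$), so local smallness transfers from $E$ to $D$. I do not foresee any serious obstacle; the most delicate point is arranging the equivalence in (ii) so that both conjuncts on the right are visibly $\V$-small rather than asserting local smallness of $D$ as a separate preliminary.
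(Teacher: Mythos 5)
Your proposal is correct, and for condition~(i) it coincides with the paper's argument: apply \cref{subbasis-lemma} to the map $\sigma$ induced by the identity on $B$ (via \cref{continuous-retraction-way-below-criterion}), and reduce to the fact that $r \circ \ddset_\beta s(x)$ is directed with supremum $r(s(x)) = x$ by Scott continuity of $r$.

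For condition~(ii), however, you take a genuinely different route. The paper proceeds modularly: it first observes that $D$ is locally small (citing a general lemma from Part~I that local smallness transfers along Scott continuous retracts), then obtains continuity data for $D$ from \cref{continuity-closed-under-continuous-retracts} together with \cref{structural-continuity-if-small-basis}, and finally invokes \cref{cont-loc-small-iff-way-below-small} to conclude that $(r\circ\beta)(b) \ll x$ is small. You instead construct a bespoke, visibly $\V$-small equivalent of $(r\circ\beta)(b) \ll x$, namely $\exists_{b'}\bigl((r\circ\beta)(b) \below (r\circ\beta)(b') \wedge \beta(b') \ll s(x)\bigr)$, and verify the equivalence directly using \cref{continuous-retraction-way-below-criterion} and the family from condition~(i). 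You also give a direct proof that local smallness transfers (via $x \below y \iff s(x) \below s(y)$), where the paper cites the Part~I lemma. Both routes are sound; the paper's is shorter because it reuses \cref{cont-loc-small-iff-way-below-small}, while yours is more self-contained and makes the small type explicit---effectively unfolding what that proposition would yield. One stylistic remark: the paper establishes smallness first and then directedness, whereas you establish directedness first and use it to prove smallness; this ordering is unproblematic since \cref{subbasis-lemma} and the notion of ``directed with supremum $x$'' do not require the index type to be small a priori.
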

\begin{proof}
  First of all, note that \(E\) is locally small by
  \cref{locally-small-if-small-basis}. But being locally small is closed under
  retracts by~\cref{locally-small-retract}, so \(D\) is locally small
  too. Moreover, we have continuity data for \(D\) by virtue of
  \cref{continuity-closed-under-continuous-retracts} and
  \cref{structural-continuity-if-small-basis}. Hence, the way-below relation is
  small-valued by \cref{cont-loc-small-iff-way-below-small}.
  In particular, the proposition \(r(\beta(b)) \ll x\) is small for every \(b : B\) and
  \(x : D\).

  We are going to use \cref{subbasis-lemma} to show that
  \(\ddset_{r \circ \beta} x\) is directed and has supremum \(x\) for every
  \(x : D\). By \cref{continuous-retraction-way-below-criterion}, the identity
  on \(B\) induces a well-defined map
  \(\sigma : \pa*{\Sigma_{b : B}\pa*{\beta(b) \ll s(x)}} \to \pa*{\Sigma_{b :
      B}\pa*{r(\beta(b)) \ll y}}\).
  Now \cref{subbasis-lemma} tells us that it suffices to prove that
  \(r \circ \ddset_\beta s(x)\) is directed with supremum \(x\).
  But \(\ddset_\beta s(x)\) is directed with supremum \(x\), so by Scott
  continuity of \(r\), the family \(r \circ \ddset_\beta s(x)\) is directed with
  supremum \(r(s(x)) = x\), as desired.
\end{proof}

Finally, a useful property of dcpos with small bases is that they yield locally small
exponentials, as we can restrict the quantification in the pointwise order to
elements of the small basis.

\begin{proposition}\label{exponential-is-locally-small}%
  If \(D\) is a dcpo with an unspecified small basis and \(E\) is a locally
  small dcpo, then the exponential \(E^D\) is locally small too.
\end{proposition}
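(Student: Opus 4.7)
The plan is as follows. Since local smallness is a property (a proposition-valued condition), we may untruncate the assumption and work with a specified small basis \(\beta : B \to D\) with \(B : \V\). The goal is to give, for each pair of Scott continuous maps \(f, g : D \to E\), a \(\V\)-small proposition equivalent to the pointwise order \(f \below g\).

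The key intermediate step is the characterisation
\[
  f \below g \iff \forall_{b : B}\,\bigl(f(\beta(b)) \below g(\beta(b))\bigr),
\]
which lets us replace quantification over all of \(D\) (potentially large) by quantification over the small index type \(B\). The left-to-right direction is immediate from the definition of the pointwise order. For the converse, given an arbitrary \(x : D\), the small basis condition gives \(x = \bigsqcup \ddset_\beta x\), and Scott continuity of \(f\) yields \(f(x) = \bigsqcup f \circ \ddset_\beta x\). Hence to conclude \(f(x) \below g(x)\) it suffices to show \(f(\beta(b)) \below g(x)\) for every \((b,p) : \ddset_\beta x\); this follows from the assumption \(f(\beta(b)) \below g(\beta(b))\) together with monotonicity of \(g\) applied to \(\beta(b) \below x\), which in turn follows from \(\beta(b) \ll x\) via \cref{way-below-properties}\eqref{below-if-way-below}.

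With that equivalence in place, local smallness of \(E^D\) is straightforward: for each \(b : B\), the proposition \(f(\beta(b)) \below g(\beta(b))\) is, by local smallness of \(E\), equivalent to some proposition in \(\V\); and a dependent product of \(\V\)-small propositions indexed by the \(\V\)-small type \(B\) is again \(\V\)-small. So \(f \below g\) is equivalent to a \(\V\)-small proposition, as required.

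The only real subtlety is ensuring the characterisation is genuinely an \emph{equivalence} of propositions rather than just an implication in one direction; the argument above shows both directions cleanly, so I do not anticipate a serious obstacle. A minor point worth stating explicitly is that pointwise suprema of directed families in \(E^D\) exist and are computed pointwise (from Part~I), which is what licenses the use of Scott continuity of \(f\) at the supremum \(\bigsqcup \ddset_\beta x\).
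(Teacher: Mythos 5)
Your proposal is correct and follows essentially the same route as the paper: untruncate to a specified small basis, characterise \(f \below g\) by \(\forall_{b : B}\,f(\beta(b)) \below g(\beta(b))\), prove the converse direction by expressing \(x\) as \(\bigsqcup \ddset_\beta x\), and then observe the characterising type is \(\V\)-small using local smallness of \(E\) and smallness of \(B\). The only cosmetic difference is that you invoke Scott continuity of \(f\) explicitly and monotonicity of \(g\) at \(\beta(b) \below x\), whereas the paper notes directly that \(f(\beta(b)) \below g(\bigsqcup \ddset_\beta x)\) follows from the hypothesis; both are the same calculation.
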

\begin{proof}
  Being locally small is a proposition, so in proving the result we may assume
  that \(D\) comes equipped with a small basis \(\beta : B \to D\). For
  arbitrary Scott continuous functions \(f,g : D \to E\), we claim that
  \(f \below g\) precisely when \(\forall_{b : B}\pa*{f(\beta(b)) \below
    g(\beta(b))}\), which is a small type using that \(E\) is locally small.
  The left-to-right implication is obvious, so suppose that
  \(f(\beta(b)) \below g(\beta(b))\) for every \(b : B\) and let \(x : D\) be
  arbitrary. We are to show that \(f(x) \below g(x)\). Since
  \(x = \bigsqcup \ddset_\beta x\), it suffices to prove
  \(f\pa*{\bigsqcup \ddset_\beta x} \below g\pa*{\bigsqcup \ddset_\beta x}\) and in
  turn, that \(f(\beta(b)) \below g\pa*{\bigsqcup \ddset_\beta x}\) for every
  \(b : B\). But is easily seen to hold, because
  \(f(\beta(b)) \below g(\beta(b))\) for every \(b : B\) by assumption.
\end{proof}

\subsection{Small compact bases}\label{sec:small-compact-bases}

Similarly to the progression from continuous dcpos (\cref{sec:continuous-dcpos})
to algebraic ones (\cref{sec:algebraic-dcpos}), we now turn to small
\emph{compact} bases.

\begin{definition}[Small compact basis]%
  For a \(\V\)-dcpo \(D\), a map \(\beta : B \to D\) with \(B : \V\) is a
  \emph{small compact basis} for \(D\) if the following conditions hold:
  \begin{enumerate}[(i)]
  \item for every \(b : B\), the element \(\beta(b)\) is compact in \(D\);
  \item for every \(x : D\), the family
    \(\pa*{\Sigma_{b : B}\pa*{\beta(b) \below x}} \xrightarrow{\beta \circ \fst} D\)
    is directed and has supremum \(x\);
  \item for every \(x : D\) and \(b : B\), the proposition \(\beta(b) \below x\) is
    \(\V\)-small.
  \end{enumerate}
  We will write \(\dset_\beta x\) for the type
  \(\Sigma_{b : B}\pa*{\beta(b) \below x}\) and conflate this type with the
  canonical map \(\dset_\beta x \xrightarrow {\beta \circ \fst} D\).
\end{definition}

\begin{remark}
  If \(\beta : B \to D\) is a small compact basis for a \(\V\)-dcpo \(D\), then the
  type \(\dset_\beta x\) is small. Similarly to \cref{tacitly-use-small-basis}, we
  will use this tacitly and write as if the type \(\dset_\beta x\) is actually a
  type in \(\V\).
\end{remark}

\begin{lemma}\label{structural-algebraicity-if-small-compact-basis}%
  If a dcpo comes equipped with a small compact basis, then it can be equipped
  with algebraicity data. Hence, if a dcpo has an unspecified small compact
  basis, then it is algebraic.
\end{lemma}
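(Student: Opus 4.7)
The plan is to extract algebraicity data directly from the small compact basis and then invoke functoriality of the propositional truncation for the second statement, mirroring the structure of \cref{structural-continuity-if-small-basis}.

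Concretely, given a small compact basis $\beta : B \to D$ and an element $x : D$, I would set $I_x \colonequiv \dset_\beta x = \Sigma_{b : B}(\beta(b) \below x)$ and $\kappa_x \colonequiv \beta \circ \fst : \dset_\beta x \to D$. Condition~(iii) of the definition of a small compact basis, together with the smallness of $B$, ensures that $I_x$ is equivalent to a type in $\V$ (as in \cref{tacitly-use-small-basis}); condition~(ii) immediately gives that $\kappa_x$ is directed with supremum $x$; and condition~(i) gives that every $\kappa_x(b,p) = \beta(b)$ is compact. This is exactly algebraicity data for $D$.

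For the second part, the type of algebraicity data is a type, not necessarily a proposition, but algebraicity is by definition its propositional truncation. Since the above construction produces a map from the type of small compact bases to the type of algebraicity data, applying the functorial action of $\squash*{-}$ yields a map from the propositionally truncated existence of a small compact basis to algebraicity, as required.

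I do not foresee a real obstacle: the argument is essentially a direct unpacking of the definitions, with the only subtle point being the tacit use of the equivalence between $\dset_\beta x$ and a genuine type in $\V$, which is already covered by the remark preceding the lemma.
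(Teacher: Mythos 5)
Your proposal is correct and takes essentially the same approach as the paper: both assign to each $x$ the family $\dset_\beta x \to D$, observe it is directed with supremum $x$ and consists of compact elements, and then invoke functoriality of the propositional truncation for the second claim. Your version simply spells out the role of each clause of the definition and the tacit smallness step in more detail.
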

\begin{proof}
  For every element \(x\) of a dcpo \(D\) with a small compact basis
  \(\beta : B \to D\), the family \(\dset_\beta x \to D\) consists of compact
  elements and approximates \(x\), so the assignment \(x \mapsto \dset_\beta x\)
  is algebraicity data for \(D\).
\end{proof}

Actually, with suitable assumptions, we can get canonical algebraicity data
from an unspecified small compact basis, as discussed in detail in
\cref{sec:basis-of-compact-elements}. This observation relies on
\cref{small-compact-basis-contains-all-compact-elements} below and we prefer to
present examples of dcpos with small compact bases first
(\cref{sec:algebraic-examples}).

\begin{lemma}\label{small-and-compact-basis}
  A map \(\beta : B \to D\) is a small compact basis for a dcpo \(D\) if and
  only if \(\beta\) is a small basis for \(D\) and \(\beta(b)\) is compact for
  every \(b : B\).
\end{lemma}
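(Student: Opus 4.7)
The plan is to observe that the only real content of the lemma is the equivalence, for compact elements, between the way-below and below relations; once this is established, the two notions coincide essentially by rewriting.

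First, I would record the key observation: if $\beta(b)$ is compact, then $\beta(b) \below x \iff \beta(b) \ll x$ for every $x : D$. The left-to-right direction uses \cref{way-below-properties}\eqref{below-way-below-way-below} applied to $\beta(b) \ll \beta(b) \below x$, which is available because $\beta(b)$ is compact. The right-to-left direction is \cref{way-below-properties}\eqref{below-if-way-below}. This equivalence of propositions (one for each $b : B$) then induces an equivalence of types
\[
  \ddset_\beta x \;\simeq\; \dset_\beta x
\]
commuting with the canonical maps into $D$; in particular, one family is directed with supremum $x$ if and only if the other is, and one smallness condition is equivalent to the other.

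For the forward direction, suppose $\beta$ is a small compact basis. Then each $\beta(b)$ is compact by clause~(i) of the definition. Using the equivalence above together with the directedness/supremum clause and the smallness clause of the small compact basis, we obtain immediately that $\ddset_\beta x$ is directed with supremum $x$ and that $\beta(b) \ll x$ is $\V$-small, which are the two clauses defining a small basis.

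For the backward direction, suppose $\beta$ is a small basis and each $\beta(b)$ is compact. The compactness clause of a small compact basis is hypothesis. The equivalence above transports the directedness of $\ddset_\beta x$ and its having supremum $x$ to the analogous statements for $\dset_\beta x$, and transports smallness of $\beta(b) \ll x$ to smallness of $\beta(b) \below x$. The main (and only) thing to be a little careful about is that we are identifying two distinct $\Sigma$-types via an equivalence rather than on the nose, but since being directed, having a given supremum, and being $\V$-small are all properties preserved by equivalences, there is no obstacle.
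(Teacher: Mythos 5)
Your proof is correct and follows essentially the same route as the paper's: compactness of each $\beta(b)$ makes $\beta(b) \below x$ and $\beta(b) \ll x$ equivalent, yielding $\ddset_\beta x \simeq \dset_\beta x$, from which both the approximation and smallness clauses transfer. Your version just spells out in more detail the steps that the paper leaves implicit.
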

\begin{proof}
  If \(\beta(b)\) is compact for every \(b : B\), then \(\beta(b) \below x\) if
  and only if \(\beta(b) \ll x\) for every \(b : B\) and \(x : D\), so that
  \(\ddset_\beta x \simeq \dset_\beta x\) for every \(x : D\).
  In particular, \(\ddset_\beta x\) approximates \(x\) if and only if
  \(\dset_\beta x\) does, which completes the proof.
\end{proof}

\begin{proposition}\label{small-compact-basis-contains-all-compact-elements}
  A small compact basis contains every compact element. That is, if
  \(\beta : B \to D\) is a small compact basis for a dcpo \(D\) and \(x : D\) is
  compact, then there exists \(b : B\) such that \(\beta(b) = x\).
\end{proposition}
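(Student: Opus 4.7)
The plan is to exploit compactness of $x$ together with the fact that the basis family below $x$ has supremum $x$. Specifically, since $\beta : B \to D$ is a small compact basis, the family $\dset_\beta x : \Sigma_{b : B}(\beta(b) \below x) \to D$ is directed and has supremum $x$. As the statement ``there exists $b : B$ with $\beta(b) = x$'' is a proposition (we're asserting mere existence in the sense of the propositional truncation), we can unpack elements of $\dset_\beta x$ freely via the compactness of $x$.

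The key steps, in order, are as follows. First, recall that $x$ being compact means $x \ll x$. Second, apply the definition of the way-below relation to the directed family $\dset_\beta x$: since $x \below \bigsqcup \dset_\beta x$ (in fact, equality holds), we obtain some $(b, p) : \dset_\beta x$ such that $x \below \beta(b)$ already, where $p$ is a proof that $\beta(b) \below x$. Third, combine $x \below \beta(b)$ and $\beta(b) \below x$ using antisymmetry of the order to conclude $\beta(b) = x$. This witnesses the desired existence claim.

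There is essentially no obstacle here: this is a direct and short application of compactness and antisymmetry, and the only subtlety is that we are using the basis family indexed by pairs $(b, p)$ where $p$ witnesses $\beta(b) \below x$ (rather than a bare family over $B$). This is harmless because the definition of way-below does not restrict the indexing type beyond it being in $\V$, and the approximating family $\dset_\beta x$ is by assumption directed and supremum-preserving. Thus the conclusion follows by a single application of compactness and antisymmetry, with no need to invoke continuity data, interpolation, or any of the further structure developed so far.
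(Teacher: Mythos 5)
Your proof is correct and follows exactly the same route as the paper's: use compactness of $x$ applied to the directed family $\dset_\beta x$ with supremum $x$ to extract $(b,p)$ with $x \below \beta(b)$, then combine with $p : \beta(b) \below x$ and antisymmetry. No differences of substance.
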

\begin{proof}
  Suppose we have a compact element \(x : D\). By compactness of \(x\) and the
  fact that \(x = \dset_\beta x\), there exists \(b : B\) with \(\beta(b) \ll x\)
  such that \(x \below \beta(b)\). But then \(\beta(b) = x\) by antisymmetry.
\end{proof}

\subsection{Examples of dcpos with small compact bases}%
\label{sec:algebraic-examples}

Now that we have the theory of small bases we turn to examples illustrating
small bases in practice. Our examples will involve small \emph{compact} bases
and an example of a dcpo with a small basis that is not compact will have to
wait until \cref{sec:dyadics} when we have developed the ideal completion.

\begin{example}\label{Omega-small-compact-basis}%
  The map \(\beta : \Two \to \Omega_{\U}\) defined by \(0 \mapsto \Zero_{\U}\)
  and \(1 \mapsto \One_{\U}\) is a small compact basis for \(\Omega_{\U}\). In
  particular, \(\Omega_{\U}\) is algebraic.
\end{example}

The basis \(\beta : \Two \to \Omega_{\U}\) defined above has the special
property that it is \emph{dense} in the sense
of~\cite[\mkTTurl{TypeTopology.Density}]{TypeTopology}: its image has empty
complement, i.e.\ the type
\(\Sigma_{P : \Omega_{\U}}\lnot\pa*{\Sigma_{b : \Two}\,\beta(b) = P}\) is empty.

\begin{proof}[Proof of~\cref{Omega-small-compact-basis}]
  By \cref{compact-elements-in-Omega}, every element in the image of \(\beta\)
  is compact. Moreover, \(\Omega_{\U}\) is locally small, so we only need to
  prove that for every \(P : \Omega_{\U}\) the family \(\dset_{\beta} P\) is
  directed with supremum \(P\).
  The domain of the family is inhabited, because \(\beta(0)\) is the least
  element. Semidirectedness also follows easily, since \(\Two\) has only two
  elements for which we have \(\beta(0) \below \beta(1)\).
  Finally, the supremum of \(\dset_{\beta} P\) is obviously below
  \(P\). Conversely, if \(P\) holds, then
  \(\bigsqcup {\dset_{\beta} P} = \One = P\).
  The final claim follows from
  \cref{structural-algebraicity-if-small-compact-basis}.
\end{proof}

\begin{example}\label{lifting-has-small-compact-basis}%
  For a set \(X : \U\), the map \(\beta : \pa*{\One + X} \to \lifting_{\U}(X)\)
  given by \(\inl(\star) \mapsto \bot\) and \(\inr(x) \mapsto \eta(x)\) is a
  small compact basis for \(\lifting_{\U}(X)\). In particular,
  \(\lifting_{\U}(X)\) is algebraic.
\end{example}

Similar to~\cref{Omega-small-compact-basis}, the basis
\(\beta : (\One + X) \to \lifting_{\U}(X)\) defined above is also dense.

\begin{proof}[Proof of~\cref{lifting-has-small-compact-basis}]
  By \cref{compact-elements-in-lifting}, every element in the image of \(\beta\)
  is compact. Moreover, the lifting is locally small, so
  we only need to prove that for every
  partial element \(l\), the family \(\dset_{\beta} l\) is directed with
  supremum~\(l\).
  The~domain of the family is inhabited, because \(\beta(\inl(\star))\) is the
  least element. Semidirectedness also follows easily: First of all,
  \(\beta(\inl(\star))\) is the least element. Secondly, if we have \(x,x' : X\)
  such that \(\beta(\inr(x)),\beta(\inr(x')) \below l\), then because
  \(\beta(\inr(x)) \equiv \eta(x)\) is defined, we must have
  \(\beta(\inr(x)) = l = \beta(\inr(x'))\) by definition of the order.
  Finally, the supremum of \(\dset_{\beta} l\) is obviously a partial element
  below \(l\). Conversely, if \(l\) is defined, then \(l = \eta(x)\) for some
  \(x : X\), and hence, \(l = \eta(x) \below \bigsqcup \dset_{\beta} l\).
  The final claim follows from
  \cref{structural-algebraicity-if-small-compact-basis}.
\end{proof}

\begin{example}\label{powerset-small-compact-basis}%
  For a set \(X : \U\), the map \(\beta : \List(X) \to \powerset_{\U}(X)\) from
  \cref{def:list-to-powerset} (whose image is the type of Kuratowski finite
  subsets of \(X\)) is a small compact basis for \(\powerset_{\U}(X)\). In
  particular, \(\powerset_{\U}(X)\) is algebraic.
\end{example}

\begin{proof}[Proof of~\cref{powerset-small-compact-basis}]
  By \cref{Kuratowski-finite-iff-list,compact-elements-in-powerset}, all
  elements in the image of \(\beta\) are compact. Moreover,
  \(\powerset_{\U}(X)\) is locally small, so we only need to prove that for
  every \(A : \powerset(X)\) the family \(\dset_{\beta} A\) is directed with
  supremum \(A\), but this was also proven in
  \cref{compact-elements-in-powerset}.
  The final claim follows from
  \cref{structural-algebraicity-if-small-compact-basis}.
\end{proof}

At this point the reader may ask whether we have any examples of dcpos which can
be equipped with algebraicity data but that do not have a small compact basis.
The following example shows that this can happen in our predicative setting:

\begin{example}\label{lifting-structurally-algebraic-but-no-small-basis}%
  The lifting \(\lifting_{\V}(P)\) of a proposition \(P : \U\) can be
  given algebraicity data, but has a small compact basis if and only
  if \(P\) is \(\V\)-small. Thus, requiring that \(\lifting_{\V}(P)\)
  has a small basis for every proposition \(P : \U\) is equivalent to
  the propositional resizing principle that every proposition in
  \(\U\) is equivalent to one in \(\V\).
\end{example}

\begin{proof}[Proof of~\cref{lifting-structurally-algebraic-but-no-small-basis}]
  Note that \(\lifting_{\V}(P)\) is simply the type of propositions in \(\V\)
  that imply \(P\). It has algebraicity data, because given such a
  proposition \(Q\), the family
  \begin{align*}
    Q + \One_{\V} &\to \lifting_{\V}(P) \\
    \inl(q) &\mapsto \One_{\V} \\
    \inr(\star) &\mapsto \Zero_{\V}
  \end{align*}
  is directed, has supremum \(Q\) and consists of compact elements.
  But if \(\lifting_{\V}(P)\) had a small compact basis
  \(\beta : B \to \lifting_{\V}(P)\), then we would have
  \(P \simeq \exists_{b : B}\pa*{\beta(b) \simeq \One_{\V}}\) and the latter is
  \(\V\)-small.
  Conversely, if \(P\) is equivalent to \(P_0 : \V\), then \(\lifting_{\V}(P)\) is isomorphic
  to \(\lifting_{\V}(P_0)\), which has a small compact basis by
  \cref{lifting-has-small-compact-basis}.
\end{proof}

\begin{example}\label{ordinals-structurally-continuous-but-no-small-basis}%
  In classical mathematics, it is known~\cite[Proposition~2.6]{JiaEtAl2015} that
  every well-ordered set \(C\) with a top element \(\top\) is an algebraic
  lattice, and every compact element of it is equal to the least element or of
  the form \(c + 1\) for some \(c \in C \setminus \{\top\}\).
  The ordinals in univalent foundations, as introduced
  in~\cite[Section~10.3]{HoTTBook} and further developed by the second author
  in~\cite{TypeTopologyOrdinals}, give a constructive example of a large (even
  impredicatively) algebraic sup-lattice without a small basis.
  In~\cite[Theorem~5.8]{deJongEscardo2023} we showed that the large poset of
  small ordinals has small suprema, so indeed the ordinals form a
  sup-lattice. (This result needs small set quotients.)
  Moreover, they cannot have a small basis, as otherwise we could take the
  supremum of all ordinals in the basis which would yield a greatest ordinal
  which does not exist (as a consequence of~\cite[Lemma~10.3.21]{HoTTBook}).
  It remains to show that the sup-lattice of ordinals is algebraic.
  This follows from the following two facts.
  \begin{enumerate}[(i)]
  \item\label{ordinal-compact} Every successor ordinal, i.e.\ one of the form
    \(\alpha + \One\), is compact.
  \item\label{ordinal-sup-eq} Every ordinal \(\alpha\) is the supremum of the
    family \(x : \alpha \mapsto \alpha \initseg x + \One\), where
    \(\alpha \initseg x\) denotes the ordinal of elements of \(\alpha\) that are
    strictly less than \(x\).
  \end{enumerate}
  While the family in \eqref{ordinal-sup-eq} is not necessarily directed, this
  does not pose a problem, since we can take its directification
  (see~\cref{directification} later) by considering finite joins of elements in
  the family which are necessarily compact again by
  \cref{binary-join-is-compact}.

  For proving these facts, we recall
  from~\cite[\mkTTurl{Ordinals.OrdinalOfOrdinals}]{TypeTopologyOrdinals} that
  the order \(\preceq\) on ordinals can be characterised as follows:
  \[
    \alpha \preceq \beta \quad\iff\quad \forall_{x : \alpha}\exists_{y :
      \beta}\, \alpha \initseg x = \beta \initseg y.
  \]
  We now prove \eqref{ordinal-compact}: Suppose that
  \(\alpha + \One \preceq \bigsqcup_{i : I} \beta_i\).
  Since \(\alpha = (\alpha + \One) \initseg \inr(\star)\), there exists
  \(s : \bigsqcup_{i : I}\beta_i\) with
  \(\alpha = \bigsqcup_{i : I} \beta_i \initseg s\).
  By \cite[Lemma~15]{LICS2023} there exist \(j : I\) and \(b : \beta_j\) with
  \({\bigsqcup_{i : I}\beta_i} \initseg s = \beta_j \initseg b\).
  Hence, \(\alpha = \beta_j \initseg b\), but then it follows that
  \(\alpha + \One \preceq \beta_j\).

  To see that \eqref{ordinal-sup-eq} is true, we first notice that
  \({\alpha \initseg x + \One} \preceq \alpha\) holds (using the
  characterisation of the partial order) so that \(\alpha\) is an upper bound
  for the family.
  Now suppose that \(\beta\) is another upper bound. We need to show that
  \(\alpha \preceq \beta\). So let \(x : \alpha\) be arbitrary. Since \(\beta\)
  is an upper bound of the family, there is \(b : \beta\) with
  \(\alpha \initseg x = (\alpha \initseg x + \One) \initseg \inr \star = \beta
  \initseg b\), so we are done.
\end{example}

\subsection{The canonical basis of compact elements}\label{sec:basis-of-compact-elements}
So far, our development of algebraic dcpos (with small compact bases) has
resulted from specialising the treatment of continuous dcpos with small bases.
In this section we take a closer look at the algebraic case.

Classically, the subset \(K\) of compact elements of an algebraic dcpo \(D\)
forms a basis for \(D\).
In our predicative context, we only consider small bases, and a priori there is
no reason for \(K\) to be a small type.
However, if \(D\) comes equipped with a small compact basis, then set
replacement implies that \(K\) is in fact small.

We recall the \emph{set replacement} principle
from~\cite[Definition~3.27]{deJongEscardo2023}: it asserts that the image of a
map \(f : X \to Y\) is \({\U \sqcup \V}\)-small if \(X\) is \(\U\)-small and
\(Y\) is a locally \(\V\)-small set, where local smallness refers to smallness
of the identity types.
We also recall \cite[Theorem~3.29]{deJongEscardo2023} that set replacement is
logically equivalent to the existence of small set quotients.

If we additionally assume univalence, then the relevant smallness condition is a
property~\cite[Section~2.3]{deJongEscardo2023}, which means that having an
unspecified small compact basis is sufficient.

In particular, with set replacement and univalence, we can show:
\begin{align*}
  \|&D \text{ has a specified small compact basis}\| \to \\
  &D \text{ has a specified small compact basis}
\end{align*}
Using the terminology of \cite[Definition~3.9]{KrausEtAl2017}, we may thus say
that the type expressing that \(D\) has a specified small compact basis has
split support.
This observation is due to Ayberk Tosun (private communication) who also
formalised the result for spectrality in the context of locale theory in
predicative univalent foundations~\cite[Theorem~4.17]{ArrietaEscardoTosun2024}.

\begin{lemma}\label{dir-sup-of-compact-elts-below}
  Every element of an algebraic dcpo is the directed supremum of all
  compact elements below it.
\end{lemma}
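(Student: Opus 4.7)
The plan is to use that we are proving a proposition, so we may assume that $D$ comes equipped with algebraicity data: for every $y : D$, a directed family $\kappa_y : I_y \to D$ of compact elements with supremum $y$. Fix $x : D$ and write $K_x$ for the family of all compact elements below $x$, indexed by the type $\Sigma_{c : D}\pa*{c \text{ compact}} \times \pa*{c \below x}$. We need to check that $K_x$ is directed and has supremum $x$; the image of $\kappa_x$ is a subfamily of $K_x$, since each $\kappa_x(i)$ is compact and below $x$ (as it appears in a directed family with supremum $x$), which will be the key connection between the two families.

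For directedness of $K_x$, inhabitedness follows from the fact that $I_x$ is inhabited (as $\kappa_x$ is directed), giving at least one element $\kappa_x(i) \in K_x$. For semidirectedness, given compact $c_1, c_2 : D$ with $c_1, c_2 \below x$, I first observe that compact plus below $x$ implies way-below $x$: namely $c_k \below c_k \ll c_k \below x$ yields $c_k \ll x$ by \cref{way-below-properties}\eqref{below-way-below-way-below}. Since $x = \bigsqcup \kappa_x$, the way-below relations $c_k \ll x$ produce indices $i_1, i_2 : I_x$ with $c_k \below \kappa_x(i_k)$. Directedness of $\kappa_x$ then furnishes $i : I_x$ with $\kappa_x(i_1), \kappa_x(i_2) \below \kappa_x(i)$, and $\kappa_x(i)$ is a compact element below $x$ dominating both $c_1$ and $c_2$.

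For the supremum, $x$ is clearly an upper bound of $K_x$ by definition. Conversely, if $y$ is any upper bound of $K_x$, then in particular $y$ dominates each $\kappa_x(i)$ (which is compact and below $x$), so $y \aboveorder \bigsqcup \kappa_x = x$. Hence $x$ is the least upper bound of $K_x$.

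I do not expect a real obstacle here; the only point requiring care is the size of $K_x$, which need not be $\V$-small in general. This does not affect the statement, since directedness and being a least upper bound are formulated purely order-theoretically, and we never need to form $\bigsqcup K_x$ as a dcpo supremum: we only compare $K_x$ against upper bounds, using $\kappa_x$ as a small cofinal subfamily to pin down the least upper bound as $x$.
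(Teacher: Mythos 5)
Your proof is correct and follows essentially the same route as the paper: reduce to the case of given algebraicity data (since the conclusion is a proposition), use the resulting family \(\kappa_x\) to witness inhabitedness, use compactness of \(c_1, c_2\) together with \(x = \bigsqcup \kappa_x\) to find a common dominating \(\kappa_x(i)\) for semidirectedness, and reduce the least-upper-bound property to each \(\kappa_x(i)\) being a compact element below \(x\). The small variations (spelling out that compact-and-below implies way-below via \cref{way-below-properties}\eqref{below-way-below-way-below}, and the closing remark that the family of all compacts below \(x\) need not be \(\V\)-small, which is harmless since the claim is purely order-theoretic) are reasonable explications of steps the paper makes implicitly.
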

\begin{proof}
  Writing \(D\) for the algebraic dcpo, and letting \(x : D\) be arbitrary, we
  have to show that the inclusion family
  \[
    \iota_x : \pa*{\Sigma_{c : D}\pa*{c \text{ is compact}} \times \pa*{c \below x}}
    \to D
  \]
  is directed with supremum \(x\).
  Since this is a proposition, we may assume to be given algebraicity data
  for \(D\).
  Thus, we have a directed family \(\kappa_x : I_x \to D\) of compact elements
  with supremum \(x\).
  By directedness, \(I_x\) is inhabited, so we see that the domain of
  \(\iota_x\) is inhabited too.
  For semidirectedness, assume we have compact elements \(c_1,c_2 : D\) below
  \(x\). Since \(x\) is the directed supremum of \(\kappa_x\), there exist
  elements \(i_1,i_2 : I_x\) with \(c_1 \below \kappa_x(i_1)\) and
  \(c_2 \below \kappa_x(i_2)\).
  By directedness of \(\kappa_x\), there then exists \(i : I_x\) such that
  \(c_1\) and \(c_2\) are both below \(\kappa_x(i)\).
  But \(\kappa_x(i)\) is a compact element below \(x\) so we are done.
  Finally, we show that \(\iota_x\) has supremum \(x\). Clearly, \(x\) is an
  upper bound for \(\iota_x\).
  Now suppose that \(y\) is any other upper bound. It then suffices to show that
  \(\kappa_x(i) \below y\) for all \(i : I_x\). But each \(\kappa_x(i)\) is a
  compact element below \(x\), so this holds since \(y\) is an upper bound for
  \(\iota_x\).
\end{proof}

\begin{lemma}\label{K-is-small'}
  Assuming set replacement, if a \(\V\)-dcpo is equipped with a small compact
  basis, then the subtype of compact elements is \(\V\)-small.
  If we additionally assume univalence, then having an unspecified small compact
  basis suffices.
\end{lemma}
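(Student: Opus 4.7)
The plan is to identify the subtype $K$ of compact elements of $D$ with the image of the basis map $\beta$, and then apply set replacement to conclude smallness. For the second part, we use that smallness is a property under univalence to reduce to the first part.

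Suppose we have a specified small compact basis $\beta : B \to D$ with $B : \V$. By definition of a small compact basis, every $\beta(b)$ is compact, and by \cref{small-compact-basis-contains-all-compact-elements}, every compact element of $D$ arises as $\beta(b)$ for some $b : B$. Hence the propositions ``$x$ is compact'' and ``$\exists_{b : B}(\beta(b) = x)$'' are logically equivalent, and since both are propositions (the latter because $D$ is a set and we have propositionally truncated), they are equivalent types. It follows that the subtype $K \colonequiv \Sigma_{x : D}(x \text{ is compact})$ of compact elements is equivalent to the image of $\beta$.

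Now we invoke set replacement for the map $\beta : B \to D$. The domain $B$ is $\V$-small by assumption, and the codomain $D$ is a locally $\V$-small set by \cref{locally-small-if-small-basis}. Set replacement then tells us that the image of $\beta$ is $\V \sqcup \V = \V$-small, and since $K$ is equivalent to this image, $K$ is $\V$-small.

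For the second statement, assuming univalence, being $\V$-small is a proposition (see~\cite[Section~2.3]{deJongEscardo2023} as referenced earlier in the paper). The first part of the lemma gives a map from specified small compact bases to witnesses that $K$ is $\V$-small; since the codomain is a proposition, this map factors through the propositional truncation of its domain, yielding the desired conclusion from an unspecified small compact basis. The only subtlety to check carefully is the identification $K \simeq \mathrm{image}(\beta)$ at the level of propositions, which is the main (and only mildly tricky) conceptual step; everything else is a direct application of previously established results.
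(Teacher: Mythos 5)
Your proof is correct and follows essentially the same route as the paper: identify the subtype $K$ of compact elements with the image of $\beta$ via \cref{small-compact-basis-contains-all-compact-elements}, apply set replacement using smallness of $B$ and local smallness of the codomain, and for the unspecified case use that smallness is a proposition under univalence to factor through the truncation.
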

\begin{proof}
  Let \(\beta : B \to D\) be the small compact basis of the \(\V\)-dcpo \(D\).
  Notice that \(\beta\) factors through the subtype \(K\) of compact elements of
  \(D\).
  Moreover, by \cref{small-compact-basis-contains-all-compact-elements} the map
  \(\beta : B \to K\) is surjective.
  Hence, \(K\) is equivalent to the image of \(\beta : B \to D\).
  Now an application of set replacement finishes the proof, since \(B\) is small
  and \(K\) is locally small because it is a subtype of~\(D\) which is locally
  small by antisymmetry,
  \cref{locally-small-if-small-basis,small-and-compact-basis}.

  Assuming univalence, being small is a
  property~\cite[Proposition~2.8]{deJongEscardo2023}, so that the result follows
  from the above and the universal property of the truncation.
\end{proof}

\begin{proposition}
  Assuming univalence and set replacement, the types expressing that a dcpo has
  a specified, resp.\ unspecified, small compact basis are logically equivalent.
\end{proposition}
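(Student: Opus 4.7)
The plan is to prove the forward direction by constructing a canonical small compact basis from the subtype \(K\) of compact elements of \(D\); the backward direction is trivial via the unit of the propositional truncation.

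More precisely, suppose we are given an element of \(\squash*{D \text{ has a specified small compact basis}}\). Since the conclusion of \cref{K-is-small'} is a proposition and, under univalence, the lemma already applies to dcpos merely equipped with an unspecified small compact basis, we deduce that \(K\) is \(\V\)-small. Under univalence, \(\V\)-smallness is itself a property~\cite[Section~2.3]{deJongEscardo2023}, so the type \(\Sigma_{B : \V}\pa*{B \simeq K}\) is a proposition, and inhabiting it directly produces a specific small representative \(B : \V\) together with an equivalence \(e : B \simeq K\). Take \(\beta : B \to D\) to be the composite \(B \xrightarrow{e} K \hookrightarrow D\).

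Verifying that \(\beta\) is a small compact basis then amounts to checking the three clauses of the definition. Compactness of each \(\beta(b)\) is immediate from its factorisation through \(K\). For the approximation condition, I would appeal to \cref{dir-sup-of-compact-elts-below}: the family of all compact elements below \(x\) is directed with supremum \(x\), and reindexing along \(e\) shows that \(\dset_\beta x\) is directed with supremum \(x\) as well. Finally, for the smallness of \(\beta(b) \below x\), note that from the truncated assumption together with \cref{small-and-compact-basis} we obtain that \(D\) has an unspecified small basis, whence \(D\) is locally small by \cref{locally-small-if-small-basis}.

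The main subtlety, and the only place where univalence does genuine work, is in making \(\V\)-smallness of \(K\) propositional; without this, extracting a specific small representative from the truncated hypothesis would require an additional (illegitimate) choice. Set replacement plays its role entirely inside the proof of \cref{K-is-small'}, and the remaining verifications are routine applications of results already established in the section.
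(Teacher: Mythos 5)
Your proposal is correct and follows essentially the same route as the paper's proof: extract the canonical small compact basis from the subtype \(K\) of compact elements, using \cref{K-is-small'} (where univalence makes smallness a property, allowing you to escape the truncation) and \cref{dir-sup-of-compact-elts-below} for the approximation condition. The paper's proof is terser---it just cites these two lemmas and asserts that \(K_s \simeq K \hookrightarrow D\) is a small compact basis---whereas you spell out the verification of all three clauses, in particular supplying the local-smallness argument (via \cref{small-and-compact-basis} and \cref{locally-small-if-small-basis}) for the smallness of \(\beta(b) \below x\), which the paper leaves implicit. This is a welcome bit of detail, not a divergence in method.
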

\begin{proof}
  In one direction, we simply apply the propositional truncation.
  In the other direction, we apply
  \cref{dir-sup-of-compact-elts-below,K-is-small'} to see that
  \[
    {K_s} \simeq K \hookrightarrow D
  \]
  is a small compact basis for the \(\V\)-dcpo \(D\), where \(K\) denotes the
  subtype of compact elements and \(K_s\) is the \(\V\)-small copy of \(K\).
\end{proof}

We note that the above cannot be promoted to an equivalence of types, because
the type of specified small compact bases is not a proposition. This may seem
puzzling because there is a unique basis---\emph{as a
  subset}---which consists of compact elements.
If we had asked for the map \(\beta : B \to D\) in the definition of a small
compact basis to be an embedding, then (ignoring size issues for a moment) the
resulting type is a proposition: it has a unique element in case the dcpo is
algebraic, given by the \emph{subset} of compact elements. (This is true because
any basis must contain all compact elements.)

We illustrate why we do not impose this requirement by revisiting
\cref{powerset-small-compact-basis}. This example showed that the canonical map
from lists into the powerset of a set \(X\) is a small compact basis for the
algebraic dcpo \(\powerset(X)\).
This map is not an embedding, as any permutation of a list will give rise to the
same subset.
If we insisted on having an embedding, we would instead have to use the
inclusion of the Kuratowski finite subsets \(\mathcal K(X)\) into
\(\powerset(X)\).  However, \(\mathcal K(X)\) is not a small type without
additional assumptions, such as HITs or more specifically, set replacement (as
\(\mathcal K(X)\) is precisely the image of the inclusion of lists into the powerset).

Returning to the main line of thought, we conclude that, in the presence of set
replacement and univalence, if there is some unspecified small compact basis,
then the subset of compact elements is small.

\section{The round ideal completion}\label{sec:round-ideal-completion}

We have seen that in continuous dcpos, the basis essentially ``generates'' the
whole dcpo, because the basis suffices to approximate any of its elements.
It is natural to ask whether one can start from a more abstract notion of basis
and ``complete'' it to a continuous dcpo.
Following \citeauthor{AbramskyJung1994}~\cite[Section~2.2.6]{AbramskyJung1994},
but keeping track of size, this is exactly what we do here using the notion of
an \emph{abstract basis} and the \emph{round ideal completion}.

\begin{definition}[Abstract basis]%
  An \emph{abstract \(\V\)-basis} is a type \(B : \V\) with a binary relation
  \({\prec} : B \to B \to \V\) that is proposition-valued, transitive and
  satisfies \emph{nullary} and \emph{binary interpolation}:
  \begin{enumerate}[(i)]
  \item\label{abstract-nullary-interpolation}
    for every \(a : B\), there exists \(b : B\) with \(b \prec a\), and
  \item\label{abstract-binary-interpolation}
    for every \(a_1,a_2 \prec b\), there exists \(a : B\) with
    \({a_1,a_2} \prec a \prec b\).\qedhere
  \end{enumerate}
\end{definition}

\begin{definition}[Ideal, (round) ideal completion, \(\Idl{V}(B,\prec)\)]%
  \hfill%
  \begin{enumerate}[(i)]
  \item A subset \(I : \powerset_{\V}(B)\) of an abstract \(\V\)-basis
    \((B,{\prec})\) is a \emph{\ideal{V}} if it is a directed lower set
    with respect to \({\prec}\).
    That it is a lower set means: if \(b \in I\) and \(a \prec b\), then
    \(a \in I\) too.
  \item We write \(\Idl{V}(B,{\prec})\) for the type of \ideal{V}s of an
    abstract \(\V\)-basis \((B,{\prec})\) and call \(\Idl{V}({B,\prec})\) the
    \emph{(round) ideal completion} of \((B,\prec)\). \qedhere
  \end{enumerate}
\end{definition}

For the remainder of this section, we will fix an abstract \(\V\)-basis
\((B,{\prec})\) and consider its \ideal{V}s.
The name round ideal completion is justified
by~\cref{round-ideals} below.

\begin{definition}[Union of ideals, \(\bigcup \mathcal I\)]%
  \label{def:union-of-ideals}
  Given a family \(\mathcal I : S \to \Idl{V}(B,{\prec})\) of ideals, indexed
  by \(S : \V\), we write
  \[
    \textstyle\bigcup \mathcal I \colonequiv \set{b \in B \mid \exists_{s : S}\pa*{b \in
        \mathcal I_s}}.
  \]
\end{definition}

The following lemma is proved just like in the
classical case~\cite[Section~2.2.6]{AbramskyJung1994}.
\pagebreak[3]\begin{lemma}\label{round-ideals}
  \leavevmode
  \begin{enumerate}[(i)]
  \item If \(\mathcal I : S \to \Idl{V}(B,{\prec})\) is directed, then
    \(\bigcup \mathcal I\) is an ideal.
  \item The round ideal completion is a \(\V\)-dcpo when ordered by subset
    inclusion.
    Paying attention to the universe levels, the ideals form a large but locally
    small \(\V\)-dcpo because \(\Idl{V}(B,{\prec}) : \DCPO{V}{V^+}{V}\).
  \item The ideals of an abstract basis are \emph{round}: for every element
    \(a\) of an ideal~\(I\), there exists \(b \in I\) such that \(a \prec b\).
  \end{enumerate}
\end{lemma}

Roundness makes up for the fact that we have not required an abstract basis to
be reflexive. If it is, then (\cref{sec:round-ideal-completion-reflexive}) the
ideal completion is algebraic.

\begin{definition}[Principal ideal, \(\dset b\)]%
  The \emph{principal ideal} of an element \(b : B\) is defined as the subset
  \(\dset b \colonequiv \set{a \in B \mid a \prec b}\).
  Observe that the principal ideal is indeed an ideal: it is a lower set by
  transitivity of the relation~\({\prec}\), and inhabited and semidirected precisely by
  nullary and binary interpolation, respectively.
\end{definition}

\begin{lemma}\label{principal-ideal-is-monotone}
  The assignment \(b \mapsto \dset b\) is monotone, i.e.\ if \(a \prec b\), then
  \(\dset a \subseteq \dset b\).
\end{lemma}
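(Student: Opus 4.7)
The plan is to unfold the definitions and apply transitivity of $\prec$ directly. Suppose $a \prec b$; to show $\dset a \subseteq \dset b$, I would take an arbitrary $c \in \dset a$, which by definition of the principal ideal means $c \prec a$. Combining with the hypothesis $a \prec b$ and invoking transitivity of the abstract basis relation $\prec$ yields $c \prec b$, i.e.\ $c \in \dset b$, as required.

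There is no real obstacle here: the only ingredient is transitivity of $\prec$, which is built into the definition of an abstract $\V$-basis. Note that the directedness and lower-set conditions on ideals are not needed for this particular lemma, so the proof is a one-line appeal to transitivity.
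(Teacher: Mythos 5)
Your proof is correct and matches the paper's own argument, which simply cites transitivity of \(\prec\); you have just spelled out the unfolding of the definition of \(\dset\) that the paper leaves implicit.
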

\begin{proof}
  By transitivity of \({\prec}\).
\end{proof}

\begin{lemma}\label{directed-sup-of-principal-ideals}
  Every ideal is the directed supremum of its principal ideals. That is, for an
  ideal \(I\), the family
  \(\pa*{\Sigma_{b : B}\pa*{b \in I}} \xrightarrow{b \mapsto \dset b}
  \Idl{V}(B,{\prec})\) is directed and has supremum \(I\).
\end{lemma}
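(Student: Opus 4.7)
The plan is to verify the two conditions separately: first that the family of principal ideals indexed by elements of $I$ is directed in $\Idl{V}(B,\prec)$, and then that its supremum (which is computed as a union of subsets by \cref{Idl-is-dcpo}) is precisely $I$.

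For directedness, inhabitedness of the index type $\Sigma_{b : B}(b \in I)$ is immediate from the fact that ideals are inhabited. For semidirectedness, given $b_1, b_2 \in I$, we use that $I$ itself is (semi)directed to find $b \in I$ with $b_1, b_2 \prec b$; then \cref{principal-ideal-is-monotone} gives $\dset{b_1} \subseteq \dset b$ and $\dset{b_2} \subseteq \dset b$, which is exactly what we need in the order of $\Idl{V}(B,\prec)$.

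For the supremum claim, I would show $I$ is an upper bound and the least one. That $I$ is an upper bound amounts to $\dset b \subseteq I$ for every $b \in I$, which is precisely the lower-set property of $I$. To see it is the least upper bound, suppose $J$ is another ideal containing every $\dset b$ for $b \in I$; take an arbitrary $a \in I$ and apply roundedness (\cref{ideals-are-rounded}) to produce some $b \in I$ with $a \prec b$, so that $a \in \dset b \subseteq J$, giving $I \subseteq J$.

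The only step that requires any real work is the least-upper-bound part, and the essential ingredient there is roundedness of ideals—without it, an element $a \in I$ need not lie in any principal ideal $\dset b$ with $b \in I$, and the claim would fail. So the main (albeit minor) observation is that \cref{ideals-are-rounded} is exactly what licenses the argument; everything else follows directly from the definitions of ideal, lower set, and the order on $\Idl{V}(B,\prec)$.
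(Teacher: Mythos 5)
Your proof is correct and follows essentially the same route as the paper's: directedness via inhabitedness and semidirectedness of the ideal together with \cref{principal-ideal-is-monotone}, and the identification of the supremum with \(I\) via the lower-set property for one inclusion and roundedness (\cref{ideals-are-rounded}) for the other. The only cosmetic difference is that the paper directly computes the union \(\bigcup_{b \in I}\dset b\) and shows it equals \(I\), whereas you phrase the same two inclusions as ``\(I\) is an upper bound'' and ``\(I\) is the least upper bound,'' but these are the same computation since suprema in \(\Idl{V}(B,\prec)\) are unions.
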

\begin{proof}
  Since ideals are lower sets, we have \(\dset b \subseteq I\) for every
  \(b \in I\). Hence, the union \(\bigcup_{b \in I}\dset b\) is a subset of
  \(I\). Conversely, if \(a \in I\), then by roundness of \(I\) there exists
  \(a' \in I\) with \(a \prec a'\), so that \(a \in \bigcup_{b \in I}\dset b\).
  So it remains to show that the family is directed. Notice that it is
  inhabited, because \(I\) is an ideal.
  Now suppose that \(b_1,b_2 \in I\). Since \(I\) is directed, there exists
  \(b \in I\) such that \({b_1,b_2} \prec b\). But this implies
  \({\dset b_1,\dset b_2} \subseteq \dset b\)
  by~\cref{principal-ideal-is-monotone}, so the family is semidirected, as
  desired.
\end{proof}

\begin{lemma}\label{Idl-way-below-characterisation}%
  The following are equivalent for every two ideals \(I\) and \(J\):
  \begin{enumerate}[(i)]
  \item\label{Idl-way-below} \(I \ll J\);
  \item\label{Idl-way-below-1} there exists \(b \in J\) such that
    \(I \subseteq \dset b\);
  \item\label{Idl-way-below-2} there exist \(a \prec b\) such that
    \(I \subseteq \dset a \subseteq \dset b \subseteq J\).
  \end{enumerate}
  In particular, if \(b\) is an element of an ideal \(I\), then
  \(\dset b \ll I\).
\end{lemma}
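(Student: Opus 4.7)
The plan is to prove the equivalences cyclically, \eqref{Idl-way-below} $\Rightarrow$ \eqref{Idl-way-below-1} $\Rightarrow$ \eqref{Idl-way-below-2} $\Rightarrow$ \eqref{Idl-way-below}, and then deduce the final ``in particular'' claim from \eqref{Idl-way-below-1}.

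For \eqref{Idl-way-below} $\Rightarrow$ \eqref{Idl-way-below-1}, I would invoke \cref{directed-sup-of-principal-ideals}, which tells us that $J$ is the directed supremum of its principal ideals $\dset b$ indexed by $b \in J$. Since $I \ll J$ by assumption, the very definition of the way-below relation yields some $b \in J$ with $I \subseteq \dset b$, which is \eqref{Idl-way-below-1}.

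For \eqref{Idl-way-below-1} $\Rightarrow$ \eqref{Idl-way-below-2}, I would use roundedness of the ideal $J$ (\cref{ideals-are-rounded}) to produce, from the element $b \in J$ given by \eqref{Idl-way-below-1}, an element $c \in J$ with $b \prec c$. Monotonicity of the principal ideal map (\cref{principal-ideal-is-monotone}) then gives $\dset b \subseteq \dset c$, and the fact that $J$ is a lower set with respect to $\prec$ ensures $\dset c \subseteq J$. Taking $a \colonequiv b$ and renaming $c$ to $b$ in the conclusion yields \eqref{Idl-way-below-2}.

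For \eqref{Idl-way-below-2} $\Rightarrow$ \eqref{Idl-way-below}, suppose we have $a \prec b$ with $I \subseteq \dset a \subseteq \dset b \subseteq J$ and let $\mathcal J : S \to \Idl{V}(B,{\prec})$ be a directed family with $J \subseteq \bigcup \mathcal J$ (which, by construction of directed suprema in the ideal completion, is how $J \below \bigsqcup \mathcal J$ unfolds). Since $a \prec b$ gives $a \in \dset b \subseteq J \subseteq \bigcup \mathcal J$, there exists $s : S$ with $a \in \mathcal J_s$. Because $\mathcal J_s$ is a lower set, we get $\dset a \subseteq \mathcal J_s$, and hence $I \subseteq \dset a \subseteq \mathcal J_s$, establishing $I \ll J$.

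Finally, for the ``in particular'' claim, suppose $b \in I$. By roundedness of $I$ there exists $c \in I$ with $b \prec c$, and then $\dset b \subseteq \dset c$ by \cref{principal-ideal-is-monotone}, so condition \eqref{Idl-way-below-1} (applied to the pair $\dset b, I$) is satisfied with witness $c \in I$, giving $\dset b \ll I$. There is no real obstacle here; the only thing to be careful about is not confusing the directions of monotonicity and roundedness, and remembering that a principal ideal $\dset b$ need not contain $b$ itself (so one must route through an interpolant supplied by roundedness rather than asserting $b \in \dset b$).
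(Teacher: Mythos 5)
Your proof of the three-way equivalence is essentially identical to the paper's: the same cyclic decomposition (i)~\(\Rightarrow\)~(ii)~\(\Rightarrow\)~(iii)~\(\Rightarrow\)~(i), invoking \cref{directed-sup-of-principal-ideals}, \cref{ideals-are-rounded}, \cref{principal-ideal-is-monotone}, and the lower-set property of ideals at the same junctures, so that part is fine.

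On the ``in particular'' claim your argument works, but the detour through roundedness is not actually necessary, and the caution you record there is slightly misplaced. You are right that \(\dset b\) need not contain \(b\) (since \(\prec\) is not assumed reflexive), but applying (ii)~\(\Rightarrow\)~(i) to the pair \((\dset b, I)\) never requires that: it only asks for some \(b' \in I\) with \(\dset b \subseteq \dset b'\), and one may simply take \(b' \colonequiv b\), since \(b \in I\) by hypothesis and \(\dset b \subseteq \dset b\) is trivial. This is exactly what the paper does, with no appeal to an interpolant. Your version, which first pulls out \(c \in I\) with \(b \prec c\) and then uses \(\dset b \subseteq \dset c\), is also correct --- it just does more work than is needed, and the remark that one ``must'' route through roundedness overstates the situation.
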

\begin{proof}
  We show that \eqref{Idl-way-below} \(\Rightarrow\) \eqref{Idl-way-below-1}
  \(\Rightarrow\) \eqref{Idl-way-below-2} \(\Rightarrow\) \eqref{Idl-way-below}.
  So suppose that \(I \ll J\). Then \(J\) is the directed supremum of its
  principal ideals by \cref{directed-sup-of-principal-ideals}. Hence, there
  exists \(b \in J\) such that \(I \subseteq \dset b\) already, which is exactly
  \eqref{Idl-way-below-1}.
  Now suppose that we have \(a \in J\) with \(I \subseteq \dset a\). By
  roundness of \(J\), there exists \(b \in J\) with \(a \prec b\). But then
  \(I \subseteq \dset a \subseteq \dset b \subseteq J\) by
  \cref{principal-ideal-is-monotone} and the fact that \(J\) is a lower set,
  establishing \eqref{Idl-way-below-2}.
  Now suppose that condition \eqref{Idl-way-below-2} holds and that \(J\) is a
  subset of some directed join of ideals \(\mathcal J\) indexed by a type
  \(S : \V\).
  Since \(a \in \dset b \subseteq J\), there exists \(s : S\) such that
  \(a \in \mathcal J_s\). In particular, \(\dset a \subseteq \mathcal J_s\)
  because ideals are lower sets. Hence, if \(a' \in I \subseteq \dset a\), then
  \(a' \in \mathcal J_s\), so \(I \subseteq \mathcal J_s\), which proves that
  \(I \ll J\).

  Finally, if \(b\) is an element of an ideal \(I\), then \(\dset b \ll I\),
  because \eqref{Idl-way-below-1} implies \eqref{Idl-way-below} and
  \(\dset b \subseteq \dset b\) obviously holds.
\end{proof}

\begin{theorem}\label{Idl-has-small-basis}%
  The principal ideals \(\dset{(-)} : B \to \Idl{V}(B,{\prec})\) yield a small
  basis for \(\Idl{V}(B,{\prec})\). In particular, \(\Idl{V}(B,{\prec})\) is
  continuous.
\end{theorem}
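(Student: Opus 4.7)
The plan is to verify the two small-basis axioms for the map $\dset{(-)} : B \to \Idl{V}(B,{\prec})$ and then invoke \cref{structural-continuity-if-small-basis} for the final claim. Smallness of the way-below proposition is actually the easier half, since \cref{Idl-way-below-characterisation} already gives us a small-valued reformulation: $\dset b \ll I$ holds iff there exists $c \in B$ with $c \in I$ and $\dset b \subseteq \dset c$. Here $c \in I$ is a $\V$-proposition because $I : \powerset_{\V}(B)$, and $\dset b \subseteq \dset c$ unfolds to $\forall_{a : B}(a \prec b \to a \prec c)$, which is $\V$-small since $B : \V$ and $\prec$ is valued in $\V$. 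Existentially quantifying over $B : \V$ keeps us in $\V$, giving condition \eqref{basis-small-way-below}.

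For condition \eqref{basis-approximating}, I would compare the family $F_2 : \pa*{\Sigma_{b : B}(\dset b \ll I)} \to \Idl{V}(B,{\prec})$ with the ``principal ideals inside $I$'' family $F_1 : \pa*{\Sigma_{b : B}(b \in I)} \to \Idl{V}(B,{\prec})$ that was already shown in \cref{directed-sup-of-principal-ideals} to be directed with supremum $I$. The last sentence of \cref{Idl-way-below-characterisation} tells us that $b \in I$ entails $\dset b \ll I$, so $F_1$ factors through $F_2$ via the obvious reindexing on first components. The key task is then to establish directedness of $F_2$ on its own. Inhabitedness is immediate from inhabitedness of $I$. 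For semidirectedness, given $b_1, b_2$ with $\dset{b_i} \ll I$, I use the small-valued characterisation to pick $c_i \in I$ with $\dset{b_i} \subseteq \dset{c_i}$, then apply directedness of $I$ to obtain $c \in I$ with $c_1, c_2 \prec c$; by \cref{principal-ideal-is-monotone} we get $\dset{b_i} \subseteq \dset{c_i} \subseteq \dset c$, and $\dset c \ll I$ holds because $c \in I$.

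Having directedness of $F_2$, the supremum claim follows by sandwiching: every member of the image of $F_2$ is $\below I$ (since $\dset b \ll I$ implies $\dset b \below I$ by \cref{way-below-properties}\eqref{below-if-way-below}), so $I$ is an upper bound; conversely, since $F_2$ factors $F_1$ whose supremum is already $I$, the supremum of $F_2$ dominates $I$, giving equality by antisymmetry. The ``in particular'' clause is then an immediate application of \cref{structural-continuity-if-small-basis}.

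I do not anticipate a serious obstacle here: the bulk of the work has already been packaged into \cref{Idl-way-below-characterisation,directed-sup-of-principal-ideals,principal-ideal-is-monotone}, and the only mildly subtle point is resisting the temptation to argue that $\dset b \ll I$ is equivalent to $b \in I$ (which would trivialise directedness); this equivalence need not hold in general for non-reflexive $\prec$, but fortunately we do not need it, since the weaker characterisation via some $c \in I$ with $\dset b \subseteq \dset c$ is perfectly adequate for both smallness and the directedness argument.
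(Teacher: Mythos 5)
Your proposal is correct and follows essentially the same route as the paper: smallness via \cref{Idl-way-below-characterisation}, inhabitedness and semidirectedness of $\ddset_{\dset{(-)}} I$ via the characterisation of $\ll$ together with \cref{principal-ideal-is-monotone} and directedness of $I$, and finally \cref{structural-continuity-if-small-basis} for the ``in particular'' clause. The only cosmetic difference is in the supremum claim: the paper argues $I \subseteq \bigcup \ddset_{\dset{(-)}} I$ directly from roundedness of $I$, whereas you obtain it by factoring the family from \cref{directed-sup-of-principal-ideals} through $\ddset_{\dset{(-)}} I$ via the ``in particular'' clause of \cref{Idl-way-below-characterisation}; since that earlier lemma itself rests on roundedness, the two arguments are the same in substance, and your packaging is a perfectly good way to avoid re-deriving it. Your parenthetical caution that $\dset b \ll I$ is not equivalent to $b \in I$ absent reflexivity is also well taken and is precisely why the characterisation lemma is phrased with an existential over $c \in I$.
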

\begin{proof}
  First of all, note that the way-below relation on \(\Idl{V}(B,{\prec})\) is
  small-valued because of \cref{Idl-way-below-characterisation}. So it remains
  to show that for every ideal \(I\), the family
  \(\pa*{\Sigma_{b : B}\pa*{\dset b \ll I}} \xrightarrow{b \mapsto \dset b}
  \Idl{V}(B,{\prec})\) is directed with supremum \(I\).
  That the domain of this family is inhabited follows from
  \cref{Idl-way-below-characterisation} and the fact that \(I\) is inhabited.
  For semidirectedness, suppose we have \(b_1,b_2 : B\) with
  \({\dset b_1,\dset b_2} \ll I\). By \cref{Idl-way-below-characterisation}
  there exist \({c_1,c_2} \in I\) such that \(\dset b_1 \subseteq \dset c_1\)
  and \(\dset b_2 \subseteq \dset c_2\).
  Since \(I\) is directed, there exists \(b \in I\) with \({c_1,c_2} \prec b\).
  But now \(\dset b_1 \subseteq \dset c_1 \subseteq \dset b \ll I\) by
  \cref{principal-ideal-is-monotone,Idl-way-below-characterisation} and
  similarly, \(\dset b_2 \subseteq \dset b \ll I\). Hence, the family is
  semidirected, as we wished to show.
  Finally, we show that \(I\) is the supremum of the family. If \(b \in I\),
  then, since \(I\) is round, there exists \(c \in I\) with \(b \prec
  c\). Moreover, \(\dset c \ll I\) by
  \cref{Idl-way-below-characterisation}. Hence, \(b\) is included in the join of
  the family.
  Conversely, if we have \(b : B\) with \(\dset b \ll I\), then
  \(\dset b \subseteq I\), so \(I\) is also an upper bound for the family.
\end{proof}

\subsection{The round ideal completion of a reflexive abstract basis}%
\label{sec:round-ideal-completion-reflexive}
If the relation of an abstract basis is reflexive, then we obtain an algebraic
dcpo, as we show now.

\begin{lemma}\label{abstract-basis-if-reflexive}%
  If \({\prec} : B \to B \to \V\) is proposition-valued, transitive and
  reflexive, then \((B,{\prec})\) is an abstract basis.
\end{lemma}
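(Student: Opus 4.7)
The plan is straightforward: the only nontrivial conditions in the definition of an abstract basis that need to be verified are nullary and binary interpolation, since proposition-valuedness and transitivity are given by hypothesis. Both interpolation properties follow immediately from reflexivity by taking the interpolant to be the element itself.

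More explicitly, for nullary interpolation \eqref{abstract-nullary-interpolation}, given \(a : B\), I would simply witness the existential with \(a\) itself, since \(a \prec a\) by reflexivity. For binary interpolation \eqref{abstract-binary-interpolation}, given \(a_1, a_2 \prec b\), I would take the interpolant to be \(b\) itself: this works because the assumptions give \(a_1, a_2 \prec b\), and reflexivity gives \(b \prec b\), so indeed \(a_1, a_2 \prec b \prec b\).

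There is no real obstacle here; the lemma is essentially a sanity check ensuring that reflexive preorders (which is what the hypothesis amounts to, modulo being proposition-valued) fit into the framework of abstract bases. This is exactly what will be needed in the next subsection to produce algebraic dcpos via the ideal completion of reflexive abstract bases.
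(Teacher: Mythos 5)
Your proof is correct and matches the paper's approach: the paper simply remarks that the interpolation properties are easily proved from reflexivity, and your explicit witnesses ($a$ itself for nullary interpolation, $b$ itself for binary interpolation) are exactly what that remark amounts to.
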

\begin{proof}
  The interpolation properties for \({\prec}\) are easily proved when it is
  reflexive.
\end{proof}

\begin{lemma}\label{principal-ideal-below-characterisation}
  If an element \(b : B\) is reflexive, i.e.\ \(b \prec b\) holds, then
  \(b \in I\) if and only if \(\dset b \subseteq I\) for every ideal \(I\).
\end{lemma}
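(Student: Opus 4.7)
The plan is to prove both directions directly from the definitions, using reflexivity of $b$ only for the converse implication.

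For the forward direction, assume $b \in I$ and take any $a \in \dset b$, which by definition means $a \prec b$. Since $I$ is an ideal, it is in particular a lower set, so $a \prec b \in I$ gives $a \in I$. This yields $\dset b \subseteq I$ and does not require reflexivity.

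For the converse, assume $\dset b \subseteq I$. Reflexivity $b \prec b$ tells us that $b \in \dset b$, and the inclusion then gives $b \in I$. This step is where reflexivity is essential: without it, $b$ need not belong to its own principal ideal.

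There is no substantive obstacle: the argument is a direct unfolding of the definitions of principal ideal, lower set, and reflexivity. The only thing worth noting is that reflexivity is used in exactly one direction, mirroring the role it plays in \cref{principal-ideal-is-monotone} for monotonicity of $b \mapsto \dset b$.
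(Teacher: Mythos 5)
Your proof is correct and is essentially identical to the paper's: the forward implication uses that ideals are lower sets, and the converse uses reflexivity to see that \(b \in \dset b\). The paper states this in one line; your version merely spells out the same steps.
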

\begin{proof}
  The left-to-right implication holds because \(I\) is a lower set and the
  converse holds because \(b \in \dset b\) as \(b\) is assumed to be reflexive.
\end{proof}

\begin{lemma}\label{principal-ideals-are-compact}
  If \(b : B\) is reflexive, then its principal ideal \(\dset b\) is compact.
\end{lemma}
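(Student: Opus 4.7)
The plan is to use the way-below characterisation for the ideal completion (\cref{Idl-way-below-characterisation}) applied in the special case where the ideal in question is the principal ideal itself. Recall that the final ``in particular'' clause of \cref{Idl-way-below-characterisation} states that if $b$ is an element of an ideal $I$, then $\dset b \ll I$. So to conclude that $\dset b \ll \dset b$, it suffices to verify that $b \in \dset b$.

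This last fact is immediate from reflexivity: by definition $\dset b = \{a \in B \mid a \prec b\}$, and the assumption $b \prec b$ gives precisely $b \in \dset b$. Putting this into the specialisation of \cref{Idl-way-below-characterisation} with $I \colonequiv \dset b$ yields $\dset b \ll \dset b$, which is the definition of $\dset b$ being compact.

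There is no real obstacle here; the work has already been done in \cref{Idl-way-below-characterisation}, and the present lemma is essentially an application of its ``in particular'' clause together with reflexivity. The proof will therefore be a one-liner invoking these two ingredients.
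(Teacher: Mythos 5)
Your proof is correct but takes a genuinely different route from the paper's. The paper unwinds the definition of compactness directly: given $\dset b \subseteq \bigcup \mathcal I$ for a directed family of ideals $\mathcal I$, it applies \cref{principal-ideal-below-characterisation} (the lemma preceding this one, which characterises $\dset b \subseteq I$ as $b \in I$ for reflexive $b$) to deduce $b \in \bigcup \mathcal I$, extracts some $s$ with $b \in \mathcal I_s$, and applies \cref{principal-ideal-below-characterisation} once more to conclude $\dset b \subseteq \mathcal I_s$. You instead bypass that lemma entirely and invoke the ``in particular'' clause of \cref{Idl-way-below-characterisation} at the specific ideal $I \colonequiv \dset b$: since reflexivity gives $b \in \dset b$, the clause yields $\dset b \ll \dset b$ directly. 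Both are sound; your route is a one-line application of already-proved machinery and arguably leaves \cref{principal-ideal-below-characterisation} unused for this purpose, whereas the paper's version re-proves essentially the $(\mathrm{ii}) \Rightarrow (\mathrm{i})$ instance from scratch, which is more self-contained but slightly redundant given that \cref{Idl-way-below-characterisation} is already available.
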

\begin{proof}
  Suppose that we have \(b : B\) such that \(b \prec b\) holds and that
  \(\dset b \subseteq \bigcup \mathcal I\) for some directed family
  \(\mathcal I\) of ideals. By \cref{principal-ideal-below-characterisation}, we
  have \(b \in \bigcup\mathcal I\), which means that there exists \(s\) in the
  domain of \(\mathcal I\) such that \(b \in \mathcal I_s\). Using
  \cref{principal-ideal-below-characterisation} once more, we see that
  \(\dset b \subseteq \mathcal I_s\), proving that \(\dset b\) is compact.
\end{proof}

\begin{theorem}\label{Idl-has-small-compact-basis}%
  If \({\prec}\) is reflexive, then a small \emph{compact} basis for
  \(\Idl{V}(B,{\prec})\) is given by the principal ideals
  \(\dset{(-)} : B \to \Idl{V}(B,{\prec})\). In particular,
  \(\Idl{V}(B,{\prec})\) is algebraic.
\end{theorem}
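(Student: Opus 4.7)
The plan is to derive this as a direct combination of results already established in the section. By \cref{Idl-has-small-basis}, the principal ideal assignment $\dset{(-)} : B \to \Idl{V}(B,{\prec})$ is already known to be a small basis for $\Idl{V}(B,{\prec})$, so the only additional content here is that, under reflexivity, every element in the image of this map is compact.

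For the compactness part, I would simply apply \cref{principal-ideals-are-compact} pointwise: since $\prec$ is assumed reflexive, we have $b \prec b$ for every $b : B$, and hence $\dset b$ is compact. Combined with the fact that $\dset{(-)}$ is a small basis, \cref{small-and-compact-basis} then immediately upgrades this to the statement that $\dset{(-)} : B \to \Idl{V}(B,{\prec})$ is a small \emph{compact} basis.

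Finally, the ``In particular'' clause follows from \cref{structural-algebraicity-if-small-compact-basis}, which tells us that having an (even specified) small compact basis implies algebraicity. There is no real obstacle here, since all the work was already done in proving \cref{Idl-has-small-basis} and \cref{principal-ideals-are-compact}; the theorem is essentially a corollary packaging them together under the reflexivity assumption.
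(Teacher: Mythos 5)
Your proposal is correct and matches the paper's own proof exactly: both cite \cref{Idl-has-small-basis}, \cref{principal-ideals-are-compact}, and \cref{small-and-compact-basis} in the same combination, with the algebraicity clause then following from \cref{structural-algebraicity-if-small-compact-basis}.
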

\begin{proof}
  This follows from
  \cref{Idl-has-small-basis,principal-ideals-are-compact,small-and-compact-basis}.
\end{proof}

\begin{theorem}\label{Idl-mediating-map}
  If \(f : B \to D\) is a monotone map to a \(\V\)-dcpo \(D\), then the map
  \(\bar{f} : \Idl{V}(B,{\prec}) \to D\) defined by taking an ideal \(I\) to the
  supremum of the directed family
  \({f \circ \fst} : \pa*{\Sigma_{b : B}\pa*{b \in I}} \to D\) is Scott
  continuous.
  Moreover, if \({\prec}\) is reflexive, then \(\bar{f}\) is the unique Scott
  continuous map making the diagram
  \[
    \begin{tikzcd}
      B \ar[rr,"f"] \ar[dr,"\dset{(-)}"'] & & D \\
      & \Idl{V}(B,{\prec}) \ar[ur,"\bar{f}"']
    \end{tikzcd}
  \]
  commute.
\end{theorem}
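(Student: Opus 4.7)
The plan is to verify each claim in turn: first that the family whose supremum defines $\bar{f}(I)$ really is directed, then monotonicity and Scott continuity of $\bar{f}$, then commutativity of the triangle under reflexivity, and finally uniqueness.

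For directedness of the family $f \circ \fst : (\Sigma_{b : B}(b \in I)) \to D$, I would use that $I$ is itself a directed subset of $B$ (as an ideal): inhabitedness of the domain follows from inhabitedness of $I$, and for semidirectedness, given $b_1, b_2 \in I$, directedness of $I$ produces $b \in I$ with $b_1, b_2 \prec b$, and then monotonicity of $f$ yields $f(b_1), f(b_2) \below f(b)$. Monotonicity of $\bar{f}$ is then immediate: if $I \subseteq J$ then the $I$-family is a subfamily (up to reindexing) of the $J$-family, so the supremum of the former lies below the supremum of the latter.

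For Scott continuity, let $\mathcal I : S \to \Idl{V}(B, {\prec})$ be directed with union $\bigcup \mathcal I$ (which is an ideal by \cref{union-of-directed-family-of-ideals-is-ideal}). One inequality $\bigsqcup_s \bar f(\mathcal I_s) \below \bar f(\bigcup \mathcal I)$ is immediate from monotonicity. For the reverse, I would show that every element of the defining family for $\bar f(\bigcup \mathcal I)$, namely $f(b)$ with $b \in \bigcup \mathcal I$, lies below some $\bar f(\mathcal I_s)$: unfolding $b \in \bigcup \mathcal I$ gives some $s : S$ with $b \in \mathcal I_s$, whence $f(b) \below \bar f(\mathcal I_s) \below \bigsqcup_s \bar f(\mathcal I_s)$. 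Taking the supremum over $b \in \bigcup \mathcal I$ yields the desired inequality.

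For the commutativity claim under reflexivity of ${\prec}$, I compute $\bar f(\dset b) = \bigsqcup\{f(a) \mid a \prec b\}$; by monotonicity of $f$ each such $f(a) \below f(b)$, while reflexivity gives $b \in \dset b$ and so $f(b)$ is among the family, yielding equality with $f(b)$. For uniqueness, suppose $g : \Idl{V}(B,{\prec}) \to D$ is Scott continuous with $g \circ \dset{(-)} = f$. By \cref{directed-sup-of-principal-ideals}, any ideal $I$ is the directed supremum of its principal subideals $\dset b$ for $b \in I$; applying Scott continuity of $g$ and the commutativity assumption gives $g(I) = \bigsqcup_{b \in I} g(\dset b) = \bigsqcup_{b \in I} f(b) = \bar f(I)$.

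I expect no genuine obstacle here; the only subtlety is being careful about the two different ``directed families'' in play when proving Scott continuity (the family $\mathcal I$ of ideals versus the defining families for each $\bar f(\mathcal I_s)$), and ensuring that directedness is used precisely where needed rather than implicitly assumed.
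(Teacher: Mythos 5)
Your proposal is correct and matches the paper's proof essentially step for step: directedness of the defining family via directedness of the ideal and monotonicity of \(f\), Scott continuity by showing \(\bar f(\bigcup \mathcal I)\) is the least upper bound of \(\bar f \circ \mathcal I\) (using that any \(b \in \bigcup\mathcal I\) lies in some \(\mathcal I_s\)), commutativity by antisymmetry using reflexivity for one direction and monotonicity of \(f\) for the other, and uniqueness via \cref{directed-sup-of-principal-ideals} and Scott continuity of \(g\). The only cosmetic difference is that you explicitly isolate monotonicity of \(\bar f\) as a separate step before Scott continuity, whereas the paper folds that observation directly into the upper-bound half of the continuity argument.
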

\begin{proof}
  Note that \({f \circ \fst} : \pa*{\Sigma_{b : B}\pa*{b \in I}} \to D\) is
  indeed a directed family, because \(I\) is a directed subset of \(B\) and
  \(f\) is monotone.
  For Scott continuity of \(\bar{f}\), assume that we have a directed family
  \(\mathcal I\) of ideals indexed by \(S : \V\).
  We first show that \(\bar{f}\pa*{\bigcup \mathcal I}\) is an upper bound of
  \(\bar{f} \circ \mathcal I\). So let \(s : S\) be arbitrary and note that
  \(\bar{f}\pa*{\bigcup\mathcal I} \supseteq \bar{f}\pa*{\mathcal I_s}\) as soon
  as \(\bar{f}\pa*{\bigcup \mathcal I} \aboveorder f(b)\) for every
  \(b \in \mathcal I_s\). But for such \(b\) we have
  \(b \in \bigcup \mathcal I\), so this holds.
  Now suppose that \(y\) is an upper bound of \(\bar{f} \circ \mathcal I\). To
  show that \(\bar{f}\pa*{\bigcup\mathcal I} \below y\), it is enough to prove
  that \(f(b) \below y\) for every \(b \in \mathcal I\). But for such \(b\),
  there exists \(s : S\) such that \(b \in \mathcal I_s\) and hence,
  \(f(b) \below \bar{f}\pa*{\mathcal I_s} \below y\).

  Finally, if \({\prec}\) is reflexive, then we prove that
  \(\bar{f}\pa*{\dset b} = f(b)\) for every \(b : B\) by antisymmetry.  Since
  \({\prec}\) is assumed to be reflexive, we have \(b \in \dset b\) and
  therefore, \(f(b) \below \bar{f}\pa*{\dset b}\).
  Conversely, for every \(c \prec b\) we have \(f(c) \below f(b)\) by
  monotonicity of \(f\) and hence, \(\bar{f}(\dset b) \below f(b)\), as desired.
  Uniqueness is proved easily, because if \(g : {\Idl{V}(B,{\prec}) \to D}\) is
  Scott continuous with \(g(\dset b) = f(b)\), then for an arbitrary ideal \(I\)
  we have
  \(g(I) = g(\bigcup_{b \in I}\dset b) = \bigcup_{b \in I}g(\dset b) =
  \bigcup_{b \in I}f(b) \equiv \bar{f}(I)\) by
  \cref{directed-sup-of-principal-ideals} and Scott continuity of \(g\).
\end{proof}

\subsection{Example: the ideal completion of the dyadics rationals}\label{sec:dyadics}

We describe an example of a continuous dcpo, built using the ideal completion,
that is not algebraic. In fact, this dcpo has no compact elements at all.

We inductively define a type and an order representing dyadic rationals
\(m / 2^n\) in the interval \((-1,1)\) for integers \(m,n\).
The~intuition for the upcoming definitions is the following: Start with the
point~\(0\) in the middle of the interval.
Then consider the two functions, respectively standing for \emph{left}
and \emph{right},
\begin{align*}
  l,r &: (-1,1) \to (-1,1) \\
  l(x) &\colonequiv (x-1)/2 \\
  r(x) &\colonequiv (x+1)/2
\end{align*}
that generate the dyadic rationals. Observe that \(l(x) < 0 < r(x)\) for every
\(x : (-1,1)\). Accordingly, we inductively define the following types.

\begin{definition}[{Dyadics}, \(\dyadics\), \(\prec\)]%
  The type of \emph{dyadics} \(\dyadics : \U_0\) is the inductive type with
  these three constructors
  \[
    \dyadicmiddle : \dyadics \quad \dyadicleft : {\dyadics \to \dyadics}
    \quad \dyadicright : {\dyadics \to \dyadics}.
  \]
  We also inductively define \({\prec} : {\dyadics \to \dyadics \to \U_0}\) as
  \begin{alignat*}{6}
    \dyadicmiddle & \prec \dyadicmiddle && \colonequiv \Zero &\quad
    {\dyadicleft (x)} & \prec {\dyadicmiddle} && \colonequiv \One
    &\quad
    {\dyadicright (x)} & \prec {\dyadicmiddle} && \colonequiv \Zero \\
    \dyadicmiddle & \prec {\dyadicleft (y)} && \colonequiv \Zero &
    {\dyadicleft (x)} & \prec {\dyadicleft (y)} && \colonequiv {x \prec y} &
    {\dyadicright (x)} & \prec {\dyadicleft (y)} && \colonequiv \Zero \\
    \dyadicmiddle & \prec {\dyadicright (y)} && \colonequiv \One &
    {\dyadicleft (x)} & \prec {\dyadicright (y)} && \colonequiv \One &
    {\dyadicright (x)} & \prec {\dyadicright (y)} && \colonequiv {x \prec y}.
    \!\!\!\!\qedhere
  \end{alignat*}
\end{definition}

\begin{lemma}
  The type of dyadics is a set with decidable equality.
\end{lemma}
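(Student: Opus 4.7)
The plan is first to establish that $\dyadics$ has decidable equality by double induction on the two arguments, and then to invoke Hedberg's theorem to conclude that $\dyadics$ is a set.

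For decidable equality, I would proceed by induction on both arguments simultaneously, producing nine cases. Three are the \emph{diagonal} cases $\dyadicmiddle = \dyadicmiddle$, $\dyadicleft(x) = \dyadicleft(y)$, and $\dyadicright(x) = \dyadicright(y)$: the first is trivially decided in the affirmative, and the latter two are decided via the inductive hypothesis together with injectivity of the constructors $\dyadicleft$ and $\dyadicright$. The remaining six are \emph{off-diagonal} cases involving distinct constructors; these are always decided in the negative, justified by the disjointness (no-confusion) property of constructors of inductive types.

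To make the appeals to injectivity and no-confusion explicit, I would define by double induction a family of codes $c : \dyadics \to \dyadics \to \U_0$, taking value $\One$ on $(\dyadicmiddle,\dyadicmiddle)$, value $c(x,y)$ on both $(\dyadicleft(x),\dyadicleft(y))$ and $(\dyadicright(x),\dyadicright(y))$, and value $\Zero$ on every other pair. A further straightforward induction shows that each $c(x,y)$ is decidable, that $c(x,x)$ is inhabited, and that from a witness of $c(x,y)$ one can construct a path $x = y$ by recursion on the structure of $x$ and $y$. Combined with the obvious map $(x = y) \to c(x,y)$ obtained by transporting the canonical inhabitant of $c(x,x)$, this yields decidable equality for $\dyadics$.

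Finally, applying Hedberg's theorem, which states that any type with decidable equality is a set, completes the proof. The main obstacle is purely the bookkeeping for the $3 \times 3$ case split; there is no conceptual difficulty, since no-confusion and constructor injectivity are standard consequences of the inductive definition, and the inductive hypothesis is readily available in the three recursive cases.
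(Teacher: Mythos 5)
Your proposal is correct and follows essentially the same route as the paper's (very terse) proof, which also derives sethood from decidable equality via Hedberg's theorem and defers decidable equality to "a standard inductive proof." You simply spell out what that standard inductive argument looks like (the $3\times 3$ case split and the encode--decode style family of codes), which is a faithful elaboration of what the paper leaves implicit.
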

\begin{proof}
  Sethood follows from having decidable equality by Hedberg's
  Theorem.
  To see that \(\dyadics\) has decidable equality, one can use a standard
  inductive proof.
\end{proof}

\begin{definition}[Trichotomy, density, having no endpoints]%
  We say that a binary relation \({<}\) on a type \(X\) is
  \begin{itemize}
  \item \emph{trichotomous} if exactly one of \(x < y\), \(x = y\) or
    \(y < x\) holds.
  \item \emph{dense} if for every \(x,y : X\), there exists some \(z : X\)
    such that \(x < z < y\).
  \item \emph{without endpoints} if for every \(x : X\), there exist some
    \(y,z : X\) with \(y < x < z\).\qedhere
  \end{itemize}
\end{definition}

\begin{lemma}\label{dyadics-order-properties}
  The relation \({\prec}\) on the dyadics is proposition-valued, transitive,
  irreflexive, trichotomous, dense and without endpoints.
\end{lemma}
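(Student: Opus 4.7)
The plan is to prove each of the six properties by structural induction on the dyadics, exploiting the fact that ${\prec}$ is defined by a nine-way case split on the top-level constructors $\dyadicmiddle$, $\dyadicleft(-)$, $\dyadicright(-)$.

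\emph{Proposition-valuedness} proceeds by induction on both arguments: every defining clause is either $\Zero$, $\One$, or a recursive instance $x' \prec y'$, and all three are propositions (the last by the inductive hypothesis). \emph{Irreflexivity} is an induction on $x$: the $\dyadicmiddle$ case is $\Zero$ directly, while both $\dyadicleft(x') \prec \dyadicleft(x')$ and $\dyadicright(x') \prec \dyadicright(x')$ unfold to $x' \prec x'$ and are refuted by the inductive hypothesis. \emph{Transitivity} $x \prec y \prec z \Rightarrow x \prec z$ is handled by a primary case split on the middle element $y$: when $y = \dyadicmiddle$, the hypotheses force $x = \dyadicleft(x')$ and $z = \dyadicright(z')$, whence $\dyadicleft(x') \prec \dyadicright(z') \equiv \One$; when $y = \dyadicleft(y')$, the hypothesis $x \prec y$ forces $x = \dyadicleft(x')$ with $x' \prec y'$, and a further case split on $z$ leaves only $z = \dyadicleft(z')$ nontrivial, which reduces to $x' \prec z'$ by the inductive hypothesis. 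The case $y = \dyadicright(y')$ is dual.

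For the remaining three properties the key ingredient is an auxiliary lemma: $\dyadicleft(x) \prec x \prec \dyadicright(x)$ for every $x : \dyadics$. This is a direct induction on $x$; the matching subcases $\dyadicleft(\dyadicleft(x')) \prec \dyadicleft(x')$ and $\dyadicright(x') \prec \dyadicright(\dyadicright(x'))$ unfold to $\dyadicleft(x') \prec x'$ and $x' \prec \dyadicright(x')$ and invoke the inductive hypothesis, while the $\dyadicmiddle$ and cross-constructor subcases are $\One$ outright. \emph{Having no endpoints} is then immediate: given $x$, take $y \colonequiv \dyadicleft(x)$ and $z \colonequiv \dyadicright(x)$. \emph{Trichotomy} follows from a double induction on $x$ and $y$: the three diagonal subcases reduce to the inductive hypothesis together with injectivity of the constructors, and the six off-diagonal subcases are decided outright because distinct constructors yield unequal terms and exactly one of the defining clauses evaluates to $\One$.

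The main work is the proof of \emph{density}. Given $x \prec y$, I would case-split on the shapes of $x$ and $y$ and choose an explicit interpolant. When $\dyadicleft(x') \prec \dyadicmiddle$, I would take $\dyadicleft(\dyadicright(x'))$, so that the first inequality unfolds to $x' \prec \dyadicright(x')$ (by the auxiliary lemma) and the second to $\One$; symmetrically for $\dyadicmiddle \prec \dyadicright(y')$ with interpolant $\dyadicright(\dyadicleft(y'))$. The fully mixed case $\dyadicleft(x') \prec \dyadicright(y')$ admits $\dyadicmiddle$ as interpolant directly. The matching-shape cases $\dyadicleft(x') \prec \dyadicleft(y')$ and $\dyadicright(x') \prec \dyadicright(y')$ unfold to $x' \prec y'$, the inductive hypothesis supplies $z'$ with $x' \prec z' \prec y'$, and $\dyadicleft(z')$ (resp.\ $\dyadicright(z')$) serves as the interpolant. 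The obstacle here is purely bookkeeping: the interpolants must be chosen so that the subsequent order checks reduce either to the auxiliary monotonicity lemma or to the inductive hypothesis, rather than spawning yet more subcases.
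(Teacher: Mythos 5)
Your proposal is correct and follows essentially the same route as the paper: a straightforward structural induction for proposition-valuedness, irreflexivity, transitivity and trichotomy; the auxiliary lemma $\dyadicleft(x) \prec x \prec \dyadicright(x)$ (the paper's equation $(\dagger)$) to witness no endpoints; and the same five-way case analysis with identical interpolant choices ($\dyadicright(\dyadicleft(y'))$, $\dyadicleft(\dyadicright(x'))$, $\dyadicmiddle$, $\dyadicleft(z')$, $\dyadicright(z')$) for density. The paper is terser on transitivity and trichotomy, but your fleshed-out case analysis matches the intended argument.
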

\begin{proof}
  That \({\prec}\) is proposition-valued, transitive, irreflexive and
  trichotomous is all proven by a straightforward induction on the definition on
  \(\dyadics\).
  That it has no endpoints is witnessed by the fact that for every
  \(x : \dyadics\), we have
  \begin{equation}\label{left-and-right-in-order}\tag{\(\dagger\)}
    {\dyadicleft x} \prec x \prec {\dyadicright x}
  \end{equation}
  which is proven by induction on \(\dyadics\) as well.
  We spell out the inductive proof that it is dense, making use of
  \eqref{left-and-right-in-order}. Suppose that \(x \prec y\). Looking at the
  definition of the order, we see that we need to consider five cases.
  \begin{itemize}
  \item If \(x = \dyadicmiddle\) and \(y = \dyadicright y'\), then we have
    \(x \prec \dyadicright(\dyadicleft(y')) \prec y\).
  \item If \(x = \dyadicleft(x')\) and \(y = \dyadicmiddle\), then we have
    \(x \prec \dyadicleft(\dyadicright(x')) \prec y\).
  \item If \(x = \dyadicleft(x')\) and \(y = \dyadicright y'\), then we have
    \(x \prec \dyadicmiddle \prec y\).
  \item If \(x = \dyadicright (x')\) and \(y = \dyadicright y'\), then we have
    \(x' \prec y'\) and therefore, by induction hypothesis, there exists
    \(z' : \dyadics\) such that \(x' \prec z' \prec y'\).
    Hence, \(x \prec \dyadicright(z') \prec y\).
  \item If \(x = \dyadicleft (x')\) and \(y = \dyadicleft (y')\), then
    \(x' \prec y'\) and so, by induction hypothesis, there exists
    \(z' : \dyadics\) such that \(x' \prec z' \prec y'\).
    Thus, \({x \prec \dyadicleft(z') \prec y}\). \qedhere
  \end{itemize}
\end{proof}

\begin{proposition}\label{dyadics-form-abstract-basis}%
  The pair \((\dyadics,{\prec})\) is an abstract \(\U_0\)-basis.
\end{proposition}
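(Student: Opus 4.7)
The plan is to unpack the four conditions in the definition of an abstract $\U_0$-basis and check each one against the properties of $\prec$ on $\dyadics$ established in \cref{dyadics-order-properties}. Proposition-valuedness and transitivity are given outright by that lemma, so the real content lies in verifying nullary and binary interpolation.

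For nullary interpolation, I would simply appeal to the fact that $\prec$ has no endpoints. Given any $a : \dyadics$, the inequality $\dyadicleft(a) \prec a$ established in the proof of \cref{dyadics-order-properties} (equation \eqref{left-and-right-in-order}) immediately furnishes an element strictly below $a$. So I can take $b \colonequiv \dyadicleft(a)$.

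For binary interpolation I would use trichotomy together with density. Suppose we have $a_1, a_2 \prec b$. By trichotomy of $\prec$, exactly one of $a_1 \prec a_2$, $a_1 = a_2$, or $a_2 \prec a_1$ holds, and in each case we can identify the larger of the two, call it $m$, which still satisfies $m \prec b$ (either directly or by assumption). Since $m \prec b$, density produces an $a : \dyadics$ with $m \prec a \prec b$. Then $a_1 \below m \prec a$ and $a_2 \below m \prec a$ (where $\below$ here denotes the reflexive closure), and using transitivity of $\prec$ together with reflexive substitution in the equality case, we conclude $a_1 \prec a$ and $a_2 \prec a$, as required.

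The only minor subtlety is the case $a_1 = a_2$, where one must not reason as if $\prec$ were reflexive; instead, one directly applies density to $a_1 \prec b$ to produce an interpolant $a$ with $a_1 \prec a \prec b$, and then $a_2 = a_1 \prec a$ by transport. Overall I expect no real obstacle: every ingredient has been isolated in \cref{dyadics-order-properties}, and the argument is a clean combination of trichotomy, density, and the ``no endpoints'' witness $\dyadicleft(a) \prec a$.
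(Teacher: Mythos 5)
Your proposal is correct and follows essentially the same route as the paper: proposition-valuedness and transitivity are read off directly from \cref{dyadics-order-properties}, nullary interpolation comes from the no-endpoints property (you additionally name the concrete witness $\dyadicleft(a)$, which the paper leaves implicit), and binary interpolation is settled by trichotomy plus density. Your reformulation of the trichotomy step—first extracting the larger element $m$ and then applying density once—is just a mild refactoring of the paper's explicit three-way case split; the underlying lemmas and the argument are the same.
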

\begin{proof}
  By \cref{dyadics-order-properties} the relation \({\prec}\) is
  proposition-valued and transitive.
  Moreover, that it has no endpoints implies unary interpolation.
  For binary interpolation, suppose that we have \(x \prec z\) and
  \(y \prec z\). Then by trichotomy there are three cases.
  \begin{itemize}
  \item If \(x = y\), then using density and our assumption that \(x \prec z\),
    there exists \(d : \dyadics\) with \(y = x \prec d \prec z\), as desired.
  \item If \(x \prec y\), then using density and our assumption that
    \(y \prec z\), there exists \(d : \dyadics\) with \(y \prec d \prec z\), but
    then also \(x \prec d\) since \(x \prec y\), so we are done.
  \item If \(x \prec y\), then the proof is similar to that of the second case.
    \qedhere
  \end{itemize}
\end{proof}

\begin{proposition}
  The ideal completion \(\Idlnum{0}(\dyadics,{\prec}) : \DCPOnum{0}{1}{0}\) is
  continuous with small basis
  \(\dset (-) : \dyadics \to \Idlnum{0}(\dyadics,{\prec})\). Moreover, it cannot be
  algebraic, because none of its elements are compact.
\end{proposition}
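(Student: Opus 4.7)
The first two claims—continuity and having $\dset{(-)}$ as a small basis—are immediate: we have already verified in Proposition~\ref{dyadics-form-abstract-basis} that $(\dyadics,\prec)$ is an abstract $\U_0$-basis, so Theorem~\ref{Idl-has-small-basis} applies directly to give both conclusions at once, with the ideal completion living in $\DCPOnum{0}{1}{0}$ by the universe bookkeeping noted after Lemma~\ref{Idl-is-dcpo}.

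The substantive claim is that no ideal of $(\dyadics,\prec)$ is compact, from which non-algebraicity follows. My plan is to apply the characterisation of the way-below relation on $\Idlnum{0}(\dyadics,\prec)$ from Lemma~\ref{Idl-way-below-characterisation}. Suppose for contradiction that some ideal $I$ is compact, i.e.\ $I \ll I$. By Lemma~\ref{Idl-way-below-characterisation}~\eqref{Idl-way-below-1}, there exists $b \in I$ with $I \subseteq \dset b$. Combining these, $b \in \dset b$, which unfolds to $b \prec b$, contradicting the irreflexivity of $\prec$ recorded in Lemma~\ref{dyadics-order-properties}.

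Finally, I would deduce that $\Idlnum{0}(\dyadics,\prec)$ is not algebraic. If it were, then every element $x$ would be the supremum of a directed family $\kappa_x$ of compact elements; since the indexing type of a directed family is inhabited, this would yield at least one compact element, contradicting what we just proved. (Indeed, since ideals are always inhabited, there exists some ideal to serve as $x$.) The only step requiring care is the appeal to Lemma~\ref{Idl-way-below-characterisation}, and this is routine; irreflexivity of $\prec$ is the only property of the dyadic order actually used for the non-algebraicity part.
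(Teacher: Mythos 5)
Your argument matches the paper's own proof: both cite \cref{Idl-has-small-basis} (via \cref{dyadics-form-abstract-basis}) for continuity and the small basis, and both derive the absence of compact ideals by extracting \(b \in I\) with \(I \subseteq \dset b\) from \cref{Idl-way-below-characterisation} and contradicting irreflexivity of \(\prec\). The extra paragraph spelling out that an algebraic dcpo must have at least one compact element is a correct but routine elaboration of what the paper leaves implicit.
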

\begin{proof}
  The first claim follows from \cref{Idl-has-small-basis}. Now suppose for a
  contradiction that we have a compact ideal \(I\). By
  \cref{Idl-way-below-characterisation}, there exists \(x \in I\) with
  \(I \subseteq \dset x\). But this implies \(x \prec x\), which is impossible
  as \({\prec}\) is irreflexive.
\end{proof}

\subsection{Ideal completions of small bases}%
\label{sec:ideal-completions-of-small-bases}%

Given a \(\V\)-dcpo \(D\) with a small basis \(\beta : B \to D\), we show that
there are two natural ways of turning \(B\) into an abstract basis. Either
define \(b \prec c\) by \(\beta(b) \ll \beta(c)\), or by
\(\beta (b) \below \beta(c)\).
Taking their \ideal{V} completions we show that the former yields a
continuous dcpo isomorphic to \(D\), while the latter yields an algebraic dcpo
(with a small compact basis) in which \(D\) can be embedded.
The latter fact will find application in
\cref{sec:exponentials-with-small-bases}, while the former gives us a
presentation theorem: every dcpo with a small basis is isomorphic to a dcpo of
ideals. In particular, if \(D : \DCPO{V}{U}{T}\) has a small basis, then it is
isomorphic to a dcpo with simpler universe parameters, namely
\(\Idl{V}\pa*{B,{\ll_{\beta}}} : \DCPO{V}{V^+}{V}\).
Of course a similar result holds for dcpos with a small compact basis. In
studying these variations, it is helpful to first develop some machinery that
all of them have in common.

Fix a \(\V\)-dcpo \(D\) with a small basis \(\beta : B \to D\). In what follows
we conflate the family
\(\ddset_{\beta} x : \pa*{\Sigma_{b : B}\pa*{\beta(b) \ll x}} \xrightarrow{\beta
  \circ \fst} D\) with its associated subset
\({\set{b \in B \mid \beta(b) \ll x}}\), formally given by the map
\(B \to \Omega_{\V}\) defined as
\(b \mapsto \exists_{b : B}\pa*{\beta(b) \ll x}\).

\begin{lemma}\label{ddsets-is-continuous}
  The assignment \(x : D \mapsto \ddset_{\beta} x : \powerset(B)\) is Scott
  continuous.
\end{lemma}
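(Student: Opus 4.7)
The plan is to verify Scott continuity directly, that is, to check that the assignment preserves directed suprema. First I would observe that the map is well-defined with codomain $\powerset_{\V}(B)$: since $\beta$ is a small basis, condition~\eqref{basis-small-way-below} ensures that $\beta(b) \ll x$ is $\V$-small for all $b$ and $x$, so that $\ddset_{\beta}x$ genuinely corresponds to a $\V$-valued subset of $B$.

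Monotonicity comes for free from \cref{way-below-properties}~\eqref{below-way-below-way-below}: if $x \below y$ and $\beta(b) \ll x$, then $\beta(b) \ll y$. So given a directed family $\alpha : I \to D$, it remains to prove the equality of subsets
\[
  \ddset_{\beta}\pa*{\textstyle\bigsqcup \alpha} = \bigcup_{i : I} \ddset_{\beta}(\alpha_i).
\]
The inclusion $\supseteq$ is immediate from monotonicity, since each $\alpha_i \below \bigsqcup \alpha$.

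For the reverse inclusion $\subseteq$, suppose $b : B$ satisfies $\beta(b) \ll \bigsqcup \alpha$. Here the naive attempt fails: the definition of way-below directly yields only some $i : I$ with $\beta(b) \below \alpha_i$, which is insufficient to conclude $b \in \ddset_{\beta}(\alpha_i)$. The key step is to interpolate. Since $D$ has a small basis, it is continuous by \cref{structural-continuity-if-small-basis}, so \cref{unary-interpolation} applies: there exists $d : D$ with $\beta(b) \ll d \ll \bigsqcup\alpha$. Unfolding $d \ll \bigsqcup \alpha$ against the directed family $\alpha$ produces some $i : I$ with $d \below \alpha_i$, and then \cref{way-below-properties}~\eqref{below-way-below-way-below} gives $\beta(b) \ll \alpha_i$, i.e.\ $b \in \ddset_{\beta}(\alpha_i)$, as required.

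The main obstacle is precisely this reverse inclusion, which forces the use of interpolation; once that is in hand the argument is mechanical. I would close by remarking that because $\powerset_{\V}(B)$ has directed suprema computed as unions, the equality above is exactly the statement that $x \mapsto \ddset_{\beta}x$ preserves the supremum of $\alpha$, completing Scott continuity.
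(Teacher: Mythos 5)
Your proof is correct and follows essentially the same route as the paper: reduce to showing monotonicity plus the inclusion \(\ddset_{\beta}(\bigsqcup\alpha) \subseteq \bigcup_i \ddset_{\beta}(\alpha_i)\), and handle the latter by interpolation. The only cosmetic difference is that the paper cites \cref{unary-interpolation-basis} to obtain an interpolant \(c : B\) with \(\beta(b) \ll \beta(c) \ll \bigsqcup\alpha\) (landing in the basis), whereas you invoke the general \cref{unary-interpolation} to obtain \(d : D\) with \(\beta(b) \ll d \ll \bigsqcup\alpha\); both then finish by unfolding the middle \(\ll\) against \(\alpha\) and applying \cref{way-below-properties}~\eqref{below-way-below-way-below}, so the argument is the same in substance.
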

\begin{proof}
  Note that \(\ddset_{\beta}(-)\) is monotone: if \(x \below y\) and \(b : B\)
  is such that \(\beta(b) \ll x\), then also \(\beta(b) \ll y\).  So it suffices
  to prove that
  \(\ddset_{\beta} \pa*{\bigsqcup \alpha} \subseteq \bigcup_{i :
    I}\ddset_{\beta}{\alpha_i}\). So suppose that \(b : B\) is such that
  \(\beta(b) \ll \bigsqcup \alpha\). By \cref{interpolation-basis}, there
  exists \(c : B\) with \(\beta(b) \ll \beta(c) \ll \bigsqcup \alpha\).
  Hence, there exists \(i : I\) such that
  \(\beta(b) \ll \beta(c) \below \alpha_i\) already, and therefore,
  \(b \in \bigcup_{j : J}\ddset_{\beta} {\alpha_j}\), as desired.
\end{proof}

By virtue of the fact that \(\beta\) is a small basis for \(D\), we know that
taking the directed supremum of \(\ddset_{\beta} x\) equals \(x\) for every
\(x : D\). In other words, \(\ddset_{\beta}{(-)}\) is a section of
\(\bigsqcup{(-)}\). The following lemma gives conditions for the other composite
to be an inflation or a deflation.

\begin{lemma}\label{inflation-deflation-criteria}
  Let \(I : \powerset_{\V}(B)\) be a subset of \(B\) such that its associated
  family
  \({\bar{I} : \pa*{\Sigma_{b : B}\pa*{b \in I}} \xrightarrow{\beta \circ \fst}
  D}\) is directed.
  \begin{enumerate}[(i)]
  \item\label{inflation-criterion} If the conjunction of
    \(\beta(b) \below \beta(c)\) and \(c \in I\) implies \(b \in I\), then
    \(\ddset_{\beta} \bigsqcup\bar{I} \subseteq I\).
  \item\label{deflation-criterion} If for every \(b \in I\) there exists
    \(c \in I\) such that \(\beta(b) \ll \beta(c)\), then
    \(I \subseteq \ddset_{\beta} \bigsqcup\bar{I}\).
  \end{enumerate}
  In particular, if both conditions hold, then
  \(I = \ddset_{\beta}\bigsqcup \bar{I}\).
\end{lemma}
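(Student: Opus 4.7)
The plan is to prove each of the two inclusions by a short direct argument, invoking only the definition of the way-below relation (\cref{def:way-below}) applied to the directed family $\bar{I}$ itself, together with the composition property \eqref{below-way-below-way-below} from \cref{way-below-properties}. Both inclusions are propositions (since membership in a subset is), so working with existentials obtained from the way-below definition poses no constructive difficulty.

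For \eqref{inflation-criterion}, suppose $b : B$ with $\beta(b) \ll \bigsqcup \bar{I}$. The family $\bar{I}$ is directed by hypothesis, and trivially $\bigsqcup \bar{I} \below \bigsqcup \bar{I}$, so by the definition of way-below applied to $\bar{I}$, there exists $(c, p) : \Sigma_{b : B}(b \in I)$ such that $\beta(b) \below \bar{I}(c, p) \equiv \beta(c)$. Together with $c \in I$, the hypothesis of \eqref{inflation-criterion} yields $b \in I$, establishing $\ddset_{\beta} \bigsqcup \bar{I} \subseteq I$.

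For \eqref{deflation-criterion}, suppose $b \in I$. By hypothesis, there exists $c \in I$ with $\beta(b) \ll \beta(c)$. Now $\beta(c)$ is one of the terms of the family $\bar{I}$ (namely $\bar{I}(c, -)$), so $\beta(c) \below \bigsqcup \bar{I}$. Then $\beta(b) \ll \beta(c) \below \bigsqcup \bar{I}$, and by \eqref{below-way-below-way-below} of \cref{way-below-properties} we conclude $\beta(b) \ll \bigsqcup \bar{I}$, i.e.\ $b \in \ddset_{\beta} \bigsqcup \bar{I}$.

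The final ``in particular'' clause is then immediate from antisymmetry of subset inclusion. The only step requiring mild care is the unpacking in \eqref{inflation-criterion} of what $\beta(b) \ll \bigsqcup \bar{I}$ gives us when instantiated at the family $\bar{I}$, but this is precisely the definition of the way-below relation, so there is no real obstacle here.
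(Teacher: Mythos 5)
Your proof is correct and follows essentially the same route as the paper's: in \eqref{inflation-criterion} you instantiate the definition of the way-below relation at the directed family $\bar{I}$ itself to extract $c \in I$ with $\beta(b) \below \beta(c)$ and then apply the lower-set hypothesis, and in \eqref{deflation-criterion} you combine roundedness with $\beta(c) \below \bigsqcup \bar{I}$ and \cref{way-below-properties}\eqref{below-way-below-way-below}; this is exactly what the paper does, only stated more explicitly.
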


Note that the first condition says that \(I\) is a lower set with respect to the
order of \(D\), while the second says that \(I\) is round with respect to the
way-below relation.

\begin{proof}
  \eqref{inflation-criterion} Suppose that \(I\) is a lower set and let \(b : B\)
  be such that \(\beta(b) \ll \bigsqcup \bar{I}\). Then there exists \(c \in I\)
  with \(\beta(b) \below \beta(c)\), which implies \(b \in I\) as desired,
  because \(I\) is assumed to be a lower set.
  \eqref{deflation-criterion} Assume that \(I\) is round and let \(b \in I\) be
  arbitrary. By roundness of \(I\), there exists \(c \in I\) such that
  \(\beta(b) \ll \beta(c)\). But then
  \(\beta(b) \ll \beta(c) \below \bigsqcup\bar{I}\), so that
  \(b \in \ddset_{\beta}\bigsqcup \bar{I}\), as we wished to show.
\end{proof}

\begin{lemma}\label{semidirected-lower-set-criteria}
  Suppose that we have \({\prec} : B \to B \to \V\) and let \(x : D\) be
  arbitrary.
  \begin{enumerate}[(i)]
  \item\label{lowerset-criterion} If \(b \prec c\) implies
    \(\beta(b) \below \beta(c)\) for every \(b,c : B\), then
    \(\ddset_{\beta} x\) is a lower set w.r.t.~\({\prec}\).
  \item\label{semidirected-criterion} If \(\beta(b) \ll \beta(c)\) implies
    \(b \prec c\) for every \(b,c : B\), then \(\ddset_{\beta} x\) is
    semidirected w.r.t.~\(\prec\).
  \end{enumerate}
\end{lemma}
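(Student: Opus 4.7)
The plan is to prove each part by unfolding definitions and then invoking two previously established facts: for (i), the mixed transitivity of $\below$ and $\ll$ from Lemma \ref{way-below-properties}\eqref{below-way-below-way-below}, and for (ii), binary interpolation in the basis (Lemma \ref{binary-interpolation-basis}). Both parts are propositions, so no choice-like issues arise.

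For \eqref{lowerset-criterion}, I would fix $x : D$, assume $c \in \ddset_\beta x$ (i.e.\ $\beta(c) \ll x$) and $b \prec c$, and must show $b \in \ddset_\beta x$. The hypothesis gives $\beta(b) \below \beta(c)$. Combining this with $\beta(c) \ll x$ via Lemma \ref{way-below-properties}\eqref{below-way-below-way-below} (taking the right-hand $\below$ to be reflexivity) yields $\beta(b) \ll x$, which is exactly $b \in \ddset_\beta x$.

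For \eqref{semidirected-criterion}, I would fix $x : D$ and $b_1, b_2 \in \ddset_\beta x$, so that $\beta(b_1) \ll x$ and $\beta(b_2) \ll x$. Since we are proving a proposition, we may apply binary interpolation in the basis (Lemma \ref{binary-interpolation-basis}) to obtain some $b : B$ with $\beta(b_1), \beta(b_2) \ll \beta(b)$ and $\beta(b) \ll x$. The last inequality gives $b \in \ddset_\beta x$, and the assumed implication $\beta(-) \ll \beta(-) \Rightarrow {-} \prec {-}$ then yields $b_1 \prec b$ and $b_2 \prec b$, witnessing semidirectedness with respect to $\prec$.

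I do not anticipate any serious obstacle; the only subtle point is that in \eqref{semidirected-criterion} one really needs the \emph{basis} version of interpolation (Lemma \ref{binary-interpolation-basis}) rather than the generic version (Lemma \ref{binary-interpolation}), since the interpolant must be of the form $\beta(b)$ for some $b : B$ in order to apply the hypothesis.
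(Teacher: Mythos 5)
Your proof is correct and uses the same key ingredients as the paper: for \eqref{lowerset-criterion}, that $\ddset_\beta x$ is a lower set for $\below$ (via \cref{way-below-properties}\eqref{below-way-below-way-below}); for \eqref{semidirected-criterion}, binary interpolation in the basis (\cref{binary-interpolation-basis}). Your write-up of \eqref{semidirected-criterion} is in fact slightly more careful than the printed proof: binary interpolation produces a \emph{single} interpolant $b : B$ with $\beta(b_1), \beta(b_2) \ll \beta(b) \ll x$, and this single $b$ is what witnesses semidirectedness via the hypothesis $\beta(-) \ll \beta(-) \Rightarrow {-} \prec {-}$; the paper's prose momentarily names two interpolants $c_1, c_2$, which by itself would not suffice, so your version is the one to trust. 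Your remark that the basis version of interpolation (rather than the generic \cref{binary-interpolation}) is essential here — because the interpolant must lie in the image of $\beta$ — is exactly right.
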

\begin{proof}
  \eqref{lowerset-criterion} This is immediate, because \(\ddset{\beta} x\) is a
  lower set with respect to the order relation on \(D\). %
  \eqref{semidirected-criterion} Suppose that the condition holds and that we have
  \({b_1,b_2} : B\) such that \(\beta(b_1),\beta(b_2) \ll x\). Using binary
  interpolation in the basis, there exist \({c_1,c_2} : B\) with
  \(\beta(b_1) \ll \beta(c_1) \ll x\) and \(\beta(b_2) \ll \beta(c_2) \ll
  x\). Hence, \(c_1,c_2 \in \ddset_{\beta x}\) and moreover, by assumption we
  have \(b_1 \prec c_1\) and \(b_2 \prec c_2\), as desired.
\end{proof}

\subsubsection{Ideal completion with respect to the way-below relation}%

\begin{lemma}\label{way-below-abstract-basis}
  If \(\beta : B \to D\) is a small basis for a \(\V\)-dcpo \(D\), then
  \(\pa*{B,\ll_{\beta}}\) is an abstract \(\V\)-basis where \(b \ll_{\beta} c\)
  is defined as \(\beta(b) \ll \beta(c)\).
\end{lemma}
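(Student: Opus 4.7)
The plan is to verify each of the four defining conditions of an abstract $\V$-basis in turn, with most of the work already done by earlier results in this section.

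First, note that smallness of the relation is not a problem: condition~(ii) of being a small basis ensures that $\beta(b) \ll \beta(c)$ is $\V$-small for every $b,c : B$, so $\ll_{\beta}$ is (equivalent to) a $\V$-valued relation. That $\ll_{\beta}$ is proposition-valued and transitive follows directly from \cref{way-below-properties}~(\ref{way-below-prop-valued}) and~(\ref{way-below-transitive}), applied pointwise to $\beta$.

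For nullary interpolation, given $a : B$ we need $b : B$ with $\beta(b) \ll \beta(a)$. This is precisely \cref{nullary-interpolation-basis} applied to the element $\beta(a) : D$, which yields some $b : B$ with $\beta(b) \ll \beta(a)$, i.e.\ $b \ll_\beta a$. For binary interpolation, given $a_1, a_2 \ll_\beta b$ (that is, $\beta(a_1), \beta(a_2) \ll \beta(b)$), we apply \cref{binary-interpolation-basis} with $x \colonequiv \beta(a_1)$, $y \colonequiv \beta(a_2)$ and $z \colonequiv \beta(b)$. This produces some $a : B$ with $\beta(a_1), \beta(a_2) \ll \beta(a)$ and $\beta(a) \ll \beta(b)$, which unfolds to $a_1, a_2 \ll_\beta a \ll_\beta b$, as required.

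There is no real obstacle here: each clause is a direct consequence of a lemma already established, with the interpolation lemmas in the basis (\cref{nullary-interpolation-basis,binary-interpolation-basis}) doing the essential work. The only subtlety worth flagging is the smallness check, which relies critically on clause~(ii) in the definition of a small basis.
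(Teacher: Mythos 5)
Your proof is correct and follows essentially the same route as the paper's: proposition-valuedness and transitivity come from the general properties of the way-below relation, and the two interpolation clauses are exactly \cref{nullary-interpolation-basis} and \cref{binary-interpolation-basis} applied through $\beta$. The paper also isolates the smallness observation (that clause~(ii) of the small-basis definition lets one substitute $\beta(b)\ll\beta(c)$ by an equivalent type in $\V$) into a separate remark immediately following the lemma, which you have correctly folded into the proof itself.
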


\begin{remark}
  The definition of an abstract \(\V\)-basis requires the relation on it to be
  \(\V\)-valued. Hence, for the lemma to make sense we appeal to the fact that
  \(\beta\) is a \emph{small} basis which tells us that we can substitute
  \(\beta(b) \ll \beta(c)\) by an equivalent type in \(\V\).
\end{remark}

\begin{proof}[Proof of \cref{way-below-abstract-basis}]
  The way-below relation is proposition-valued and transitive. Moreover,
  \({\ll_{\beta}}\) satisfies nullary and binary interpolation precisely because
  we have nullary and binary interpolation in the basis for the way-below
  relation by \cref{interpolation-basis}.
\end{proof}

The following theorem is a presentation result for dcpos with a small basis:
every such dcpo can be presented as the round ideal completion of its small
basis.
\begin{theorem}\label{Idl-iso-continuous}%
  The map \(\ddset_{\beta}{(-)} : D \to \Idl{V}\pa*{B,{\ll_{\beta}}}\) is an
  isomorphism of \(\V\)-dcpos.
\end{theorem}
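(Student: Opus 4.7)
The plan is to exhibit an explicit inverse and then use the machinery established earlier in the subsection to check that the two maps compose to identities.

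First I would verify that $\ddset_\beta(-)$ lands in $\Idl{V}(B, {\ll_\beta})$, i.e.\ that $\ddset_\beta x$ is an ideal with respect to $\ll_\beta$ for every $x : D$. It is a lower set w.r.t.\ $\ll_\beta$ by transitivity of $\ll$. For directedness w.r.t.\ $\ll_\beta$, I would invoke \cref{semidirected-lower-set-criteria} with ${\prec} \colonequiv {\ll_\beta}$: condition \eqref{semidirected-criterion} is immediate because $\ll_\beta$ is defined as $\ll$ on the basis, and inhabitedness is part of the basis axioms. This gives a well-defined map into the ideal completion.

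Next I would define a candidate inverse $\Phi : \Idl{V}(B,{\ll_\beta}) \to D$ by $\Phi(I) \colonequiv \bigsqcup \bar I$, where $\bar I$ is the directed family $\pa*{\Sigma_{b : B}\pa*{b \in I}} \xrightarrow{\beta \circ \fst} D$. This family is directed in $D$ because $I$ is directed w.r.t.\ ${\ll_\beta}$ and $\ll$ implies $\below$. Both $\ddset_\beta(-)$ and $\Phi$ are clearly monotone: $\ddset_\beta$ by \cref{way-below-properties}\eqref{below-way-below-way-below}, and $\Phi$ because $I \subseteq J$ makes $\bar I$ factor through $\bar J$. Since monotone mutual inverses between $\V$-dcpos automatically preserve directed suprema, once I establish bijectivity the isomorphism of $\V$-dcpos follows (one could alternatively cite Scott continuity of $\ddset_\beta$ from \cref{ddsets-is-continuous} directly).

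The bulk of the proof is the two inverse identities. The identity $\Phi \circ \ddset_\beta = \id_D$ is exactly the approximation axiom of a small basis: $\bigsqcup \overline{\ddset_\beta x} = x$ by definition. For $\ddset_\beta \circ \Phi = \id$, I would apply \cref{inflation-deflation-criteria} to the subset $I$ underlying an arbitrary ideal. For the inflation condition \eqref{inflation-criterion}, suppose $\beta(b) \below \beta(c)$ with $c \in I$; by roundedness of $I$ (in the $\ll_\beta$-sense) there exists $c' \in I$ with $c \ll_\beta c'$, so $\beta(b) \below \beta(c) \ll \beta(c')$ gives $\beta(b) \ll \beta(c')$, i.e.\ $b \ll_\beta c'$, whence $b \in I$ because $I$ is a $\ll_\beta$-lower set. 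For the deflation condition \eqref{deflation-criterion}, roundedness of $I$ w.r.t.\ $\ll_\beta$ says precisely that for every $b \in I$ there exists $c \in I$ with $\beta(b) \ll \beta(c)$. Hence $I = \ddset_\beta(\bigsqcup \bar I)$.

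I do not foresee a genuine obstacle: the only subtlety is recognising that the two conditions of \cref{inflation-deflation-criteria} match the defining properties of an $\ll_\beta$-ideal (lower set and roundedness), with the first one needing one application of transitivity to bridge $\below$ and $\ll$. Once that is observed, the verification is mechanical.
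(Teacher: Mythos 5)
Your proposal follows essentially the same route as the paper: well-definedness of $\ddset_\beta(-)$ via \cref{semidirected-lower-set-criteria} with $\prec \colonequiv \ll_\beta$, and the inverse identity $\ddset_\beta \circ \bigsqcup = \id$ via \cref{inflation-deflation-criteria}, matching its two conditions to the lower-set and roundedness axioms of a $\ll_\beta$-ideal in exactly the way the paper does. The one cosmetic difference is in how Scott continuity is dispatched: the paper cites \cref{ddsets-is-continuous} directly and identifies $\bigsqcup$ as the induced map of \cref{Idl-mediating-map}, whereas you observe that both maps are monotone and then invoke the general fact that monotone mutually inverse maps between dcpos are automatically Scott continuous — a slightly more elementary packaging, which you also note can be replaced by the paper's citation. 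Both are correct.
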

\begin{proof}
  First of all, we should check that the map is well-defined, i.e.\ that
  \(\ddset_{\beta} x\) is an \(\pa*{B,{\ll_{\beta}}}\)-ideal. It is an inhabited
  subset by nullary interpolation in the basis and a semidirected lower set
  because the criteria of \cref{semidirected-lower-set-criteria} are satisfied
  when taking \({\prec}\) to be \({\ll_{\beta}}\).
  Secondly, the map \(\ddset_{\beta}{(-)}\) is Scott continuous by
  \cref{ddsets-is-continuous}.

  Now notice that the map \(\beta : \pa*{B,{\ll_\beta}} \to D\) is monotone and
  that the Scott continuous map it induces by \cref{Idl-mediating-map} is
  exactly the map \(\bigsqcup : \Idl{V}\pa*{B,{\ll_\beta}} \to D\) that takes an
  ideal \(I\) to the supremum of its associated directed family
  \(\beta \circ \fst : \pa*{\Sigma_{b : B}\pa*{b \in I}} \to D\).

  Since \(\beta\) is a basis for \(D\), we know that
  \(\bigsqcup{\ddset_{\beta} x} = x\) for every \(x : D\). So it only
  remains to show that \(\ddset_{\beta} \circ \bigsqcup\) is the identity on
  \(\Idl{V}\pa*{B,{\ll_{\beta}}}\), for which we will use
  \cref{inflation-deflation-criteria}.
  So suppose that \(I : \Idl{V}\pa*{B,{\ll_{\beta}}}\) is arbitrary. Then we
  only need to prove that
  \begin{enumerate}[(i)]
  \item\label{to-prove-lower} the conjunction of \(\beta(b) \below \beta(c)\)
    and \(c \in I\) implies \(b \in I\) for every \(b,c : B\);
  \item\label{to-prove-round} for every \(b \in I\), there exists \(c \in I\)
    such that \(\beta(b) \ll \beta(c)\).
  \end{enumerate}
  Note that \eqref{to-prove-round} is just saying that \(I\) is a round ideal
  w.r.t.\ \({\ll_{\beta}}\), so this holds.
  For \eqref{to-prove-lower}, suppose that \(\beta(b) \below \beta(c)\) and
  \(c \in I\). By roundness of \(I\), there exists \(c' \in I\) such that
  \(c \ll_{\beta} c'\). But then \(\beta(b) \below \beta(c) \ll \beta(c')\), so
  that \(b \ll_{\beta} c'\) which implies that \(b \in I\), because ideals are
  lower sets.
\end{proof}%

\subsubsection{Ideal completion with respect to the order relation}

\begin{lemma}\label{below-abstract-basis}
  If \(\beta : B \to D\) is a small basis for a \(\V\)-dcpo \(D\), then
  \(\pa*{B,\below_{\beta}}\) is an abstract \(\V\)-basis where
  \(b \below_{\beta} c\) is defined as \(\beta(b) \below \beta(c)\).
\end{lemma}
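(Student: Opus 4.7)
The plan is to verify each of the axioms required of an abstract $\V$-basis, noting that essentially everything follows from $\below$ being a preorder, together with local smallness of $D$ from \cref{locally-small-if-small-basis}.

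First I would check that $\below_\beta$ takes values in $\V$. This is not immediate from the definition since $\beta(b) \below \beta(c)$ is a priori a proposition at the universe level of the order relation of $D$, but it follows from \cref{locally-small-if-small-basis}, which tells us that $D$ is locally small, so $\beta(b) \below \beta(c)$ is equivalent to a $\V$-small proposition. (As in \cref{way-below-abstract-basis}, we tacitly substitute this equivalent $\V$-small proposition.) Proposition-valuedness and transitivity of $\below_\beta$ are then inherited directly from $\below$.

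For the two interpolation properties, the key observation is that, unlike $\ll$, the relation $\below$ is reflexive, which trivialises both conditions. For nullary interpolation, given $a : B$, it suffices to take the element $a$ itself, since $\beta(a) \below \beta(a)$. For binary interpolation, given $a_1, a_2 : B$ with $a_1, a_2 \below_\beta b$, we simply take the interpolant to be $b$ itself: we have $\beta(a_1) \below \beta(b)$ and $\beta(a_2) \below \beta(b)$ by assumption, and $\beta(b) \below \beta(b)$ by reflexivity.

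There is no real obstacle here; in contrast to the way-below case (\cref{way-below-abstract-basis}), where the interpolation properties for $\ll_\beta$ had to be bootstrapped from \cref{nullary-interpolation-basis,binary-interpolation-basis}, the order-reflexivity of $\below$ makes the interpolation properties for $\below_\beta$ automatic. So the proof is a short matter of assembling these observations.
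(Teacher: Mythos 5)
Your proof is correct and follows essentially the same route as the paper: the paper simply notes that $\below_\beta$ is reflexive and invokes \cref{abstract-basis-if-reflexive} (which says a proposition-valued, transitive, reflexive relation is an abstract basis), with the $\V$-smallness point handled in a subsequent remark via \cref{locally-small-if-small-basis} exactly as you do. The only difference is that you inline the short argument for why reflexivity yields both interpolation properties rather than citing that lemma.
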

\begin{proof}
  The relation \({\below_{\beta}}\) is reflexive, so this follows from
  \cref{abstract-basis-if-reflexive}.
\end{proof}

\begin{remark}
  The definition of an abstract \(\V\)-basis requires the relation on it to be
  \(\V\)-valued. Hence, for the lemma to make sense we appeal to
  \cref{locally-small-if-small-basis} to know that \(D\) is locally small which
  tells us that we can substitute \(\beta(b) \below \beta(c)\) by an equivalent
  type in \(\V\).
\end{remark}

\begin{theorem}\label{Idl-iso-algebraic}
  The map \(\ddset_{\beta}{(-)} : D \to \Idl{V}\pa*{B,{\below_{\beta}}}\) is the
  embedding in an embedding-projection pair.
  In particular, \(D\) is a retract of the algebraic dcpo
  \(\Idl{V}\pa*{B,\below_{\beta}}\) that has a small compact basis.
  Moreover, if \(\beta\) is a small \emph{compact} basis, then the map is an
  isomorphism.
\end{theorem}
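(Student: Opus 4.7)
The plan is to verify that $\ddset_\beta(-)$ and the supremum map $\bigsqcup : \Idl{V}(B,{\below_\beta}) \to D$ form an embedding-projection pair, and then sharpen this to an isomorphism under the compact-basis hypothesis. The proof will essentially be an exercise in applying the machinery of \cref{inflation-deflation-criteria,semidirected-lower-set-criteria,ddsets-is-continuous,Idl-mediating-map} that has been set up for exactly this purpose.

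First I would check that $\ddset_\beta x$ really is an element of $\Idl{V}(B,{\below_\beta})$. The subset $\{b \in B \mid \beta(b) \ll x\}$ is a lower set with respect to $\below_\beta$ by \cref{way-below-properties}(iii): if $\beta(b) \below \beta(c) \ll x$ then $\beta(b) \ll x$. It is semidirected with respect to $\below_\beta$ by \cref{semidirected-lower-set-criteria}(ii) applied with ${\prec} \colonequiv {\below_\beta}$, using that way-below implies below. Inhabitedness follows from \cref{nullary-interpolation-basis}. The map $\ddset_\beta(-)$ is Scott continuous by \cref{ddsets-is-continuous}. On the other side, $\beta : (B,{\below_\beta}) \to D$ is monotone by definition, so \cref{Idl-mediating-map} produces a Scott continuous map $\bar\beta : \Idl{V}(B,{\below_\beta}) \to D$ that sends $I$ to the supremum of its associated directed family; this is exactly the map we want as the projection.

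Next, I would verify the two equations making this a section-retraction-deflation situation. That $\bar\beta \circ \ddset_\beta$ is the identity on $D$ is immediate from $\beta$ being a basis: $\bar\beta(\ddset_\beta x) = \bigsqcup \ddset_\beta x = x$. To show that $\ddset_\beta \circ \bar\beta$ is a deflation, i.e.\ $\ddset_\beta(\bar\beta(I)) \subseteq I$ for every ideal $I$, I invoke \cref{inflation-deflation-criteria}(i) applied to $I$: the required lower-set condition $\beta(b) \below \beta(c) \wedge c \in I \Rightarrow b \in I$ is precisely the assumption that $I$ is a lower set with respect to $\below_\beta$. Together with \cref{Idl-has-small-compact-basis} (using reflexivity of $\below_\beta$), this gives the embedding-projection pair and hence exhibits $D$ as a Scott continuous retract of the algebraic dcpo $\Idl{V}(B,{\below_\beta})$, which has a small compact basis.

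Finally, I would handle the compact-basis case by upgrading the deflation to an equality. It suffices to establish the reverse inclusion $I \subseteq \ddset_\beta(\bar\beta(I))$, for which I use \cref{inflation-deflation-criteria}(ii): for every $b \in I$, I need some $c \in I$ with $\beta(b) \ll \beta(c)$. Since $I$ is directed with respect to $\below_\beta$, taking $b_1 = b_2 = b$ in semidirectedness yields $c \in I$ with $\beta(b) \below \beta(c)$, and then compactness of $\beta(b)$ together with \cref{way-below-properties}(iii) promotes this to $\beta(b) \ll \beta(c)$. I expect the main subtlety to be keeping straight which notion (way-below versus below) is being used where: the subset $\ddset_\beta x$ is carved out using $\ll$, but we view it as an ideal with respect to $\below_\beta$, and the equality of the two relations when the basis is compact is precisely what collapses the embedding-projection pair to an isomorphism.
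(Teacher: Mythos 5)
Your proof is correct and follows the paper's argument essentially step for step: checking that $\ddset_\beta x$ is a $\below_\beta$-ideal, invoking \cref{ddsets-is-continuous,Idl-mediating-map} to get the Scott continuous pair, and then applying \cref{inflation-deflation-criteria} for the deflation (respectively, isomorphism) conclusion. The only divergences are cosmetic: you prove the lower-set property of $\ddset_\beta x$ directly from \cref{way-below-properties}(iii) rather than via \cref{semidirected-lower-set-criteria}(i), and in the compact-basis case you take a small detour through directedness to find $c$, whereas the paper simply takes $c \colonequiv b$ (since $\beta(b) \ll \beta(b)$ already holds by compactness).
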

\begin{proof}
  First of all, we should check that the map is well-defined, i.e.\ that
  \(\ddset_{\beta} x\) is an \(\pa*{B,{\below_{\beta}}}\)-ideal. It is an
  inhabited subset by nullary interpolation in the basis and a semidirected
  lower set because the criteria of \cref{semidirected-lower-set-criteria} are
  satisfied when taking \({\prec}\) to be \({\below_{\beta}}\).
  Secondly, the map \(\ddset_{\beta}{(-)}\) is Scott continuous by
  \cref{ddsets-is-continuous}.

  Now notice that the map \(\beta : \pa*{B,{\below_\beta}} \to D\) is monotone
  and that the continuous map it induces by \cref{Idl-mediating-map} is
  exactly the map \({\bigsqcup : \Idl{V}\pa*{B,{\below_\beta}} \to D}\) that takes
  an ideal \(I\) to the least upper bound of its associated directed family
  \({\beta \circ \fst : {\pa*{\Sigma_{b : B}\pa*{b \in I}} \to D}}\).

  Since \(\beta\) is a basis for \(D\), we know that
  \(\bigsqcup{\ddset_{\beta} x} = x\) for every \(x : D\). So it only remains to
  show that \(\ddset_{\beta} \circ \bigsqcup\) is a deflation, for which we will
  use \cref{inflation-deflation-criteria}.
  So suppose that \(I : \Idl{V}\pa*{B,{\below_{\beta}}}\) is arbitrary. Then we
  only need to prove that the conjunction of \(\beta(b) \below \beta(c)\) and
  \(c \in I\) implies \(b \in I\), but this holds, because \(I\) is a lower set
  with respect to \({\below_{\beta}}\).

  Finally, assume that \(\beta\) is a small compact basis. We show that
  \(\ddset_{\beta} \circ \bigsqcup\) is also inflationary in this case. So let
  \(I\) be an arbitrary ideal. By \cref{inflation-deflation-criteria} it is
  enough to show that for every \(b \in I\), there exists \(c \in I\) such that
  \(\beta(b) \ll \beta(c)\). But by assumption, \(\beta(b)\) is compact, so we
  can simply take \(c\) to be \(b\).
\end{proof}

Combining \cref{small-basis-closed-under-continuous-retracts}, \cref{Idl-has-small-basis,Idl-iso-continuous}, and
\cref{Idl-iso-algebraic,Idl-has-small-compact-basis}, we obtain the following
result:
\begin{corollary}\hfill%
  \begin{enumerate}[(i)]
  \item A \(\V\)-dcpo has a small basis if and only if it is isomorphic
    to \(\Idl{V}(B,{\prec})\) for an abstract basis \((B,{\prec})\).
  \item A \(\V\)-dcpo has a small compact basis if and only if it is
    isomorphic to \(\Idl{V}(B,{\prec})\) for an abstract basis \((B,{\prec})\)
    where \({\prec}\) is reflexive.
  \item A \(\V\)-dcpo has a small basis if and only if it is a retract of a
    \(\V\)-dcpo with a small compact basis.

    Hence every continuous \(\V\)-dcpo with a small basis is a retract of some algebraic \(\V\)-dcpo.
  \end{enumerate}
  In particular, every \(\V\)-dcpo with a small basis is isomorphic to one whose
  order takes values in \(\V\) and whose carrier lives in \(\V^+\).
\end{corollary}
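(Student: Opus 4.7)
The plan is to derive the three equivalences by combining the presentation theorems \cref{Idl-iso-continuous,Idl-iso-algebraic} with the closure result \cref{small-basis-closed-under-continuous-retracts}, and the constructions of small (compact) bases on ideal completions from \cref{Idl-has-small-basis,Idl-has-small-compact-basis}. The overall argument is essentially compilation: the heavy lifting has already been done, so the main task is keeping track of which abstract basis relation (\({\ll_\beta}\) or \({\below_\beta}\)) to use in each case.

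For (i), the forward direction uses \cref{way-below-abstract-basis} to turn a small basis \(\beta : B \to D\) into the abstract \(\V\)-basis \(\pa*{B, {\ll_\beta}}\), and then \cref{Idl-iso-continuous} supplies the isomorphism \(D \cong \Idl{V}\pa*{B, {\ll_\beta}}\). The converse combines \cref{Idl-has-small-basis}, which equips \(\Idl{V}(B, {\prec})\) with the small basis of principal ideals, with \cref{small-basis-closed-under-continuous-retracts}, since every isomorphism is (in either direction) a Scott continuous retract. Part (ii) is analogous: a small compact basis \(\beta\) produces a reflexive abstract basis via \cref{below-abstract-basis}, and the ``moreover'' clause of \cref{Idl-iso-algebraic} supplies \(D \cong \Idl{V}\pa*{B,{\below_\beta}}\); conversely, \cref{Idl-has-small-compact-basis} furnishes \(\Idl{V}(B,{\prec})\) with a small compact basis when \({\prec}\) is reflexive, and this transfers through the isomorphism, with compactness preserved by \cref{embedding-preserves-and-reflects-way-below} applied to the isomorphism viewed as an embedding-projection pair.

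For (iii), the forward direction uses the first half of \cref{Idl-iso-algebraic}: any small basis on \(D\) realises \(D\) as the embedding side of an embedding-projection pair into \(\Idl{V}\pa*{B, {\below_\beta}}\), which carries the small compact basis from \cref{Idl-has-small-compact-basis}. The converse is \cref{small-basis-closed-under-continuous-retracts} directly, noting that a small compact basis is in particular a small basis. The final ``in particular'' clause then follows from (i) together with the universe count \(\Idl{V}(B,{\prec}) : \DCPO{V}{V^+}{V}\) recorded after \cref{Idl-is-dcpo}. The only step that is more than pure bookkeeping is the small verification in (ii) that transport of a small compact basis along an isomorphism preserves compactness, which is the mildest form of obstacle since it boils down to a single appeal to \cref{embedding-preserves-and-reflects-way-below}.
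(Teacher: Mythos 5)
Your argument is correct and matches the paper's, which simply cites \cref{small-basis-closed-under-continuous-retracts,Idl-has-small-basis,Idl-iso-continuous,Idl-iso-algebraic,Idl-has-small-compact-basis} and leaves the (straightforward) combination to the reader. You usefully make explicit one step the paper glosses over: that \cref{small-basis-closed-under-continuous-retracts} alone transfers only a small \emph{basis}, so in the converse of (ii) one must additionally observe that the isomorphism, viewed as an embedding-projection pair, preserves compactness via \cref{embedding-preserves-and-reflects-way-below} (combined with \cref{small-and-compact-basis}).
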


\section{Bilimits and exponentials}\label{sec:continuous-bilimits-and-exponentials}
\subsection{Structurally continuous and algebraic bilimits}%
\label{sec:structurally-continuous-and-algebraic-bilimits}%

We show that bilimits are closed under equipment with continuity/algebraicity
data.
For the reminder of this section, fix a directed diagram of \(\V\)-dcpos
\((D_i)_{i : I}\) with embedding-projection pairs
\(\pa*{\varepsilon_{i,j},\pi_{i,j}}_{i \below j \text{ in } I}\) between them
as in~\cref{sec:bilimits}.
We stress that, throughout this section, the word ``embedding'' is
only used in the domain-theoretic sense, i.e.\ it is reserved for one
half of an embedding-projection pair, rather than in the homotopy type
theory sense of having subsingleton fibers. But notice that
domain-theoretic embeddings are homotopy embeddings, because they are
sections, and sections of sets are always homotopy embeddings~\cite{Shulman2016}.

Now suppose that for every \(i : I\), we have \(\alpha_i : J_i \to D_i\) with
each \(J_i : \V\). Then we define \(J_\infty \colonequiv \Sigma_{i : I}J_i\) and
\(\alpha_\infty : J_\infty \to D_\infty\) by
\((i,j) \mapsto \varepsilon_{i,\infty}(\alpha_i(j))\), where
\(\varepsilon_{i,\infty}\) is as in~\cref{epsilon-infty}.

\begin{lemma}\label{infty-family-directed-sup}
  If every \(\alpha_i\) is directed and we have \(\sigma : D_\infty\) such that
  \(\alpha_i\) approximates \(\sigma_i\), then \(\alpha_\infty\) is directed and
  approximates \(\sigma\).
\end{lemma}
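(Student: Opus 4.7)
The plan is to prove the three required properties in turn: directedness of $\alpha_\infty$, approximation of $\sigma$ (which splits into being way-below and having supremum $\sigma$), invoking the compatibility identity $\varepsilon_{i,\infty} \circ \varepsilon_{i_k,i} = \varepsilon_{i_k,\infty}$ for $i_k \below i$, the fact that $\varepsilon_{i,\infty}(\sigma_i) \below \sigma$ with $\sigma = \bigsqcup_i \varepsilon_{i,\infty}(\sigma_i)$ in the bilimit (both standard facts from Part~I, Section~7), and the preservation/reflection of the way-below relation by embeddings (\cref{embedding-preserves-and-reflects-way-below}).

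For directedness, the domain $J_\infty \equiv \Sigma_{i:I}J_i$ is inhabited because $I$ and each $J_i$ are (the diagram and each $\alpha_i$ are directed). For semidirectedness, given $(i_1,j_1)$ and $(i_2,j_2)$, first use directedness of $I$ to pick $i \aboveorder i_1,i_2$. Applying the way-below-preserving embedding $\varepsilon_{i_k,i}$ to $\alpha_{i_k}(j_k) \ll \sigma_{i_k}$ yields $\varepsilon_{i_k,i}(\alpha_{i_k}(j_k)) \ll \varepsilon_{i_k,i}(\sigma_{i_k}) = \sigma_i = \bigsqcup \alpha_i$, so there exist $j_k' : J_i$ with $\varepsilon_{i_k,i}(\alpha_{i_k}(j_k)) \below \alpha_i(j_k')$. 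Directedness of $\alpha_i$ provides $j : J_i$ above both $\alpha_i(j_1')$ and $\alpha_i(j_2')$, and postcomposing with $\varepsilon_{i,\infty}$ together with the compatibility identity gives $\alpha_\infty(i_k,j_k) \below \alpha_\infty(i,j)$.

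For the supremum, I would use that $\sigma = \bigsqcup_i \varepsilon_{i,\infty}(\sigma_i)$ in $D_\infty$, combined with Scott continuity of each $\varepsilon_{i,\infty}$ to rewrite
\[
  \varepsilon_{i,\infty}(\sigma_i) = \varepsilon_{i,\infty}\!\left(\textstyle\bigsqcup_j \alpha_i(j)\right) = \textstyle\bigsqcup_j \alpha_\infty(i,j),
\]
and then recognise that the iterated supremum $\bigsqcup_i \bigsqcup_j \alpha_\infty(i,j)$ equals the supremum of the directed family $\alpha_\infty$ indexed by $\Sigma_{i:I}J_i$. For way-below, the computation is immediate: $\alpha_i(j) \ll \sigma_i$ gives $\alpha_\infty(i,j) \equiv \varepsilon_{i,\infty}(\alpha_i(j)) \ll \varepsilon_{i,\infty}(\sigma_i)$ by preservation, and since $\varepsilon_{i,\infty}(\sigma_i) \below \sigma$, \cref{way-below-properties}\eqref{below-way-below-way-below} yields $\alpha_\infty(i,j) \ll \sigma$.

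The main obstacle, and the only nontrivial step, is the semidirectedness argument: one must carefully combine three ingredients (directedness of $I$, way-below preservation by embeddings, and the approximation assumption $\alpha_i(j) \ll \sigma_i$) to reduce the problem to directedness inside a single $\alpha_i$. The supremum and way-below claims are then essentially bookkeeping, relying on already-established properties of embedding-projection pairs and the bilimit.
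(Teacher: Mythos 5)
Your argument follows essentially the same route as the paper: inhabitedness of $J_\infty$ is immediate; semidirectedness goes via directedness of $I$, way-below preservation by embeddings (\cref{embedding-preserves-and-reflects-way-below}), and the approximation hypothesis; the supremum is computed via $\sigma = \bigsqcup_i \varepsilon_{i,\infty}(\sigma_i)$ and Scott continuity; and the way-below claim is handled exactly as in the paper. The only stylistic difference is that the paper phrases directedness of $\alpha_\infty$ via the ind-completion (showing $i \mapsto \varepsilon_{i,\infty}\circ\alpha_i$ is directed with respect to $\cof$ and invoking the argument of \cref{ind-completion-is-directed-complete}), whereas you argue directly inside $J_\infty$ --- this is equivalent bookkeeping.

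One small imprecision: you write $\varepsilon_{i_k,i}(\sigma_{i_k}) = \sigma_i$, but this equality need not hold --- the correct relation is $\varepsilon_{i_k,i}(\sigma_{i_k}) \below \sigma_i$, since $\sigma_{i_k} = \pi_{i_k,i}(\sigma_i)$ and $\varepsilon_{i_k,i}\circ\pi_{i_k,i}$ is only a deflation, not the identity. Your argument is unaffected: from $\varepsilon_{i_k,i}(\alpha_{i_k}(j_k)) \ll \varepsilon_{i_k,i}(\sigma_{i_k}) \below \sigma_i = \bigsqcup \alpha_i$, \cref{way-below-properties}\eqref{below-way-below-way-below} still gives $\varepsilon_{i_k,i}(\alpha_{i_k}(j_k)) \ll \bigsqcup \alpha_i$, which is all you use. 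The paper is careful to write $\below$ at the corresponding step.
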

\begin{proof}
  Observe that \(\alpha_\infty\) is equal to the supremum, if it exists, of the
  directed families \(\pa*{\varepsilon_{i,\infty} \circ \alpha_i}_{i : I}\) in
  the ind-completion of \(D_\infty\), cf.\ the proof of
  \cref{ind-completion-is-directed-complete}.
  Hence, for directedness of \(\alpha_\infty\), it suffices to prove that the
  family \(i \mapsto \varepsilon_{i,\infty} \circ \alpha_i\) is directed with
  respect to the exceeds-relation.
  The index type \(I\) is inhabited, because we are working with a directed
  diagram of dcpos. For semidirectedness, we will first prove that if
  \(i \below i'\), then \(\varepsilon_{i',\infty} \circ \alpha_{i'}\) exceeds
  \(\varepsilon_{i,\infty} \circ \alpha_i\).

  So suppose that \(i \below i'\) and \(j : J_i\). As \(\alpha_{i}\)
  approximates \(\sigma_{i}\), we have \(\alpha_{i}(j) \ll \sigma_i\).
  Because \(\varepsilon_{i,i'}\) is an embedding, it preserves the way-below
  relation (\cref{embedding-preserves-and-reflects-way-below}), so that we get
  \(\varepsilon_{i,i'}(\alpha_i(j)) \ll \varepsilon_{i,i'}(\sigma_i) \below
  \sigma_{i'} = \bigsqcup \alpha_{i'}\). Hence, there exists \(j' : J_{i'}\)
  with \(\varepsilon_{i,i'}(\alpha_i(j)) \below \alpha_{i'}(j')\) which yields
  \(\varepsilon_{i,\infty}(\alpha_i(j)) =
  \varepsilon_{i',\infty}\pa*{\varepsilon_{i,i'}(\alpha_i(j))} \below
  \varepsilon_{i',\infty}(\alpha_{i'}(j'))\), completing the proof that
  \(\varepsilon_{i',\infty} \circ \alpha_{i'}\) exceeds
  \(\varepsilon_{i,\infty} \circ \alpha_i\).

  Now to prove that the family
  \(i \mapsto \varepsilon_{i,\infty} \circ \alpha_i\) is semidirected with
  respect to the exceeds-relation, suppose we have \(i_1,i_2 : I\). Since \(I\)
  is a directed preorder, there exists \(i : I\) such that \(i_1,i_2 \below
  i\). But then \(\varepsilon_{i,\infty} \circ \alpha_i\) exceeds both
  \(\varepsilon_{i_1,\infty} \circ \alpha_{i_1}\) and
  \(\varepsilon_{i_2,\infty} \circ \alpha_{i_2}\) by the above.

  Thus, \(\alpha_\infty\) is directed. To see that its supremum is \(\sigma\),
  observe that
  \begin{align*}
    \sigma &= \textstyle\bigsqcup_{i : I} \varepsilon_{i,\infty}(\sigma_i)
    &&\text{(by \cref{sigma-sup-of-epsilon-pis})} \\
    &= \textstyle\bigsqcup_{i : I} \varepsilon_{i,\infty}\pa*{\textstyle\bigsqcup \alpha_i}
    &&\text{(since \(\alpha_i\) approximates \(\sigma_i\))} \\
    &= \textstyle\bigsqcup_{i : I}\bigsqcup \varepsilon_{i,\infty} \circ \alpha_i
    &&\text{(by Scott continuity of \(\varepsilon_{i,\infty}\))} \\
    &= \textstyle\bigsqcup_{(i,j) : J_\infty} \alpha_\infty(i,j),
  \end{align*}
  as desired.

  Finally, we wish to show that \(\alpha_\infty(i,j) \ll \sigma\) for every
  \((i,j) : J_\infty\). But \(\varepsilon_{i,\infty}\) is an embedding and
  therefore preserves the way-below relation while \(\alpha_i(j)\)
  approximates~\(\sigma_i\), so we get
  \(\alpha_\infty(i,j) \equiv \varepsilon_{i,\infty}(\alpha_i(j)) \ll
  \varepsilon_{i,\infty}(\sigma_i) \below \sigma\) where the final inequality
  holds because \(\varepsilon_{i,\infty} \circ \pi_{i,\infty}\) is a deflation.
\end{proof}

\begin{lemma}\label{infty-family-compact}
  If \(\alpha_i(j)\) is compact for every \(i : I\) and \(j : J_i\), then all
  the values of \(\alpha_\infty\) are compact too.
\end{lemma}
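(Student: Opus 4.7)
The plan is to reduce immediately to \cref{embedding-preserves-and-reflects-way-below}, which says that the embedding in an embedding-projection pair preserves (and reflects) compactness. Recall that $\alpha_\infty(i,j) \equiv \varepsilon_{i,\infty}(\alpha_i(j))$ by definition. So given $(i,j) : J_\infty$, the task is simply to show that $\varepsilon_{i,\infty}(\alpha_i(j))$ is compact in $D_\infty$, knowing that $\alpha_i(j)$ is compact in $D_i$.

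First I would recall from Part~I (Definition~7.12 and the surrounding material) that $\varepsilon_{i,\infty}$ is the embedding half of an embedding-projection pair $(\varepsilon_{i,\infty}, \pi_{i,\infty})$ between $D_i$ and the bilimit $D_\infty$. This is exactly the hypothesis required to invoke \cref{embedding-preserves-and-reflects-way-below}. Applying the ``in particular'' clause of that lemma to $x \colonequiv \alpha_i(j)$ immediately yields that $\alpha_i(j)$ is compact if and only if $\varepsilon_{i,\infty}(\alpha_i(j))$ is. Since the hypothesis gives compactness of $\alpha_i(j)$, we conclude compactness of $\alpha_\infty(i,j)$, as desired.

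There is essentially no obstacle: the content was already packaged in \cref{embedding-preserves-and-reflects-way-below}, and the construction of $\alpha_\infty$ was chosen precisely so that its values lie in the image of the embeddings $\varepsilon_{i,\infty}$. The only thing worth double-checking is that the bilimit construction from Part~I genuinely produces embedding-projection pairs $(\varepsilon_{i,\infty}, \pi_{i,\infty})$ for each $i : I$ (and not merely Scott continuous sections/retractions), but this is part of the standard setup of the bilimit and is used implicitly already in the preceding \cref{infty-family-directed-sup}. Consequently the proof can be stated in a single sentence, mirroring the brevity of the analogous argument that embeddings preserve the way-below relation.
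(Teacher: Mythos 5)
Your proposal is correct and is essentially the paper's own proof: both reduce to \cref{embedding-preserves-and-reflects-way-below}, noting that $\alpha_\infty(i,j) \equiv \varepsilon_{i,\infty}(\alpha_i(j))$ and that the embedding $\varepsilon_{i,\infty}$ preserves compactness. Nothing to add.
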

\begin{proof}
  Let \((i,j) : J_\infty\) be arbitrary. Since \(\varepsilon_{i,\infty}\) is an
  embedding it preserves compact elements, so
  \(\alpha_\infty(i,j) \equiv \varepsilon_{i,\infty}(\alpha_i(j))\) is compact.
\end{proof}

\begin{theorem}\label{structurally-continuous-bilimit}%
  If each \(D_i\) comes equipped with continuity (resp.\ algebraicity) data,
  then we can give continuity (resp.\ algebraicity) data for \(D_\infty\).
\end{theorem}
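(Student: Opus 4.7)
The plan is to assemble continuity data for $D_\infty$ pointwise from the given data on each $D_i$, invoking \cref{infty-family-directed-sup} (and, for algebraicity, \cref{infty-family-compact}) as the workhorse. Concretely, given continuity data $(I_{i,x}, \alpha_{i,x})$ for every $D_i$ and every $x : D_i$, for each $\sigma : D_\infty$ I would take the family $\alpha_{i,\sigma_i} : I_{i,\sigma_i} \to D_i$ at the projection $\sigma_i \equiv \pi_{i,\infty}(\sigma)$, which is directed and approximates $\sigma_i$ by assumption. Feeding these choices into the construction of $\alpha_\infty$ given just above \cref{infty-family-directed-sup} yields a directed family $\alpha_\infty^\sigma : \pa*{\Sigma_{i : I}\,I_{i,\sigma_i}} \to D_\infty$ which, by that lemma, is directed and approximates $\sigma$.

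The assignment $\sigma \mapsto \pa*{\Sigma_{i : I}\,I_{i,\sigma_i},\,\alpha_\infty^\sigma}$ is then the desired continuity data for $D_\infty$, provided we check the universe constraint. Here the diagram's index type $I$ lies in $\V$ (since the bilimit is a $\V$-dcpo built from a $\V$-sized diagram) and each $I_{i,\sigma_i}$ lies in $\V$ by the continuity data for $D_i$, so the total space $\Sigma_{i : I}\,I_{i,\sigma_i}$ is indeed in $\V$ as required.

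For the algebraicity case, I would repeat exactly the same argument, but starting from algebraicity data $(I_{i,x}, \kappa_{i,x})$ for each $D_i$. The only additional obligation is that the values of the combined family be compact, and this is provided directly by \cref{infty-family-compact}, since each $\kappa_{i,\sigma_i}(j)$ is compact in $D_i$.

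Since the two preparatory lemmas have already absorbed all the substantive work, namely directedness across the diagram, the supremum computation using Scott continuity of the embeddings $\varepsilon_{i,\infty}$, and preservation of compactness and of the way-below relation by embeddings, no real obstacle remains. The only subtlety is bookkeeping: ensuring the universe levels line up and that \cref{infty-family-directed-sup} is invoked with $\alpha_i$ instantiated to the approximating family of $\sigma_i$ rather than an arbitrary directed family, so that the hypothesis ``$\alpha_i$ approximates $\sigma_i$'' is genuinely available.
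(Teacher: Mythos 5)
Your proposal matches the paper's proof essentially verbatim: both take the approximating family $\alpha_i$ for $\sigma_i$ supplied by the continuity (resp.\ algebraicity) data on $D_i$, feed these into the construction of $\alpha_\infty$, and then invoke \cref{infty-family-directed-sup} (and \cref{infty-family-compact} for the algebraic case) to conclude. The universe-level bookkeeping you spell out is implicit in the paper but correct.
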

\begin{proof}
  Let \(\sigma : D_\infty\) be arbitrary. By assumption on each \(D_i\), we have
  a directed family \(\alpha_i : J_i \to D_i\) approximating
  \(\sigma_i\). Hence, by \cref{infty-family-directed-sup}, the family
  \(\alpha_\infty\) is directed and approximates \(\sigma\), giving continuity
  data for \(D_\infty\).
  For the algebraic case, we apply the above and \cref{infty-family-compact}.
\end{proof}

Note that we do not expect to be able to prove that \(D_\infty\) is continuous
if each \(D_i\) is, because it would require an instance of the axiom of choice
to get continuity data on each \(D_i\), and without those we have
nothing to operate on.

\begin{theorem}\label{bilimit-has-small-basis}%
  If each \(D_i\) has a small basis \(\beta_i : B_i \to D_i\), then the map
  \begin{eqnarray*}
    \beta_\infty : \pa*{B_\infty \colonequiv \Sigma_{i : I}B_i} & \to & D_\infty \\
    (i,b) & \mapsto & \varepsilon_{i,\infty}(\beta_i(b))
  \end{eqnarray*}
  is a small basis for \(D_\infty\).
  Furthermore, if each \(\beta_i\) is a small compact basis, then
  \(\beta_\infty\) is a small compact basis too.
\end{theorem}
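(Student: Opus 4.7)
The plan is to reduce the claim for $D_\infty$ to the assumed small basis property on each $D_i$ by leveraging that the $\varepsilon_{i,\infty}$ are embeddings in embedding-projection pairs, and then to apply \cref{infty-family-directed-sup} to get directedness and the supremum for free.

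The pivotal step is the equivalence
\[
  \varepsilon_{i,\infty}(\beta_i(b)) \ll \sigma \iff \beta_i(b) \ll \pi_{i,\infty}(\sigma)
\]
for every $\sigma : D_\infty$, $i : I$ and $b : B_i$. The forward direction is an instance of \cref{continuous-retraction-way-below-criterion} applied to the Scott continuous retraction $\pi_{i,\infty} \circ \varepsilon_{i,\infty} = \id$. For the backward direction, \cref{embedding-preserves-and-reflects-way-below} gives $\varepsilon_{i,\infty}(\beta_i(b)) \ll \varepsilon_{i,\infty}(\pi_{i,\infty}(\sigma))$, which is $\below \sigma$ because $\varepsilon_{i,\infty} \circ \pi_{i,\infty}$ is a deflation (Part~I). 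Since $\beta_i$ is a small basis for $D_i$, the right-hand side is $\V$-small, so condition (ii) of the small basis definition holds for $\beta_\infty$.

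Next I would verify that $\ddset_{\beta_\infty}\sigma$ is directed with supremum $\sigma$. Via the equivalence above, its indexing type is equivalent to $\Sigma_{i : I}\ddset_{\beta_i}(\pi_{i,\infty}(\sigma))$ and, writing $\alpha_i \colonequiv \ddset_{\beta_i}(\pi_{i,\infty}(\sigma))$, the family $\beta_\infty$ restricted to it sends $(i,(b,p))$ to $\varepsilon_{i,\infty}(\beta_i(b))$, which is precisely the shape of $\alpha_\infty$ from \cref{infty-family-directed-sup}. Because $\beta_i$ is a small basis, each $\alpha_i$ is directed and approximates $\pi_{i,\infty}(\sigma) = \sigma_i$, so by \cref{infty-family-directed-sup} the family $\alpha_\infty$, and hence $\ddset_{\beta_\infty}\sigma$, is directed and approximates $\sigma$. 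This establishes condition (i) of the small basis definition.

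For the compact case, having already shown $\beta_\infty$ is a small basis, \cref{small-and-compact-basis} reduces matters to showing that each $\beta_\infty(i,b) \equiv \varepsilon_{i,\infty}(\beta_i(b))$ is compact whenever $\beta_i(b)$ is, which is immediate from \cref{embedding-preserves-and-reflects-way-below}. I do not foresee a genuine obstacle; the only subtlety is bookkeeping around the identification of $\ddset_{\beta_\infty}\sigma$ with $\alpha_\infty$ via the equivalence above, and keeping the universe of the indexing type in $\V$ (which holds because $I, B_i$ and the way-below propositions on each $D_i$ are all $\V$-small).
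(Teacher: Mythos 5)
Your outline mirrors the paper in shape---use \cref{infty-family-directed-sup} for directedness and supremum, and \cref{small-and-compact-basis} plus preservation of compactness for the compact case---but the step you call pivotal, the equivalence
\[
  \varepsilon_{i,\infty}(\beta_i(b)) \ll \sigma \iff \beta_i(b) \ll \pi_{i,\infty}(\sigma),
\]
has a false forward direction, and you rest both the smallness of \(\beta_\infty(i,b) \ll \sigma\) and the identification of the indexing type of \(\ddset_{\beta_\infty}\sigma\) on it.

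The cited lemma does not give the forward implication. \cref{continuous-retraction-way-below-criterion} says that for a Scott continuous retraction with section \(s\) and retraction \(r\), the implication \(y \ll s(x) \Rightarrow r(y) \ll x\) holds; with \(s = \varepsilon_{i,\infty}\) and \(r = \pi_{i,\infty}\) this only applies when the upper argument is of the form \(\varepsilon_{i,\infty}(x)\), which an arbitrary \(\sigma : D_\infty\) is not. Worse, the forward direction is genuinely false. Take the chain of dcpos \(D_n = [0, 1 - 2^{-n-1}]\) with inclusions as embeddings and \(\pi_{n,n+1}(x) = \min(x, 1 - 2^{-n-1})\) as projections; its bilimit is (isomorphic to) \([0,1]\). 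With \(d = \tfrac12 = \max D_0\), \(\sigma = 1\), we have \(\varepsilon_{0,\infty}(d) = \tfrac12 \ll 1 = \sigma\), yet \(\pi_{0,\infty}(\sigma) = \tfrac12\) and \(\tfrac12 \not\ll \tfrac12\) in \(D_0\), since the top element of \([0,\tfrac12]\) is not compact. So \(\ddset_{\beta_\infty}\sigma\) is in general strictly larger than \(\Sigma_{i : I}\ddset_{\beta_i}(\sigma_i)\), and your argument establishes neither the smallness condition nor the directedness/supremum claim for the full family.

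The paper sidesteps both uses. For smallness, it observes that each \(D_i\) is locally small (\cref{locally-small-if-small-basis}), so \(D_\infty\) is locally small by Part~I, and then \(D_\infty\) has continuity data by \cref{structurally-continuous-bilimit}, whence the way-below relation on \(D_\infty\) is small-valued by \cref{cont-loc-small-iff-way-below-small}. For directedness and the supremum, it uses only the true direction \(\beta_i(b) \ll \sigma_i \Rightarrow \beta_\infty(i,b) \ll \sigma\) to build a map \(\iota : \bigl(\Sigma_{i:I}\Sigma_{b:B_i}\,\beta_i(b) \ll \sigma_i\bigr) \to \ddset_{\beta_\infty}\sigma\), and then invokes \cref{subbasis-lemma}, which lets a directed ``subfamily'' with the right supremum certify that the full family \(\ddset_{\beta_\infty}\sigma\) is directed with supremum \(\sigma\). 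You should replace your equivalence-based reduction with this two-pronged argument; the rest of your proof, including the use of \cref{infty-family-directed-sup} and the treatment of the compact case, is fine.
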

\begin{proof}
  First of all, we must show that the proposition
  \(\beta_\infty(i,b) \ll \sigma\) is small for every \(i : I\), \(b : B_i\) and
  \(\sigma : D_\infty\). This is the case as the way-below
  relation on \(D_\infty\) has small values. Indeed, by
  \cref{cont-loc-small-iff-way-below-small} and
  \cref{structurally-continuous-bilimit}, it suffices to prove that \(D_\infty\)
  is locally small. But this holds by \cref{locally-small-bilimit}
  as each
  \(D_i\) is locally small by \cref{locally-small-if-small-basis}.

  It remains to prove that, for an arbitrary element \(\sigma : D_\infty\), the
  family \(\ddset_{\beta_\infty} \sigma\) given by
  \(\pa*{\Sigma_{(i,b) : B_\infty}\beta_\infty(i,b) \ll \sigma}
  \xrightarrow{\beta_\infty \circ \fst} D_\infty\) is directed with supremum
  \(\sigma\).
  Note that for every \(i : I\) and \(b : B_i\), we have that
  \(\beta_i(b) \ll \sigma_i\) implies
  \[
    \beta_\infty(i,b) \equiv \varepsilon_{i,\infty}(\beta_i(b)) \ll
    \varepsilon_{i,\infty}(\sigma_i) \below \sigma,
  \]
  since \cref{embedding-preserves-and-reflects-way-below} tells us that the
  embedding \(\varepsilon_{i,\infty}\) preserves the way-below relation.
  Hence, the identity map induces a well-defined map
  \[
    \iota : \pa*{\Sigma_{i : I}\Sigma_{b :B_i}\beta_i(b) \ll \sigma_i} \to
    \pa*{\Sigma_{(i,b) : B_\infty}\beta_\infty(i,b) \ll \sigma}.
  \]
  \cref{subbasis-lemma} now tells us that we only need to show that
  \(\ddset_{\beta_\infty}\sigma \circ \iota\) is directed and has supremum
  \(\sigma\).
  But if we write
  \(\alpha_i : \pa*{\Sigma_{b : B_i}\beta_i(b) \ll \sigma_i} \to D_i\) for the
  map \(b \mapsto \beta_i(b)\), then we see that
  \(\ddset_{\beta_\infty} \sigma \circ \iota\) is given by \(\alpha_\infty\)
  defined from $\alpha_i$ as in the start of this section. But then \(\alpha_\infty\) is indeed
  seen to be directed with supremum \(\sigma\) by virtue of
  \cref{infty-family-directed-sup} and the fact that \(\alpha_i\) approximates
  \(\sigma_i\).

  Finally, if every \(\beta_i\) is a small compact basis, then \(\beta_\infty\)
  is also a small compact basis because by \cref{small-and-compact-basis} all we
  need to know is that
  \(\beta_\infty(i,b) \equiv \varepsilon_{i,\infty}(\beta_i(b))\) is compact for
  every \(i : I\) and \(b : B_i\). But this follows from the fact that
  embeddings preserve compactness and that each \(\beta_i(b)\) is compact.
\end{proof}%

\subsection{Exponentials with small (compact) bases}%
\label{sec:exponentials-with-small-bases}%

Just as in the classical, impredicative setting, the exponential of two
continuous dcpos need not be continuous~\cite{Jung1988}.
However, with some work, we are able to show that \(E^D\) has a small basis
provided that both \(D\) and \(E\) do and that \(E\) has all (not necessarily
directed) \(\V\)-suprema, that is, $E$ is a continuous lattice.
We first establish this for small compact bases using \emph{step functions} and
then derive the result for compact bases using \cref{Idl-iso-algebraic}.

\subsubsection{Single step functions}

Suppose that we have a dcpo \(D\) and a pointed dcpo
\(E\). Classically~\cite[Exercise~II-2.31]{GierzEtAl2003}, the single step
function given by \(d : D\) and \(e : E\) is defined as
\begin{align*}
  \ssf{d}{e} : D &\to E \\
  x &\mapsto
  \begin{cases}
    e &\text{if \(d \below x\);} \\
    \bot &\text{otherwise}.
  \end{cases}
\end{align*}

Constructively, we can't expect to make this case distinction in general, so we define
single step functions using subsingleton suprema instead.

\begin{definition}[Single step function, \(\ssf{d}{e}\)]%
  The \emph{single step function} given by two elements \(d : D\) and \(e : E\),
  where \(D\) is a locally small \(\V\)-dcpo and \(E\) is a pointed \(\V\)-dcpo,
  is the function \(\ssf{d}{e} : D \to E\) that maps a given \(x : D\) to the
  supremum of the family indexed by the subsingleton \(d \below x\) that is
  constantly~\(e\).

  Note that we need the domain \(D\) to be locally small, because we need the type \(d \below x\) to
  be a subsingleton in \(\V\) to use the \(\V\)-directed-completeness of \(E\).
  For the definition of \(\ssf{d}{e}\) to make sense, we need to know that the
  supremum of the constant family exists. This is the case by
  \cref{pointed-dcpos-sups}, which says that the supremum of a
  subsingleton-indexed family \(\alpha : {P \to E}\) is given by the supremum of
  the directed family \(\One + P \to E\) defined by \(\inl(\star) \mapsto \bot\)
  and \(\inr(p) \mapsto \alpha(p)\).
\end{definition}

\begin{lemma}\label{single-step-function-continuous-if-compact}
  If \(d : D\) is compact, then \(\ssf{d}{e}\) is Scott continuous for all
  \(e : E\).
\end{lemma}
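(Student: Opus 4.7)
The plan is to verify that $\ssf{d}{e}$ is both monotone and preserves directed suprema. Monotonicity is direct: if $x_1 \below x_2$, then every witness for $d \below x_1$ also witnesses $d \below x_2$, so by the universal property of the subsingleton-indexed supremum (equivalently, via the presentation as the supremum of the directed family $\One + (d \below x_i) \to E$ from Lemma~4.9 of Part~I) we obtain $\ssf{d}{e}(x_1) \below \ssf{d}{e}(x_2)$.

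For preservation of directed suprema, fix a directed family $\alpha : I \to D$. The inequality $\bigsqcup (\ssf{d}{e} \circ \alpha) \below \ssf{d}{e}(\bigsqcup \alpha)$ is immediate from monotonicity. For the converse, I would unfold the definition of $\ssf{d}{e}(\bigsqcup \alpha)$ as the supremum of the directed family $\One + (d \below \bigsqcup \alpha) \to E$ sending $\inl(\star) \mapsto \bot$ and $\inr(p) \mapsto e$. It then suffices to show that $\bot \below \bigsqcup (\ssf{d}{e} \circ \alpha)$ (trivial) and that, under the assumption $d \below \bigsqcup \alpha$, we have $e \below \bigsqcup (\ssf{d}{e} \circ \alpha)$.

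This last step is the only real obstacle, and it is precisely where compactness of $d$ is needed. Combining $d \ll d$ with $d \below \bigsqcup \alpha$ via \cref{way-below-properties}\eqref{below-way-below-way-below} yields $d \ll \bigsqcup \alpha$. By the definition of the way-below relation applied to the directed family $\alpha$, there then exists $i : I$ with $d \below \alpha_i$; since the conclusion $e \below \bigsqcup (\ssf{d}{e} \circ \alpha)$ is a proposition, we may extract such a witness from the propositional truncation. For this $i$, the subsingleton $d \below \alpha_i$ is inhabited, so $\ssf{d}{e}(\alpha_i)$ is an upper bound of $e$, giving $e \below \ssf{d}{e}(\alpha_i) \below \bigsqcup (\ssf{d}{e} \circ \alpha)$, as required. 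Everything besides this single application of compactness is routine manipulation of the subsingleton-supremum definition.
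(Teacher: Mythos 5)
Your proof is correct and follows essentially the same route as the paper's: the easy inequality is routine and the key step is using compactness of \(d\) to extract \(i : I\) with \(d \below \alpha_i\) from \(d \below \bigsqcup\alpha\), which makes the subsingleton index inhabited and so forces \(\ssf{d}{e}(\alpha_i) = e\). The only cosmetic difference is that you establish monotonicity as a separate preliminary step and invoke it for the forward inequality, whereas the paper proves directly that \(\ssf{d}{e}(\bigsqcup\alpha)\) is an upper bound of \(\ssf{d}{e}\circ\alpha\) and then the least among upper bounds, never naming monotonicity explicitly.
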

\begin{proof}
  Suppose that \(d : D\) is compact and that \(\alpha : I \to D\) is a directed
  family. We first show that \(\ssf{d}{e}\pa*{\bigsqcup \alpha}\) is an upper
  bound of \(\ssf{d}{e} \circ \alpha\). So let \(i : I\) be arbitrary. Then we
  have to prove
  \(\bigsqcup_{d \below \alpha_i} e \below \bigsqcup {\ssf{d}{e} \circ
    \alpha}\). Since the supremum gives a lower bound of the upper bounds, it
  suffices to prove that \(e \below \bigsqcup {\ssf{d}{e} \circ \alpha}\)
  whenever \(d \below \alpha_i\). But in this case we have
  \(e = \ssf{d}{e}(\alpha_i) \below \bigsqcup {\ssf{d}{e} \circ \alpha}\),
  so we are done.

  To see that \(\ssf{d}{e}\pa*{\bigsqcup \alpha}\) is a lower bound of the upper
  bounds, suppose that we are given \(y : E\) such that \(y\) is an upper bound
  of \(\ssf{d}{e} \circ \alpha\). We are to prove that
  \(\pa*{\bigsqcup_{d \below \bigsqcup \alpha} e} \below y\).
  Note that it suffices for \(d \below \bigsqcup \alpha\) to imply \(e \below y\).
  So assume that \(d \below \bigsqcup \alpha\). By compactness of \(d\) there
  exists \(i : I\) such that \(d \below \alpha_i\) already. But then
  \(e = \ssf{d}{e}(\alpha_i) \below y\), as desired.
\end{proof}

\begin{lemma}\label{above-single-step-function}%
  A Scott continuous function \(f : D \to E\) is above the single step function
  \(\ssf{d}{e}\) with \(d : D\) compact if and only if \(e \below f(d)\).
\end{lemma}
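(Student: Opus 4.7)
The plan is to prove the biconditional by establishing each direction separately, evaluating $\ssf{d}{e}$ strategically at the point $d$ for one direction and using monotonicity of $f$ for the other. Note that compactness of $d$ is required only to ensure that $\ssf{d}{e}$ is a well-defined element of the exponential $E^D$ via \cref{single-step-function-continuous-if-compact}; the argument itself will only need monotonicity of $f$ and reflexivity of $\below$.

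For the forward direction, suppose $\ssf{d}{e} \below f$ in the pointwise order on $E^D$. I would instantiate this inequality at $x \colonequiv d$, obtaining $\ssf{d}{e}(d) \below f(d)$. By definition, $\ssf{d}{e}(d)$ is the supremum of the subsingleton-indexed family that is constantly $e$ and indexed by $d \below d$. Since $d \below d$ holds by reflexivity, $e$ lies below this supremum, hence $e \below \ssf{d}{e}(d) \below f(d)$, as desired.

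For the converse, suppose $e \below f(d)$ and let $x : D$ be arbitrary. I need to show $\ssf{d}{e}(x) \below f(x)$, i.e.\ that $\bigsqcup_{d \below x} e \below f(x)$. Since this is a supremum of a family, it suffices to provide an upper bound, and so it is enough to show that $e \below f(x)$ holds under the assumption $d \below x$. But if $d \below x$, then by monotonicity of the Scott continuous map $f$ we get $f(d) \below f(x)$, and combining this with the assumed $e \below f(d)$ yields $e \below f(x)$ by transitivity.

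There is no real obstacle here: the only mildly subtle point is that the proof does not invoke compactness of $d$ directly. Compactness is used only as a background hypothesis, since it guarantees (by \cref{single-step-function-continuous-if-compact}) that $\ssf{d}{e}$ itself is Scott continuous, which is what makes the statement ``$f$ is above $\ssf{d}{e}$'' meaningful in the first place.
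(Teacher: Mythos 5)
Your proof is correct and follows essentially the same route as the paper: evaluate at $d$ and use reflexivity for the forward direction, and reduce to $e \below f(x)$ under $d \below x$ via monotonicity for the converse. The only cosmetic difference is that the paper records the equality $\ssf{d}{e}(d) = e$ while you only need, and only derive, the inequality $e \below \ssf{d}{e}(d)$; your aside that compactness of $d$ is used solely to guarantee $\ssf{d}{e}$ is Scott continuous is also accurate.
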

\begin{proof}
  Suppose that \(\ssf{d}{e} \below f\). Then \(\ssf{d}{e}(d) = e \below f(d)\),
  proving one implication.
  Now assume that \(e \below f(d)\) and let \(x : D\) be arbitrary. To prove
  that \(\ssf{d}{e}(x) \below f(x)\), it suffices that \(e \below f(x)\)
  whenever \(d \below x\). But if \(d \below x\), then
  \(e \below f(d) \below f(x)\) by monotonicity of \(f\).
\end{proof}

\begin{lemma}\label{single-step-function-is-compact}
  If \(d\) and \(e\) are compact, then so is \(\ssf{d}{e}\) in the exponential
  \(E^D\).
\end{lemma}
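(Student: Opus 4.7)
The plan is to use \cref{above-single-step-function} to reduce compactness of \(\ssf{d}{e}\) in the exponential to compactness of \(e\) in \(E\), exploiting that suprema in \(E^D\) are computed pointwise.

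First I would note that \(\ssf{d}{e}\) is indeed an element of \(E^D\): this is \cref{single-step-function-continuous-if-compact}, which applies since \(d\) is compact. Next, suppose \(\alpha : I \to E^D\) is a directed family with \(\ssf{d}{e} \below \bigsqcup \alpha\). Evaluating both sides at \(d\) and using that directed suprema in the exponential are pointwise (from Part~I), we obtain
\[
    e = \ssf{d}{e}(d) \below \pa*{\textstyle\bigsqcup \alpha}(d) = \textstyle\bigsqcup_{i : I} \alpha_i(d).
\]

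The family \(i \mapsto \alpha_i(d)\) is directed in \(E\) because evaluation at \(d\) is monotone (in fact Scott continuous). Since \(e\) is compact in \(E\), there exists \(i : I\) such that \(e \below \alpha_i(d)\) already. By \cref{above-single-step-function} (using that \(d\) is compact), this inequality is equivalent to \(\ssf{d}{e} \below \alpha_i\), which is exactly what we needed to establish compactness of \(\ssf{d}{e}\) in \(E^D\).

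There is no real obstacle here: the proof is a two-line reduction, with the only points to check being that \(\ssf{d}{e}\) is actually Scott continuous (so that the statement makes sense) and that the pointwise computation of suprema in \(E^D\) is available, both of which are already in place.
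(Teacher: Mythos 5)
Your proof is correct and follows essentially the same route as the paper: evaluate at \(d\), use pointwise suprema to reduce to compactness of \(e\), then transport the witness back via \cref{above-single-step-function}. The only cosmetic difference is that you compute \(\ssf{d}{e}(d) = e\) explicitly where the paper invokes \cref{above-single-step-function} for the forward direction as well, which amounts to the same thing.
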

\begin{proof}
  Suppose that we have a directed family \(\alpha : I \to E^D\) such that
  \(\ssf{d}{e} \below \bigsqcup \alpha\). Then we consider the directed family
  \(\alpha^d : I \to E\) given by \(i \mapsto \alpha_i(d)\).
  We claim that \(e \below \bigsqcup \alpha^d\). Indeed, by
  \cref{above-single-step-function} and our assumption that
  \(\ssf{d}{e} \below \bigsqcup\alpha\) we get
  \(e \below \pa*{\bigsqcup\alpha}(d) = \bigsqcup \alpha^d\).
  Now by compactness of \(e\), there exists \(i : I\) such that
  \(e \below \alpha^d(i) \equiv \alpha_i(d)\) already. But this implies
  \(\ssf{d}{e} \below \alpha_i\) by \cref{above-single-step-function} again,
  finishing the proof.
\end{proof}

\subsubsection{Exponentials with small compact bases}

Fix \(\V\)-dcpos \(D\) and \(E\) with small compact bases
\(\beta_D : B_D \to D\) and \(\beta_E : B_E \to D\) and moreover assume that
\(E\) has suprema for all (not necessarily directed) families indexed by types
in \(\V\). We are going to construct a small compact basis on the exponential
\(E^D\).%

\begin{lemma}\label{sup-of-single-step-functions}
  If \(E\) is sup-complete, then every continuous function
  \(f : D \to E\) is the supremum of the collection of single step functions
  \(\pa*{\ssf{\beta_D(b)}{\beta_E(c)}}_{b : B_D , c : B_E}\) that are below
  \(f\).
\end{lemma}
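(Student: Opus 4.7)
The plan is to reduce the claim to a pointwise statement: since $E$ is sup-complete, suprema in the exponential $E^D$ can be computed pointwise (the pointwise supremum of any family of Scott continuous functions is itself Scott continuous, by interchanging suprema in $E$), so it suffices to prove that for every $x : D$,
\[
  f(x) = \textstyle\bigsqcup_{(b,c)}\ssf{\beta_D(b)}{\beta_E(c)}(x),
\]
where the supremum ranges over pairs with $\ssf{\beta_D(b)}{\beta_E(c)} \below f$. One direction, that $f(x)$ is an upper bound, is immediate from the defining property of this family.

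For the other direction, I would expand $f(x)$ using the two small compact bases. Since $\beta_D$ is a small compact basis, $x = \bigsqcup \dset_{\beta_D} x$, so by Scott continuity of $f$,
\[
  f(x) = \textstyle\bigsqcup_{b \in \dset_{\beta_D} x} f(\beta_D(b)).
\]
Applying the same observation in $E$ using the small compact basis $\beta_E$,
\[
  f(\beta_D(b)) = \textstyle\bigsqcup_{c \in \dset_{\beta_E} f(\beta_D(b))} \beta_E(c).
\]
Combining these gives $f(x)$ as a double supremum over pairs $(b,c)$ with $\beta_D(b) \below x$ and $\beta_E(c) \below f(\beta_D(b))$.

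The key step is then to recognise each such pair as contributing to our indexing family. For such $(b,c)$, the element $\beta_D(b)$ is compact, so \cref{above-single-step-function} applies and gives $\ssf{\beta_D(b)}{\beta_E(c)} \below f$ precisely because $\beta_E(c) \below f(\beta_D(b))$. Moreover, evaluating the step function at $x$ yields $\ssf{\beta_D(b)}{\beta_E(c)}(x) = \beta_E(c)$, since $\beta_D(b) \below x$ makes the defining subsingleton-indexed supremum equal to~$\beta_E(c)$. Hence each contribution $\beta_E(c)$ to the double supremum computing $f(x)$ is witnessed by a step function below $f$ whose value at $x$ is at least $\beta_E(c)$, so the pointwise supremum at $x$ of all step functions below $f$ dominates $f(x)$, completing the proof. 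I do not expect a serious obstacle here: the only mild care needed is the verification that pointwise suprema of Scott continuous maps are Scott continuous (using sup-completeness of $E$) so that the pointwise computation really yields the supremum in $E^D$, and the observation that compactness of $\beta_D(b)$ is exactly what lets us invoke \cref{above-single-step-function}.
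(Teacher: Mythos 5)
Your proposal is correct and follows essentially the same approach as the paper: expand $f(x)$ as a double supremum via the two small compact bases $\beta_D$ and $\beta_E$, then use compactness of $\beta_D(b)$ together with \cref{above-single-step-function} to recognise each contribution $\beta_E(c)$ as the value at $x$ of a single step function below $f$. The only cosmetic difference is that you verify the supremum pointwise, whereas the paper's proof compares $f$ against an arbitrary upper bound $g$ (avoiding any explicit appeal to how suprema in $E^D$ are computed, although the surrounding text has already established that $E^D$ is sup-complete with pointwise suprema, so your use of this fact is already available).
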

\begin{proof}
  Note that \(f\) is an upper bound by definition, so it remains to prove that
  it is the least. Therefore suppose we are given an upper bound
  \(g : D \to E\). We have to prove that \(f(x) \below g(x)\) for every
  \(x : D\), so let \(x : D\) be arbitrary. Now
  \(x = \bigsqcup \ddset_{\beta_D} x\), because \(\beta_D\) is a small compact
  basis for \(D\), so by Scott continuity of \(f\) and \(g\), it suffices to
  prove that \(f(\beta_D(b)) \below g(\beta_D(b))\) for every \(b : B_D\).
  So let \(b : B_D\) be arbitrary. Since \(\beta_E\) is a small compact basis
  for \(E\), we have
  \(f(\beta_D(b)) = \bigsqcup \dset_{\beta_E} f(\beta_D(b))\). So to prove
  \(f(\beta_D(b)) \below g(\beta_D(b))\) it is enough to know that
  \(\beta_E(c) \below g(\beta_D(b))\) for every \(c : B_E\) with
  \(\beta_E(c) \below f(\beta_D(b))\).
  But for such \(c : B_E\) we have \(\ssf{\beta_D(b)}{\beta_E(c)} \below f\) and
  therefore \(\ssf{\beta_D(b)}{\beta_E(c)} \below g\) because \(f\) is an upper
  bound of such single step functions, and hence
  \(\beta_E(c) \below g(\beta_D(b))\) by \cref{above-single-step-function}, as
  desired.
\end{proof}

\begin{definition}[Directification]\label{directification}%
  In a poset \(P\) with finite joins, the \emph{directification} of a family
  \(\alpha : I \to P\) is the family \(\bar\alpha : \List(I) \to P\) that maps a
  finite list to the join of its elements.
  It is clear that \(\bar\alpha\) has the same supremum as \(\alpha\), and by
  concatenating lists, one sees that the directification yields a directed
  family, hence the name.
\end{definition}

\begin{lemma}\label{directification-is-compact}
  If each element of a family into a sup-complete dcpo is compact, then so are
  all elements of its directification.
\end{lemma}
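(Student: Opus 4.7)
The plan is to proceed by induction on the list structure underlying $\bar\alpha$. Recall that $\bar\alpha(l)$ is the join of the elements of $l$ in the sup-complete dcpo $P$, and that $\bar\alpha$ on the empty list is the empty join, i.e.\ the least element $\bot$ of $P$, while $\bar\alpha$ on $i :: l$ is the binary join $\alpha(i) \vee \bar\alpha(l)$. So it suffices to show, by induction on lists, that each $\bar\alpha(l)$ is compact.

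For the base case, $\bar\alpha(\mathrm{nil}) = \bot$, which is compact by \cref{least-element-is-compact} since $P$ is pointed (being sup-complete, it has the empty supremum).

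For the inductive step, assuming $\bar\alpha(l)$ is compact, we have $\bar\alpha(i :: l) = \alpha(i) \vee \bar\alpha(l)$. The element $\alpha(i)$ is compact by the hypothesis of the lemma, and $\bar\alpha(l)$ is compact by the induction hypothesis, so their binary join is compact by \cref{binary-join-is-compact}.

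There is essentially no obstacle here: the only mild point is to notice that $P$ having all $\V$-indexed joins (sup-completeness) ensures in particular that it is pointed, so that the base case of the induction is covered by \cref{least-element-is-compact}. The inductive step is an immediate application of \cref{binary-join-is-compact} already established for compact elements.
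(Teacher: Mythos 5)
Your proof is correct and takes essentially the same approach as the paper, which invokes \cref{least-element-is-compact} for the empty list and \cref{binary-join-is-compact} for the inductive step; you have simply spelled out the list induction that the paper leaves implicit.
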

\begin{proof}
  The supremum of the empty list is \(\bot\) by
  \cref{least-element-is-compact}, and hence the join of finitely many
  compact elements is compact by~\cref{binary-join-is-compact}.
\end{proof}

Let us write \(\sigma : B_D \times B_E \to E^D\) for the map that takes
\((b,c)\) to the single step function \(\ssf{\beta_D(b)}{\beta_E(c)}\) and
\[\beta : B \colonequiv \List(B_D \times B_E) \to E^D\] for its directification,
which exists because \(E^D\) is \(\V\)-sup-complete as \(E\) is and suprema are
calculated pointwise.

\begin{theorem}\label{exponential-has-small-compact-basis}%
  The map \(\beta\) is a small compact basis for the exponential \(E^D\), where
  \(E\) is assumed to be sup-complete.
\end{theorem}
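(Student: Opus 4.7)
My plan is to verify the three conditions in the definition of a small compact basis for the map $\beta : \List(B_D \times B_E) \to E^D$. First, the indexing type $\List(B_D \times B_E)$ lives in $\V$ since $B_D,B_E : \V$ and lists of small types are small, so we have a candidate map with the right size. It then remains to check (i) that each $\beta(l)$ is compact, (ii) that the family $\dset_{\beta} f$ is directed with supremum $f$ for every $f : E^D$, and (iii) that $\beta(l) \below f$ is $\V$-small for every $l$ and every $f$.

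For (i), each $\sigma(b,c) = \ssf{\beta_D(b)}{\beta_E(c)}$ is compact in $E^D$: by~\cref{single-step-function-is-compact} it suffices that $\beta_D(b)$ and $\beta_E(c)$ are compact, which holds because $\beta_D$ and $\beta_E$ are small \emph{compact} bases. Since $\beta$ is the directification of $\sigma$, \cref{directification-is-compact} then yields that every $\beta(l)$ is compact. For (iii), I would appeal to \cref{exponential-is-locally-small}: $D$ has a small basis by assumption and $E$ is locally small by~\cref{locally-small-if-small-basis}, so $E^D$ is locally small; in particular $\beta(l) \below f$ is equivalent to a proposition in $\V$.

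The main step is (ii). Directedness is handled by list concatenation: the empty list gives $\beta([\,]) = \bot \below f$, inhabiting the domain, and since $\beta(l_1 \mathbin{+\!\!+} l_2) = \beta(l_1) \vee \beta(l_2)$ in $E^D$, if $\beta(l_1),\beta(l_2) \below f$ then $\beta(l_1 \mathbin{+\!\!+} l_2) \below f$ and dominates both, giving semidirectedness. For the supremum, $f$ is an upper bound by construction of $\dset_\beta f$; conversely, given any upper bound $g$ of $\dset_\beta f$, every single step function $\ssf{\beta_D(b)}{\beta_E(c)} \below f$ is $\beta([(b,c)])$ for a singleton list and hence below $g$, so \cref{sup-of-single-step-functions}, applied with the small compact bases $\beta_D,\beta_E$, gives $f \below g$. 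Thus $\bigsqcup \dset_\beta f = f$, completing the proof.

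The main obstacle is not any single technical difficulty but rather the bookkeeping in step~(ii): one has to match the family $\dset_\beta f$ (indexed by lists of pairs with a joint below-$f$ condition) with the collection of single step functions below $f$ used by \cref{sup-of-single-step-functions}, and carefully exploit that pairs $(b,c)$ with $\ssf{\beta_D(b)}{\beta_E(c)} \below f$ embed into $\dset_\beta f$ via singleton lists while preserving the supremum.
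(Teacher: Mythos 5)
Your proposal is correct and follows essentially the same approach as the paper's proof: you establish compactness via \cref{single-step-function-is-compact} and \cref{directification-is-compact}, smallness via \cref{exponential-is-locally-small}, directedness via the directification, and the supremum via \cref{sup-of-single-step-functions}. The extra bookkeeping you spell out for matching the list-indexed family against the single step functions (via singleton lists and the fact that each $\sigma(b,c)$ in a list $l$ with $\beta(l)\below f$ is itself $\below f$) is implicit but correct reasoning behind the paper's appeal to \cref{sup-of-single-step-functions}.
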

\begin{proof}
  Firstly, every element in the image of \(\beta\) is compact by
  \cref{directification-is-compact,single-step-function-is-compact}.
  Secondly, for every \(b : B\) and Scott continuous map
  \(f : D \to E\), the type \(\beta(b) \below f\) is small, because
  \(E^D\) is locally small by \cref{exponential-is-locally-small}.
  Thirdly, for every such \(f\), the family
  \[\pa*{\Sigma_{b : B}\pa*{\beta(b) \below f}} \xrightarrow{{\beta}
    \circ {\fst}} E^D\] is directed because \(\beta\) is the
  directification of~\(\sigma\).  Finally, this family has supremum
  \(f\) by~\cref{sup-of-single-step-functions}.
\end{proof}

\subsubsection{Exponentials with small bases}

We now present a variation of~\cref{exponential-has-small-compact-basis} but for
(sup-complete) dcpos with small bases. In fact, we will prove it using
\cref{exponential-has-small-compact-basis} and the theory of retracts
(\cref{Idl-iso-algebraic} in particular).

\begin{definition}[Closure under finite joins]%
  A small basis \(\beta : B \to D\) for a sup-complete poset is \emph{closed
    under finite joins} if we have \(b_\bot : B\) with \(\beta(b_\bot) = \bot\) and a
  map \({\vee} : B \to B \to B\) such that
  \(\beta(b \vee c) = \beta(b) \vee \beta(c)\) for every \(b,c : B\).
\end{definition}

\begin{lemma}\label{close-basis-under-finite-joins}
  If \(D\) is a sup-complete dcpo with a small basis \(\beta : B \to D\), then
  the directification of \(\beta\) is also a small basis for \(D\). Moreover, by
  construction, it is closed under finite joins.
\end{lemma}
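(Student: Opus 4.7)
The plan is to verify each of the three defining conditions of a small basis for $\bar\beta : \List(B) \to D$, and then note that closure under finite joins is immediate from the definition of directification.

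First, I would dispatch the smallness conditions. Since $\beta$ is a small basis, $D$ is continuous by \cref{structural-continuity-if-small-basis} and locally small by \cref{locally-small-if-small-basis}, so its way-below relation takes values in $\V$ by \cref{cont-loc-small-iff-way-below-small}. In particular $\bar\beta(l) \ll x$ is $\V$-small for every $l : \List(B)$ and $x : D$, and $\List(B)$ lies in $\V$ because $B$ does.

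Second, to show that for each $x : D$ the family $\ddset_{\bar\beta} x$ is directed with supremum $x$, I would apply \cref{subbasis-lemma}. Define
\[
  \sigma : \pa*{\Sigma_{b : B}\pa*{\beta(b) \ll x}} \to \pa*{\Sigma_{l : \List(B)}\pa*{\bar\beta(l) \ll x}}
\]
by $(b,p) \mapsto ([b], p)$, which is well-defined because $\bar\beta([b]) = \beta(b)$. Under this identification the composite $\ddset_{\bar\beta} x \circ \sigma$ is precisely $\ddset_\beta x$, which is directed with supremum $x$ because $\beta$ is a small basis for $D$. Then \cref{subbasis-lemma} yields directedness of $\ddset_{\bar\beta} x$ together with $\bigsqcup \ddset_{\bar\beta} x = x$, as required.

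Finally, closure under finite joins holds by construction. Take $b_\bot$ to be the empty list $[\,]$, whose image under $\bar\beta$ is the empty join $\bot$; and define $l_1 \vee l_2$ to be list concatenation, for which a straightforward induction on lists shows $\bar\beta(l_1 \vee l_2) = \bar\beta(l_1) \vee \bar\beta(l_2)$. There is no real obstacle in the argument: the only subtle point is that the way-below relation is small-valued on the new basis, which follows by deriving local smallness and continuity from the original small basis before replaying the small-basis conditions through \cref{subbasis-lemma}.
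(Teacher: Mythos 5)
Your argument is correct and matches the paper's proof essentially step for step: both derive smallness of $\bar\beta(l)\ll x$ from \cref{locally-small-if-small-basis} (you unfold this slightly via \cref{cont-loc-small-iff-way-below-small}), both invoke \cref{subbasis-lemma} with the singleton-list inclusion $b\mapsto[b]$, and both observe that closure under finite joins is immediate from the definition of directification.
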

\begin{proof}
  Since \(\beta\) is a small basis for \(D\), it follows by
  \cref{locally-small-if-small-basis} that the way-below relation on \(D\) is
  small-valued. Hence, writing \(\bar\beta\) for the directification of
  \(\beta\), it remains to prove that \(\ddset_{\bar\beta} x\) is directed with
  supremum \(x\) for every \(x : D\). But this follows easily from
  \cref{subbasis-lemma}, because \(\ddset_\beta x\) is directed with supremum
  \(x\) and this family is equal to the composite
  \[\pa*{\Sigma_{b : B}\pa*{\beta(b) \ll x}} \xhookrightarrow{b \mapsto [b]}
  \pa*{\Sigma_{l : \List(B)}\pa*{\bar\beta(l) \ll x}} \xrightarrow{{\bar\beta}\,
    \circ \,\fst} D,\]
  where \([b]\) denotes the singleton list.
\end{proof}

\begin{lemma}\label{sup-complete-ideal-completion}
  If \(D\) is a \(\V\)-sup-complete poset with a small basis \(\beta : B \to D\)
  closed under finite joins, then the ideal-completion
  \(\Idl{V}(B,\below)\) is \(\V\)-sup-complete too.
\end{lemma}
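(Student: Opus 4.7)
The plan is to construct suprema directly by closing unions of ideals under finite joins, using the $\vee$-structure on $B$ supplied by the hypothesis. Given a family $\mathcal{I} : S \to \Idl{V}(B,{\below})$ with $S : \V$, I define the candidate supremum to be
\[
  J \colonequiv \set*{b \in B \mid \exists\, l : \List(B),\ \pa*{\forall b' \in l,\ b' \in \textstyle\bigcup_{s : S} \mathcal{I}_s} \wedge \pa*{b \below \textstyle\bigvee l}},
\]
where $\bigvee$ denotes the iterated join on $B$, interpreted as $b_\bot$ on the empty list. This is a $\V$-valued subset because $S$, $B$, each $\mathcal{I}_s$, and the relation $\below$ on $B$ are $\V$-valued (the last by \cref{locally-small-if-small-basis}, since any dcpo with a small basis is locally small).

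Next, I verify that $J$ is an ideal. It is a lower set by transitivity of $\below$ on $B$. It is inhabited because the empty list witnesses $b_\bot \in J$. For semidirectedness, given $b_1, b_2 \in J$ witnessed by lists $l_1, l_2$, concatenating yields a list whose join is $(\bigvee l_1) \vee (\bigvee l_2)$; this element lies in $J$ by reflexivity of $\below$, and dominates both $b_1$ and $b_2$ because $\beta(b \vee c) = \beta(b) \vee \beta(c)$ makes $\vee$ monotone with respect to $\below$ on $B$.

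To see that $J$ is the least upper bound, each $\mathcal{I}_s$ is contained in $J$ via the singleton list (using reflexivity). For the universal property, suppose $K$ is an ideal with $\mathcal{I}_s \subseteq K$ for every $s$. The key step is to show, by induction on lists, that the join of any list of elements of $K$ lies in $K$: the base case $b_\bot \in K$ holds because $K$ is an inhabited lower set and $\beta(b_\bot) = \bot$; the inductive step uses directedness of $K$ to find a common upper bound of two elements, and then the identity $\beta(b \vee c) = \beta(b) \vee \beta(c)$ together with the lower-set property. Given this, any $b \in J$ satisfies $b \below \bigvee l$ for a list $l$ of elements of $\bigcup_s \mathcal{I}_s \subseteq K$, so $\bigvee l \in K$ and hence $b \in K$ by the lower-set property of $K$.

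The main subtlety is the \emph{empty} supremum: when $S = \Zero$, the union of ideals is empty, and one still needs a least ideal. This is precisely why the hypothesis of closure under finite joins, which in particular supplies $b_\bot$, is essential, and why the construction of $J$ admits the empty list. A related small point worth making explicit is that every ideal automatically contains $b_\bot$, so the least ideal is exactly $\dset b_\bot$, matching the output of the construction when $S$ is empty.
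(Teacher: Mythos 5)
Your proof is correct, but it takes a different route from the paper's. The paper's argument is modular: it first shows that \(\Idl{V}(B,\below)\) has \emph{finite} joins (exhibiting the least ideal as \(\set{b \in B \mid \beta(b) \below \beta(b_\bot)}\) and the binary join of \(I,J\) as \(\set{b \mid \exists_{c \in I}\exists_{d \in J}\,\beta(b) \below \beta(c \vee d)}\)), and then invokes \cref{directification} together with the already-established \(\V\)-directed-completeness of the ideal completion to pass from finite joins to arbitrary \(\V\)-indexed suprema. You instead build the supremum of a general small family \(\mathcal I : S \to \Idl{V}(B,\below)\) in one shot, as the closure of \(\bigcup_s \mathcal I_s\) under list-indexed joins in \(B\). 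Both are valid; the paper's factorisation isolates the finite-join construction as a reusable step and leans on machinery already in place, while your direct construction is self-contained and makes the nullary case (the least ideal, \(\dset b_\bot\)) fall out uniformly as the image of the empty list rather than requiring a separate clause. One small point to make explicit if you wrote this up: the order you write as \(\below\) on \(B\) is the induced relation \(b \below_\beta c \colonequiv \beta(b) \below \beta(c)\), and your identifications such as \(\beta(\bigvee(l_1 \mathop{+\!\!+} l_2)) = \beta(\bigvee l_1) \vee \beta(\bigvee l_2)\) are equalities in \(D\) obtained by pushing the list join through \(\beta\); the join operation on \(B\) itself need not be literally associative.
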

\begin{proof}
  Since the \(\V\)-ideal completion is \(\V\)-directed complete, it suffices to
  show that \(\Idl{V}(B,\below)\) has finite joins, because then we can turn an
  arbitrary small family into a directed one via~\cref{directification}.
  As \(\beta : B \to D\) is closed under finite joins, we have \(b_\bot : B\)
  with \(\beta(b_\bot) = \bot\) and we easily see that the subset
  \(\set{b \in B \mid \beta(b) \below \beta(b_\bot)}\) is the least element of
  \(\Idl{V}(B,\below)\).
  Now suppose that we have two ideals \(I , J : \Idl{V}(B,\below)\) and consider
  the subset $K$ defined as
  \[
    K \colonequiv \set{b \in B \mid \exists_{c \in I}\,\exists_{d \in
        J}\,(\beta(b) \below \beta(c \vee d))}.
  \]
  Notice that \(K\) is clearly a lower set.
  Also note that ideals are closed under finite joins as they are directed lower sets.
  Hence, \(b_\bot \in I\) and \(b_\bot \in J\), so that \(b_\bot \in K\) and
  \(K\) is inhabited.
  For semidirectedness, assume \({b_1,b_2} \in K\) so that there exist
  \(c_1,c_2 \in I\) and \(d_1,d_2 \in J\) with
  \(\beta(b_1) \below \beta(c_1 \vee d_1)\) and
  \(\beta(b_2) \below \beta(c_2 \vee d_2)\).
  Then \({b_1 \vee b_2} \in K\), because \({c_1 \vee c_2} \in I\) and
  \({d_1 \vee d_2} \in J\) and
  \(\beta({b_1 \vee b_2}) \below \beta({({c_1 \vee c_2})} \vee {({d_1 \vee
      d_2})})\).
  Hence, \(K\) is a directed lower set.
  We claim that \(K\) is the join of \(I\) and \(J\). First of all,
  \(I\)~and~\(J\) are both subsets of \(K\), so it remains to prove that \(K\)
  is the least upper bound. To this end, suppose that we have an ideal \(L\)
  that includes \(I\) and \(J\), and let \(b \in K\) be arbitrary.
  Then there exist \(c \in I\) and \(d \in J\) with
  \(\beta(b) \below \beta({c \vee d})\).
  But \({I,J} \subseteq L\) and ideals are closed under finite joins, so
  \({c \vee d} \in L\), which implies that \(b \in L\) since \(L\) is a lower
  set.
  Therefore, \(K\) is the least upper bound of \(I\) and \(J\), completing the
  proof.
\end{proof}

\begin{theorem}\label{exponentials-with-small-bases-via-retract}%
  The exponential \(E^D\) of dcpos has a specified small basis if
  \(D\)~and~\(E\) have specified small bases and \(E\) is sup-complete.
\end{theorem}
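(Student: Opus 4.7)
The plan is to reduce the statement to \cref{exponential-has-small-compact-basis} via the retract machinery developed in \cref{sec:ideal-completions-of-small-bases}. The key observation is that, up to a Scott continuous retraction, having a small basis is the same as having a small compact basis (\cref{Idl-iso-algebraic}), and small bases are closed under Scott continuous retracts (\cref{small-basis-closed-under-continuous-retracts}). Since \cref{exponential-has-small-compact-basis} already handles the small compact basis case (provided the codomain is sup-complete), we only need to arrange that the relevant ideal completions are sup-complete and that exponentials respect the retract structure.

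First, since \(E\) is sup-complete with a specified small basis, I would apply \cref{close-basis-under-finite-joins} to replace the basis of \(E\) with one that is closed under finite joins; this still gives a small basis. Then, using \cref{below-abstract-basis}, both \(B_D\) and \(B_E\) become reflexive abstract \(\V\)-bases via \({\below_{\beta_D}}\) and \({\below_{\beta_E}}\). Writing \(D' \colonequiv \Idl{V}(B_D,{\below_{\beta_D}})\) and \(E' \colonequiv \Idl{V}(B_E,{\below_{\beta_E}})\), \cref{Idl-iso-algebraic} provides embedding-projection pairs \((\varepsilon_D,\pi_D)\) and \((\varepsilon_E,\pi_E)\) exhibiting \(D\) and \(E\) as Scott continuous retracts of \(D'\) and \(E'\), respectively. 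Moreover, \cref{sup-complete-ideal-completion} applied to the enlarged basis of \(E\) tells us that \(E'\) is sup-complete, so \cref{exponential-has-small-compact-basis} yields a small compact basis for \({E'}^{D'}\).

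Next, I would exhibit \(E^D\) as a Scott continuous retract of \({E'}^{D'}\) by setting
\[
  s(f) \colonequiv \varepsilon_E \circ f \circ \pi_D \quad\text{and}\quad r(g) \colonequiv \pi_E \circ g \circ \varepsilon_D.
\]
The composite \(r \circ s\) sends \(f\) to \(\pi_E \circ \varepsilon_E \circ f \circ \pi_D \circ \varepsilon_D = f\) because \((\varepsilon_D,\pi_D)\) and \((\varepsilon_E,\pi_E)\) are section-retraction pairs. That \(s\) and \(r\) are themselves Scott continuous follows from pointwise computation of suprema in the exponentials together with Scott continuity of \(\varepsilon_D,\pi_D,\varepsilon_E,\pi_E\); this is a routine verification that I would carry out but not dwell on. Finally, \cref{small-basis-closed-under-continuous-retracts} applied to this retract and the small basis of \({E'}^{D'}\) from the previous step gives the desired small basis on \(E^D\).

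The only mildly delicate point is bookkeeping: one needs to ensure that the two applications of \cref{Idl-iso-algebraic} are compatible with the hypothesis of \cref{exponential-has-small-compact-basis}, namely that \emph{both} \(D'\) and \(E'\) have small compact bases and that \(E'\) is sup-complete. The sup-completeness of \(E'\) is precisely the reason we first pass through \cref{close-basis-under-finite-joins} before forming the ideal completion; without closing the basis of \(E\) under finite joins, \cref{sup-complete-ideal-completion} would not apply. No comparable closure is needed on the \(D\) side since only the sup-completeness of the \emph{codomain} is required by \cref{exponential-has-small-compact-basis}.
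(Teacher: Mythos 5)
Your proof is correct and follows essentially the same route as the paper's own proof: close the basis of \(E\) under finite joins via \cref{close-basis-under-finite-joins}, pass to the ideal completions \(\Idl{V}(B_D,\below)\) and \(\Idl{V}(B_E,\below)\) using \cref{Idl-iso-algebraic} and \cref{sup-complete-ideal-completion}, apply \cref{exponential-has-small-compact-basis} to the algebraic case, form the evident Scott continuous retract of exponentials by pre- and post-composition, and finish with \cref{small-basis-closed-under-continuous-retracts}. The only cosmetic difference is notational (you write \(\varepsilon,\pi\) where the paper writes \(s,r\)), and you correctly identify both subtleties: that the finite-joins closure is needed only on the \(E\) side so that \cref{sup-complete-ideal-completion} applies, and that \cref{exponential-has-small-compact-basis} only requires sup-completeness of the codomain.
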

\begin{proof}
  Suppose that \(\beta_D : B_D \to D\) and \(\beta_E : B_E \to E\) are small
  bases and that \(E\) is sup-complete.  By
  \cref{close-basis-under-finite-joins} we can assume that
  \(\beta_E : B_E \to E\) is closed under finite joins.
  We will write \(\overline{D}\) and \(\overline{E}\) for the
  respective ideal completions \(\Idl{V}{(B_D,\below)}\) and
  \(\Idl{V}(B_E,\below)\).
  Then~\cref{Idl-iso-algebraic} tells us that we have retracts
  \[
    \text{\(\retractalt{D}{\overline{D}}{s_D}{r_D}\) and \(\retractalt{E}{\overline{E}}{s_E}{r_E}\).}
  \]
  Composition yields a retract \[\retract{E^D}{{\overline{E}}^{\overline{D}}}\]
  where \(s(f) \colonequiv s_E \circ f \circ r_D\) and
  \(r(g) \colonequiv r_E \circ g \circ s_D\).
  Now \(\overline{D}\) and \(\overline{E}\) have small compact basis
  by~\cref{Idl-has-small-compact-basis} and \(\overline{E}\) is sup-complete by
  \cref{sup-complete-ideal-completion}. Therefore, the exponential of
  \(\overline{D}\) and \(\overline{E}\) has a small
  basis by \cref{exponential-has-small-compact-basis}.
  Finally, \cref{small-basis-closed-under-continuous-retracts} tells us that the
  retraction \(r\) yields a small basis on \(E^D\), as desired.
\end{proof}

Note how, unlike~\cref{exponential-has-small-compact-basis}, the above theorem
does not give a nice description of the small basis for the exponential when we unfold the definitions. It
may be possible to do so using function graphs, as is done in the classical
setting of effective domain theory in~\cite[Section~4.1]{Smyth1977}, and we leave this as an open question.

An application of the closure results of
\cref{bilimit-has-small-basis,exponential-has-small-compact-basis} is that
Scott's \(D_\infty\) model of the untyped \(\lambda\)-calculus
from~\cref{sec:Scott-D-infty} has a small compact basis.

\begin{theorem}
  Scott's \(D_\infty\) has a small compact basis and in particular is
  algebraic.%
\end{theorem}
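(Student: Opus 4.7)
The plan is to combine the closure theorems~\cref{bilimit-has-small-basis,exponential-has-small-compact-basis} with Scott's construction as ported in Part~I. Recall that $D_\infty$ is realised as the bilimit of a countable directed system $(D_n)_{n : \Nat}$ of dcpos with embedding-projection pairs between them, where $D_0$ is a fixed pointed dcpo and $D_{n+1}$ is the exponential $D_n^{D_n}$.

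First, I would show by induction on $n$ that each $D_n$ carries a small compact basis and is sup-complete. For the base case, the starting dcpo $D_0$ chosen in Part~I (such as $\lifting_\U(\One)$) has a small compact basis by~\cref{lifting-has-small-compact-basis}, and is sup-complete since it reduces to a dcpo of propositions with small joins provided by existential quantification. For the inductive step, if $D_n$ has a small compact basis and is sup-complete, then~\cref{exponential-has-small-compact-basis} directly gives a small compact basis on $D_{n+1} = D_n^{D_n}$, while $D_{n+1}$ is sup-complete because suprema of Scott continuous functions into a sup-complete dcpo are computed pointwise.

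With each $D_n$ thus equipped with a small compact basis $\beta_n : B_n \to D_n$, I would then invoke~\cref{bilimit-has-small-basis} to transport these into a small compact basis $\beta_\infty : \pa*{\Sigma_{n : \Nat}B_n} \to D_\infty$ on the bilimit. Algebraicity is then an immediate consequence of~\cref{structural-algebraicity-if-small-compact-basis}.

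The main work has already been done in the preceding closure results; the only point that requires attention is to propagate sup-completeness alongside the small-compact-basis property through the inductive construction, so that the hypothesis of~\cref{exponential-has-small-compact-basis} is available at every stage. Since this is handled by the pointwise computation of suprema in exponentials, I do not foresee any substantive obstacle.
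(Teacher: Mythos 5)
Your proof is correct and follows essentially the same line as the paper: establish a small compact basis and sup-completeness for each $D_n$ by induction starting from $D_0 \cong \Omega_{\U_0}$, apply \cref{exponential-has-small-compact-basis} at each successor stage, and finish with \cref{bilimit-has-small-basis}. The paper's proof is nearly identical, just phrased a little more tersely (it cites the isomorphism with $\Omega_{\U_0}$ directly rather than naming $\lifting_{\U_0}(\One)$), so there is no substantive difference.
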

\begin{proof}
  By \cref{lifting-has-small-compact-basis} the \(\U_0\)-dcpo \(D_0\) has a
  small compact basis. Moreover, it is not just a \(\U_0\)-dcpo as it has
  suprema for all (not necessarily directed) families indexed by types in
  \(\U_0\), as \(D_0\) is isomorphic to \(\Omega_{\U_0}\).
  Hence, by induction it follows that
  each \(D_n\) is \(\U_0\)-sup-complete.
  Therefore, by induction and \cref{exponential-has-small-compact-basis} we get
  a small compact basis for each \(D_n\).
  Thus, by \cref{bilimit-has-small-basis}, the bilimit \(D_\infty\) has a small
  basis too.
\end{proof}

\section{Concluding remarks}
Taking inspiration from work in category theory by Johnstone and
Joyal~\cite{JohnstoneJoyal1982}, we gave predicatively adequate definitions of
continuous and algebraic dcpos, and discussed issues related to the absence of
the axiom of choice.
We~also presented predicative adaptations of the notions of a basis and the
round ideal completion.
The theory was accompanied by several examples: we described canonical small
compact bases for the lifting and the powerset, and considered the round ideal
completion of the dyadics.
We also showed that Scott's \(D_\infty\) has a small compact basis and is thus
algebraic in particular.

To prove that \(D_\infty\) has a small compact basis, we used that each \(D_n\)
is a \(\U_0\)-sup-lattice, so that we could apply the results
of~\cref{sec:exponentials-with-small-bases}.
\cref{lifting-has-small-compact-basis} tells us that \(\lifting_{\U_0}(\Nat)\)
has a small compact basis too, but to prove that the \(\U_0\)-dcpos in the Scott
model of PCF (see~\cite{deJong2021a}) have small compact bases using the techniques
of~\cref{sec:exponentials-with-small-bases}, we would need
\(\lifting_{\U_0}(\Nat)\) to be a \(\U_0\)-sup-lattice, but it isn't.
However, it is complete for \emph{bounded} families indexed by types in \(\U_0\)
and we believe that is possible to generalise the results
of~\cref{sec:exponentials-with-small-bases} from sup-lattices to bounded
complete posets.
Classically, this is fairly straightforward, but from preliminary considerations
it appears that constructively one needs to impose certain decidability criteria
on the bases of the dcpos. For instance that the partial order is decidable when
restricted to basis elements. Such decidability conditions were also studied
in~\cite{deJong2021b}.
These conditions should be satisfied by the bases of the dcpos in the Scott
model of PCF, but we leave a full treatment of bounded complete dcpos with bases
satisfying such conditions for future investigations.

\section{Acknowledgements}
We are grateful to Andrej Bauer and Dana Scott for independently asking us
whether we could have \(D_\infty\) in predicative univalent foundations, which
led to our reconstruction in~\cref{sec:Scott-D-infty}.
Furthermore, we also thank Andrej Bauer and Vincent Rahli for their detailed
feedback, and Ayberk Tosun for discussions on algebraic dcpos and the results
in~\cref{sec:basis-of-compact-elements} in particular.
Finally, we express our thanks to the anonymous referee for their useful
comments which helped to improve the paper.

The first author was supported by Cambridge Quantum Computing and Ilyas Khan
[Dissertation Fellowship in Homotopy Type Theory]; and The Royal Society [grant
reference URF{\textbackslash}R1{\textbackslash}191055].

\bibliography{references}

\end{document}